%version for arxvie, revisited on July 3, 2013 by FVF

\documentclass[a4paper,thm-restate,11pt]{article}

\usepackage[utf8x]{inputenc}
\usepackage{latexsym}
\usepackage{amsmath}
\usepackage{verbatim}
\usepackage{enumerate}
\usepackage{amssymb}

\usepackage{csquotes}
\usepackage{graphicx}
\usepackage{textcomp}
\usepackage[ruled,vlined,linesnumbered]{algorithm2e}
\usepackage{epsfig}
\usepackage{amsfonts}
\usepackage{todonotes}
\usepackage[inline]{enumitem}
\usepackage{longtable}
\usepackage{xspace}
\usepackage{url}
\usepackage{subcaption}

\usepackage[colorlinks = true, backref=page]{hyperref}
\usepackage{amsthm}
\usepackage{amsmath}
\usepackage{amssymb}
\usepackage{amsfonts}
\usepackage{amstext}

\usepackage{xspace}
\usepackage{graphicx}
\usepackage[mathscr]{euscript}
\usepackage{fullpage}	

 \usepackage{thm-restate}

\usepackage{cleveref}
 \usepackage{authblk}

\theoremstyle{plain}
\newtheorem{theorem}{Theorem}
\newtheorem{lemma}{Lemma}
\newtheorem{observation}{Observation}

\theoremstyle{definition}
\newtheorem{definition}{Definition}

\newcommand{\CC}{{\mathcal C}}

\newcommand{\FF}{\mathcal{F}}

\newcommand{\disjPartitionedCompression
}[0]{{\sc r-${\mathcal H}$-SCC PC}}
% \newcommand{\bb}{{\mathbb B}}
% %\newcommand{\cc}{{\mathbb C}}
% \newcommand{\dd}{{\mathbb D}}
% \newcommand{\ee}{{\mathbb E}}
% \newcommand{\ff}{{\mathbb F}}
% %\newcommand{\ggg}{{\mathfrak g}}
% %\newcommand{\cHh}{{\mathfrak h}}
% \newcommand{\ii}{{\mathbb I}}
% \newcommand{\jj}{{\mathbb J}}
% \newcommand{\kk}{{\mathbb K}}
% %\newcommand{\lll}{{\mathfrak l}}
% \newcommand{\mm}{{\mathbb M}}
% \newcommand{\nn}{{\mathbb N}}
% \newcommand{\oo}{{\mathbb O}}
% \newcommand{\pp}{{\mathbb P}}
% \newcommand{\qq}{{\mathbb Q}}
% \newcommand{\rr}{{\mathbb R}}
% %\newcommand{\ss}{{\mathbb s}}
% %\newcommand{\tt}{{\mathbb t}}
% \newcommand{\uu}{{\mathbb U}}
% \newcommand{\vv}{{\mathbb V}}
% \newcommand{\ww}{{\mathbb W}}
% \newcommand{\xx}{{\mathbb X}}
% \newcommand{\yy}{{\mathbb Y}}

\renewcommand{\S}{\mathcal{S}}

\renewcommand{\L}{\mathcal{L}}

\newcommand{\cH}{\mathcal{H}}
\newcommand{\HH}{\mathcal{H}}

\newcommand{\cW}{\mathcal{W}}
\newcommand{\cX}{\mathcal{X}}

\newcommand{\forwardshadow}{{\sf F\textrm{-}Shadow}}
\newcommand{\reverseshadow}{{\sf R\textrm{-}Shadow}}

% Centering without extra whitespace
\newenvironment{tightcenter}
 {\parskip=0pt\par\nopagebreak\centering}
 {\par\noindent\ignorespacesafterend}

\usepackage{tikz}
\usetikzlibrary{calc}
\usepackage{xargs}
\usepackage{xifthen}
\usepackage{framed}

\usepackage{ctable}

% Main def
\newlength{\RoundedBoxWidth}
\newsavebox{\GrayRoundedBox}
\newenvironment{GrayBox}[1]%
   {\setlength{\RoundedBoxWidth}{\textwidth-4.5ex}
    \def\boxheading{#1}
    \begin{lrbox}{\GrayRoundedBox}
       \begin{minipage}{\RoundedBoxWidth}%
   }{%
       \end{minipage}
    \end{lrbox}%
    \begin{tightcenter}%
    \begin{tikzpicture}%
       \node(Text)[draw=black!90,fill=white,rounded corners,%
             inner sep=2ex,text width=\RoundedBoxWidth]%
             {\usebox{\GrayRoundedBox}};
        \coordinate(x) at (current bounding box.north west);
        \node [draw=white,rectangle,inner sep=3pt,anchor=north west,fill=white] 
        at ($(x)+(6pt,.75em)$) {\boxheading};
    \end{tikzpicture}
    \end{tightcenter}\vspace{0pt}%
    \ignorespacesafterend
}    

\newenvironment{problem}[2][]{\noindent\ignorespaces%
                                \FrameSep=6pt%
                                \parindent=0pt%
                \vspace*{-.5em}
                \ifthenelse{\isempty{#1}}{%
                  \begin{GrayBox}{\textsc{#2}}%                
                }{%
%                  \begin{GrayBox}{\textsc{#2} parameterized by~{}}%  
                }
                \newcommand\Prob{{\sf Problem:}}%
                \newcommand\Input{{\sf Input:}}%
                \newcommand\Parameter{{\sf Parameter:} }          
                \begin{tabular*}{\textwidth}{@{\hspace{.1em}} >{\itshape} p{1.2cm} p{0.85\textwidth} @{}}%     
            }{
                \end{tabular*}%
                \end{GrayBox}%
                \vspace*{-.5em}
                \ignorespacesafterend
            } 

\title{On the Parameterized Complexity of Deletion to $\mathcal{H}$-free Strong Components}

\author[1]{Rian Neogi}
\author[2]{M. S. Ramanujan}
\author[1,3]{Saket Saurabh}
\author[1]{Roohani Sharma}

\affil[1]{Institute of Mathematical Sciences, HBNI, India, \hspace{100pt} \{rianneogi,saket,roohani\}@imsc.res.in}
\affil[2]{University of Warwick, UK,   \hspace{100pt} r.maadapuzhi-sridharan@warwick.ac.uk}
\affil[3]{University of Bergen, Norway}
%\author{{\large Rian Neogi\footnote{Institute of Mathematical Sciences, HBNI, India. \texttt{\{rianneogi,saket,roohani\}@imsc.res.in}}}\addtocounter{footnote}{-1}
%\and{\large M. S. Ramanujan\footnote{University of Warwick, UK. \texttt{r.maadapuzhi-sridharan@warwick.ac.uk}}}
%\footnotemark \addtocounter{footnote}{-3}
%\and {\large Saket Saurabh}\addtocounter{footnote}{-1}
%\and{\large Roohani Sharma}\addtocounter{footnote}{-1}
%}

\date{}
\begin{document}
 \maketitle

\thispagestyle{empty}
\begin{abstract}
{\sc Directed Feedback Vertex Set (DFVS)} is a fundamental computational problem that has received extensive attention in parameterized complexity. 
In this paper, we initiate the study of a wide generalization, the {\sc $\cH$-free SCC Deletion} problem. Here, one is given a digraph $D$, an integer $k$ and the objective is to decide whether there is a vertex set of size at most $k$ whose deletion leaves a digraph where every strong component excludes graphs in the fixed finite family $\cH$ as (not necessarily induced) subgraphs. When $\cH$ comprises only the digraph with a single arc, then this problem is precisely DFVS.

Our main result is a proof that this problem is fixed-parameter tractable parameterized by the size of the deletion set if $\cH$ only contains rooted graphs or if $\cH$ contains at least one directed path. Along with generalizing the fixed-parameter tractability result for DFVS, our result also generalizes the recent results of G\"{o}ke et al. [CIAC 2019] for the {\sc 1-Out-Regular Vertex Deletion} and {\sc Bounded Size Strong Component Vertex Deletion} problems. Moreover, we design algorithms for the two above mentioned problems, whose running times are better and match with the best bounds for {\sc DFVS}, without using the heavy machinery of shadow removal as is done by G\"{o}ke et al. [CIAC 2019].
%give faster FPT algorithms for the two above mentioned problems, thereby improving the results of G\"{o}ke et al. [CIAC 2019].
\end{abstract}

%\tableofcontents

\newpage
\setcounter{page}{1}
\section{Introduction}\label{sec:intro}
In the {\sc Directed Feedback Vertex Set (DFVS)} problem, the input is a digraph $D$ and an integer $k$ and the objective is to decide whether there is a set $X\subseteq V(D)$ of size at most $k$ such that $D-X$ is acyclic. DFVS is a fundamental computational problem that has received extensive attention in various subdomains of algorithmics. The parameterized complexity of this problem was a long standing open problem in the area until Chen et al.~\cite{ChenLLOR08} gave a fixed-parameter tractable (FPT) algorithm with running time $O(k!4^kk^4nm)$. Here, $n$ and $m$ denote the number of vertices and arcs in the digraph respectively. Although subsequent work~\cite{LokshtanovR018} has improved the dependence on the input size to linear, 
it remains an open problem whether the $2^{O(k\log k)}$ dependence on $k$ is asymptotically the best possible.

The result of Chen et al.~and the techniques used therein also helped kick off a line of research in parameterized complexity where the goal is to understand how far the fixed-parameter tractability of DFVS can be extended to various generalizations of DFVS. Chitnis et al.~\cite{ChitnisCHM15} obtained an FPT algorithm for the {\sc Subset DFVS} problem, where the goal is to delete at most $k$ vertices that intersect all directed cycles passing through a specified subset of vertices. A general and abstract formulation of the powerful directed {\em shadow removal} technique first designed by Chitnis et al.~\cite{ChitnisHM13}, was developed in this work and it has found several applications in subsequent work~\cite{KratschPPW15,ChandrasekaranM17,LokshtanovRSZ17,GokeMM19}. %We refer the reader to \cite{ChitnisH16} for a survey on this technique.
Lokshtanov et al.~\cite{LokshtanovRSZ17} studied the {\sc Directed Odd Cycle Transversal} problem where the objective is to delete at most $k$ vertices that intersect all directed {\em odd} cycles in the given digraph. They proved that this problem is W[1]-hard and so is unlikely to admit an FPT algorithm. Moreover, they used the shadow removal technique to obtain a fixed-parameter 2-approximation algorithm for this problem.
More recently, G\"{o}ke et al.~\cite{GokeMM19} studied the problems of deleting at most $k$ vertices to (i) obtain a digraph where every strong component is of bounded size and (ii) obtain a digraph where every strong component induces a graph where every vertex has out-degree exactly 1, i.e. is a 1-out-regular digraph.

	In this paper, we extend this line of research by initiating the study of a wide generalization of the problems studied by G\"{o}ke et al., which we call the {\sc $\cH$-Strong Connected Component Deletion} ($\cH$-SCC Deletion) problem and define below. Here, $\cH$ is a fixed finite family of digraphs.

\vspace{-5 pt}

 \begin{problem}[]{$\cH$-SCC Deletion}
  \Input & A digraph $D$, an integer $k$. \\
  \Parameter & \hspace{-2pt} $k$ \\
  \Prob  & Does there exist a set $S$ of at most $k$ vertices such that no strong component of $D$-$S$ contains a graph in $\cH$ as a subgraph?
\end{problem}

\vspace{-5 pt}

In all our results, $n$ denote the number of vertices in the input graph and $h= \max_{H \in \cH} \allowbreak  \vert V(H) \vert$.
%Graphs in $\cal H$ may also be disconnected. 
%A digraph $H$ is said to be {\em rooted} at $v\in V(H)$ if $H$ contains an arborescence rooted at $v$. That is, there is a directed $v$-$w$ path in $H$ for every $w\in V(H)$. A digraph $H$ is simply said to be {\em rooted} if it is rooted at some vertex.  
{\sc Rooted $\cH$-SCC Deletion} ({\sc r-$\cH$-SCC Deletion}) denotes the special case of {$\cH$-SCC Deletion} where every graph in $\cH$ contains an arborescence. An arborescence is a rooted directed tree where every vertex except the root has in-degree exactly $1$ and the root has in-degree $0$. Notice that {\sc r-$\cH$-SCC Deletion} already generalizes several problems described above including DFVS ($\cH$  comprises the graph with a single arc), obtaining  strong components of size at most $s$ ($\cH$ comprises all arborescences of size $s+1$) and obtaining strong components with out-degree at most 1 ($\cH$ comprises the star with two leaves and both arcs oriented away from the centre). Our main result gives a unified proof of the fixed-parameter tractability of these problems. 

\begin{restatable}{theorem}{rootedSCCDeletion}\label{thm:rootedSCCDeletion}
{\sc  r-$\cH$-SCC Deletion}  can be solved in time $2^{O(k^3 \log k)} \cdot n^{O(h)}$.
%, where $h= \max_{H \in \cH} \vert V(H) \vert$.
\end{restatable}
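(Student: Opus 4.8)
The plan is to follow the general strategy pioneered for DFVS by Chen et al.\ and extended via the shadow-removal framework of Chitnis et al., adapted to the ``$\cH$-free strong component'' setting. First I would set up an iterative-compression scaffold: process the vertices of $D$ in an arbitrary order, maintaining a solution of size at most $k+1$ for the prefix processed so far, and after each insertion invoke a \emph{disjoint compression} routine that, given a solution $W$ of size $k+1$, either finds a solution of size $k$ disjoint from $W$ or correctly reports that none exists. Standard arguments reduce the theorem to solving this disjoint compression problem, at the cost of a $2^{O(k)}n$ multiplicative overhead from branching over the intersection of the new solution with $W$ and contracting the kept part of $W$. So the real work is the disjoint version, where we have a ``reference'' set $W$ of $k+1$ vertices that is itself a valid deletion set, and we seek a size-$k$ deletion set avoiding $W$.

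Next I would identify the right notion of ``obstruction'' to hit. A vertex set $S$ is a solution exactly when no strong component of $D-S$ contains some $H\in\cH$ as a (not necessarily induced) subgraph. Since every $H\in\cH$ contains an arborescence, and an arborescence with $h$ vertices rooted at $r$ sits inside a strong component iff there are $h$ vertices reachable from $r$ that can all reach $r$ (more precisely, one needs the branching structure, but the key closure property is that arborescences of a given size embed ``greedily''), the obstructions are controlled by short closed walk structures of bounded size $h$. Concretely, I would argue that $S$ is a solution iff $S$ intersects every vertex set of the form $V(H')$ where $H'$ is a copy of some $H\in\cH$ lying inside a single strong component of $D-S$ --- but because $|V(H)|\le h$ is a constant, one can enumerate, in $n^{O(h)}$ time, all candidate ``cores'' and reduce hitting all of them to a \emph{bounded-size cut / separation} task. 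This is where the parameter $h$ enters the running time as $n^{O(h)}$ rather than $f(h)$.

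Then comes the shadow-removal step, which I expect to be the technical heart. Using the abstract covering-versus-separation machinery (as in Chitnis et al.\ and its reformulation used by G\"oke et al.), I would define the shadow of a candidate solution $S$ with respect to $W$ as the vertices that, in $D-S$, cannot both reach $W$ and be reached from $W$; applying the random-sampling / derandomized shadow-removal lemma produces, in time $2^{O(k^2)}\cdot \mathrm{poly}(n)$ or so, a family of sets such that for some member $Z$ there is a minimum solution whose shadow is contained in $Z$, and after deleting $Z$ and ``bypassing'' appropriately we may assume the sought solution is \emph{shadowless}: every vertex of $D-S$ lies on a $W$--$W$ walk. In a shadowless instance, the $\cH$-copies inside strong components of $D-S$ that survive are exactly those that interact with $W$ in a controlled way, and I can finish by a bounded-depth branching on how the $k+1$ vertices of $W$ get separated from each other, together with an important-separator style argument to pin down the $O(k)$ ``boundary'' vertices — giving the $2^{O(k^3\log k)}$ dependence (a $k^{O(k^2)}$-type bound from $O(k^2)$ rounds of branching each with $k^{O(k)}$ choices, or similar).

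The step I expect to be the main obstacle is making the reduction to shadowless instances actually \emph{preserve} the $\cH$-SCC structure: bypassing/torso operations that are harmless for plain reachability (as in DFVS) can create new short arborescences inside strong components, so I would need a careful argument that the bypass operation used only shortcuts paths through the shadow and never merges two strong components of $D-S$ nor introduces a new $H$-subgraph into a surviving component — equivalently, that ``no strong component contains a copy of $H$'' is invariant under the particular contraction we perform. Handling this cleanly, and simultaneously keeping the enumeration of $\cH$-cores down to $n^{O(h)}$ rather than something exponential in $h$, is the delicate part; everything else is a fairly direct lift of the DFVS$+$shadow-removal template.
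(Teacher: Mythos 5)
The proposal has a genuine gap at precisely the place you flag as ``the main obstacle,'' and the paper's entire construction is designed around \emph{not} running into it. You propose to carry out shadow removal followed by a bypass/torso-style contraction of the shadow, and then to branch in a shadowless instance. The paper explicitly forgoes this: it observes that while shadow covering produces a set $Z$ disjoint from some solution, for an arbitrary rooted family $\cH$ there is no known ``problem-specific'' way to eliminate $Z$ while preserving the set of obstructions --- the very bypass operation you hope to justify with ``a careful argument'' is the step the authors could not see how to perform (and deliberately avoid, noting it would in any case cost $2^{O(k^2)}$). Your proposal does not supply that argument, and for general $\cH$ there is no obvious invariant that shortcutting paths through the shadow neither creates new $H$-copies in a surviving strong component nor destroys the correspondence between solutions of the original and contracted instances. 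So the route through shadowless instances is not completed, and as written the proof does not go through.

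What the paper does instead, after the iterative-compression scaffold you set up correctly, is to further guess an ordered partition $\S=(S_1,\dots,S_q)$ of the compression set $W$ (a $2^{O(k\log k)}$-factor overhead) and restrict attention to ``nice'' solutions: for every forbidden copy $F$, either $F$ is hit, or $S_1$ cannot reach $r(F)$, or some vertex of $F$ cannot reach $W$. This converts the obstruction into a choice between deleting a vertex of $F$, cutting $S_1$ off from $r(F)$, or placing some $u\in V(F)$ in the (forward or reverse) shadow of the solution with respect to $W$. The third case is handled by branching on important $\{u\}$--$W$ separators rather than by removing the shadow; the first two drive a recursive algorithm that grows a root-$\cH$-free set $Q$ towards a target $T$ via tight covers of minimum $S$-$T$ separators (Lemma~\ref{lem:tightSuccessorComputation}), with a potential function of the form $k^2+k-\lambda^2-|N_Q|$ that drops in every branch. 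Two ``pushing lemmas'' (Lemma~\ref{lem:pushingLemmaOne} and Lemma~\ref{lem:pushingLemmaTwo}) replace the role that the bypass step plays in the DFVS/subset-DFVS template: they let one replace part of a hypothetical solution by a minimum separator or an important separator \emph{without} altering the instance. None of this appears in your sketch, and it is what makes the argument go through for arbitrary rooted $\cH$.

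Two smaller points. First, your running-time accounting (``$O(k^2)$ rounds each with $k^{O(k)}$ choices'') is informal and does not match how the $2^{O(k^3\log k)}$ arises: in the paper a $2^{O(k\log k)}$ overhead from ordered partitions multiplies a $2^{O(k^2\log k)}$-sized recursion tree for locating a branch set, and an outer branching of depth $k$ over a branch set of size $O(h)\cdot 2^{O(k^2\log k)}$ gives the stated bound. Second, the source of the $n^{O(h)}$ factor is simply the brute-force test for the existence of a subgraph isomorphic to some graph in $\cH$; your framing in terms of ``enumerating $\cH$-cores and reducing to a bounded-size cut task'' is not how the parameter $h$ actually enters.
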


\vspace{-4 pt}

Theorem~\ref{thm:rootedSCCDeletion} also holds for $\cH$-SCC Deletion in the case where, for every graph in $\cH$ there is a vertex that is reachable from every other vertex. One can infer this by simply reversing both the input graph and the forbidden graphs and applying the main theorem.
%The $O^*$ notation hides polynomial factors in $n$, the number of vertices of the input graph.
We also remark that in general, the $n^{O(h)}$ dependence in the running time of the algorithm of Theorem~\ref{thm:rootedSCCDeletion} is very likely unavoidable.
% That is, for {\sc r-$\cH$-SCC Deletion}, a fixed-parameter algorithm parameterized by $k$ and some $g({\cH})$ is unlikely to exist. 
 Indeed, consider the following reduction from the {\sc Clique} problem where the input is an undirected graph~$G$ and $\ell\in {\mathbb N}$, and the objective is to decide whether~$G$ contains a clique of size~$\ell$. We orient all edges in~$G$ arbitrarily, add a universal sink vertex~$v^\star$ and then a universal source vertex~$u^\star$ and the arc $(v^\star,u^\star)$ to obtain a strongly connected digraph, set $k=0$, and set $\cH$ to be the set of all tournaments on exactly $\ell+2$ vertices. Then, an FPT algorithm for {\sc r-$\cH$-SCC Deletion} parameterized by $k$ and $\ell$ would imply an FPT algorithm for {\sc Clique} parameterized by $\ell$.
%  which cannot exist unless FPT=W[1]. 

When $\cH$ only comprises of the star with $d+1$ leaves with all arcs oriented away from the centre, a closer inspection of the algorithm of Theorem~\ref{thm:rootedSCCDeletion} demonstrates that
% the algorithm of Theorem~\ref{thm:rootedSCCDeletion} 
it can be implemented in a way that implies a fixed-parameter algorithm parameterized by both~$k$ and~$d$ for this problem. We call this problem, $d$-{\sc Out-Degree SCC Deletion}. In this problem, the objective is to decide whether $k$ vertices can be deleted from a given digraph to ensure that the graph induced by each strong component has out-degree at most $d$.
%  is FPT parameterized by $k$, for {\em fixed} $d$. However,  
%To achieve this improvement, we essentially rely on the fact that  a vertex of out-degree $d+1$ can be computed in polynomial time.

 \begin{theorem}\label{thm:outDegreeSCCDeletion}
 	$d$-{\sc Out-Degree SCC Deletion} can be solved in time $2^{O(k^3 \log k)} \cdot n^{O(1)}$.
 	%, where the exponent of $n$ is independent of $k$ and $d$.
 \end{theorem}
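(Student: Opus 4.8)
The plan is to derive Theorem~\ref{thm:outDegreeSCCDeletion} as a refinement of Theorem~\ref{thm:rootedSCCDeletion} rather than proving it from scratch. Observe first that $d$-{\sc Out-Degree SCC Deletion} is exactly the instance of {\sc r-$\cH$-SCC Deletion} in which $\cH=\{S_{d+1}\}$, where $S_{d+1}$ is the out-star with centre $c$ and $d+1$ leaves, all arcs oriented away from $c$: a strong component contains $S_{d+1}$ as a (not necessarily induced) subgraph if and only if some vertex of that component has out-degree at least $d+1$ inside the component. Since $S_{d+1}$ is itself an arborescence, this is a legitimate instance of the rooted problem, so Theorem~\ref{thm:rootedSCCDeletion} already gives a $2^{O(k^3\log k)}\cdot n^{O(h)}$ algorithm with $h=d+2$. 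The whole content of Theorem~\ref{thm:outDegreeSCCDeletion} is therefore to eliminate the dependence of the exponent of $n$ on $d$, i.e.\ to show that for this particular family the $n^{O(h)}$ factors in the algorithm of Theorem~\ref{thm:rootedSCCDeletion} can be replaced by $n^{O(1)}$.

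Accordingly, I would revisit each place in the algorithm of Theorem~\ref{thm:rootedSCCDeletion} where a factor polynomial in $n$ with exponent depending on $h$ is incurred, and argue that for $\cH=\{S_{d+1}\}$ the relevant combinatorial object can be detected or enumerated in time polynomial in $n$ with a fixed exponent. The two natural culprits are: (i) testing, for a candidate strong component (or for the current digraph), whether it contains a member of $\cH$ as a subgraph --- for a general $H$ on $h$ vertices this is subgraph isomorphism and costs $n^{O(h)}$, but for $S_{d+1}$ it is trivial: a vertex $v$ in a strongly connected digraph $C$ ``witnesses'' a copy of $S_{d+1}$ iff $\deg^+_C(v)\ge d+1$, which is checkable in linear time; and (ii) any branching or guessing step that enumerates a bounded-size ``core'' of a forbidden subgraph. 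Wherever the generic algorithm guesses an $O(h)$-sized vertex subset spanning a copy of some $H\in\cH$ (contributing $n^{O(h)}$), one can instead guess only the centre $c$ of the offending out-star (an $n$-way branch) together with, if needed, $O(k)$ boundary vertices used by the shadow-removal/iterative-compression scaffolding of Theorem~\ref{thm:rootedSCCDeletion}; the remaining $d+1$ leaves never need to be named explicitly, since their existence is an out-degree condition on $c$ that is verified rather than guessed. Carrying this substitution through the proof of Theorem~\ref{thm:rootedSCCDeletion} line by line, all $n^{O(h)}$ factors collapse to $n^{O(1)}$ while the $2^{O(k^3\log k)}$ factor, which never depended on $h$, is unchanged; this yields the claimed $2^{O(k^3\log k)}\cdot n^{O(1)}$ bound and in particular shows the problem is FPT in $k$ alone (with no dependence on $d$ in the parameter).

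I would also note the simultaneous-parameter consequence flagged in the excerpt: since the only place $d$ could have entered the running time was through $h=d+2$ in the polynomial factors, and those are now $d$-independent, the algorithm is fixed-parameter tractable parameterized by $k$, and a fortiori by $k$ and $d$ together. The main obstacle I anticipate is purely expository: faithfully tracking every occurrence of the parameter $h$ inside the (necessarily intricate) shadow-removal and iterative-compression machinery underlying Theorem~\ref{thm:rootedSCCDeletion}, and checking that in each such occurrence the generic ``enumerate an $h$-vertex copy of $H$'' step can genuinely be replaced by the specialized ``guess the out-star centre and verify its out-degree'' step without breaking any invariant the analysis relies on (for instance, invariants about which vertices are allowed to remain after a compression round, or about the interaction between the guessed vertices and the torso/shadow construction). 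Once that bookkeeping is in place, no new ideas beyond Theorem~\ref{thm:rootedSCCDeletion} are required, and the proof reduces to a careful reading of its algorithm with $\cH$ instantiated to $\{S_{d+1}\}$.
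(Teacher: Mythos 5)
Your proposal takes essentially the same route as the paper: Theorem~\ref{thm:outDegreeSCCDeletion} is derived there by observing that the only $n^{O(h)}$ factor in the algorithm of Theorem~\ref{thm:rootedSCCDeletion} comes from testing whether $D$ contains a subgraph isomorphic to a member of $\cal H$, and for $\cH=\{S_{d+1}\}$ this reduces to a polynomial-time maximum-out-degree check inside strong components. One small inaccuracy worth correcting: the algorithm underlying Theorem~\ref{thm:rootedSCCDeletion} deliberately \emph{avoids} the shadow-removal technique (the paper emphasises this and substitutes a direct branching scheme via its pushing routines), so your references to ``shadow-removal scaffolding'' should be dropped.
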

\vspace{-4 pt}

Our next result concerns the 
 {\sc Path $\cH$-SCC Deletion} ({\sc p-$\cH$-SCC Deletion}) problem, which is the special case of {$\cH$-SCC Deletion} where $\cH$ contains at least one directed path. We show that with an appropriate fixed-parameter preprocessing routine, this problem can be reduced to {\sc r-$\cH$-SCC Deletion} where $\cH$ only comprises of the path of length $g(\cH)$ for some function $g$. Invoking Theorem~\ref{thm:rootedSCCDeletion} then leads us to the following result.
% , the length of the longest path in $\cH$.  

\begin{restatable}{theorem}{pathSCCDeletion}\label{thm:pathSCCDeletion}
{\sc p-$\cH$-SCC Deletion} can be solved in time $2^{O(k^3 \log k)} \cdot h^{O(k)} \cdot 2^{O(h^6)} \cdot n^{O(h^3)}$.
%, where $h= \max_{H \in \cH} \vert V(H) \vert$.
\end{restatable}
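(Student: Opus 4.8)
The plan is to reduce {\sc p-$\cH$-SCC Deletion}, by a branching procedure of depth $O(k)$, to a bounded number of instances of {\sc r-$\cH'$-SCC Deletion} in which $\cH'$ consists of a single directed path, and then to invoke Theorem~\ref{thm:rootedSCCDeletion} on each. Fix once and for all a directed path $P^\star \in \cH$, on $p \le h$ vertices; note that a directed path is an arborescence, so $\cH' = \{P^\star\}$ is a legitimate input to {\sc r-$\cH'$-SCC Deletion}.

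The starting point is a structural dichotomy for strongly connected digraphs. If a strongly connected digraph $C$ contains no copy of $P^\star$, then every directed path of $C$ has fewer than $p$ vertices, and hence any two vertices of $C$ are at directed distance at most $p-1$ from each other. Consequently, if such a $C$ nonetheless contains some $H \in \cH$ as a (not necessarily induced) subgraph, then I can take an embedding of $H$ into $C$, list its at most $h$ image vertices in some cyclic order, and join consecutive ones by shortest directed paths of $C$; the resulting closed walk spans at most $h \cdot (p-1)$ vertices, and the subdigraph of $C$ induced on these vertices is strongly connected and still contains $H$. This yields a bound $g = g(\cH)$, polynomial in $h$, with the property that a strong component $C$ of an arbitrary digraph contains a member of $\cH$ as a subgraph if and only if \emph{either} $C$ contains $P^\star$, \emph{or} $C$ contains a vertex set of size at most $g$ whose induced subdigraph is strongly connected and contains a member of $\cH$; I will call the latter a \emph{small obstruction}.

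With the dichotomy available, the branching is direct. Given $(D,k)$, I look for a small obstruction by enumerating all vertex sets of size at most $g$ --- there are $n^{O(g)}$ of them --- and testing for each whether it induces a strongly connected subdigraph containing some $H \in \cH$; as both the subdigraph and the patterns have size bounded in terms of $h$, this test, together with the enumeration of the relevant candidate digraphs on at most $g$ vertices, contributes only a factor that is a function of $h$ alone. If a small obstruction $B$ is found, then since $D[B]$ is strongly connected, every solution must delete a vertex of $B$ (otherwise a bad digraph survives inside a single strong component of $D - S$), so I branch on the at most $g$ choices for which vertex of $B$ goes into the solution and decrease $k$ by one in each branch. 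After at most $k$ levels this produces $g^{O(k)} = h^{O(k)}$ leaf instances, each free of small obstructions. On such a leaf $(D',k')$ the dichotomy guarantees that a strong component of $D' - S$ is bad exactly when it contains $P^\star$, so $(D',k')$ is a yes-instance of {\sc p-$\cH$-SCC Deletion} if and only if it is a yes-instance of {\sc r-$\{P^\star\}$-SCC Deletion}, which I solve with Theorem~\ref{thm:rootedSCCDeletion}. Multiplying together the branching factor, the cost of locating obstructions, and the running time from Theorem~\ref{thm:rootedSCCDeletion} gives the claimed bound.

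The hard part will be the structural dichotomy itself: pinning down the correct polynomial $g(\cH)$, and arguing carefully that after every small obstruction has been destroyed, forbidding the single path $P^\star$ in each strong component is genuinely equivalent to forbidding the whole family $\cH$. The delicate points are that the ``closing-up'' construction must be kept to a size depending only on $h$; that it must respect the not-necessarily-induced subgraph relation; and that the equivalence must be robust under vertex deletion, so that no small obstruction can reappear in $D' - S$ that was not already present in $D'$ --- this is exactly what makes it safe to eliminate all obstructions once, before handing the leaf instances to {\sc r-$\{P^\star\}$-SCC Deletion}.
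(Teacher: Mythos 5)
Your proposal is correct and takes essentially the same route as the paper: you branch to destroy all bounded-size strongly connected subdigraphs of $D$ that contain a member of $\cH$ (the paper formalises these as the finite family $\HH^-_p$ of ``good path completions'' whose witnessing paths are short) and then hand the residual, obstruction-free instances to {\sc r-$\{P^\star\}$-SCC Deletion} via Theorem~\ref{thm:rootedSCCDeletion}. The ``delicate point'' you flag about obstructions reappearing in $D'-S$ is in fact automatic --- for $B\subseteq V(D')\setminus S$ one has $(D'-S)[B]=D'[B]$, so any small obstruction of $D'-S$ was already a small obstruction of $D'$ --- and your cyclic-order closing-up, which uses only $h$ connecting paths, yields the slightly sharper bound $g=O(h^2)$, whereas the paper's path completions add a short path for every ordered pair of $V(H)$ and so get $g=O(h^3)$.
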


\vspace{-4 pt}

%In this section, we give an algorithm for 
We then pay special attention to the {\sc r-$\cH$-SCC Deletion}  
%Theorem~\ref{thm:rootedSCCDeletion} 
problem 
%that runs in time $O^*(2^{O(k \log k)})$
 when $\cH$ contains only the out-directed {2-star}, i.e., the $1$-{\sc Out-Degree SCC Deletion} problem. Notice that a strongly connected graph with at least two vertices that excludes this graph as a subgraph must be a simple cycle, and so is 1-out-regular. A 1-out-regular digraph is a digraph where every vertex has out-degree exactly $1$. Therefore, this special case of {\sc  r-$\cH$-SCC Deletion} is precisely the {\sc 1-Out-regular Deletion} problem where one is given a digraph $D$ and an integer~$k$ and the objective is to decide whether there is a set of vertices of size at most~$k$ whose deletion leaves a digraph where every strong component induces a 1-out-regular subgraph. G{\"{o}}ke et al.~\cite{GokeMM19} recently gave an algorithm for this problem with running  time $2^{O(k^3)} \cdot n^{O(1)}$. We give an improved algorithm for this problem with an 
% Moreover, our 
 asymptotic dependence on $k$ that matches that of the current best algorithm for DFVS~\cite{ChenLLOR08}, which is a special case of {\sc 1-Out-regular Deletion}.

\begin{restatable}{theorem}{oneOutRegularDeletion}\label{thm:oneOutRegularDeletion}
	{\sc 1-Out-regular Vertex Deletion} can be solved in time $2^{O(k\log k)} \cdot n^{O(1)}$.
\end{restatable}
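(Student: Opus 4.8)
The plan is to transplant the iterative‑compression‑plus‑important‑separators strategy that Chen et al.~\cite{ChenLLOR08} use for DFVS, keeping every step within a $2^{O(k\log k)}\cdot n^{O(1)}$ budget. Call a digraph \emph{good} if each of its strong components is a single vertex or a simple directed cycle; our target is precisely deletion to a good digraph. First I would run iterative compression: processing the vertices one at a time, maintain for each prefix a solution of size at most $k+1$, and to compress such a solution $W$ to one of size at most $k$, branch over all $2^{|W|}\le 2^{k+1}$ subsets $Z\subseteq W$ recording $W\setminus(\text{new solution})$, delete $W\setminus Z$, and lower the budget accordingly. With an $O(n\cdot 2^{k})$ overhead this reduces the problem to the \emph{disjoint compression} core: given $D$, $k$, and an undeletable set $T\subseteq V(D)$ with $|T|\le k+1$ such that $D-T$ is good (and $D[T]$ is good, since induced subdigraphs of good digraphs are good — otherwise reject), decide whether some $S\subseteq V(D)\setminus T$ of size at most $k$ makes $D-S$ good.

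The crucial structural point is that in the core problem one only has to control the at most $|T|\le k+1$ strong components of $D-S$ that meet $T$. Indeed, if a strong component $C$ of $D-S$ is disjoint from $T$, then no internal path of $C$ can pass through $T$ (that vertex of $T$ would then lie in $C$), so $C$ is a strong component of $D-S-T$, which is an induced subdigraph of the good digraph $D-T$ and hence good. Two refinements of this analysis are used below: any simple cycle of $D-T$ that survives $S$ cannot merge with $T$ and thus remains its own strong component of $D-S$; consequently the non‑$T$ part of a \emph{special} component (one meeting $T$) lies entirely in the acyclic part of $D-S-T$.

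Next I would enumerate, in $|T|^{O(|T|)}=2^{O(k\log k)}$ ways, the \emph{shape} of the solution at $T$: (a) the partition of $T$ into groups, one per special component; (b) for each group of size $\ge 2$, the cyclic order of its $T$‑vertices along the corresponding cycle; and (c) a linear order on the groups consistent with the topological order of the condensation of $D-S$. For each fixed guess, work in the split digraph (each $u$ becomes $u^{\mathrm{in}}\to u^{\mathrm{out}}$, each arc $(u,v)$ becomes $u^{\mathrm{out}}\to v^{\mathrm{in}}$) and build a bounded list of ordered terminal pairs whose prescribed non‑reachabilities encode goodness under the guess: pairs forbidding any path from a later group back to an earlier one (forcing the guessed topological order and keeping the groups separate), together with pairs forbidding, for each special component, any of its vertices from having a second out‑arc that returns to the component — here one uses the refinements above to restrict the possible endpoints of such a bad arc to the controlled acyclic part. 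Then invoke the skew‑separator routine of Chen et al.~\cite{ChenLLOR08}, which computes a minimum such $S$ (if one of size $\le k$ exists) in $O^{*}(4^{k})$ time, and accept iff some guess yields a separator of size at most $k$. Soundness follows because goodness is monotone under deleting further non‑$T$ vertices and the enforced non‑reachabilities certify that no component meeting $T$ has a vertex of out‑degree $\ge 2$ inside it; completeness follows because an optimal solution realises some guess, for which it is in particular a feasible skew separator, so the minimum one found for that guess has size at most its own. The total running time is $O(n)\cdot 2^{O(k)}\cdot 2^{O(k\log k)}\cdot 4^{O(k)}\cdot n^{O(1)}=2^{O(k\log k)}\cdot n^{O(1)}$, improving the $2^{O(k^3\log k)}\cdot n^{O(1)}$ bound one would get by directly applying Theorem~\ref{thm:rootedSCCDeletion}.

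The main obstacle I expect is precisely the encoding in the third step: translating ``every special strong component is a simple cycle'' into a \emph{fixed} list of pairwise non‑reachability constraints between terminals nameable in advance from the guessed shape, so that the skew‑separator black box applies verbatim, and then proving that killing exactly these reachabilities is both necessary (so the true solution is feasible for its guess) and sufficient (so the returned digraph is good). This is where the structural refinements about where a ``bad'' second out‑arc can lead are essential; the remaining parts — the compression wrapper, the enumeration of shapes, and the invocation of the skew‑separator algorithm — are routine adaptations of the DFVS algorithm.
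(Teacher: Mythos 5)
Your high-level wrapper (iterative compression, then guessing how $T$ is split across special components together with the cyclic orders and a topological order among them) and the three structural observations are all sound and broadly parallel the paper's reduction to nice instances of the partitioned compression problem. The gap is precisely in the step you yourself flag as the main obstacle, and I do not believe it can be closed in the way you propose: the constraint ``every special component is a simple cycle'' is not expressible as a fixed list of terminal-pair non-reachability constraints that the skew-separator routine of Chen et al.\ can be handed in advance of seeing the solution. Once a shape is fixed, a special component $C$ consists of the named vertices $T\cap C$ together with an \emph{unknown} set of non-$T$ vertices on the connecting segments, and the patterns you must forbid are triples $u,v,w$ with $v,w\in N^+(u)$ all lying in the same $C$, where $u$ (and often $v,w$) are among these unknown interior vertices. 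Your refinement that such a $u$ sits in the acyclic part of $D-S-T$ is a property of the (unknown) solution $S$ and does not let you designate $u$, $v$, $w$ as sources and sinks of a skew-separator instance. This is the crucial disanalogy with DFVS: there $D-W$ is acyclic, so every forbidden cycle passes through the \emph{already-named} set $W$ and all required non-reachabilities are between named $W$-copies; here $D-T$ being good does \emph{not} localize the out-degree-$\ge 2$ vertex of a bad pattern to $T$. Consequently, feasibility for your skew-separator instance is necessary but not sufficient, and the returned separator can fail to be a solution; the ``refinements'' do not restore sufficiency.

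The paper does not reduce to a cut black box at all. It runs a branching algorithm that maintains two live separator constraints and decreases the measure $3k-\lambda(S_1,T)-\lambda(\L,T\cup S_1)$, and the technical heart is Lemma~\ref{lem:disjointnessLemma}: when every vertex in the explored region $R=R(S_1,C)$ already has out-degree at most $1$, one can push a nice solution so that its intersection with $R$ lies inside $close(S_1,R)$, a set of size $O(|S_1|)=O(k)$, and (via part~(2)) away from $R_{\max}(\L,T\cup\{S_1\})$. Combined with Lemma~\ref{lem:branchableObjectOutDegreeDeletion}, which finds either a branchable out-degree-$\ge 2$ vertex or such a clean region, this gives an $O(k)$ branching factor and $O(k)$ depth, hence $2^{O(k\log k)}$. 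That pushing argument is exactly the substitute for the terminal-pair encoding you were hoping to write down: it controls the unknown interior vertices of the special components without naming them. If you want to rescue your route, the missing ingredient is a statement of this ``push-to-a-small-boundary-set'' type rather than a skew-separator call.
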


\vspace{-4 pt}

%In this section, we give an algorithm for {\sc Rooted $\cH$-SCC Deletion} that runs in time $O^*(2^{O(k\log s})$ in 
%
Finally, we also study
the special case of {\sc r-$\cH$-SCC Deletion} when $\cH$ is the set of all arborescences on exactly $s+1$ vertices.
Notice that the strongly connected graphs that exclude the graphs in $\cH$ as subgraphs are precisely the strongly connected graphs of size at most $s$. This problem when $\cH$ is the set of all arborescences on exactly $s+1$ vertices is called the {\sc Bounded Size Strong Component Vertex Deletion (BSSCVD)}. 
We improve upon the result of G{\"{o}}ke et al.~\cite{GokeMM19} who gave an algorithm for {\sc BSSCVD} with running time $4^k(ks+k+s)!\cdot n^{O(1)}$.
% Notice that when $s=1$, this is precisely the DFVS problem.

\begin{restatable}{theorem}{boundedSizeDeletion}\label{thm:boundedSizeDeletion}
	{\sc BSSCVD} can be solved in time $2^{O(k(\log k+\log s))} \cdot n^{O(1)}$.
\end{restatable}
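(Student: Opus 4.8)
Rather than invoke Theorem~\ref{thm:rootedSCCDeletion} directly (which, with $\cH$ the set of all arborescences on $s+1$ vertices, would cost $n^{\Theta(h)}=n^{\Theta(s)}$ and a cubic-in-$k$ exponent), I would adapt the iterative-compression-plus-important-separators strategy underlying the fastest known \textsc{DFVS} algorithm, exploiting that here the only forbidden configurations are strong components on more than $s$ vertices.

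\textbf{Iterative compression.} Fix an ordering $v_1,\dots,v_n$ of $V(D)$ and maintain, for each prefix graph $D_i=D[\{v_1,\dots,v_i\}]$, a solution $S_i$ of size at most $k$; then $S_i\cup\{v_{i+1}\}$ is a solution of size $k+1$ for $D_{i+1}$ that we must shrink. Guessing the intersection $S\cap(S_i\cup\{v_{i+1}\})$, deleting it, and decreasing $k$ accordingly leaves the following \emph{compression} task: given $D$ and $W\subseteq V(D)$ with $|W|\le k+1$ such that every strong component of $D-W$ has at most $s$ vertices, find (if possible) a set $S\subseteq V(D)\setminus W$ with $|S|\le k$ such that every strong component of $D-S$ has at most $s$ vertices. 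This multiplies the running time by $2^{O(k)}\cdot n$.

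\textbf{Localizing the obstruction and guessing component sizes.} The key simplification is that a set $S$ disjoint from $W$ solves the compression task \emph{iff} for every $w\in W$ the strong component of $D-S$ containing $w$ has at most $s$ vertices: any strong component of $D-S$ avoiding $W$ is a strong component of the subgraph $D-W-S$ of $D-W$ and hence already has at most $s$ vertices. So only the at most $k+1$ ``terminal'' strong components matter. I would then guess, for each $w\in W$, the size $n_w\le s$ of the strong component of $D-S$ that should contain $w$; there are at most $(s+1)^{k+1}$ guesses, contributing the factor $2^{O(k\log s)}$. Since any $S$ with $|\mathrm{SCC}_{D-S}(w)|\le n_w$ for all $w\in W$ is automatically a valid solution, and an optimal solution realizes \emph{some} guess, it suffices to decide, for a fixed guess, whether such an $S$ of size at most $k$ exists.

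\textbf{Reducing to separations, and the main obstacle.} To decide the fixed-guess problem I would imitate Chen et al.: guess the partition of $W$ induced by ``lying in a common strong component of $D-S$'' together with a linear order on the blocks consistent with a topological order of the condensation of $D-S$ ($2^{O(k\log k)}$ choices), and then phrase the remaining requirement as a bounded family of $(s,t)$-separation constraints — each block must be internally strongly connected of the prescribed small size, a block later in the order must not reach an earlier one, and everything that must \emph{not} merge into a block's strong component must be separated from it in at least one direction — which important-separator enumeration resolves with $2^{O(k)}$ candidate deletion sets, each checkable in polynomial time. The crux, where essentially all the work lies, is exactly this last reduction: a guess reveals only the \emph{size} of each target strong component, not its vertex set (naming the vertices would cost $n^{\Theta(s)}$), so the algorithm must pin down the components and the deletion set simultaneously. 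I expect the main technical content to be an uncrossing/pushing argument showing that whenever a solution with a given profile exists there is one in which every terminal component is ``canonical'' and every separation constraint is witnessed by an \emph{important} separator, while verifying that shrinking one component (or its separator) never pushes a neighbouring component over its size budget or creates a new oversized component elsewhere. Granting this, multiplying the iterative-compression overhead $2^{O(k)}\cdot n$, the size-guessing factor $2^{O(k\log s)}$, the partition/order-guessing factor $2^{O(k\log k)}$, and the polynomial-time important-separator step yields the claimed running time $2^{O(k(\log k+\log s))}\cdot n^{O(1)}$.
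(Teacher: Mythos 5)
Your outline of the iterative-compression scaffolding and the guess of an ordered partition of $W$ (at cost $2^{O(k\log k)}$) matches the paper's Lemma~\ref{lem:DisjCompressionToDisjPartionedCompression}, and your ``localization'' observation --- that only the at most $k+1$ strong components of $D-S$ meeting $W$ need to be constrained --- is correct and is implicitly present in the paper as well. However, there is a genuine gap in the step you yourself flag as ``the crux, where essentially all the work lies.'' Guessing the \emph{size} $n_w\le s$ of each terminal component does not translate into separation constraints: the constraint ``the strong component of $w$ in $D-S$ has at most $s$ vertices'' bounds a \emph{count}, not a reachability relation between two named vertex sets, and you have no handle on \emph{which} $n$-minus-$s$ vertices must be cut off from $w$'s component. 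Important-separator enumeration needs fixed source/sink terminals; here the ``bad'' vertices that must be separated away are unknown and could be anywhere. Your proposal to resolve this with ``an uncrossing/pushing argument'' is asserted, not proven, and (as far as I can see) there is no straightforward way to make the $(s+1)^{k+1}$ size guess do any useful work --- note that you could equally well drop the guess entirely and ask for $|\mathrm{SCC}_{D-S}(w)|\le s$, which strongly suggests the guess is not the mechanism that closes the problem.

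The paper closes this gap in a different, more direct, way. After the partitioned-compression reduction it maintains a set $\L\subseteq V(D)\setminus W$ of vertices guessed to be ``cut off'' from $W$ by the solution (Lemma~\ref{lem:mainBoundedSizeDeletionAlgorithm}). When $\L\ne\emptyset$ it branches on a vertex of $C_{\max}(\L,W)$ (delete it, or add it to $\L$), with correctness from a pushing lemma (Lemma~\ref{lem:disjointnessLemmaBoundedSizeDeletion}) ensuring some nice solution avoids $R_{\max}(\L,W)$. When $\L=\emptyset$, it finds a \emph{concrete witness} of infeasibility: an arborescence on $s+1$ vertices rooted at some $c\in S_1$. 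Any nice solution must either delete one of its $s+1$ vertices or disconnect one of them from $W$ (putting it into the forward shadow), so it branches on $2(s+1)$ options. The measure $2k-\lambda(\L,W)$ strictly drops in every branch, giving $(2s+2)^{O(k)}=2^{O(k\log s)}$ for this phase. The $2^{O(k\log s)}$ factor therefore arises from the branching \emph{width} $O(s)$ over depth $O(k)$, not from guessing component sizes. I would encourage you to try to instantiate your sketch with this concrete witness-arborescence idea: it is exactly the missing piece that replaces your hoped-for ``uncrossing argument for separation constraints with unknown terminals.''
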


We now give an overview of the techniques used to prove our results. 
%Missing proofs can be found in the appended full version.

%Our main goal is to study the parameterized complexity of these problems and so, we do not attempt to optimize the polynomial dependence on the input size in our algorithm design and analysis. 

% 
%
%\paragraph*{Our Techniques.} 

\noindent
\textbf{Algorithm for {\sc r-$\cH$-SCC Deletion}}.~ 
We begin by using the technique of iterative compression to obtain a  tuple $(D,{\cal S}=(S_1,\dots, S_q),W,k)$ such that $W$ is a solution for the instance $(D,k+1)$ of {\sc r-$\cH$-SCC Deletion}, $S_1,\dots, S_q$ partition $W$ and moreover,  if there is a solution for $(D,k)$, then there is a solution that is disjoint from $W$ and  intersects $S_i$-$S_j$ paths for $i<j$. We note that this step is standard when dealing with directed cut problems.

%
%\vspace{-10 pt}
A commonly used technique subsequent to this step (albeit one that we {\em do not} employ) is the directed {\em shadow removal} technique introduced by Chitnis et al.~\cite{ChitnisHM13} where one identifies a set of vertices $Z$ such that for some hypothetical solution $X$, $Z$ is disjoint from $X$ and contains the set of vertices that are either unable to reach $W$ or are unreachable from $W$ in $D-X$. This set is then removed in a problem specific way while preserving all obstructions.  While this can be easily achieved for certain simple obstructions, we are dealing with an arbitrary family of digraphs with the only assumption being that they are rooted. Consequently, it is not at all clear how one could implement the removal of vertices in $Z$ and that makes our task significantly more challenging. To avoid this obstacle, we forgo the technique of shadow removal and directly design an intricate branching algorithm.

 The crux of this algorithm is the observation that for a special type of solution $X$, for every forbidden subgraph $F$, either $X$ intersects $V(F)$ or $X$ contains an $S_1$-$\{r(F)\}$ separator ($r(F)$ denotes a fixed root of $F$) or $X$ contains a $\{u\}$-$W$ separator for some vertex $u \in V(F)$. We then show that there is always an efficiently computable forbidden subgraph upon which we can branch exhaustively using the above observation in such a way that we always make progress. The fact that such a subgraph can always be identified efficiently is far from obvious and proving it is one of our main technical challenges. 

%\vspace{-4 pt}

\noindent
\textbf{Algorithm for {\sc  p-$\HH$-SCC Deletion}.}~We show that for every finite family of digraphs~$\HH$, there exists another  (infinite) family $\HH^*$, such that $\HH$-SCC problem is equivalent to the problem of deleting at most $k$ vertices to exclude all graphs in $\HH^*$ as subgraphs in the remaining graph (not necessarily in a single strong component). 
Moreover, we show that when $\HH$ contains a directed path, then the family $\HH^*$ can be partitioned into two, say $\HH_1^*$ and $\HH_2^*$ such that $\HH_1^*$ is finite and the problem of   deleting at most $k$ vertices to exclude all graphs in $\HH_2^*$ as subgraphs in the remaining graph  is equivalent to the {\sc r-$\cX$-SCC Deletion} where $\cX$ only contains a directed path whose length depends on $\cH$. This allows us to branch on all subgraphs isomorphic to graphs in $\HH_1^*$ and then invoke Theorem~\ref{thm:rootedSCCDeletion}.
%as described above, exhibits certain properties that can be exploited to yield the FPT algorithm for the {\sc Path $\HH$-SCC} problem.
%\vspace{-6 pt}

\noindent
\textbf{Improved algorithms for {\sc  1-Out-Regular Vertex Deletion} and {\sc  BSSCV}.}~Here, we begin in the same way as for {\sc r-$\HH$-SCC Deletion} by obtaining a  tuple $(D,{\cal S}=(S_1,\dots, S_q),W,k)$ such that if there is a solution for $(D,k)$, then there is a solution that is disjoint from $W$ and  intersects all $S_i$-$S_j$ paths for $i<j$.
In the case of {\sc  1-Out-Regular Vertex Deletion}, the main new contribution  that results in a speedup over Theorem~\ref{thm:rootedSCCDeletion} is a lemma that shows that if we consider an $S_1$-$W\setminus S_1$ separator $C$ such that every vertex reachable from $S_1$ in $D-C$ has out-degree at most $1$ in~$D$, then there is a solution whose intersection with this set of reachable vertices is contained within an efficiently computable set of $O(k)$ vertices. This gives us a branching algorithm where we essentially compute a ``furthest'' $S_1$-$W\setminus S_1$ separator $C$ of size at most $k$ in time $2^{O(k)} \cdot n^{O(1)}$ and then branch on deleting a vertex of $C$ or one of these $O(k)$ vertices in the reachable set. In the case of {\sc BSSCV}, we show that if for some $x\in S_1$, $D$ contains a subgraph of size $s+1$ with an arborescence rooted at $x$, then at least one of these vertices must either be deleted or must have all its paths to $W$ deleted when removing a solution. In the latter case, we will be able to branch on an $\{x\} $-$W$ important separator~\cite{marx2006parameterized} of size at most $k$.

\noindent
\textbf{Further Remarks.}~
%Our algorithms improve the results of G\"{o}ke et al.~\cite{GokeMM19}. 
The results of G\"{o}ke et al.~\cite{GokeMM19} crucially use the technique of {\em covering the shadow} which adds a factor of $2^{O(k^2)} \cdot n^{O(1)}$ to the running time of any algorithm that uses it. Thus, our $2^{O(k \log k)} \cdot n^{O(1)}$ algorithm (Theorem~\ref{defn:extdefn}) is an improvement over what is currently possible using the shadow covering technique. 
Moreover, although all our results are stated for the vertex deletion version of the problems, we would like to mention that these results will apply for the corresponding arc deletion versions of the problems as well, as all our proofs go through for the later case also. 

\section{Preliminaries}
We use standard notion regarding digraphs. Given two graphs $D_1,D_2$ we denote the digraph $D_1 \cup D_2$ as the digraph with vertex set $V(D_1) \cup V(D_2)$ and arc set $A(D_1) \cup A(D_2)$. Note that the vertex sets of $D_1$ and $D_2$ are not necessarily disjoint. By $D_1 \subseteq D_2$, we mean that $D_1$ is a {\em subdigraph} of $D_2$. 
 We use~$|D|$ as a shorthand for $|V(D)|$. A {\em strongly connected component (or strong component)} of a digraph $D$ is a maximal set $S\subseteq V(D)$ such that for any pair $u,v$ of vertices in $S$, there is a path from $u$ to $v$ and from $v$ to $u$ in $D$. It is well known that it is possible to order the strongly connected components of $D$ such that there is no path from a component to another component behind it in the ordering. This is called the {\em topological ordering} of the strongly connected components of $D$. For any digraph $D$, for any $X \subseteq V(D)$, $D[X]$ denotes the subdigraph of $D$ {\em induced} by $X$ and $D-X$ denotes the induced subdigraph $D[V(D)\setminus S]$.
For a vertex $v \in V(D)$, by $N_D^+(v)$ we denote $\{u \in V(D)  \mid (v,u) \in A(D)\}$ and by $N_D^-(v)$ we denote $\{u \in V(D)  \mid (u,v) \in A(D)\}$. 
For a subset $S \subseteq V(G)$, by $N_D^+(S)$ we denote the set $\bigcup_{v \in S} N_D^+(v) \setminus S$. By $N_D^+[S]$ we denote $N_D^+(S) \cup S$. Analogous definitions hold for $N_D^-(S)$ and $N_D^-[S]$. When the digraph $D$ is clear from the context, we drop the subscript $D$ from the notation.

 For any $S,T \subseteq V(D)$, by an {\em $S$-$T$ path} in $D$ we mean a path from some vertex of $S$ to some vertex of $T$ in $D$. 
 When $S=\{v\}$ (resp.~$T=\{v\}$) is a singleton set, we write a $v$-$T$ (resp.~$S$-$v$ path).   
For $S,T,C \subseteq V(D)$ such that $S \cap T = \emptyset$, we say that $C$ is an {\em $S$-$T$ separator} if there is no directed $S$-$T$ path in $D- C$ and $C \cap S = C \cap T = \emptyset$. For an $S$-$T$ separator $C$, by $R_D(S,C)$ we denote the set of vertices $v$ such that there exists an $S$-$\{v\}$ path in $D-C$. By $\overline{R}_D(S,C)$ we denote the set $V(D)\setminus R_D(S,C)$, that is the set of vertices that are unreachable from $S$ after removing $C$. Note that for any set $R \subseteq V(D)$ such that $S \subseteq R$, $R \cap T = \emptyset$ and $N^+(R) \cap T =\emptyset$, {if $R$ is reachable from $S$} in $D[R]$, then the set $C=N^+(R)$ is an $S$-$T$ separator such that $R_D(S,C) = R$. We say that an $S$-$T$ separator $C$ {\em covers} an $S$-$T$ separator  $C'$ if $R_D(S,C) \supseteq R_D(S,C')$. We say that $C$ {\em tightly covers} $C'$ if $C$ covers $C'$ and there does not exist a $C''$ that covers $C'$ and is covered by $C$. Two $S$-$T$ separators are {\em incomparable} if neither covers the other. 
$\lambda_D(S,T)$ denotes the size of a minimum $S$-$T$ separator  in $D$. 
It is well known~\cite{ChenLLOR08} 
%that as a consequence of the above observation, 
that there exists a unique minimum $S$-$T$ separator $C_{\rm min}(S,T)$ and a unique minimum $S$-$T$ separator $C_{\rm max}(S,T)$ such that for every minimum $S$-$T$ separator $C$, $C_{\rm min}(S,T)$ is covered by $C$ and $C_{\rm max}(S,T)$ covers $C$. We call $C_{\rm min}(S,T)$ the {\em closest minimum $S$-$T$ separator}  and $C_{\rm max}(S,T)$ the {\em furthest minimum $S$-$T$ separator}. 
Moreover, we define $R_{\rm min}(S,T)=R_D(S,C_{\rm min}(S,T))$ and $R_{\rm max}(S,T)=R_D(S,C_{\rm max}(S,T))$. All four of these sets can be computed in polynomial time using  max-flow computations (see \cite{marx2006parameterized}).  

An {\em arborescence} is a rooted directed tree where every vertex except the root has in-degree exactly $1$ and the root has in-degree $0$. 
A digraph $D$ is said to be {\em rooted} at $v\in V(D)$ if $D$ contains as a subdigraph on $V(D)$ an arborescence rooted at~$v$. That is, there is a directed $v$-$w$ path in $D$ for every $w\in V(D)$. A digraph $D$ is simply said to be {\em rooted} if it is rooted at some vertex. By $r(D)$, we denote the vertex that is the root of $D$. If there are multiple roots for~$D$, we canonically fix one vertex for $r(D)$. 
For a digraph $D$ and a family of digraphs $\cal H$ (potentially containing rooted digraphs), we say that a subset $S \subseteq V(D)$ is {\em $\cal H$-free} if $D[S]$ does not contain any graph in $\cal H$ as a subgraph. When $S=V(D)$, we say that $D$ is $\cH$-free.
In the case when every graph in $\cal H$ is a rooted graph, we say that $S$ is {\em root-$\cal H$-free} if the root of every subgraph of $D$ that is isomorphic to a graph in $\cal H$ is not contained in $S$. Observe that if a set $S$ is root-$\cal H$-free then it is also $\cal H$-free.
We say that a set $X \subseteq V(D)$ is an {\em $\cal H$-deletion set} for $D$ if there is no subgraph isomorphic to a graph in $\cal H$ that is contained in any strong component of $D-X$. Furthermore, we say that~$X$ is a {\em solution} for the tuple $(D,k)$ if $X$ is an $\cal H$-deletion set and $|X|\le k$. Throughout the paper, $h=\max_{H \in \cH} \vert V(H) \vert$.

The following observation follows from the sub-modularity of separators.
\begin{observation}[\rm \cite{cygan2015parameterized}]\label{obs:submod}
	Let $C_1$ and $C_2$ be two minimum $S$-$T$ separators in a digraph $D$. Let $R_1 = R_D(S,C_1)$ and $R_2 = R_D(S,C_2)$, then $N^+(R_1 \cap R_2)$ and $N^+(R_1 \cup R_2)$ are also minimum $S$-$T$ separators of $D$.
\end{observation}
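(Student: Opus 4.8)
The plan is to prove this as the standard submodularity fact it is: combine submodularity of the vertex out-neighborhood function $X\mapsto |N^+(X)|$ with the fact that both $R_1\cap R_2$ and $R_1\cup R_2$ still separate $S$ from $T$, and hence have out-neighborhood of size at least $\lambda := \lambda_D(S,T)$, while submodularity caps the total size at $2\lambda = |C_1|+|C_2|$.

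First I would record the easy containment $N^+(R_i)\subseteq C_i$ for $i\in\{1,2\}$: if $v\in N^+(R_i)$ has an in-neighbor $u\in R_i$ but $v\notin C_i$, then appending the arc $(u,v)$ to an $S$-$u$ path in $D-C_i$ would place $v$ in $R_i$, a contradiction. In particular $|N^+(R_i)|\le\lambda$ and $N^+(R_i)\cap T=\emptyset$ (since $C_i\cap T=\emptyset$). Next I would verify that $N^+(R_1\cap R_2)$ and $N^+(R_1\cup R_2)$ are both $S$-$T$ separators. Each of $R_1\cap R_2$ and $R_1\cup R_2$ contains $S$ and is disjoint from $T$, and a short check shows their out-neighborhoods are disjoint from $S\cup T$ (using $N^+(R_1\cup R_2)\subseteq N^+(R_1)\cup N^+(R_2)\subseteq C_1\cup C_2$ for the union, and a one-line argument for the intersection). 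For the separation property: on any $S$-$T$ path $P$, let $u$ be the last vertex of $P$ lying in the set under consideration (it exists, as $S$ lies in the set and the $T$-endpoint does not) and let $w$ be its successor on $P$; the arc $(u,w)$ leaves the set, so $w$ lies in the set's out-neighborhood, and thus $P$ meets it. This gives $|N^+(R_1\cap R_2)|\ge\lambda$ and $|N^+(R_1\cup R_2)|\ge\lambda$.

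Finally I would apply submodularity, $|N^+(R_1\cap R_2)|+|N^+(R_1\cup R_2)|\le |N^+(R_1)|+|N^+(R_2)|\le 2\lambda$, proved by the routine vertex-by-vertex check: for each vertex $v$, the number of the sets $R_1\cap R_2, R_1\cup R_2$ having $v$ in their out-neighborhood is at most the number of the sets $R_1, R_2$ having $v$ in theirs, over the four cases of how $v$ sits relative to $R_1$ and $R_2$. Together with the two lower bounds of $\lambda$ from the previous paragraph, this forces $|N^+(R_1\cap R_2)|=|N^+(R_1\cup R_2)|=\lambda$, so both are minimum $S$-$T$ separators. The only mildly delicate point is the separator check for $R_1\cap R_2$: unlike for the union, one cannot route it through the reachability characterization recalled in the preliminaries, because $R_1\cap R_2$ need not be reachable from $S$ inside $D[R_1\cap R_2]$; the ``last vertex on $P$'' observation above sidesteps this cleanly.
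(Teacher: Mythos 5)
Your proof is correct and is exactly the standard submodularity argument for vertex separators; the paper itself gives no proof of this observation, simply citing it from the textbook of Cygan et al., and your write-up (including the careful direct check that $N^+(R_1\cap R_2)$ meets every $S$-$T$ path, which is needed since $R_1\cap R_2$ need not be reachable from $S$ within $D[R_1\cap R_2]$) fills that in correctly.
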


\begin{lemma}
\label{lem:tightSuccessorComputation}{\rm \cite{LokshtanovR018}}
	There is a polynomial-time algorithm that, given a digraph $D$ and an $S$-$T$ separator  $C$ in $D$, either correctly concludes that $C=R_{\rm max}(S,T)$ or outputs a minimum $S$-$T$ separator  that tightly covers $C$. 
\end{lemma}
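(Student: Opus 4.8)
The plan is to reduce, via a contraction, to the problem of advancing the \emph{closest} minimum cut one step, and then to advance it greedily and argue the greedy choice is tight. First I would normalize $C$: replacing it by $N^+_D(R_D(S,C))$ does not change the $S$-reachable set $R:=R_D(S,C)$ and does not increase $|C|$, so I may assume $C=N^+_D(R)$; and, as in every application of the lemma, I assume $C$ is a minimum $S$-$T$ separator, so $|C|=\lambda:=\lambda_D(S,T)$. Now contract the set $R$ (which contains $S$) to a single source vertex $s^\star$, obtaining a digraph $D^\star$ on $(V(D)\setminus R)\cup\{s^\star\}$. Two facts make this the right move: (i) in $D^\star$, $C=N^+_{D^\star}(s^\star)$, so $C$ is the \emph{closest} minimum $s^\star$-$T$ separator (its $s^\star$-reachable set is exactly $\{s^\star\}$), precisely because $R$ reaches only $R$ in $D-C$; and (ii) the minimum $s^\star$-$T$ separators of $D^\star$ are exactly the minimum $S$-$T$ separators of $D$ that avoid $R$, i.e.\ that cover $C$, and this correspondence preserves reachable sets once $s^\star$ is re-identified with $R$. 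So it suffices to compute a tight successor of the closest minimum cut in $D^\star$, or certify there is none.

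Next, for each $v\in C$ I would test with one max-flow computation whether $\lambda_{D^\star}(\{s^\star,v\},T)=\lambda$ — i.e.\ whether some minimum $S$-$T$ separator covering $C$ can in addition reach $v$ — and let $A\subseteq C$ be the set of vertices passing this test. If $A=\emptyset$ I output ``$C=R_{\rm max}(S,T)$'' (equivalently $C=C_{\rm max}(S,T)$): if some minimum $S$-$T$ separator $E$ strictly covered $C$, then a shortest $S$-path in $D-E$ to a vertex outside $R$ would first leave $R$ at a vertex of $N^+_D(R)=C$ reachable from $S$ in $D-E$, hence not in $E$, forcing that vertex into $A$; since $C_{\rm max}(S,T)$ always covers the minimum separator $C$, the failure of strict covering forces $R_D(S,C)=R_{\rm max}(S,T)$, and since the furthest minimum separator is the unique one with reachable set $R_{\rm max}(S,T)$ this yields $C=C_{\rm max}(S,T)$.

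If $A\ne\emptyset$, for each $v\in A$ I compute $Q_v:=R_{\rm min}(\{s^\star,v\},T)$ in $D^\star$ (again by max-flow), pick $v^\star\in A$ minimizing $Q_{v^\star}$ under inclusion, and output $C^\star:=N^+_{D^\star}(Q_{v^\star})$, read back as a vertex subset of $V(D)$ disjoint from $R$. I then need three checks. (a) $C^\star$ is a minimum $S$-$T$ separator: $Q_{v^\star}$ is the reachable set of a minimum $s^\star$-$T$ separator of size $\lambda$ (as $v^\star\in A$), so $|C^\star|\le\lambda$, while $C^\star$ is an $S$-$T$ separator in $D$ (the out-neighbourhood of an $S$-reachable, $T$-avoiding set), so $|C^\star|=\lambda$. (b) $C^\star$ strictly covers $C$: its $S$-reachable set is $Q_{v^\star}\supseteq R\cup\{v^\star\}\supsetneq R$. (c) $C^\star$ tightly covers $C$: if a minimum $S$-$T$ separator $E$ had $R\subsetneq R_D(S,E)\subsetneq Q_{v^\star}$, then $E$ covers $C$, hence corresponds to a minimum $s^\star$-$T$ separator of $D^\star$ whose reachable set properly contains $\{s^\star\}$, so the shortest-path argument again produces a vertex $v'\in N^+_{D^\star}(s^\star)=C$ that $E$ reaches and therefore avoids; thus $v'\in A$ and $Q_{v'}\subseteq R_D(S,E)\subsetneq Q_{v^\star}$, contradicting the choice of $v^\star$. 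Using the submodularity of Observation~\ref{obs:submod} one can moreover rule out any non-minimum separator lying strictly between $C$ and $C^\star$, so $C^\star$ is a tight cover in the sense of the preliminaries. The whole procedure uses $O(|C|)=O(n)$ max-flow computations plus polynomial bookkeeping, so it runs in polynomial time.

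The step I expect to be the real obstacle is part (c): showing that the greedily chosen $C^\star$ is a genuine tight successor, rather than just some minimum separator strictly above $C$. This is exactly what forces the choice of $v^\star$ to minimize $Q_{v^\star}$, and it combines the ``first vertex leaving $R$ belongs to $N^+(R)$'' observation with that minimality; pushing the argument through for non-minimum intervening separators is where submodularity is really needed. The conceptual key that makes everything else routine is the contraction in the first paragraph together with the observation that $C$ thereby becomes the closest minimum cut of $D^\star$.
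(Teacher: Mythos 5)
The paper does not actually prove this lemma --- it is imported verbatim from \cite{LokshtanovR018} --- so there is no in-paper argument to compare yours against; judged on its own terms, your proposal is essentially correct. Your normalization ($C\mapsto N^+(R_D(S,C))$, then assuming $C$ is a \emph{minimum} separator) is the right reading of the statement: without minimality the conclusion ``$C=C_{\rm max}(S,T)$'' would not make sense, and the lemma is only ever invoked on minimum separators $N^+(Q)$. The contraction of $R$ to $s^\star$, the membership test $\lambda_{D^\star}(\{s^\star,v\},T)=\lambda$ for each $v\in C$, and the choice of an inclusion-minimal $Q_{v^\star}=R_{\rm min}(\{s^\star,v^\star\},T)$ together yield a correct polynomial-time procedure: the ``first vertex leaving $R$'' argument shows that any minimum separator strictly covering $C$ certifies some $v\in A$ (which settles the $A=\emptyset$ case), and, combined with the fact that $R_{\rm min}(\{s^\star,v'\},T)$ is contained in the reachable set of \emph{every} minimum $\{s^\star,v'\}$-$T$ separator, it also carries your tightness step (c).

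One sub-claim is wrong, although it does not sink the proof. You assert that submodularity lets you ``rule out any non-minimum separator lying strictly between $C$ and $C^\star$''. That is false: in the digraph with arcs $s\to a$, $a\to b$, $a\to c$, $b\to d$, $c\to d$, $d\to t$, the only minimum $\{s\}$-$\{t\}$ separators are $\{a\}$ and $\{d\}$, yet the non-minimum separator $\{b,c\}$ has reachable set $\{s,a\}$, strictly between those of $\{a\}$ and $\{d\}$; if ``tightly covers'' quantified over arbitrary separators, then no minimum separator would tightly cover $\{a\}$ and the lemma itself would fail on this instance. The definition must therefore be read (as in \cite{LokshtanovR018}, and as the statement and use of Lemma~\ref{lem:tightness} confirm) with the intervening $C''$ ranging over \emph{minimum} separators only; under that reading your step (c) already suffices and the submodularity remark should simply be dropped.
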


\begin{lemma}\label{lem:tightness}{\rm \cite{LokshtanovR018}}
	Let $C_1=N^+(R_1)$ and $C_2=N^+(R_2)$ be minimum $S$-$T$ separators such that $C_2$ tightly covers $C_1$. Then there does not exist any minimum $S$-$T$ separator $C$ such that $C \cap (R_2\setminus N[R_1]) \neq \emptyset$.
\end{lemma}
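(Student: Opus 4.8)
The plan is to combine the flow-path structure of minimum separators with the submodularity of separators (Observation~\ref{obs:submod}) and reduce to a one-line check on a single flow path. Since $C_2$ tightly covers $C_1$ we have $R_1 \subsetneq R_2$. First I would fix a maximum family $\mathcal{F}=\{P_1,\dots,P_\lambda\}$ of pairwise vertex-disjoint $S$-$T$ paths, $\lambda=\lambda_D(S,T)$, and record two standard facts (see e.g.~\cite{cygan2015parameterized}): every minimum $S$-$T$ separator $C'$ meets each $P_i$ in exactly one vertex, which is its only vertex on $P_i$ (so every vertex of $C'$ lies on a path of $\mathcal{F}$), and $R_D(S,C')\cap V(P_i)$ is precisely the prefix of $P_i$ strictly preceding that vertex. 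Both follow from $|C'|=\lambda=|\mathcal{F}|$ together with vertex-disjointness.

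Now suppose, for contradiction, that $C=N^+(R)$ (with $R=R_D(S,C)$) is a minimum $S$-$T$ separator with $v\in C\cap(R_2\setminus N[R_1])$. I would consider $R':=R_1\cup(R\cap R_2)$. Two applications of Observation~\ref{obs:submod} --- first to $C,C_2$, then to $C_1$ and the minimum separator with reachability set $R\cap R_2$ --- show that $R'$ is the reachability set of a minimum $S$-$T$ separator, and plainly $R_1\subseteq R'\subseteq R_2$; since $C_2$ tightly covers $C_1$, this forces $R'=R_1$ or $R'=R_2$.

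The case $R'=R_2$ is immediate: then $R_2\setminus R_1\subseteq R\cap R_2\subseteq R$, hence $v\in R_2\setminus N[R_1]\subseteq R_2\setminus R_1\subseteq R$, contradicting $v\in C$ (which is disjoint from $R$). So assume $R'=R_1$, i.e.\ $R\cap R_2\subseteq R_1$. Here I would use the flow paths: $v$ lies on a unique $P_j\in\mathcal{F}$; writing $a_j=C_1\cap V(P_j)$ and $b_j=C_2\cap V(P_j)$, the conditions $v\in R_2$ together with $v\notin R_1$ and $v\notin C_1$ (the last two because $v\notin N[R_1]\supseteq R_1\cup C_1$) place $v$ strictly between $a_j$ and $b_j$ on $P_j$. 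Intersecting the identity $R_1\cup(R\cap R_2)=R_1$ with $V(P_j)$ and using the prefix fact, the left side becomes $(\text{prefix before }a_j)\cup\big((\text{prefix before }v)\cap(\text{prefix before }b_j)\big)=\text{prefix before }v$, while the right side is $\text{prefix before }a_j$; as $a_j$ strictly precedes $v$ these differ, which is the desired contradiction.

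I expect the one genuinely delicate point to be the appeal to Observation~\ref{obs:submod} in the second paragraph: I need not only that $N^+(R\cap R_2)$ and $N^+(R')$ are minimum separators, but that their reachability sets are exactly $R\cap R_2$ and $R'$, so that the definition of ``tightly covers'' (which is phrased via reachability sets) can legitimately be applied to them. This is the standard companion of Observation~\ref{obs:submod} --- that the reachability sets of minimum $S$-$T$ separators form a distributive lattice under union and intersection --- and I would invoke it directly. Everything after that reduces to the routine prefix bookkeeping on $P_j$ sketched above.
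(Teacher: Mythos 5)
The paper itself imports this lemma from \cite{LokshtanovR018} without proof, so I am judging your argument on its own. Your flow-path facts are correct and the overall uncrossing strategy is viable, but the step you yourself flag as delicate is a genuine gap: the ``standard companion'' of Observation~\ref{obs:submod} that you invoke is \emph{false} for intersections. The reachability sets of minimum $S$-$T$ separators are closed under union but not under intersection. Counterexample: $S=\{s\}$, $T=\{t\}$, two chains $s\to a_1\to a_2\to a_3\to t$ and $s\to b_1\to b_2\to b_3\to t$, and a sink $u$ with arcs $a_2\to u$, $b_2\to u$. Then $C_a=\{a_1,b_3\}$ and $C_b=\{a_3,b_1\}$ are minimum separators with $R_a=\{s,b_1,b_2,u\}$ and $R_b=\{s,a_1,a_2,u\}$, yet $R_a\cap R_b=\{s,u\}$ is not reachability-closed: $R_D(S,N^+(R_a\cap R_b))=R_D(S,\{a_1,b_1\})=\{s\}$. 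Consequently you cannot assert that $R'=R_1\cup(R\cap R_2)$ is the reachability set of a minimum separator, and since ``tightly covers'' is defined via reachability sets, the dichotomy ``$R'=R_1$ or $R'=R_2$'' --- on which both of your cases rest --- is unjustified as written.

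The argument is repairable inside your own setup. Submodularity of $A\mapsto|N^+(A)|$ holds for arbitrary vertex sets, so two applications still show that $N^+(R')$ is a \emph{minimum} separator even though $R'$ need not be its reachability set; let $R''=R_D(S,N^+(R'))$, so that $R_1\subseteq R''\subseteq R'\subseteq R_2$ and tightness legitimately forces $R''\in\{R_1,R_2\}$. Now use your prefix bookkeeping once: the prefix of $P_j$ strictly before $v$ lies in $R\cap R_2\subseteq R'$ (it precedes both $v$ and $b_j$), so it is an $S$-path avoiding $N^+(R')$ and hence is contained in $R''$; in particular $a_j\in R''$. Since $a_j\in C_1$ is disjoint from $R_1$, this rules out $R''=R_1$, so $R''=R_2$, hence $R'=R_2$, and your first case closes the proof ($v\in R_2\setminus N[R_1]\subseteq R_2\setminus R_1\subseteq R\cap R_2\subseteq R$ contradicts $v\in C=N^+(R)$). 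So the plan is sound and nearly complete, but as submitted the proof rests on a false auxiliary lemma and needs this detour through $R''$ to be correct.
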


Lemma~\ref{lem:tightness} basically implies that there is no minimum $S$-$T$ separator in $D$ that contains a vertex in $R_D(S,C_2) \setminus N[R_D(S,C_1)]$. 
%The following two lemmas follow from the above.

\begin{lemma}\label{lem:closest_sep_lambda_inc}{\rm\cite{kratsch2012representative}}
	Let $D$ be a digraph and let $S,T\subseteq V(D)$ such that $S \cap T = \emptyset$. Let $C$ be the closest (resp. furthest) $S$-$T$ separator in $D$ and let $v$ be a vertex in $R_D(S,C)$ (resp. $\overline{R}_D(S,C)$). Then every $S$-$(T \cup \{v\})$ (resp. $(S \cup \{v\})$-$T$) separator is of size strictly greater than $\lambda _D(S,T)$. 
\end{lemma}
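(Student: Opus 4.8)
The plan is to prove the ``closest'' statement directly from the covering property of the closest minimum separator recalled in the preliminaries, and to deduce the ``furthest'' statement by applying the closest case to the reverse digraph. Write $\lambda := \lambda_D(S,T)$ and let $C := C_{\mathrm{min}}(S,T)$. First I dispose of two degeneracies. If $v \in S$ then there is no $S$-$(T \cup \{v\})$ separator at all, since such a set must avoid $v$ and therefore cannot destroy the trivial $S$-to-$(T \cup \{v\})$ walk consisting of the single vertex $v$; so the statement holds vacuously, and we may assume $v \notin S$. Also, since $C$ is an $S$-$T$ separator we have $R_D(S,C) \cap T = \emptyset$, so the hypothesis $v \in R_D(S,C)$ forces $v \notin T$. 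Now let $C'$ be an arbitrary $S$-$(T \cup \{v\})$ separator and suppose, toward a contradiction, that $|C'| \le \lambda$. Every $S$-$T$ path is also an $S$-$(T \cup \{v\})$ path, and $C' \cap S = \emptyset$ while $C' \cap T \subseteq C' \cap (T \cup \{v\}) = \emptyset$; hence $C'$ is also an $S$-$T$ separator, so $|C'| \ge \lambda$ and therefore $|C'| = \lambda$, i.e.\ $C'$ is a \emph{minimum} $S$-$T$ separator. By the defining property of the closest minimum $S$-$T$ separator, $C = C_{\mathrm{min}}(S,T)$ is covered by $C'$, that is, $R_D(S,C) \subseteq R_D(S,C')$. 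Since $v \in R_D(S,C)$, there is an $S$-$v$ path in $D - C'$; as $v \notin C'$ and $v \in T \cup \{v\}$, this is an $S$-$(T \cup \{v\})$ path in $D - C'$, contradicting that $C'$ is an $S$-$(T \cup \{v\})$ separator. Hence $|C'| > \lambda$, proving the closest case.

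For the ``furthest'' case, let $D^{\mathrm{rev}}$ be the digraph obtained from $D$ by reversing every arc. A vertex set disjoint from $S \cup T$ is an $S$-$T$ separator of $D$ exactly when it is a $T$-$S$ separator of $D^{\mathrm{rev}}$, so $\lambda_{D^{\mathrm{rev}}}(T,S) = \lambda$ and the families of minimum separators coincide. The key point is that the furthest minimum $S$-$T$ separator $C_{\mathrm{max}}(S,T)$ of $D$ is precisely the closest minimum $T$-$S$ separator of $D^{\mathrm{rev}}$: among minimum separators, the one maximizing the set reachable from $S$ in $D - (\cdot)$ is, by the submodular lattice structure of minimum separators (Observation~\ref{obs:submod}), the one minimizing the set of vertices that can reach $T$ in $D - (\cdot)$, and this last set is exactly the set reachable from $T$ in $D^{\mathrm{rev}} - (\cdot)$; so the two extremal separators are the same object. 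Under this identification a vertex of $\overline{R}_D(S, C_{\mathrm{max}}(S,T))$ corresponds to a vertex reachable from $T$ in $D^{\mathrm{rev}} - C_{\mathrm{max}}(S,T)$, and an $(S \cup \{v\})$-$T$ separator of $D$ is a $T$-$(S \cup \{v\})$ separator of $D^{\mathrm{rev}}$. Applying the already-proven closest case to $D^{\mathrm{rev}}$, with the roles of $S$ and $T$ interchanged, then yields that every $(S \cup \{v\})$-$T$ separator of $D$ has size strictly greater than $\lambda$.

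The closest case is a short argument whose only substantive ingredient is the covering property of $C_{\mathrm{min}}(S,T)$; the things to be careful about there are just the two degeneracy checks (the case $v \in S$, and ruling out $|C'| < \lambda$). The step I expect to be the most delicate is, in the furthest case, establishing cleanly that $C_{\mathrm{max}}(S,T)$ in $D$ is literally the same separator as the closest minimum $T$-$S$ separator in $D^{\mathrm{rev}}$ --- i.e.\ that ``furthest from $S$'' and ``closest to $T$'' pick out the same minimum separator --- rather than invoking symmetry loosely; this is where the submodular lattice structure of minimum separators from Observation~\ref{obs:submod} does the work.
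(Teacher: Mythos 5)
The paper cites this lemma from the literature and gives no proof of its own, so I can only judge your argument on its own terms. The ``closest'' half is correct and complete: reducing an alleged small $S$-$(T\cup\{v\})$ separator $C'$ to a \emph{minimum} $S$-$T$ separator, invoking the covering property of $C_{\rm min}(S,T)$ to get $v\in R_D(S,C)\subseteq R_D(S,C')$, and reading off an $S$-$v$ path in $D-C'$ is exactly the right argument, and your two degeneracy checks ($v\in S$, and ruling out $|C'|<\lambda$) are the right ones.

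The ``furthest'' half has a genuine gap, and it is not where you predicted. Your lattice claim --- that $C_{\rm max}(S,T)$ in $D$ is the closest minimum $T$-$S$ separator of $D^{rev}$ --- is in fact true (for minimal separators $C_i=N^+(R_i)$ one can show directly that $R_1\subseteq R_2$ iff the set of vertices reaching $T$ in $D-C_2$ is contained in the corresponding set for $C_1$, so the two orders are anti-isomorphic). The problem is the sentence ``a vertex of $\overline{R}_D(S,C_{\rm max}(S,T))$ corresponds to a vertex reachable from $T$ in $D^{rev}-C_{\rm max}(S,T)$.'' That identification is false: $R_{D^{rev}}(T,C_{\rm max})$ is only a (generally proper) subset of $\overline{R}_D(S,C_{\rm max})$, since $\overline{R}_D(S,C_{\rm max})$ also contains $C_{\rm max}$ itself and every ``dead-zone'' vertex that neither is reachable from $S$ nor can reach $T$ in $D-C_{\rm max}$. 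Your reduction therefore proves the conclusion only for $v$ that can reach $T$ in $D-C_{\rm max}$. This matters twice over. First, the case $v\in C_{\rm max}$ is precisely the one the paper actually invokes (e.g.\ in Case~2.1 of Lemma~\ref{lem:auxbranchset} and in Lemma~\ref{lem:mainOneOutRegularDeletionAlgorithm}, where the vertex added to the source side is a vertex \emph{of} the furthest separator); it is true but needs a separate argument, e.g.\ take $\lambda$ vertex-disjoint $S$-$T$ paths, note that on each path the $C'$-vertex precedes the $C_{\rm max}$-vertex because $C_{\rm max}$ covers $C'$, and conclude that the suffix of the path carrying $v$ from $v$ to $T$ avoids $C'$. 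Second, for the remaining dead-zone vertices the statement as literally written is actually \emph{false}: take $V(D)=\{s,a,t,v\}$ with arcs $s\to a$, $a\to t$, $v\to a$, $S=\{s\}$, $T=\{t\}$; then $\lambda=1$, $C_{\rm max}=\{a\}$, $v\in\overline{R}_D(S,C_{\rm max})$, yet $\{a\}$ is an $(S\cup\{v\})$-$T$ separator of size $\lambda$. So no proof of the furthest half can succeed without restricting $v$ to $C_{\rm max}\cup R_{D^{rev}}(T,C_{\rm max})$ (which covers all uses in the paper); your write-up silently restricts $v$, but to the wrong subset, omitting exactly the case that is needed.
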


\begin{lemma}\label{lem:tight_sep_lambda_inc}
	Let $D$ be a digraph and let $S,T$ be disjoint subsets of $V(D)$. Let $C_1=N^+(R_1)$ and $C_2=N^+(R_2)$ be two minimum $S$-$T$ separators such that $C_2$ tightly covers $C_1$ and $C_1\neq C_2$. Let $u \in C_1$ and $v \in R_2\setminus N[R_1]$. Then every $(S \cup \{u\})-(T \cup \{v\})$ separator is of size strictly greater than $\lambda _D(S,T)$.  
\end{lemma}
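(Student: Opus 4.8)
The plan is to argue by contradiction. Suppose $C$ is an $(S\cup\{u\})$-$(T\cup\{v\})$ separator with $|C|\le \lambda:=\lambda_D(S,T)$. Since $S\subseteq S\cup\{u\}$ and $T\subseteq T\cup\{v\}$, the set $C$ is in particular an $S$-$T$ separator, so $|C|=\lambda$ and $C$ is a \emph{minimum} $S$-$T$ separator with $u,v\notin C$. Write $A:=R_D(S,C)$. A standard argument shows $N^+(A)\subseteq C$ and $|N^+(A)|\le|C|$, so by minimality $C=N^+(A)$ with $R_D(S,C)=A$. Since $C$ separates $S$ from $v$ we get $v\notin A$ (and $v\notin C$), and since $C$ separates $u$ from $T$, the vertex $u$ does not reach $T$ in $D-C$, although $u\notin A\cup C$ remains possible at this stage.

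The first step is to combine $C$ with $C_1$. By Observation~\ref{obs:submod} applied to the minimum $S$-$T$ separators $C=N^+(A)$ and $C_1=N^+(R_1)$, the set $N^+(A\cup R_1)$ is a minimum $S$-$T$ separator, and one checks that $R_D(S,N^+(A\cup R_1))=A\cup R_1$ (both $A$ and $R_1$ are reachable from $S$ within themselves, and every arc leaving $A\cup R_1$ enters $N^+(A\cup R_1)$). As $A\cup R_1\supseteq R_1$, this separator covers $C_1$. If $R_1\subsetneq A\cup R_1\subseteq R_2$, then because $C_2$ tightly covers $C_1$ there is no minimum $S$-$T$ separator whose reachability set lies strictly between $R_1$ and $R_2$, forcing $A\cup R_1=R_2$; but then $v\in R_2\setminus N[R_1]\subseteq R_2\setminus R_1=(A\cup R_1)\setminus R_1\subseteq A$, contradicting $v\notin A$. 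Hence only two cases survive: either $A\subseteq R_1$, or $A\not\subseteq R_2$. In the first case, if $A=R_1$ then $C=N^+(R_1)=C_1\ni u$, contradicting $u\notin C$, so in fact $A\subsetneq R_1$.

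What remains — and what I expect to be the main obstacle — is to rule out the two surviving configurations: (i) $A\subsetneq R_1$, and (ii) $A\not\subseteq R_2$. For this I would exploit the full two-sided description of a minimum separator: besides $C=N^+(A)$ with $A=R_D(S,C)$, one also has $C=N^-(B)$ where $B:=\{x : x\text{ reaches }T\text{ in }D-C\}$ (again $N^-(B)\subseteq C$ and $|N^-(B)|\le\lambda$), so $V(D)$ partitions into $A$, $C$, $B$, and a possibly non-empty "isolated'' part $I$; moreover $u\in A\cup I$ and $v\in B\cup I$. In case (i) the cut $C$ lies strictly before $C_1$, so the in-neighbour of $u$ in $R_1$ witnessing $u\in C_1=N^+(R_1)$ must lie in $R_1\setminus A$, which forces $u\in I$ and $C\cap R_1\neq\emptyset$; one then has to reach a contradiction from the flow path through $u$ (which reaches $T$) together with the fact that $C$, being "too early'', cannot separate it, using Lemma~\ref{lem:tightness} (no minimum $S$-$T$ separator meets $R_2\setminus N[R_1]$, so in particular $v$ lies outside every minimum separator). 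Case (ii) calls for an analogous argument on the reverse side: here $A$ reaches past $C_2$, so there is $w'\in C_2\cap A$, and one shows that $C$ is then a minimum $(S\cup\{w'\})$-$T$ separator, which combined with the structure of the furthest minimum $S$-$T$ separator and Lemma~\ref{lem:closest_sep_lambda_inc} gives a contradiction. The delicate point throughout is the "isolated'' part $I$: the heart of the argument is to show that, under these hypotheses, a minimum-size separator cannot simultaneously cut $u$ off from $S$ and from $T$.
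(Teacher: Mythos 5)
Your proposal does not actually close the argument. The opening moves are fine: if $|C|\le\lambda_D(S,T)$ then $C$ is a minimum $S$-$T$ separator with $u,v\notin C$, and the submodular exchange $N^+(A\cup R_1)$ together with tight covering correctly rules out the possibility $R_1\subsetneq A\cup R_1\subseteq R_2$ (which would force $A\cup R_1=R_2\ni v$). But from that point on you only narrow matters to two surviving configurations, $A\subsetneq R_1$ and $A\not\subseteq R_2$, and you say so explicitly: ``what remains \ldots\ is to rule out the two surviving configurations.'' The paragraphs that follow are a plan, not a proof. For case (i) you assert that the ``flow path through $u$'' yields a contradiction and that $C$, ``being too early, cannot separate it,'' but no deduction is carried out; for case (ii) you say ``one shows that $C$ is then a minimum $(S\cup\{w'\})$-$T$ separator'' and appeal vaguely to $C_{\max}$ and Lemma~\ref{lem:closest_sep_lambda_inc}. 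You yourself flag the interaction with the isolated part $I$ as ``the heart of the argument,'' and that is exactly what is missing. So there is a genuine gap: the two remaining cases are not ruled out.

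By contrast, the paper's proof avoids your case split entirely. It observes that a separator $C'$ of size at most $\lambda_D(S,T)$ would be a minimum $S$-$T$ separator, and then argues directly that, because $u\in C_1\subseteq N[R_1]$ must be cut off from $T$ while $v\in R_2\setminus N[R_1]$ must be cut off from $S$, such a $C'$ is forced to contain a vertex of $R_2\setminus N[R_1]$; Lemma~\ref{lem:tightness} immediately says no minimum $S$-$T$ separator can do this. The endgame is the same Lemma~\ref{lem:tightness} you also reach for, but the paper gets there via a single intersection claim (argued along the $S$-$T$ flow path through $u$, whose unique crossing with $C'$ cannot lie before $u$ without leaving $u$--$T$ uncut, and cannot lie strictly between the $C_1$- and $C_2$-crossings without landing in $R_2\setminus N[R_1]$) rather than through a case analysis on where $R_D(S,C')$ sits relative to $R_1$ and $R_2$. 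Your detour through $A\cup R_1$ is not wrong in spirit, but it produces precisely the awkward residual cases you could not resolve; aiming directly at the claim that $C'$ must meet $R_2\setminus N[R_1]$, and then invoking Lemma~\ref{lem:tightness}, is both shorter and avoids the difficulty with the isolated part.
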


\begin{proof}
	Let $C'$ be an $(S \cup \{u\})$-$(T \cup \{v\})$ separator. Since $u \in N[R_1]$ and $v \in R_2 \setminus N[R_1]$, it must be the case that $C'$ contains a vertex in $R_2\setminus N[R_1]$. Thus it follows from Lemma~\ref{lem:tightness} that $C'$ is of size greater than $\lambda _D(S,T)$.    	
\end{proof}

\begin{definition}{\rm\cite{ChenLLOR08,cygan2015parameterized,marx2006parameterized}}
	For a digraph $D$ and disjoint subsets $S,T$ of $V(D)$, an $S$-$T$ separator $C$ is said to be {\em important} if there is no $S$-$T$ separator $C' \neq C$ such that $|C'| \le |C|$ and $C'$ covers $C$.
\end{definition}

\begin{lemma}\label{lem:imp_sep_bound}{\rm\cite{ChenLLOR08,cygan2015parameterized,marx2006parameterized}}
	There are at most $4^k$ important separators of size at most $k$. Moreover, they can be enumerated in $O^*(4^k)$ time.
\end{lemma}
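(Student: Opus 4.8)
The plan is to prove the counting bound and the enumeration claim at once, by giving a recursive branching procedure whose recursion tree is binary, has depth at most $2k$, and has the property that every important $S$-$T$ separator of size at most $k$ is produced at one of its (hence at most $2^{2k}=4^k$) leaves. Fix disjoint $S,T\subseteq V(D)$ and write $\lambda=\lambda_D(S,T)$.

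The cornerstone, and the step I expect to be the main obstacle, is a structural claim: \emph{every important $S$-$T$ separator $C$ may be assumed to satisfy $C=N^+(R_D(S,C))$, and it covers the furthest minimum $S$-$T$ separator}, that is $R_{\max}(S,T)\subseteq R_D(S,C)$. The first part is immediate (a vertex of $C$ with no in-neighbour in $R_D(S,C)$ could be deleted, producing a smaller separator that still covers $C$). For the second part, put $R=R_D(S,C)$, $R^\star=R_{\max}(S,T)$ and suppose $R^\star\not\subseteq R$; by submodularity of $A\mapsto|N^+(A)|$ (as in Observation~\ref{obs:submod}), and since $N^+(R\cap R^\star)$ is an $S$-$T$ separator and hence has size at least $\lambda$, we get $|N^+(R\cup R^\star)|\le|N^+(R)|+|N^+(R^\star)|-|N^+(R\cap R^\star)|\le|C|$, while $N^+(R\cup R^\star)$ is itself an $S$-$T$ separator whose reachable set contains $R\cup R^\star\supsetneq R$, contradicting importance of $C$. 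This claim is exactly what guarantees that a ``missing'' vertex $v\in C_{\max}(S,T)$ is reachable from $S$ in $D-C$ whenever $v\notin C$; without it the branching below would stall.

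The recursion is the following. On input $(D,S,T,k)$: if $\lambda>k$, report that there is no important separator of size at most $k$; if $\lambda=0$, report the single important separator $\emptyset$; otherwise compute (via max-flow) the furthest minimum $S$-$T$ separator $C_{\max}(S,T)$, which is nonempty of size $\lambda\le k$, pick any $v\in C_{\max}(S,T)$, and branch. Case (a): $v$ is in the separator; recurse on $(D-v,S,T,k-1)$ and add $v$ to each returned set. Case (b): $v$ is not in the separator; then by the structural claim $v\in C_{\max}(S,T)=N^+(R_{\max}(S,T))$ has an in-neighbour in $R_{\max}(S,T)\subseteq R_D(S,C)$, so $v\in R_D(S,C)$ and $C$ is in fact an important $(S\cup\{v\})$-$T$ separator; recurse on $(D,S\cup\{v\},T,k)$. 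Returning the union of the two families and discarding any set that is not actually an important $S$-$T$ separator of $D$, we obtain all of them: this follows by induction from the structural claim together with the two routine facts that (i) an important $S$-$T$ separator of $D$ containing $v$ restricts to an important $S$-$T$ separator of $D-v$, and (ii) an important $S$-$T$ separator of $D$ avoiding $v$ but reaching $v$ is an important $(S\cup\{v\})$-$T$ separator.

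To bound the recursion I would use the potential $\mu(D,S,T,k)=2k-\lambda_D(S,T)\le 2k$. In case (a), $k$ decreases by one while $\lambda_{D-v}(S,T)\ge\lambda-1$, so $\mu$ decreases by at least one. In case (b), $v\in C_{\max}(S,T)$ gives $v\in\overline{R}_D(S,C_{\max}(S,T))$, so Lemma~\ref{lem:closest_sep_lambda_inc} yields $\lambda_D(S\cup\{v\},T)\ge\lambda+1$ with $k$ unchanged, so $\mu$ again decreases by at least one. Since an internal (branching) node has $1\le\lambda\le k$ and hence $\mu\ge k\ge1$, the binary recursion tree has depth at most $2k$ and at most $4^k$ leaves, each contributing at most one set; therefore there are at most $4^k$ important $S$-$T$ separators of size at most $k$. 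As every node performs only a polynomial amount of work (one max-flow computation), the whole collection is enumerated in $O^*(4^k)$ time.
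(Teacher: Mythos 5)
The paper does not prove Lemma~\ref{lem:imp_sep_bound}; it is imported verbatim, with citations, from the literature on important separators. Your reconstruction is correct and is in fact the standard argument from those references: establish that an important separator is minimal and covers $C_{\max}(S,T)$ via submodularity, pick $v\in C_{\max}(S,T)$, branch on $v\in C$ (delete $v$, decrement $k$) versus $v\notin C$ (move $v$ into $S$, which by Lemma~\ref{lem:closest_sep_lambda_inc} increases $\lambda$), and bound the binary recursion tree by the potential $2k-\lambda\le 2k$. The two routine restriction facts you invoke in the exhaustiveness argument (restricting to $D-v$, and promoting to an $(S\cup\{v\})$-$T$ separator) both hold and can be verified exactly as you indicate, so the enumeration is complete; discarding non-important sets at the end is harmless and is also how the textbook versions phrase it. In short, there is no deviation from the cited proof to compare against.
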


The notion of important separators has been extrememly useful in the design of parameterized algorithms (see Marx' survey~\cite{Marx2011ImportantSA}).

\begin{lemma}\label{lem:pushing_property}
Let $D$ be a digraph and $S,T \subseteq V(D)$ such that $S \cap T = \emptyset$. Let $X\subseteq V(D)$ contain a minimal $S$-$T$ separator $C$. Let $C'$ be an $S$-$T$ separator that covers $C$ and let $X' = (X \setminus C) \cup C'$. Let $u,v \in \overline{R}_D(S,X')$. If there exists a path from $u$ to $v$ in $D-X'$ then there exists a path from $u$ to $v$ in $D-X$.
\end{lemma}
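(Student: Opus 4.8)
The plan is to show that the given $u$-$v$ path $P$ in $D-X'$ is already a path in $D-X$; then no rerouting is needed and the lemma follows immediately. Equivalently, I want $V(P)\cap X=\emptyset$. Since $P$ lies in $D-X'$ we certainly have $V(P)\cap X'=\emptyset$, and because $C\subseteq X$ a direct set computation gives $X\setminus X'=C\setminus C'$; hence $V(P)\cap X\subseteq C\setminus C'$. So the whole statement reduces to: no vertex of $C\setminus C'$ lies on $P$.

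The first ingredient is a structural fact about $P$: every vertex of $P$ lies in $\overline{R}_D(S,X')$. This is a forward-propagation argument. We have $v\in\overline{R}_D(S,X')$ by hypothesis, and if some vertex $p_i$ of $P$ were reachable from $S$ in $D-X'$, then, since every vertex of $P$ avoids $X'$, concatenating an $S$-$p_i$ path in $D-X'$ with the sub-path of $P$ from $p_i$ to $v$ would witness $v\in R_D(S,X')$, a contradiction. Hence $V(P)\subseteq\overline{R}_D(S,X')$.

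The second ingredient uses minimality of $C$ together with the covering relation to show $C\setminus C'\subseteq R_D(S,X')$; combined with the first ingredient this forces $V(P)\cap(C\setminus C')=\emptyset$, and we are done. Since $C$ is a minimal $S$-$T$ separator it satisfies $C=N^+_D(R_D(S,C))$, so any $c\in C$ has an in-neighbour $r\in R_D(S,C)$ reachable from $S$ by a path $Q$ lying entirely inside $R_D(S,C)$. Because $C'$ covers $C$ we have $R_D(S,C)\subseteq R_D(S,C')$, so $Q$ avoids $C'$; and since the part $X\setminus C$ of $X$ is disjoint from $R_D(S,C)$ (the configuration in which this pushing lemma is applied), $Q$ avoids $X\setminus C$ as well. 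Thus $Q$ avoids $X'=(X\setminus C)\cup C'$, and for $c\in C\setminus C'$ — which is not in $X'$ — appending the arc $(r,c)$ to $Q$ gives an $S$-$c$ path in $D-X'$, so $c\in R_D(S,X')$, i.e.\ $c\notin\overline{R}_D(S,X')$, and therefore $c\notin V(P)$.

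The main obstacle is exactly this last step: controlling where the vertices of $C$ that are ``released'' by replacing $C$ with the further separator $C'$ can re-enter the picture, i.e.\ establishing $C\setminus C'\subseteq R_D(S,X')$. Everything else is either elementary bookkeeping on the set differences $X\setminus X'=C\setminus C'$ and $X'\setminus X\subseteq C'$, or the one-line forward-reachability propagation along $P$; it is only in this step that both hypotheses on $C$ and $C'$ are genuinely used — minimality of $C$, which gives $C=N^+_D(R_D(S,C))$, and the covering relation, which gives $R_D(S,C)\subseteq R_D(S,C')$.
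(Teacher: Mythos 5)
Your approach mirrors the paper's proof: both boil down to showing that any vertex of $C\setminus C'$ that could lie on $P$ must actually be in $R_D(S,X')$, contradicting $v\in\overline{R}_D(S,X')$. (The paper picks such a vertex $a$ and walks along $P$ from $a$ to $v$; you instead note $V(P)\subseteq\overline{R}_D(S,X')$ and show $C\setminus C'\subseteq R_D(S,X')$ --- logically the same.) The one substantive step, establishing $C\setminus C'\subseteq R_D(S,X')$, is also the same in both, and this is precisely where you have flagged a real problem.

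Your parenthetical ``the configuration in which this pushing lemma is applied'' is not licensed by the stated hypotheses, and in fact the lemma as written is false. Minimality of $C$ gives $C=N^+(R_D(S,C))$, but when $X\setminus C$ meets $R_D(S,C)$ the witnessing $S$-path $Q\subseteq R_D(S,C)$ can be cut by $X\setminus C\subseteq X'$, so you cannot conclude $c\in R_D(S,X')$. Concretely: let $V(D)=\{s,w,a,b,p,q,t\}$ with arcs $(s,w),(w,a),(a,b),(b,t),(p,a),(a,q)$, $S=\{s\}$, $T=\{t\}$, $C=\{a\}$ (minimal), $X=\{w,a\}$, $C'=\{b\}$ (covers $C$), $X'=\{w,b\}$; then $p,q\in\overline{R}_D(S,X')$, and $p,a,q$ is a $p$-$q$ path in $D-X'$, but since $a\in X$ there is no $p$-$q$ path in $D-X$. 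The paper's own proof has exactly the same gap: the assertion ``$C=N^+(R)$ with $R=R_D(S,X)$'' is justified there only by minimality of $C$, which is insufficient when $X\supsetneq C$. The additional hypothesis that is needed is precisely what your parenthetical supplies --- $X\setminus C$ disjoint from $R_D(S,C)$, equivalently $C=N^+(R_D(S,X))$. This does hold wherever the lemma is invoked (in Lemmas~\ref{lem:pushingLemmaOne} and~\ref{lem:pushingLemmaTwo}, $C$ is taken to be $N^+(R_D(\cdot\,,X))$), so the downstream results are sound, but you were right to surface the dependence rather than let it pass silently; strictly speaking you have proved a correct but slightly strengthened lemma, and the same criticism applies to the paper's proof of the statement as written.
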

\begin{proof}
	Suppose not. Then there exists a path $P$ from $u$ to $v$ in $D-X'$ but no path from $u$ to $v$ in $D-X$, that is, $V(P) \cap X' = \emptyset$ and $V(P) \cap X \neq \emptyset$. Let $a \in V(P) \cap X$. Then $a \in X \setminus X' \subseteq C$. Let $R=R_D(S,X)$. Since $C$ is a minimal $S$-$T$ separator and is contained in $X$, we have that $C=N^+(R)$ and thus $a \in N^+(R)$. Since $C'$ covers $C$, $R_D(S,X) \subseteq R_D(S,X')$. Also, since $a \not\in X'$, it follows that $a \in R_D(S,X')$. Thus, there is a path from $S$ to $a$ in $D-X'$, which together with the subpath of $P$ from $a$ to $v$ gives a walk (and eventually a path) from $S$ to $v$ in $D-X'$. This is a contradiction to the fact that $v \in \overline{R}_D(S,X')$.
\end{proof}

\section{The FPT algorithm for rooted {\sc $\cH$-SCC Deletion}}

In this section, we consider the $\HH$-SCC deletion problem when all the graphs in $\HH$ are rooted (the {\sc r-$\HH$-SCC} problem) and prove Theorem~\ref{thm:rootedSCCDeletion}. Towards the end of the section, we also exhibit the proof of Theorem~\ref{thm:outDegreeSCCDeletion}.

\rootedSCCDeletion*

We use the standard technique of Iterative Compression (\cite{reed2004finding}) to reduce the task of solving our instance of {\sc r-$\cH$-SCC Deletion} to that of solving at most $2^{k+1}n$ instances of the {\sc Disjoint r-$\cH$-SCC Deletion Compression (r-$\cH$-SCC DC)} problem, where we are given a digraph~$D$ and a solution~$W$ of size at most~$k+1$ and the goal is to compute a solution of size at most~$k$ that is disjoint from~$W$, if one exists.
 The rest of the section is devoted to the proof of the following lemma, which as described above proves Theorem~\ref{thm:rootedSCCDeletion}.

\begin{lemma}\label{lem:disjCompressionSolution}
There is an algorithm that, given an instance $(D,W,k)$ of 
	{\sc r-$\cH$-SCC DC}, runs in time $2^{O(k^{3}\log k)} \cdot n^{O(h)}$ and either correctly concludes that there is no solution for this instance disjoint from $W$ or outputs a solution disjoint from $W$.
\end{lemma}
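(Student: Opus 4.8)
The plan is to design a recursive branching algorithm for {\sc r-$\cH$-SCC DC} that maintains, in addition to the digraph $D$ and the solution $W$ (with its partition ${\cal S}=(S_1,\dots,S_q)$ coming from iterative compression), a bound $k$ on the remaining budget, and makes progress by either reducing $k$ or by increasing a ``separator distance'' measure that is itself bounded by a function of $k$. Throughout we only look for a solution $X$ that is disjoint from $W$ and that, for every $i<j$, intersects every $S_i$-$S_j$ path; by a standard argument this is without loss of generality for the compression step. The key structural observation, as flagged in the overview, is that for such a hypothetical solution $X$ and for any subgraph $F$ of $D-X$ (before deletion) that is isomorphic to some $H\in\cH$ and lies in a strong component of $D$ that still meets $W$, one of the following must hold: (a) $X\cap V(F)\neq\emptyset$; (b) $X$ contains an $S_1$-$\{r(F)\}$ separator (so that the root of $F$, and hence all of $F$ since $F$ is rooted at $r(F)$, is cut off from $S_1$); or (c) for some $u\in V(F)$, $X$ contains a $\{u\}$-$W$ separator (so $u$ cannot reach $W$ in $D-X$, hence the strong component of $D-X$ containing $u$ misses $W$). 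The correctness of this trichotomy rests on the fact that if $F$ survives inside a strong component of $D-X$ that still contains some $w\in W$, then every vertex of $V(F)$ can reach $w$ and is reachable from $w$; combined with $X$ intersecting $S_1$-$W$ paths appropriately, this forces the strong component to contain $S_1$, and then we derive a contradiction with $X$ being a solution.

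Given this trichotomy, the branching step is: find a ``good'' forbidden subgraph $F$ and branch into $|V(F)|\le h$ ways for case (a) (delete one vertex of $F$, decreasing $k$), plus branches that commit to case (b) or (c). For case (b) we branch over important $S_1$-$\{r(F)\}$ separators of size at most $k$ — there are at most $4^k$ of these by Lemma~\ref{lem:imp_sep_bound} — and for each one we add it to $X$, delete it, and recurse; the pushing property (Lemma~\ref{lem:pushing_property}) is what lets us assume the portion of $X$ separating $S_1$ from $r(F)$ is an important separator. For case (c) we similarly branch over important $\{u\}$-$W$ separators of size at most $k$ for each $u\in V(F)$. To make this terminate we interleave it with a ``closest separator'' subroutine: we maintain a minimum $S_1$-$(W\setminus S_1)$ separator $C$ (initially $C_{\min}(S_1,W\setminus S_1)$) and guess, via branching, a minimum separator contained in the solution's reachable region; Lemmas~\ref{lem:tightSuccessorComputation}, \ref{lem:tightness}, \ref{lem:tight_sep_lambda_inc} and \ref{lem:closest_sep_lambda_inc} let us argue that whenever we commit to pushing $C$ to a tight successor or to deleting a vertex outside $N[R_D(S_1,C)]$, the quantity $\lambda_D(S_1, W\setminus S_1)$ strictly increases, and this quantity never exceeds $k$ in a yes-instance. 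So the measure is a pair $(k, k-\lambda_D(S_1,W\setminus S_1))$ ordered lexicographically, and every branch strictly decreases it; the recursion tree has depth $O(k^2)$ and branching factor $2^{O(k)}\cdot h$, which after bookkeeping yields the claimed $2^{O(k^3\log k)}$ dependence, while the $n^{O(h)}$ factor comes from enumerating candidate subgraphs $F$ isomorphic to members of $\cH$.

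The step I expect to be the main obstacle — and the one the authors themselves call their main technical challenge — is showing that a ``good'' forbidden subgraph $F$ can always be found efficiently, i.e.\ one for which branching genuinely makes progress in every branch. The difficulty is that an arbitrary surviving copy of some $H\in\cH$ need not interact with the separator structure in a way that forces measure decrease in the case-(b)/case-(c) branches; a carelessly chosen $F$ could have its root reachable from $S_1$ through the current $R_D(S_1,C)$ in a way that makes the important-separator branch vacuous. The plan to handle this is to first run the closest-separator machinery until $C = R_{\max}$-type stabilization is reached or a branch fires, and then argue that if no forbidden subgraph whatsoever intersects $R_D(S_1,C)\cup N^+(R_D(S_1,C))$, then the region reachable from $S_1$ is already ``clean'' and can be peeled off (reducing $n$ or the instance outright); otherwise pick $F$ to be a forbidden subgraph minimizing the ``distance'' of $r(F)$ (or of the witnessing vertex $u$) from $S_1$ in an appropriate sense, and show that for this particular $F$ each of the $\le h$ vertex-deletion branches either decreases $k$ or, after pushing, strictly increases $\lambda_D(S_1,W\setminus S_1)$, while cases (b) and (c) always increase $\lambda$ by the lemmas above. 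Making this selection argument precise — identifying the right potential function on candidate subgraphs and proving it can be optimized in $n^{O(h)}$ time — is the technical heart of the proof.

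Finally, I would wrap up with the routine bookkeeping: verify that the recursion always terminates because the lexicographic measure is bounded (in a yes-instance $\lambda_D(S_1,W\setminus S_1)\le k$ at all times, since $X$ itself minus $W$-incident parts is a candidate separator), count leaves of the recursion tree as $\big(4^k\cdot h + h + 1\big)^{O(k^2)} = 2^{O(k^3\log k)}$, multiply by the $n^{O(h)}$ cost per node (subgraph enumeration and polynomial-time separator computations via Lemmas~\ref{lem:tightSuccessorComputation} and \ref{lem:imp_sep_bound}), and conclude that an instance of {\sc r-$\cH$-SCC DC} is solved within the stated time, which together with the $2^{k+1}n$ iterative-compression overhead proves Theorem~\ref{thm:rootedSCCDeletion}.
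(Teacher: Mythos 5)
Your high-level picture matches the paper's in several places: you identify the same trichotomy that the paper calls a ``nice solution'' (delete a vertex of $F$, or cut $S_1$ from $r(F)$, or cut some $u\in V(F)$ from $W$), you correctly predict that the main technical obstacle is \emph{efficiently finding} a forbidden subgraph for which all branches make progress, and you sketch a closest-/tight-separator subroutine with a $\lambda$-based measure bounded by $k$. The paper indeed does all of these things, via the ``extended/auxiliary instance'' machinery with a set $Q$ that is grown along tight successors of minimum $S_1$-$T$ separators and a quantitative measure $k^2+k-\lambda^2-|N_Q|$.

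However, there is a genuine gap in your handling of case (b). You propose to branch over \textbf{important $S_1$-$\{r(F)\}$ separators}, justifying this by the pushing property (Lemma~\ref{lem:pushing_property}). That lemma only tells you that paths between vertices outside the new reachable region are preserved; it does \emph{not} tell you that the pushed set is still a solution. After pushing the part of $X$ that separates $S_1$ from $r(F)$ outward to an important $S_1$-$\{r(F)\}$ separator, the new reachable region from $S_1$ can acquire forbidden rooted subgraphs that were absent in $D-X$, and nothing prevents these from being contained in a single strong component. The paper's two pushing routines are carefully scoped precisely to avoid this: Pushing-Routine-1 (Lemma~\ref{lem:pushingLemmaOne}) only pushes to $N^+(Q)$ when $Q$ is already known to be $\cH$-free, and Pushing-Routine-2 (Lemma~\ref{lem:pushingLemmaTwo}) only pushes to important $\{u\}$-$W$ separators when $u$ is in the forward (or reverse) shadow of $X$ with respect to $W$, which lets the argument lean on $W$ itself being a solution (via Lemma~\ref{lem:soln_property}). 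An important $S_1$-$\{r(F)\}$ separator fits neither mold, and the paper in fact never branches on such separators. Instead, for case (b) the paper records the guess ``$r(F)$ is unreachable from $S_1$ in $D-X$'' by adding $r(F)$ to $T$ and recursing, justifying progress purely through the increase of $\lambda_D(S_1,T)$ that follows from Lemmas~\ref{lem:closest_sep_lambda_inc} and~\ref{lem:tight_sep_lambda_inc}. This distinction is not cosmetic: without it, your case-(b) branch lacks a soundness argument.

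A second, smaller gap is the ``select a good $F$'' step. You explicitly flag it as the hard part and offer a heuristic (minimize a distance from $S_1$), but the paper does something more structured: maintain $Q$ \emph{root}-$\cH$-free (not merely $\cH$-free), grow $Q$ one tight successor at a time, and branch exactly when the new region $Q'$ contains the root $r(F)$ of some forbidden $F$. The root-$\cH$-freeness invariant is what makes Pushing-Routine-1 applicable in the base case, and the tight-successor discipline (together with the $|N_Q|$ term in the measure) is what ensures every branch strictly decreases the potential. Your plan would need to be fleshed out in exactly this direction, and the important-separator branching for case (b) replaced with the ``add $r(F)$ to $T$'' branching, for the argument to close.
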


\subsection{Reduction to Nice Instances of the Partitioned Compression Version}

We further solve {\sc r-$\cH$-SCC DC} by making $2^{O(k\log k)}$ calls to a subroutine that is allowed to assume the existence of a specific type of solution for instances of the {\sc r-${\mathcal H}$-SCC Partitioned Compression} 
 ({\disjPartitionedCompression
}) problem, which is described below.
An instance of {\disjPartitionedCompression} is a tuple $(D,\S=(S_1,\dots, S_q),W,k)$, where $W$ is a solution for the instance $(D,k+1)$ of $\HH$-SCC
and~$\S$ is an ordered partition of $W$. 
A set $X\subseteq V(D)$ is said to be a {\em solution} for the instance $(D,\S=(S_1, \ldots, S_q), W,k)$ of {\disjPartitionedCompression} if $X$ is a solution for the instance $(D,k)$ of ${\cal H}$-SCC, $X \cap W = \emptyset$, and $X$ intersects all $S_i$-$S_j$ paths in $D$, for every $j>i$. Observe that every solution of the instance $(D,\S,W,k)$ of {\disjPartitionedCompression}, is also a solution of $(D,W,k)$ of {\sc r-$\HH$-SCC DC}. 
We now define a special kind of solution for {\disjPartitionedCompression}, which we call a {\em nice solution}, and as we will see soon, it turns out that it is enough to look for nice solutions for our purpose.

\begin{definition}[Nice Instances]\label{def:niceInstances}
Let $(D,\S=(S_1,\dots, S_q),W,k)$ be an instance of {\disjPartitionedCompression}. A solution~$X$ for this instance is said to be {\em nice} if for every subgraph $F \subseteq D$ such that~$F$ is isomorphic to a graph in~$\cal H$, and each $i \in [q]$, one of the following holds: 

	\begin{enumerate}
	\item $X$ intersects $V(F)$,
	\item $r(F)\notin R(S_i,X)$, or
	\item $\exists v\in V(F)$ such that there is no $\{v\}$-$S_i$ path in $D-X$.
	\end{enumerate}
	
\end{definition}

Observe from the definition above, that if $X$ is a solution of {\disjPartitionedCompression} such that after its deletion each $S_i$ is in exactly one strong component, then $X$ is a nice solution. 

\begin{observation}\label{obs:nice_hereditary1}
Let $(D,\S=(S_1,\dots, S_q),W,k)$ be an instance of {\disjPartitionedCompression}. If $X$ is a nice solution for this instance, then for any $X' \subseteq X$, $X\setminus X'$ is a nice solution for the instance $(D-X',\S,W,k-\vert X' \vert)$.
\end{observation}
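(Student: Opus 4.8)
The statement to prove is Observation~\ref{obs:nice_hereditary1}: if $X$ is a nice solution for $(D,\S,W,k)$ and $X' \subseteq X$, then $X \setminus X'$ is a nice solution for $(D-X',\S,W,k-|X'|)$. My plan is to unwind the definitions and check each of the three defining properties of ``nice'' directly, using the key fact that $D - X' - (X \setminus X') = D - X$, so that the residual graph after deleting the remaining set from the reduced instance is exactly the same as the graph obtained by deleting the whole original solution from the original instance.

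First I would verify that $X \setminus X'$ is a valid solution for the instance $(D-X',\S,W,k-|X'|)$ of {\disjPartitionedCompression}. The size bound $|X \setminus X'| \le k - |X'|$ is immediate since $|X| \le k$ and $X' \subseteq X$. Disjointness from $W$ holds because $X \cap W = \emptyset$ and $X \setminus X' \subseteq X$; for the same reason $W \subseteq V(D - X')$, so $\S$ is still an ordered partition of a set lying in $D-X'$. That $X \setminus X'$ is an $\cH$-deletion set for $D - X'$ and intersects all $S_i$-$S_j$ paths in $D-X'$ for $j > i$ follows because $(D-X') - (X\setminus X') = D - X$, which has no forbidden subgraph inside any strong component and no $S_i$-$S_j$ path for $j>i$ (as $X$ is a solution for $(D,\S,W,k)$); and every subgraph of $D-X'$, in particular every $S_i$-$S_j$ path in $D-X'$, is also a subgraph of $D$.

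Next I would check the three niceness conditions. Fix a subgraph $F \subseteq D - X'$ isomorphic to a member of $\cH$ and an index $i \in [q]$; note $F$ is also a subgraph of $D$, and its canonically fixed root $r(F)$ is the same whether $F$ is viewed inside $D$ or inside $D-X'$. Since $X$ is nice for $(D,\S,W,k)$, one of the three conditions holds for $F$ and $i$ with respect to $X$. If the first holds, $X \cap V(F) \neq \emptyset$; since $V(F) \cap X' = \emptyset$ (as $F \subseteq D - X'$), we get $(X \setminus X') \cap V(F) \neq \emptyset$, which is condition~1 for the reduced instance. If the second holds, $r(F) \notin R_D(S_i, X) = R_{D-X'}(S_i, X\setminus X')$ — the equality again because $(D-X')-(X\setminus X') = D-X$ — giving condition~2. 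If the third holds, there is some $v \in V(F)$ with no $\{v\}$-$S_i$ path in $D-X$; since $D - X = (D-X') - (X \setminus X')$, there is no $\{v\}$-$S_i$ path in $(D-X') - (X \setminus X')$ either, which is condition~3 for the reduced instance. This exhausts all cases, so $X \setminus X'$ is nice.

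There is essentially no hard step here; the only thing to be careful about is the bookkeeping identity $(D - X') - (X \setminus X') = D - X$ and the fact that the canonical choice of root $r(F)$ depends only on the isomorphism type of $F$ (and hence is unaffected by which ambient digraph $F$ sits in), so that the reachability and path conditions transfer verbatim. I would state the argument in two or three short sentences rather than a formal case analysis, since each case is a one-line set-containment or graph-equality observation.
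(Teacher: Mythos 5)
Your proof is correct and is exactly the straightforward unwinding of definitions that the paper leaves implicit (the paper states this as an Observation with no proof). The key identity $(D-X')-(X\setminus X') = D-X$ together with the fact that any $F \subseteq D-X'$ satisfies $V(F)\cap X'=\emptyset$ is all that is needed, and you have used it correctly in all three cases.
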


\begin{observation}\label{obs:nice_hereditary2}
Let $X' \subseteq V(D)\setminus W$. If $X$ is a nice solution for $(D-X',\S,W,k-\vert X' \vert)$ then $X \cup X'$ is a nice solution for $(D,\S,W,k)$.
\end{observation}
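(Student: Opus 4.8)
The plan is to check, one requirement at a time, that $X\cup X'$ meets the definition of a nice solution for $(D,\S,W,k)$, the only structural input being the identity $D-(X\cup X') = (D-X')-X$ together with the obvious fact that deleting further vertices only destroys reachability and never creates it.

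First I would verify that $X\cup X'$ is an ordinary solution of the {\disjPartitionedCompression} instance $(D,\S,W,k)$. Since $X'\subseteq V(D)\setminus W$ and $X$ (being a solution for the instance on $D-X'$) is also disjoint from $W$, we get $(X\cup X')\cap W=\emptyset$ and moreover $X\cap X'=\emptyset$, so $|X\cup X'|=|X|+|X'|\le (k-|X'|)+|X'|=k$. Because $X$ is an $\cH$-deletion set for $D-X'$ and $D-(X\cup X')=(D-X')-X$, no strong component of $D-(X\cup X')$ contains a member of $\cH$ as a subgraph, so $X\cup X'$ is a solution for $(D,k)$. Finally, any $S_i$-$S_j$ path with $j>i$ in $D$ either already meets $X'$, or has all of its vertices outside $X'$ and is therefore a path of $D-X'$, hence meets $X$; either way it meets $X\cup X'$.

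The substantive part is niceness. I would fix a subgraph $F\subseteq D$ isomorphic to a graph in $\cH$ and an index $i\in[q]$, and split on whether $V(F)\cap X'=\emptyset$. If $V(F)\cap X'\ne\emptyset$, then condition~(1) of Definition~\ref{def:niceInstances} holds for $X\cup X'$ trivially. Otherwise, every vertex and arc of $F$ is preserved in $D-X'$, so $F$ is a subgraph of $D-X'$ with the same canonically chosen root $r(F)$; since $X$ is a nice solution for $(D-X',\S,W,k-|X'|)$, one of the three alternatives of Definition~\ref{def:niceInstances} holds for $F$, $i$ and $X$ inside $D-X'$. I would then push each alternative through to $X\cup X'$ in $D$ using $D-(X\cup X')=(D-X')-X$: if $X$ meets $V(F)$ then so does $X\cup X'$; if $r(F)$ is unreachable from $S_i$ in $(D-X')-X$ then it is unreachable from $S_i$ in $D-(X\cup X')$; and if some $v\in V(F)$ has no $v$-$S_i$ path in $(D-X')-X$ then it has none in $D-(X\cup X')$. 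In every case the corresponding clause of Definition~\ref{def:niceInstances} holds for $X\cup X'$, so $X\cup X'$ is nice.

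I do not anticipate a real difficulty. The one place to be slightly careful is the step asserting that a copy $F$ of a forbidden graph disjoint from $X'$ embeds unchanged into $D-X'$ (so that both its arcs and its root are retained) and that path existence in the iterated deletion $(D-X')-X$ is literally the same as in $D-(X\cup X')$; both facts are immediate once spelled out, but they are the hinges of the argument.
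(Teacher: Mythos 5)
Your proof is correct and is exactly the spelled-out version of the "obvious" argument the paper implicitly relies on — the observation is stated without proof in the paper precisely because it reduces to the identity $D-(X\cup X') = (D-X')-X$, the monotonicity of reachability under deletion, and the case split on whether a forbidden copy $F$ already meets $X'$, which is what you do. Your one hinge point — that $F$ disjoint from $X'$ embeds unchanged (with the same canonical root) into $D-X'$ — is indeed sound, since $r(F)$ is fixed as a property of $F$ alone and not of its ambient digraph.
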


We now show that for our purposes, it is enough to look for nice solutions.

\begin{lemma}\label{lem:DisjCompressionToDisjPartionedCompression}
$(D,W,k)$ is a yes-instance of {\sc r-$\cH$-SCC DC} if and only if $\allowbreak (D, \S =(S_1, \ldots, S_q),\allowbreak W,k)$ is a nice yes-instance of {\disjPartitionedCompression}, for some ordered partition $\S=(S_1, \ldots, S_q)$ of $W$.
\end{lemma}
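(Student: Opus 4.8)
\textbf{Proof plan for Lemma~\ref{lem:DisjCompressionToDisjPartionedCompression}.}
The plan is to prove both directions, with the forward direction being the substantive one. For the backward direction: suppose $(D,\S=(S_1,\dots,S_q),W,k)$ is a (nice) yes-instance of {\disjPartitionedCompression} with solution $X$. Then by definition $X$ is a solution of the instance $(D,k)$ of $\cH$-SCC and $X\cap W=\emptyset$. Since $X$ is in particular an $\cH$-deletion set of size at most $k$, it is immediately a solution for $(D,W,k)$ of {\sc r-$\cH$-SCC DC}; the niceness and the path-hitting conditions are simply dropped. (This uses the observation already recorded in the excerpt that every solution of the partitioned compression instance is a solution of the DC instance.)

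For the forward direction, suppose $(D,W,k)$ is a yes-instance of {\sc r-$\cH$-SCC DC}, so there is an $\cH$-deletion set $X$ with $|X|\le k$ and $X\cap W=\emptyset$. I need to cook up the ordered partition $\S$ of $W$ so that $X$ (or a related set) becomes a \emph{nice} solution of {\disjPartitionedCompression}. The natural choice: consider the digraph $D-X$, take its strong components in topological order, and let $S_1,\dots,S_q$ be the nonempty traces $W\cap C$ of $W$ on the strong components $C$ of $D-X$, listed in the induced topological order. Then: (i) $X$ hits every $S_i$-$S_j$ path for $i<j$, because such a path would, in $D-X$, connect a later strong component back to an earlier one, contradicting the topological order — so $X$ is a solution of the partitioned compression instance; (ii) $X$ is nice essentially by construction, since after deleting $X$ each $S_i$ lies in exactly one strong component of $D-X$, and the excerpt explicitly remarks that such a solution is automatically nice. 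It remains only to note that $W$ is indeed a solution for $(D,k+1)$ of $\cH$-SCC (given in the DC instance, with $|W|\le k+1$), so $(D,\S,W,k)$ is a genuine instance of {\disjPartitionedCompression}, and we have exhibited a nice solution for it.

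The one point requiring a little care — and the place I expect to spend the most words — is verifying claim (i) above: that with $\S$ defined from the topological order of the strong components of $D-X$, the set $X$ really does intersect every $S_i$-$S_j$ path in $D$ for $j>i$. The argument is that a hypothetical $S_i$-$S_j$ path avoiding $X$ would live entirely in $D-X$; its endpoints lie in strong components that are the $i$-th and $j$-th in topological order with $i<j$, but the existence of a path from the component of $S_i$'s host to that of $S_j$'s host would force the former to not be behind the latter, which is exactly what the topological ordering of strong components forbids — a contradiction. (Here I also implicitly use that $S_i$ and $S_j$ with $i\neq j$ sit in \emph{different} strong components of $D-X$, which holds because we indexed one $S_\ell$ per component.) Everything else is bookkeeping: $X\cap W=\emptyset$ is inherited, $X$ being an $\cH$-deletion set of size $\le k$ is inherited, and niceness is handed to us by the observation in the text about solutions that leave each $S_i$ inside a single strong component.
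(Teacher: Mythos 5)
Your proposal is correct and follows essentially the same route as the paper: define $S_i=W\cap C_i$ from a topological ordering of the strong components of $D-X$, observe that $X$ is then a solution of {\disjPartitionedCompression}, and conclude niceness because each $S_i$ sits inside a single strong component of $D-X$. The only packaging difference is that you invoke the paper's already-stated observation (solutions leaving each $S_i$ in one strong component are nice) as a black box, whereas the paper's proof re-derives that implication inline by checking the three clauses of Definition~\ref{def:niceInstances} directly for an arbitrary forbidden $F$ and index $i$; the content is identical.

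One small point worth tightening: your justification of why $X$ hits all $S_i$-$S_j$ paths for $i<j$ ("would force the former to not be behind the latter") reads as if you inverted the relevant direction mid-sentence. What you actually need is that the chosen topological ordering of the condensation of $D-X$ has \emph{all} paths going from higher-indexed to lower-indexed components (the paper's phrasing "no path to a component behind it" is read this way), so that an $S_i$-$S_j$ path with $i<j$ cannot survive in $D-X$; alternatively, list the components in reverse of the standard topological order. Either fix is one line. Everything else — the backward direction being immediate, dropping empty traces to get a genuine partition, and the fact that $W$ already certifies the instance is well-formed — is sound.
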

\begin{proof}
		For the forward direction, suppose $(D,W,k)$ is a yes-instance of {\sc r-$\cH$-SCC DC}. Let $X \subseteq V(D) \setminus W$ be an $\HH$-SCC solution for $D$ of size at most $k$. Let $C_1, \ldots, C_q$ be the set of strongly connected components of $D-X$ as they appear in some topological ordering of the strongly connected components of $D-X$. For each $i \in [q]$, let $S_i = W \cap C_i$. Since $W \subseteq V(D) \setminus X$, $(\S=(S_1, \ldots, S_q))$ is an ordered partition of $W$.
From the construction above, note that $X$ is a solution to the instance $(D,\S,W,k)$ of {\disjPartitionedCompression}. We now prove that $X$ is also nice.
Let $F \subseteq D$ such that $F$ is isomorphic to a graph in $\cH$. Fix $i \in [q]$. We will prove that either $V(F) \cap X \neq \emptyset$ or $X$ contains either an $S_i$-$\{r(F)\}$ separator or an $\{u\}$-$S_i$ separator for some vertex $u\in V(F)$. 
If $X \cap V(F) \neq \emptyset$ or there is no $S_i$-$\{r(F)\}$ path in $D-X$ we are done. Otherwise, $r(F)$ is reachable from $S_i$ in $D-X$. Since $r(F)$ is the root of $F$ and $F \subseteq D-X$, all the vertices of $V(F)$ are reachable from $S_i$ in $D-X$. For the sake of contradiction, suppose that for each $u \in V(F)$, there is a path from $u$ to $S_i$ in $D-X$. Then $F$ should be in the same connected component as $S_i$ in $D-X$. This contradicts that $X$ is a solution.

For the backward direction, suppose that there is a nice solution for $(D,\S=(S_1,\ldots,S_q), \allowbreak W,k)$. Since every nice solution is a solution for the instance $(D,W,k)$ of {\sc r-$\HH$-SCC DC}, we are done.
\end{proof}

Since the number of ordered partitions of a set $W$ is at most $\vert W \vert^{\vert W \vert}$ and the size of $W$ in an instance $(D,W,k)$ of {\sc r-$\HH$-SCC DC} is $k+1$, we conclude from Lemma~\ref{lem:DisjCompressionToDisjPartionedCompression} that to solve {\sc r-$\HH$-SCC DC}, it is enough to solve $2^{O(k \log k)}$ nice instances of {\disjPartitionedCompression}. Towards this, in order to prove Lemma~\ref{lem:disjCompressionSolution} (and hence Theorem~\ref{thm:rootedSCCDeletion}), it is enough to prove the following lemma.

\begin{restatable}{lemma}{disjPartitionedCompressionSolution}\label{lem:disjPartitionedCompressionSolution}
There is an algorithm that, given an instance $\mathcal{I}=(D,\S=(S_1,\dots,S_q),W,k)$ of {\disjPartitionedCompression}, runs in time $O^*(2^{O(k^3 \log k)})$ and either correctly concludes that there is no nice solution for $(D,\S,W,k)$ or outputs some solution for $(D,\S,W,k)$.
\end{restatable}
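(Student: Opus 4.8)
The goal is a branching algorithm for {\disjPartitionedCompression} that looks only for a nice solution of size at most $k$, with running time $2^{O(k^3\log k)}\cdot n^{O(h)}$. The measure I would use to drive the branching is something like $\mu = (k, q)$ ordered lexicographically, or more precisely a potential that combines the remaining budget $k$ with the number of parts $q$ of the partition and perhaps a ``flow deficit'' term; every branch must strictly decrease $\mu$ while never increasing $k$. The overall shape: at each node of the recursion tree, either we can certify the current digraph is already root-$\cH$-free inside each strong component (return the accumulated deletions), or we find a single, efficiently computable forbidden subgraph $F\cong H\in\cH$ together with an index $i\in[q]$ on which to branch according to the three alternatives in Definition~\ref{def:niceInstances}: (1) delete a vertex of $V(F)$ — at most $h$ choices, each drops $k$ by one; (2) guess that $r(F)\notin R(S_i,X)$ — handled by branching on an important $S_i$-$\{r(F)\}$ separator via Lemma~\ref{lem:imp_sep_bound}, giving $4^k$ choices each of which deletes at least one vertex; (3) guess that some $u\in V(F)$ loses all paths to $S_i$ — branch on an important $\{u\}$-$S_i$ separator, again $\le h\cdot 4^k$ choices. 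Since $X$ is nice, for the chosen $(F,i)$ at least one of these alternatives holds, so the branching is exhaustive; the branching factor is $h\cdot 4^k\cdot\text{poly}$, and the depth is bounded by the measure, which I will argue is $O(k^2)$ or so, yielding $2^{O(k^3\log k)}$ nodes after accounting for the $k^{O(k)}$ from partition refinements. The $n^{O(h)}$ factor comes from enumerating candidate subgraphs $F$.

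The heart of the argument — and the step I expect to be the main obstacle — is showing that one can \emph{efficiently} identify a ``good'' pair $(F,i)$ on which branching makes genuine progress. Naively, one would like to take any forbidden subgraph $F$ lying in a strong component that also contains some $S_i$; but branching on alternative~(2) or~(3) only deletes vertices, and after deletion we have no guarantee the measure dropped in the right way (e.g. we might delete an important-separator vertex that was ``far'' and not actually refine the partition). The paper's overview flags exactly this: ``there is always an efficiently computable forbidden subgraph upon which we can branch exhaustively \ldots in such a way that we always make progress \ldots proving it is one of our main technical challenges.'' So I would spend the bulk of the proof on a structural lemma of the following flavour: consider $S_1$ (the first part) and the set of vertices reachable from $S_1$ without passing through $W\setminus S_1$; compute the furthest minimum $S_1$-$(W\setminus S_1)$ separator, or rather push it out as far as possible subject to the $\le k$ budget (using Lemmas~\ref{lem:tightSuccessorComputation}, \ref{lem:tightness}, \ref{lem:tight_sep_lambda_inc}); if within the reachable region there is still a forbidden subgraph $F$ whose root is reachable from $S_1$ and all of whose vertices can reach $S_1$, then $F$ together with $i=1$ is a valid branching pair, and alternative~(1) must fire (a vertex of $F$ is deleted) or one of the separators we branch on strictly increases $\lambda$, which is the progress measure. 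If no such $F$ exists in the reachable region, the reachable region is ``clean'' and I can argue it can be safely set aside, reducing to a smaller instance (fewer parts, or $S_1$ merged appropriately), again decreasing the measure.

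More concretely, the inner loop I envisage is: (a) if $q=1$, then a nice solution must place $S_1$ in one strong component, so the instance reduces to checking whether $D$ restricted appropriately is already root-$\cH$-free in the component of $S_1$ — if not, the whole problem has no nice solution with this partition and we reject; (b) otherwise look at the ``$S_1$-side'': compute $R = R_{\max}$-type reachability set for $S_1$ avoiding $W\setminus S_1$, subject to not spending more than the remaining budget on its out-boundary; (c) if some forbidden $F$ sits with $r(F)\in R$ and every vertex of $F$ has a path back to $S_1$ inside $D-(\text{current deletions})$, branch on that $(F,1)$ as above — each child either spends a budget unit (alternatives 1, and 2/3 when the important separator is nonempty) or pushes the $S_1$-$(W\setminus S_1)$ cut strictly further, and a flow/potential argument (Lemmas~\ref{lem:closest_sep_lambda_inc}, \ref{lem:tight_sep_lambda_inc}) bounds how often the latter can happen before $\lambda$ exceeds $k$ and the branch dies; (d) if no such $F$ exists, then $S_1$ together with its reachable clean region forms (or can be contracted into) a single strong component that is root-$\cH$-free, so we remove it / merge it and recurse with $q$ decreased. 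I would then total up: $O(k)$ budget units, each ``used up'' or ``pushed'' $O(k)$ times, with $q\le k+1$ partition reductions interleaved, and a branching factor $h^{O(1)}\cdot 4^k$ per step, giving $(4^k\cdot h^{O(1)})^{O(k^2)} = 2^{O(k^3\log k)}$ leaves, each processed in $n^{O(h)}$ time for the subgraph search plus polynomial flow computations. Correctness: completeness follows since for a fixed nice solution $X$ the niceness property guarantees some child of every branching node is consistent with $X$; soundness follows because every output deletion set is verified to be a solution (each leaf explicitly checks root-$\cH$-freeness of strong components) and has size $\le k$ by construction. The delicate bookkeeping is ensuring the measure is simultaneously non-increasing under all three alternatives and under the partition-reduction step, which is where I would be most careful.
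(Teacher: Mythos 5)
Your high-level outline is on the right track in several places---iterative compression has already been done at this stage, the three alternatives in Definition~\ref{def:niceInstances} give the exhaustive case split, and some branches will ``push a cut'' instead of deleting a vertex---but there are concrete gaps that would sink the proof as written, and the key technical device the paper actually constructs is absent.

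First, your base case for $q=1$ is simply wrong. You claim that when $q=1$ the only thing to do is test whether the strong component of $S_1$ is already root-$\cH$-free and reject otherwise. But even with $q=1$ the solution may delete up to $k$ vertices; a non-trivial branching subroutine is still needed. This is exactly what the paper's Lemma~\ref{lem:baseCasedisjPartitionedCompressionSolution} provides: one finds a forbidden $F$ in a strong component, and the niceness property forces either $X\cap V(F)\neq\emptyset$, or $r(F)\in\reverseshadow(X)$, or some $u\in V(F)$ is in $\forwardshadow(X)$; {\sc Pushing-Routine-2} (Lemma~\ref{lem:pushingRoutineTwo}) then turns the shadow cases into a bounded branch set.

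Second, and more fundamentally, branching on an important separator requires a \emph{pushing lemma} guaranteeing that the hypothetical solution can be replaced by one containing that important separator while remaining a solution. Citing Lemma~\ref{lem:imp_sep_bound} gives you the enumeration bound, but not this exchange property. The paper proves Lemmas~\ref{lem:pushingLemmaOne} and \ref{lem:pushingLemmaTwo} precisely for this purpose, and note that Lemma~\ref{lem:pushingLemmaTwo} branches on important $\{u\}$--$W$ separators, not $\{u\}$--$S_i$ separators; you must first argue (as the paper does in the correctness of its Case~1.2 and 2.2.2) that the vertex $u$ produced by alternative~(3) lies in the forward shadow with respect to $W$, which is a non-trivial step. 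Similarly, your idea of branching directly on an important $S_i$--$\{r(F)\}$ separator in alternative~(2) has no such exchange justification; the paper instead handles this case by adding $r(F)$ to the sink side of an auxiliary cut problem, which keeps the budget $k$ intact but strictly increases the min-cut value.

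Third, your measure is not pinned down and the one you suggest, $(k,q)$ lexicographic, cannot work: the pushing branches neither decrease $k$ nor $q$. The paper's measure is $\mu = k^2 + k - \lambda^2 - |N_Q|$ over an \emph{extended instance} carrying extra data $(S,T,Q,N_Q)$, where $Q$ is a root-$\cH$-free set whose out-boundary is a minimum $S$--$T$ separator and $N_Q\subseteq N^+(Q)$ is the part guessed unreachable. The quadratic term in $\lambda$ is essential: when the algorithm guesses a new source vertex and resets $N_Q$ to $\emptyset$, the loss of up to $\lambda$ units from the $N_Q$ term is overcompensated by the gain of $2\lambda+1$ from $\lambda^2$ (via Lemma~\ref{lem:tight_sep_lambda_inc} or~\ref{lem:closest_sep_lambda_inc}). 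Without this---or something with the same qualitative shape---there is no bound on the recursion depth. Your sketch also places an extra requirement on the branching subgraph $F$ (all vertices of $F$ reaching $S_1$) that is not needed in the paper and may fail to be satisfiable; the paper's criterion is merely that $r(F)$ lies in the next tight-successor region $Q'\setminus Q$, which is what makes the ``efficiently computable forbidden subgraph to make progress'' claim provable.

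Finally, the ``merge the clean region and recurse with $q-1$'' step is not in the paper and its correctness is unverified. The paper's algorithm never decreases $q$; it instead detects (when $N_Q=N^+(Q)$) that {\sc Pushing-Routine-1} (Lemma~\ref{lem:pushingLemmaOne}) applies and $N^+(Q)$ itself is a valid branch set, which is what terminates the ``push'' phase. Establishing that your contraction step preserves the existence of a nice solution would itself require an argument comparable in delicacy to the pushing lemmas.
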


The upcoming subsections are devoted to the proof of Lemma~\ref{lem:disjPartitionedCompressionSolution}.
Recall that the most challenging aspect in our strategy (see overview of our algorithm for {\sc r-$\HH$ SCC Deletion} in Section~\ref{sec:intro}) is the identification of appropriate `branchable' forbidden subgraphs. Specifically, we need to identify particular subgraphs $F$ such that the natural exhaustive branchings reduce some measure depending on the parameter.
% and thus the algorithm terminates. 
 The way we identify such subgraphs is the following:  the algorithm will maintain a set $Q$ such that $Q$ is root-$\cal H$-free, with $Q=\emptyset$ initially. The algorithm will try to `grow' the set $Q$ until the entire graph is covered by $Q$. Initially, when $Q=\emptyset$, we try to grow it to the set $R_{\rm min}(S,T)$, the  closest minimum $S$-$T$ separator. We prove that if we find a forbidden subgraph whose root lies in $R_{\rm min}(S,T)$, that subgraph is good for us in the sense that all branches will drop our measure. Once all roots have been removed from $R_{\rm min}(S,T)$, we set $Q=R_{\rm min}(S,T)$. Then we recurse and grow $Q$ further towards $T$. To formalize the above strategy in Section~\ref{sec:rootedgeneral}, we next describe two crucial tools, in the form of pushing lemmas. 

\subsection{Setting up the Required Notations and Machinery}

Let $(D,\S=(S_1, \ldots, S_q),W,k)$ be an instance of {\sc r-$\HH$-SCC PC}. Let $X$ be some solution for this instance. Suppose, for instance, one had a hold on the set of vertices, say $R$, that are in the strong components containing $S_1$ in $D-X$. Then, one could argue that there is some other solution that picks an important $R$-$(W \setminus S_1)$ separator
% (see Marx' survey~\cite{marx2006parameterized} for definition) 
of size at most $k$. In this case, one can branch of these important sets. Unfortunately, the above mentioned set $R$ is far from being found efficiently. However growing on this idea, our first pushing lemma, Lemma~\ref{lem:pushingLemmaOne}, says that even if one is able to find a ``weaker'' set, viz. some superset of the vertices that are reachable from $S_1$ after the deletion of the solution, do not contain a graph of $\HH$ inside them and their out-neighbourhood forms a minimum $S_1$-$(W\setminus S_1)$,
then one can always construct another solution that picks the out-neighbours of this set.

\begin{lemma}\label{lem:soln_property}
	Let $X$ be an $\cH$-deletion set of $D$. Let $F \subseteq D$ be isomorphic to some graph in $\cH$. Suppose there exists a set of paths $\mathcal{P}$ containing a path $ P_{xy}$ from $x$ to $y$ in $D$ for each $(x,y) \in V(F) \times V(F)$ (equivalently, $F$ is contained in a single strong component of $D$).
Then there exists a vertex $a \in X$ such that for every $v \in V(F)$, there exists a path from $v$ to $a$ and a path from $a$ to $v$ that is contained in $\bigcup_{(x,y) \in V(F) \times V(F)} P_{xy}$.
\end{lemma}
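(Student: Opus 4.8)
The plan is to reduce everything to a single structural observation: the subdigraph $D' := \bigcup_{(x,y)\in V(F)\times V(F)} P_{xy}$ of $D$ is strongly connected, and then to take $a$ to be any vertex of $X$ lying in $V(D')$.

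First I would prove that $D'$ is strongly connected. Because $\mathcal{P}$ contains a path $P_{xy}$ from $x$ to $y$ for \emph{every} ordered pair $(x,y)\in V(F)\times V(F)$ (in particular the trivial paths $P_{vv}$ give $V(F)\subseteq V(D')$), the set $V(F)$ lies in a single strong component $M$ of $D'$. Now let $v\in V(D')$ be arbitrary; then $v$ lies on some $P_{xy}$ with $x,y\in V(F)$, so the prefix of $P_{xy}$ is an $x$-to-$v$ path and the suffix is a $v$-to-$y$ path inside $D'$, while $P_{yx}\subseteq D'$ is a $y$-to-$x$ path. Hence $v$ reaches and is reached from $x$ in $D'$, so $v\in M$. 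Therefore $V(D')=M$ and $D'$ is strongly connected.

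Next I would show $X\cap V(D')\neq\emptyset$. If $X\cap V(F)\neq\emptyset$ this is immediate since $V(F)\subseteq V(D')$. Otherwise $V(F)\cap X=\emptyset$, so every vertex and every arc of $F$ survives in $D-X$ (the arcs of $F$ are arcs of $D$ with both endpoints in $V(F)$, hence outside $X$); if moreover $X\cap V(D')=\emptyset$ then $D'\subseteq D-X$, and since $D'$ is strongly connected, all of $V(F)\subseteq V(D')$ lies in a single strong component of $D-X$, so $F$ is contained in one strong component of $D-X$, contradicting that $X$ is an $\cH$-deletion set. Thus $X\cap V(D')\neq\emptyset$; picking any $a$ in this set and using that $D'$ is strongly connected, for each $v\in V(F)\subseteq V(D')$ there is a path from $v$ to $a$ and a path from $a$ to $v$ using only vertices and arcs of $D'=\bigcup_{(x,y)\in V(F)\times V(F)} P_{xy}$, which is exactly the desired conclusion.

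The one mildly delicate point — the step I would be most careful about — is the observation that $D'$ itself is strongly connected, not merely that $V(F)$ is strongly connected inside it; this is precisely what upgrades the weak statement ``$X$ breaks strong connectivity of $V(F)$ somewhere in $D$'' to the required statement that a single $a\in X$ is reachable from and reaches all of $V(F)$ \emph{within the prescribed family of paths}. Everything else is a short case analysis together with the definition of an $\cH$-deletion set.
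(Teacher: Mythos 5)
Your proof is correct and follows essentially the same underlying idea as the paper: the family of paths $\mathcal{P}$ lets you navigate freely within $D'=\bigcup P_{xy}$, $X$ must intersect $V(D')$ (else $F$ would be inside one strong component of $D-X$), and any such intersection vertex $a$ reaches and is reached from all of $V(F)$ inside $D'$. The only presentational difference is that you factor out and prove the cleaner intermediate claim that $D'$ is strongly connected, whereas the paper constructs the needed $v$-to-$a$ and $a$-to-$v$ walks by explicitly concatenating subpaths of the $P_{xy}$'s; both arguments are sound and equivalent in substance.
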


\begin{proof}
Consider the collection $\mathcal{P}$ of paths described in the lemma. Since $X$ is an $\cH$-deletion set of $D$, $F$ is not contained inside one strongly connected component of $D-X$. Thus, there exists $x,y \in V(F)$, such that $P_{xy} \in \mathcal{P}$ is not entirely contained in $D-X$, that is, $V(P_{xy}) \cap X \neq \emptyset$. Let $a \in V(P_{xy}) \cap X$. Consider the subpath, say $P' \subseteq P_{xy}$, from $x$ to $a$. Since $a \in X$, $P'$ is the desired path from $x$ to $a$ in $D$ contained inside $\bigcup_{(x,y) \in V(F) \times V(F)} P_{xy}$. For any other $v \in V(F) \setminus \{x\}$, consider the path $P_{vx}$ described in the lemma. $P_{vx}$ together with $P'$ give a walk (and eventually a path) from $v$ to $a$ (and hence $X$) that is contained inside $\bigcup_{(x,y) \in V(F) \times V(F)} P_{xy}$.
A symmetric argument shows that we can also get a path from $a$ to $v$, for each $v \in V(F)$, that is contained inside $\bigcup_{(x,y) \in V(F) \times V(F)} P_{xy}$. 
\end{proof}

\begin{lemma}[{\sc Pushing-Routine-1}]\label{lem:pushingLemmaOne}
Let $\mathcal{I}=(D,\S=(S_1,\dots, S_q),W,k)$ be an instance of {\disjPartitionedCompression}. 
Consider a $\cal H$-free set $S_1\subseteq Q\subseteq V(D)\setminus (W\setminus S_1)$ such that $N^+(Q)$ is a minimum $S_1$-$(W\setminus S_1)$ separator. Let $X$ be a solution of $\mathcal{I}$ such that $R_D(S_1,X) \cap N^+(Q)=\emptyset$.
Then, there is a solution $X'$ for $\mathcal{I}$ that contains $N^+(Q)$. 
\end{lemma}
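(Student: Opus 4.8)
The plan is to build the new solution by the natural ``push'': keep the part of $X$ that is not swallowed by the $S_1$-side region, and add the boundary $N^+(Q)$. Write $C=N^+(Q)$ and $\lambda=\lambda_D(S_1,W\setminus S_1)=|C|$. First I would record the two relevant separators. On the $X$ side, set $R=R_D(S_1,X)$; since $X$ intersects every $S_1$-$S_j$ path with $j\ge 2$ (it is a solution of the partitioned instance), $R$ is disjoint from $W\setminus S_1$, and every out-neighbour of $R$ lies in $X$, so $N^+(R)\subseteq X$ and $N^+(R)\cap W=\emptyset$; by the elementary separator fact recorded in the preliminaries, $N^+(R)$ is then an $S_1$-$(W\setminus S_1)$ separator, hence $|N^+(R)|\ge\lambda$. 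On the $Q$ side, set $R^*=R_D(S_1,C)$. Here I would use the hypothesis $R\cap C=\emptyset$: every vertex of $R$ is reached from $S_1$ by a path that stays inside $R$ and therefore avoids $C$, so $R\subseteq R^*$; and since $S_1\subseteq Q$ and $C=N^+(Q)$ is precisely the out-boundary of $Q$, any path from $S_1$ avoiding $C$ stays inside $Q$, which gives $R^*\subseteq Q$ and $N^+(R^*)\subseteq C$.

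Then I would set $X'=(X\setminus R^*)\cup C$ and dispatch the easy requirements. It contains $C$; it is disjoint from $W$ since $X$ is, and since $C$, being an $S_1$-$(W\setminus S_1)$ separator with $S_1\subseteq Q$, avoids all of $W$. For $|X'|\le k$: as $R^*\cap C=\emptyset$ we have $X\setminus X'=X\cap R^*$ and $X'\setminus X\subseteq C$, while $N^+(R)\subseteq X$ splits into $N^+(R)\cap C$ and $N^+(R)\cap R^*$ (a vertex of $N^+(R)$ not in $C$ lies in $R^*$, witnessed by its in-neighbour in $R\subseteq R^*$); combining, $|X\cap R^*|+|C\cap X|\ge |N^+(R)|$, so $|X'|\le |X|-|N^+(R)|+|C|\le |X|\le k$. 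Next I would check that $X'$ still kills every $S_i$-$S_j$ path $P$ with $i<j$. Such a $P$ meets $X$; if it meets $X\setminus R^*\subseteq X'$ we are done, so suppose its whole intersection with $X$ lies in $R^*\subseteq Q$. Then $P$ visits $Q$, but its last vertex lies in $S_j\subseteq W\setminus S_1$, which is disjoint from $Q$; hence $P$ uses an arc leaving $Q$, i.e.\ it meets $C=N^+(Q)\subseteq X'$.

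The last and, I expect, only delicate point is that every strong component of $D-X'$ is $\cH$-free. The subtlety is that $D-X'$ and $D-X$ are incomparable — passing from $X$ to $X'$ we delete $C\setminus X$ but re-insert $X\cap R^*$ — so the solution property of $X$ cannot be transferred directly. The device that fixes this is the structural observation that in $D-X'$ \emph{no arc leaves $R^*$}: indeed $N^+(R^*)\subseteq C\subseteq X'$ while $R^*\cap X'=\emptyset$, so $R^*$ sits entirely inside $D-X'$ and is closed under out-arcs there. Consequently each strong component $\mathcal C$ of $D-X'$ is either contained in $R^*$ or disjoint from it. Now suppose some $F\subseteq D-X'$ isomorphic to a graph in $\cH$ lies in a single strong component $\mathcal C$ of $D-X'$. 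If $\mathcal C\subseteq R^*\subseteq Q$, then $F\subseteq D[Q]$, contradicting that $Q$ is $\cH$-free. If $\mathcal C\cap R^*=\emptyset$, then $\mathcal C\cap X=\mathcal C\cap(X\setminus R^*)\subseteq\mathcal C\cap X'=\emptyset$, and $\mathcal C\cap X'=\emptyset$ gives $(D-X')[\mathcal C]=D[\mathcal C]$ strongly connected, so $D[\mathcal C]$ is a strongly connected subgraph of $D-X$ containing $F$ — contradicting that $X$ is an $\cH$-deletion set. Hence $X'$ is a solution of $\mathcal I$ containing $N^+(Q)$.

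Everything outside the last paragraph is routine separator bookkeeping; the only genuinely delicate step is isolating the fact that $R^*$ is closed under out-arcs in $D-X'$, which lets one localize any surviving forbidden subgraph either inside $Q$ or entirely in the common part of $D-X$ and $D-X'$. (The hypothesis $R_D(S_1,X)\cap N^+(Q)=\emptyset$ enters earlier, to force $R\subseteq R^*$; the minimality of $N^+(Q)$ enters only in the cardinality bound.)
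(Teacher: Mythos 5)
Your proof is correct, but it follows a genuinely different route than the paper's. The paper defines $X' = (X \setminus N^+(R)) \cup N^+(Q)$, swapping out only the closest boundary separator $N^+(R)$, and then argues correctness by invoking the general-purpose path-transport device Lemma~\ref{lem:pushing_property} to carry $u$-$v$ paths between $D-X'$ and $D-X$ whenever $u,v \in \overline{R}_D(S_1,X')$. Your construction $X' = (X\setminus R^*)\cup N^+(Q)$ with $R^*=R_D(S_1,N^+(Q))$ removes \emph{all} of $X\cap R^*$, not just the boundary, and this more aggressive swap buys a sharper structural invariant: $X$ and $X'$ literally coincide outside $R^*\cup N^+(Q)$. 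Together with your observation that $R^*$ is closed under out-arcs in $D-X'$ (because $N^+(R^*)\subseteq N^+(Q)\subseteq X'$), this forces every strong component of $D-X'$ to either sit wholly inside $Q$ — where $\cH$-freeness of $Q$ finishes the job — or to be disjoint from $R^*\cup N^+(Q)$, a region where $D-X$ and $D-X'$ are the same graph, so the solution property of $X$ transfers verbatim. This sidesteps Lemma~\ref{lem:pushing_property} entirely and is more self-contained. The paper's choice has the advantage of sharing machinery with the proof of Lemma~\ref{lem:pushingLemmaTwo} (which uses the same transport lemma); yours localizes the forbidden subgraph more cleanly at the cost of a slightly fussier cardinality argument (you need $N^+(R)\subseteq R^*\cup N^+(Q)$, whereas the paper's cardinality step is the one-liner $|N^+(Q)|\le |N^+(R)|$). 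Both are valid proofs of the lemma.
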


\begin{proof}
	Let $R=R_D(S_1,X)$. We will first show that $N^+(Q)$ covers $N^+(R)$. Observe that both $N^+(R)$ and $N^+(Q)$ are $S_1$-$(W \setminus S_1)$ separators in $D$. Also, since $R \cap (W \setminus S_1) =\emptyset$, $N^+(R)$ is a minimal $S_1$-$(W \setminus S_1)$ contained in $X$. Since $R_D(S_1,X) \cap N^+(Q)=\emptyset$, $R_D(S_1. R) \subseteq R_D(S_1,Q)$. Thus, $N^+(Q)$ covers $N^+(R)$.
	
We will now construct a set $X'$ that contains $N^+(Q)$, and prove that it is also a solution for $\mathcal{I}$. Let $X' = (X \setminus N^+(R)) \cup N^+(Q)$. Since both $N^+(R)$ and $N^+(Q)$ are $S_1$-$W \setminus S_1$ separators in $D$ and $N^+(Q)$ is a minimum one, $\vert N^+(Q) \vert \leq \vert N^+(R) \vert$. Thus, $\vert X' \vert \leq \vert X \vert$. Also, since $X \cap W =\emptyset$ and $N^+(Q) \cap W =\emptyset$ (because $N^+(Q)$ is a minimum $S_1$-$(W\setminus S_1)$ separator), we conclude that $X' \cap W = \emptyset$. Also observe that $R_D(S_1,X') \subseteq Q$.
	
Suppose that there exists $F \subseteq D-X'$ which is isomorphic to some graph in $\HH$ and which is contained in a single strongly connected component of $D-X'$. Then, for each $(x,y) \in V(F) \times V(F)$, there exists paths $P_{xy}$ that are entirely contained in $D-X'$.
We will first prove that for each $v \in V(F)$, $v \in \overline{R}_D(S_1, X')$. Suppose not. Let $v \in V(F)$ be such that $v \in R_D(S_1, X')$. Then since the paths $P_{x,y}$ (for each $(x,y) \in V(F) \times V(F)$) are in $D-X'$, this implies $V(F) \subseteq R_D(S_1,X') \subseteq Q$. This contradicts that $Q$ is $\HH$-free. 
Since, for all $(x,y) \in V(F) \times V(F)$, $x,y \in \overline{R}_D(S_1,X')$, $P_{xy}$ is a path from $x$ to $y$ in $D-X'$ and $N^+(Q)$ covers $N^+(R)$, from the construction of $X'$ and Lemma~\ref{lem:pushing_property}, for all $(x,y) \in V(F) \times V(F)$, there exists a path from $x$ to $y$ in $D-X$. This implies that $F$ is a strongly connected graph in $D-X$, and therefore present in some strongly connected component of $D-X$. This contradicts that $X$ is a solution to $\mathcal{I}$.

We will now show that for each $i,j \in [q]$, $i < j$, there is no path from $S_i$ to $S_j$ in $D-X'$. Since $N^+(Q) \subseteq X'$ and $N^+(Q)$ is an  $S_1$-$(W \setminus S_1)$ separator in $D$, there is no path from $S_1$ to $S_j$, $j >1$. Consider any pair $S_i,S_j$, $i <j$. Note that $S_i,S_j \subseteq W \setminus S_1$. Also, $S_i,S_j \in \overline{R}_D(S_1,X')$. For the sake of contradiction, suppose there exists a vertex $a \in S_i$ and $b \in S_j$ such that there is a path from $a$ to $b$ in $D-X'$. Since $a,b \in \overline{R}_D(S_1,X')$, and $N^+(Q)$ covers $N^+(R)$, from the construction of $X'$ and Lemma~\ref{lem:pushing_property}, we conclude that there is a path from $a$  to $b$, and hence from $S_i$ to $S_j$, in $D-X$ too. This contradicts that $X$ is a solution to $\mathcal{I}$.
\end{proof}

For our second pushing lemma, we first borrow definitions of shadows for directed graphs from~\cite{ChitnisHM13}. Note that what we do with the concept of shadows in our article is very different from the way it has been used so far. More specifically, we do not resort to the shadow removal technique that has often been an effective technique to design FPT algorithms for directed cut problems. In fact, for the general problems that we consider, it is not at all clear how the shadow removal  technique could be helpful.
Let $(D,\S=(S_1,\dots, S_q),\allowbreak W, \allowbreak k)$ be an instance of {\disjPartitionedCompression} and let $X\subseteq V(D)\setminus W$. Then, $\forwardshadow(X)$ denotes the set of those vertices $u\notin X$ such that there is no  $\{u\} $-$W$ path in $D$-$X$. Similarly, $\reverseshadow(X)$ denotes the set of those vertices $u\notin X$ such that there is no  $W$-$\{u\} $ path in $D$-$X$.
%\end{definition}
$\forwardshadow(X)$ is called the {\em forward shadow of $X$ with respect to $W$} and $\reverseshadow(X)$ is called the {\em reverse shadow of $X$ with respect to $W$}. 
Notice that with these definitions, we have that if $X$ is a nice solution for the instance $(D,\S=(S_1,\dots, S_q),W,k)$, 
then for any subgraph $F \subseteq D$ that is isomorphic to a graph in $\HH$ and any $i \in [q]$, either $X$ intersects $V(F)$ or $r(F)\notin R(S_i,X)$ 
or $V(F)\cap \forwardshadow(X)\neq \emptyset$. Moreover, when $q=1$, this implies that $X$ intersects $V(F)$ or $r(F)\in \reverseshadow(X)$ or $V(F)\cap \forwardshadow(X)\neq \emptyset$.
Our second pushing lemma, guarantees the existence of a set to branch on, provided we have identified some vertex in the forward or reverse shadow of $X$ with respect to $W$.
For a digraph $D$, $D^{rev}$ denotes the digraph obtained by reversing the direction of all arcs in $D$.

\begin{lemma}
\label{lem:pushingLemmaTwo}
Let $\mathcal{I}=(D,\S=(S_1,\dots, S_q),W,k)$ be an instance of {\disjPartitionedCompression} and 
let  $X$ be a solution for this instance. 
Let $u\in V(D)$ be a vertex in $\forwardshadow(X)$ (resp.~$\reverseshadow(X)$). Then, there is a solution for $\mathcal{I}$ that contains an important $\{u\}$-$W$ separator in $D$ (resp.~an important $\{u\}$-$W$ separator in $D^{rev}$). 
\end{lemma}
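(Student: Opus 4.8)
The plan is to prove the forward-shadow case; the reverse-shadow case then follows immediately by applying the forward-shadow argument to the digraph $D^{rev}$ (since reversing all arcs swaps the roles of $\forwardshadow$ and $\reverseshadow$, turns $\{u\}$-$W$ separators into $W$-$\{u\}$ separators, etc., and a solution for $\mathcal{I}$ is also a solution for the corresponding instance on $D^{rev}$ — one should check that the "intersects all $S_i$-$S_j$ paths for $i<j$" condition is symmetric under simultaneously reversing the order of the partition, which it is).

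So assume $u\in\forwardshadow(X)$. Let $Z\subseteq X$ be a minimal subset of $X$ that is still a $\{u\}$-$W$ separator in $D$ (such a $Z$ exists since $X$ itself is a $\{u\}$-$W$ separator: $u$ cannot reach $W$ in $D-X$). By minimality, $Z = N^+(R)$ where $R = R_D(\{u\},Z)$ is the set reachable from $u$ in $D-Z$. Now let $Z^\star$ be an important $\{u\}$-$W$ separator in $D$ that covers $Z$ — such a separator exists and has size $|Z^\star|\le|Z|\le|X|\le k$, since among all $\{u\}$-$W$ separators of size at most $|Z|$ that cover $Z$ one can always pick a maximal one in the covering order, and any such maximal one is important (this is the standard "pushing to an important separator" argument, cf. the discussion around Lemma~\ref{lem:imp_sep_bound}). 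Set $R^\star = R_D(\{u\},Z^\star)$, so $R\subseteq R^\star$, and define
\[
X' = (X\setminus Z)\cup Z^\star.
\]
Since $Z\subseteq X$ and $|Z^\star|\le|Z|$, we get $|X'|\le|X|\le k$, and since $Z^\star$ is a $\{u\}$-$W$ separator it is disjoint from $W$, so $X'\cap W=\emptyset$. It remains to show $X'$ is a solution for $\mathcal{I}$, i.e. $X'$ is an $\cH$-deletion set and $X'$ cuts all $S_i$-$S_j$ paths with $i<j$.

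The key observation is that both $R$ and $R^\star$ are entirely contained in $\forwardshadow(X)$: every vertex of $R^\star$ is, by definition of an important separator being covered, reachable from $u$ in $D-Z^\star$ but cannot reach $W$ in $D$ at all without passing through $Z^\star\subseteq$... — more carefully, one argues that no vertex of $R^\star$ can reach $W$ in $D-X'$, hence (since we will control the "outside" part) the whole of $R^\star\cup Z^\star\setminus Z$ interacts with the rest of the graph only through the cut. The heart of the argument is then an application of Lemma~\ref{lem:pushing_property}, exactly as in the proof of Lemma~\ref{lem:pushingLemmaOne}: suppose $F\subseteq D-X'$ is isomorphic to a member of $\cH$ and lies in a single strong component of $D-X'$, witnessed by paths $P_{xy}$ for $(x,y)\in V(F)\times V(F)$. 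First show every vertex of $V(F)$ lies in $\overline{R}_D(\{u\},X')$ — indeed if some $v\in V(F)$ were reachable from $u$ in $D-X'$ then, since $Z^\star$ separates $u$ from $W$, all of $V(F)$ would be reachable from $u$ and unable to reach $W$, contradicting that $X$ is already an $\cH$-deletion set once we push back (alternatively, this step uses that $Z^\star$ cannot be "re-entered"). Then apply Lemma~\ref{lem:pushing_property} with $S=\{u\}$, $T=W$, the minimal separator $Z=N^+(R)\subseteq X$, the covering separator $Z^\star$, and $X' = (X\setminus Z)\cup Z^\star$: it yields, for each $(x,y)\in V(F)\times V(F)$ with $x,y\in\overline{R}_D(\{u\},X')$, a path from $x$ to $y$ in $D-X$, so $F$ sits in one strong component of $D-X$, contradicting that $X$ is an $\cH$-deletion set. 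The same Lemma~\ref{lem:pushing_property} argument, applied to a hypothetical $S_i$-$S_j$ path in $D-X'$ (with $i<j$; note $S_i,S_j\subseteq W$ are disjoint from $R_D(\{u\},X')$ since $Z^\star$ separates $u$ from $W$), pushes it back to an $S_i$-$S_j$ path in $D-X$, contradicting that $X$ is a solution.

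**Main obstacle.** The routine part is the Lemma~\ref{lem:pushing_property} bookkeeping, which parallels Lemma~\ref{lem:pushingLemmaOne} almost verbatim. The genuinely delicate point is verifying the hypotheses of Lemma~\ref{lem:pushing_property} are met — specifically that the relevant vertices of any would-be obstruction (or $S_i$-$S_j$ path) genuinely lie in $\overline{R}_D(\{u\},X')$ — and, before that, establishing that an important $\{u\}$-$W$ separator covering the minimal separator $Z$ exists with size at most $k$; this last fact is where we crucially use that $u\in\forwardshadow(X)$, guaranteeing $X$ (hence $Z$) really is a $\{u\}$-$W$ separator of size at most $k$ to begin with. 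I would also need to double-check the reduction of the reverse-shadow case to the forward case: one must confirm that $(D^{rev},\S^{rev}=(S_q,\dots,S_1),W,k)$ is a legitimate instance of {\disjPartitionedCompression} and that $X$ remains a solution for it, which follows since "$X$ hits all $S_i$–$S_j$ paths for $i<j$ in $D$" is equivalent to "$X$ hits all $S_j$–$S_i$ paths for $i<j$ in $D^{rev}$", i.e. the same condition for the reversed order.
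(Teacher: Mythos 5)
Your plan is the paper's: take a minimal $\{u\}$-$W$ separator $Z\subseteq X$ with $Z=N^+(R_D(\{u\},Z))$, replace it by an important $\{u\}$-$W$ separator $Z^\star$ covering it (of size at most $k$, which is where $u\in\forwardshadow(X)$ is used to guarantee $X$, hence $Z$, is such a separator in the first place), set $X'=(X\setminus Z)\cup Z^\star$, and use Lemma~\ref{lem:pushing_property} to push hypothetical obstructions in $D-X'$ back to $D-X$. Your handling of the $S_i$-$S_j$ paths and of the size and disjointness of $X'$ is correct. What is not correct is your claim that the step ``first show every vertex of $V(F)$ lies in $\overline{R}_D(\{u\},X')$'' parallels Lemma~\ref{lem:pushingLemmaOne} almost verbatim, and the argument you actually write for this step does not work.

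In Lemma~\ref{lem:pushingLemmaOne} that step is dispatched because $R_D(S_1,X')\subseteq Q$ and $Q$ is $\HH$-free by hypothesis; there is no analogue of $Q$ here, so a different ingredient is needed. Your version (``all of $V(F)$ would be reachable from $u$ and unable to reach $W$, contradicting that $X$ is already an $\cH$-deletion set once we push back'') is circular: Lemma~\ref{lem:pushing_property} transports paths only between endpoints in $\overline{R}_D(\{u\},X')$, which is exactly what you are trying to establish, and knowing merely that $V(F)\subseteq\forwardshadow(X')$ contradicts nothing by itself. The missing idea is to invoke Lemma~\ref{lem:soln_property} with $W$ (not $X$) playing the role of the $\cH$-deletion set — the compression hypothesis that $W$ is a solution for $(D,k+1)$ does real work precisely here. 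Concretely: if some $v\in V(F)$ were in $R_D(\{u\},X')$, Lemma~\ref{lem:soln_property} yields a path from $v$ to a vertex of $W$ contained in $\bigcup P_{xy}\subseteq D-X'$; concatenating with a $u$-$v$ path in $D-X'$ gives a $u$-$W$ path in $D-X'$, contradicting that $Z^\star\subseteq X'$ is a $\{u\}$-$W$ separator. (A minor further note: if you symmetrize the reverse-shadow case by passing to $(D^{\mathrm{rev}},\S^{\mathrm{rev}},W,k)$, you must also replace $\HH$ by its arc-reversal so that $W$ remains a valid compression solution; the paper instead argues the reverse case directly by mirror symmetry without reforming the instance.)
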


\begin{proof}
	We prove the case when $u \in \forwardshadow(X)$. The case when $u \in \reverseshadow(X)$ is symmetrical.
%	$u$ is in the forward shadow, the case for reverse shadow follows by a symmetrical argument. 
Since $u$ is in the forward shadow of $X$ w.r.t. $W$, there is no $\{u\}$-$W$ path in $D-X$ and thus, $X$ is a $\{u\}$-$W$ separator. Let $R=R_D(\{u\},X)$. Since $R$ is exactly the set of vertices reachable from $u$ after removing $X$, it follows that $N^+(R) \subseteq X$. Moreover, since $u \in R$ and $R \cap W = \emptyset$ it follows that $N^+(R)$ is a $\{u\}$-$W$ separator. Thus there exists an important $\{u\}$-$W$ separator $C$ that covers $N^+(R)$ such that $\vert C \vert 
\leq \vert N^+(R) \vert$. Let $X' = (X \setminus N^+(R)) \cup C$. Clearly $\vert X' \vert \leq \vert X \vert$. Also, since $X \cap W = \emptyset$ and $C \cap W =\emptyset$ (because $C$ is a $\{u\}$-$W$ separator), $X' \cap W = \emptyset$. We now prove that that $X'$ is a solution for $\mathcal{I}$.

Suppose that there exists $F \subseteq D-X'$ which is isomorphic to some graph in $\HH$ and which is contained in a single strongly connected component of $D-X'$. Then, for each $(x,y) \in V(F) \times V(F)$, there exists paths $P_{xy}$ that are entirely contained in $D-X'$.
We will first prove that for each $v \in V(F)$, $v \in \overline{R}_D(\{u\}, X')$. Suppose not. Let $v \in V(F)$ be such that $v \in R_D(\{u\}, X')$.
In this case, first note that $v \not \in W$ because $u \in \forwardshadow(X)$. 
Since, $W$ is a solution to $D$ for the $\HH$-SCC problem, from Lemma~\ref{lem:soln_property}, there exists a path from $v$ to $W$ in $\bigcup_{(x,y) \in V(F) \times V(F)} P_{xy}$. Since the paths $P_{xy}$ are in $D-X'$, we conclude that there is a path from $v$ to $W$ in $D-X'$. Also, since $v \in R_D(\{u\},W)$, we conclude that there is a path from $u$ to $W$ in $D-X'$, thereby contradicting that $u \in \forwardshadow(X)$. 
Thus, we now have that for all $(x,y) \in V(F) \times V(F)$, $x,y \in \overline{R}_D(\{u\}, X')$. Since $C$ covers $N^+(R)$ and there is a path $P_{x,y}$ from $x$ to $y$ in $D-X'$, from the construction of $X'$ and Lemma~\ref{lem:pushing_property}, there is also a path from $x$ to $y$ in $D-X$. This implies that $F$ is a strongly connected in $D-X$ and hence is contained in some strongly connected component of $D-X$. This contradicts that $X$ is a solution to $\mathcal{I}$.

We will now show that for each $i,j \in [q]$, $i < j$, there is no path from $S_i$ to $S_j$ in $D-X'$. 
Consider any pair $S_i,S_j$, $i <j$. For the sake of contradiction, suppose there exists a vertex $a \in S_i$ and $b \in S_j$ such that there is a path from $a$ to $b$ in $D-X'$. Since $S_i,S_j \subseteq W$ and $u \in \forwardshadow(X)$, $a,b \in \overline{R}_D(\{u\},X')$. Since $C$ covers $N^+(R)$, from the construction of $X'$ and Lemma~\ref{lem:pushing_property}, there is a path from $a$ to $b$, and hence from $S_i$ to $S_j$, in $D-X$. This contradicts that $X$ is a solution to $\mathcal{I}$.
\end{proof}

As a consequence of Lemma~\ref{lem:pushingLemmaTwo} and Lemma~\ref{lem:imp_sep_bound}, we have the following.

\begin{lemma}[{\sc Pushing-Routine-2}]
	\label{lem:pushingRoutineTwo}
There is an algorithm that, given an instance $\mathcal{I}=(D,\S,W, \allowbreak k)$ of {\disjPartitionedCompression} and a vertex $u\in V(D)$ such that either there is a $u$-$W$ path or a $W$-$u$ path in $D$, 
 runs in time $4^k \cdot n^{O(1)}$ and outputs a non-empty set $Z \subseteq V(D)$ of size at most $4^k \cdot 2k$ with the following property: if there is a solution $X$ for $\mathcal{I}$ such that $u \in \forwardshadow(X) \cup \reverseshadow(X)$, then there is a solution $X'$ for $\mathcal{I}$ such that $X' \cap Z \neq \emptyset$. 
\end{lemma}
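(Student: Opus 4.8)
The statement I need to prove: given an instance $\mathcal{I}=(D,\S,W,k)$ and a vertex $u$ with either a $u$-$W$ path or a $W$-$u$ path in $D$, I must produce (in time $4^k n^{O(1)}$) a non-empty set $Z$ of size at most $4^k\cdot 2k$ such that whenever some solution $X$ has $u\in\forwardshadow(X)\cup\reverseshadow(X)$, there is a solution $X'$ with $X'\cap Z\neq\emptyset$.

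Key observation: Lemma~\ref{lem:imp_sep_bound} gives me an enumeration of all important separators of size $\le k$, bounded by $4^k$ in number, in $O^*(4^k)$ time. So the plan is: run this enumeration twice — once in $D$, to list all important $\{u\}$-$W$ separators of size $\le k$, and once in $D^{rev}$, to list all important $\{u\}$-$W$ separators of $D^{rev}$ of size $\le k$ — and let $Z$ be the union of all vertices appearing in any of these separators. Since there are at most $4^k$ important separators of each type and each has size at most $k$, we get $|Z|\le 4^k\cdot 2k$, and the whole computation runs in $4^k\cdot n^{O(1)}$ time.

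Now for correctness. Suppose $X$ is a solution with $u\in\forwardshadow(X)$ (the case $u\in\reverseshadow(X)$ is symmetric, using $D^{rev}$). Since $X$ is a solution, $|X|\le k$. By Lemma~\ref{lem:pushingLemmaTwo}, there is a solution $X''$ for $\mathcal{I}$ that contains an important $\{u\}$-$W$ separator $C$ in $D$; moreover the proof of that lemma constructs $X''$ from $X$ by replacing $N^+(R_D(\{u\},X))$ with $C$ where $|C|\le|N^+(R_D(\{u\},X))|\le|X|\le k$, so $|X''|\le|X|\le k$ and in particular $C$ is an important $\{u\}$-$W$ separator of size $\le k$. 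Hence $C$ is one of the separators enumerated when we ran Lemma~\ref{lem:imp_sep_bound} on $D$, so $C\subseteq Z$, and since $C\neq\emptyset$ (there is a $u$-$W$ path in $D$, so any $\{u\}$-$W$ separator is nonempty — here I should note that if there were no $u$-$W$ path then $u$ is trivially in the forward shadow of the empty separator but the hypothesis of the lemma guarantees a $u$-$W$ path or a $W$-$u$ path, and one handles the two cases accordingly), we get $X''\cap Z\supseteq C\neq\emptyset$. Taking $X'=X''$ finishes this case. The $\reverseshadow$ case is identical after reversing all arcs: Lemma~\ref{lem:pushingLemmaTwo} gives a solution containing an important $\{u\}$-$W$ separator in $D^{rev}$ of size $\le k$, which lies in the second batch of enumerated separators, hence in $Z$.

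The only remaining point is to guarantee $Z$ is \emph{non-empty} unconditionally. By hypothesis there is a $u$-$W$ path or a $W$-$u$ path in $D$. If there is a $u$-$W$ path, then $\{u\}$-$W$ separators in $D$ are all nonempty; the enumeration of important $\{u\}$-$W$ separators of size $\le k$ is nonempty precisely when $\lambda_D(\{u\},W)\le k$ — if $\lambda_D(\{u\},W)>k$ then no solution of size $\le k$ can put $u$ in its forward shadow anyway (separating $u$ from $W$ would cost more than $k$), so the implication we must prove is vacuously true, and we may then fall back on the symmetric side or, if that is also empty, simply output $Z=\{u\}$ (or any fixed singleton), which makes the statement vacuously correct and keeps $|Z|\le 4^k\cdot2k$. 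So: if both enumerations are empty, output $Z=\{u\}$; otherwise output the union of enumerated separator vertices. The main (and really the only) subtlety is this bookkeeping about emptiness and the vacuous cases; the substantive content is entirely carried by Lemma~\ref{lem:pushingLemmaTwo} and Lemma~\ref{lem:imp_sep_bound}, and I expect no further obstacle.
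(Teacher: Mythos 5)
Your proof is essentially identical to the paper's: both define $Z$ as the union of all important $\{u\}$-$W$ separators of size at most $k$ in $D$ and all important $\{u\}$-$W$ separators of size at most $k$ in $D^{rev}$, then invoke Lemma~\ref{lem:pushingLemmaTwo} for correctness and Lemma~\ref{lem:imp_sep_bound} for the size and time bounds. You are in fact slightly more careful than the paper on the corner case where both $\lambda_D(\{u\},W)>k$ and $\lambda_{D^{rev}}(\{u\},W)>k$ (the paper simply asserts $Z$ is non-empty because some $u$-$W$ or $W$-$u$ path exists, which is not quite enough), and your observation that in this case no size-$\le k$ solution can place $u$ in either shadow, making the implication vacuous, correctly closes that gap.
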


\begin{proof}
Given the instance $\mathcal{I}$ and $u \in V(D)$, let $\mathcal{F}_1$ be the family of important $\{u\}$-$W$ separators of size at most $k$ in $D$ and let $\mathcal{F}_2$ be the family of important $\{u\}$-$W$ separators of size at most $k$ in $D^{rev}$. Let $Z = \cup_{X \in \mathcal{F}_1 \cup \mathcal{F}_2} X$. Note that since there is either a $u$-$W$ path or a $W$-$u$ path, $Z$ must be non-empty. Then from Lemma~\ref{lem:pushingLemmaTwo}, there is a solution $X'$ of $\mathcal{I}$ such that $X' \cap Z \neq \emptyset$. Also, from Lemma~\ref{lem:imp_sep_bound}, $\vert Z \vert \leq 4^k \cdot 2k$.
\end{proof}

\subsection{Solving instances of {\disjPartitionedCompression
} with a trivial partition}\label{sec:rootedtrivial}

\newcommand{\fbs}{{\sc Find-Branch-Set}}

Towards the proof of Lemma~\ref{lem:disjPartitionedCompressionSolution}, we first find a set~$\widehat{Z}$ such that~$\widehat{Z}$ intersects some solution for $\mathcal{I} =(D,\S,W,k)$ (if one exists) and $\vert \widehat{Z} \vert = O(h) \cdot 2^{O(k^2  \log k)}$. Observe that, having such a set~$\widehat{Z}$ at hand, one can proceed with a branching algorithm that branches on the vertices of~$\widehat{Z}$. We call the set~$\widehat{Z}$, the {\em branch set} for the instance~$\mathcal{I}$.
The rest of the section is devoted to computing a branch set for $\mathcal{I}$ of the desired size. 

\begin{lemma}\label{lem:findBranchSet}
Given an instance $\mathcal{I}=(D,\S,W,k)$ of {\sc r-$\HH$-SCC PC}, there is an algorithm, that runs in time $2^{O(k^2 \log k)} \cdot n^{O(h)}$ and outputs a branch set for $\mathcal{I}$ of size  $\O(h) \cdot 2^{O(k^2 \log k)}$ if $\mathcal{I}$
has a nice solution.
\end{lemma}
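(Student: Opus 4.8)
The plan is to build the branch set $\widehat{Z}$ by a recursive procedure that follows the strategy sketched in the text: maintain a root-$\cH$-free set $Q$, initially $Q = S_1$ (after handling the trivial partition $q=1$), whose out-neighbourhood $N^+(Q)$ is a minimum $S_1$-$(W\setminus S_1)$ separator, and repeatedly try to ``grow'' $Q$ towards $W\setminus S_1$. I would start with $Q$ equal to $R_{\rm min}(S_1, W\setminus S_1)$ after first deleting (branching on) all roots of forbidden subgraphs that lie in it — this is where \fbs\ comes in. Concretely, while there exists a forbidden subgraph $F\subseteq D$ with $r(F)\in R_{D}(S_1, N^+(Q))$ and $F$ entirely contained in a single strong component reachable from $S_1$, I would argue (using Definition~\ref{def:niceInstances} with $i=1$ and $q=1$, so $R(S_1,X)=V(D)\setminus \reverseshadow(X)$) that any nice solution $X$ either (a) intersects $V(F)$, or (b) has $r(F)\in\reverseshadow(X)$, or (c) has some $v\in V(F)$ with $v\in\forwardshadow(X)$. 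In case (b) and (c) I invoke Lemma~\ref{lem:pushingRoutineTwo} on the vertex $r(F)$ respectively $v$ to obtain a set of $O(4^k\cdot k)$ vertices, one of which can be assumed to lie in some solution; in case (a) I just add $V(F)$ itself. Taking the union over the $h$ choices of $v\in V(F)$ together with $r(F)$ and $V(F)$, I get a set $Z_F$ of size $O(h)\cdot 4^k\cdot k$ that intersects some solution. This $Z_F$ is returned as (part of) the branch set.

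The second ingredient is showing that such a branchable $F$ can always be found efficiently whenever the instance is non-trivial, and that when no such $F$ exists we can safely push: if no forbidden subgraph has its root inside $R_D(S_1, N^+(Q))$ after we have stabilised, then $R_D(S_1,N^+(Q))$ is root-$\cH$-free, and by Lemma~\ref{lem:tightSuccessorComputation} we can compute a minimum separator tightly covering $N^+(Q)$, grow $Q$ to the corresponding larger reachable set, and recurse. Lemma~\ref{lem:pushingLemmaOne} justifies that this ``pushing'' is without loss of generality: if a solution $X$ avoids $N^+(Q)$ among the $S_1$-reachable vertices, it can be modified to include $N^+(Q)$, so on the current branch we may assume $N^+(Q)$ is fully deleted or $X$ hits the reachable region, which is root-$\cH$-free and hence contributes no new forbidden subgraph obstructions inside it. Each pushing step strictly increases $R_D(S_1, N^+(Q))$ (or increases $\lambda_D(S_1,W\setminus S_1)$, which can happen at most $k$ times), so after at most $n$ pushes we either exhaust the graph — in which case the instance with $q=1$ is essentially solved — or we hit a branchable $F$. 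The running time $2^{O(k^2\log k)}\cdot n^{O(h)}$ comes from: $n^{O(h)}$ to enumerate candidate subgraphs $F$, polynomial factors for the max-flow and tight-cover computations per push, and the $4^k$ factors from \fbs; the $2^{O(k^2\log k)}$ arises because the recursion from Lemma~\ref{lem:disjPartitionedCompressionSolution}'s higher layers (reducing general $q$ to $q=1$) already carries that cost, and here one level of branching on $\widehat Z$ of size $O(h)\cdot 2^{O(k^2\log k)}$ suffices.

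The main obstacle — and the part the text explicitly flags as ``far from obvious'' — is the efficient identification of a branchable $F$ and the proof that branching on $Z_F$ genuinely makes progress in the measure. The subtlety is that after pushing $Q$ as far as possible, a forbidden subgraph $F$ witnessing non-termination need not have its root in the freshly-grown frontier; one must carefully choose $F$ so that each of the three cases of the nice-solution definition is ``charged'' correctly. In case (b), $r(F)\in\reverseshadow(X)$ forces an important $\{r(F)\}$-$W$ separator in $D^{rev}$ to be pushable into a solution; but for the measure to drop one needs that this separator is genuinely different from what is already forced, which relies on $r(F)$ being reachable from $S_1$ in $D-X$ for the original $X$ while being cut off from $W$ backwards — i.e., on the interplay between $R_D(S_1,\cdot)$ and $\reverseshadow$. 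I expect the careful bookkeeping here to require a potential function combining $k$, the current value of $\lambda_D(S_1, W\setminus S_1)$, and $|V(D)\setminus Q|$, and to lean on Lemma~\ref{lem:tight_sep_lambda_inc} and Lemma~\ref{lem:closest_sep_lambda_inc} to certify that each branch strictly decreases it. Everything else — the pushing lemmas, the important-separator enumeration, the reduction to $q=1$ — is machinery already set up in the preceding subsections, so the proof is really a matter of assembling these pieces into a terminating recursion with the claimed bound.
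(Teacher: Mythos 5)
There is a genuine gap, and it lies precisely where you say the difficulty is —— but your proposed way around it would not work. Your plan handles the case $q=1$ correctly (that is essentially the paper's Lemma~\ref{lem:baseCasedisjPartitionedCompressionSolution}): there $r(F)\notin R(S_1,X)$ is the same as $r(F)\in\reverseshadow(X)$ because $S_1=W$, so Lemma~\ref{lem:pushingRoutineTwo} applies to both case (b) and case (c). But for $q>1$, condition (2) of Definition~\ref{def:niceInstances} reads $r(F)\notin R(S_1,X)$, and this does \emph{not} imply $r(F)\in\reverseshadow(X)$: the reverse shadow is defined with respect to all of $W$, and a vertex cut off from $S_1$ may still be reachable from some $S_i$ with $i>1$. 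So you cannot invoke Lemma~\ref{lem:pushingRoutineTwo} on $r(F)$ in case (b), and there is no obvious replacement: the proof sentence in Lemma~\ref{lem:DisjCompressionToDisjPartionedCompression} you are leaning on only says $X$ contains an $S_1$-$\{r(F)\}$ separator, which is a weaker, non-shadow piece of information. The paper's way of exploiting case (b) is entirely different: it simply adds $r(F)$ to the set $T$ and resets $Q$ to $\emptyset$, then certifies progress via Lemma~\ref{lem:tight_sep_lambda_inc}, i.e.\ the minimum $(S\cup(N^+(Q)\setminus N_Q))$-$(T\cup\{r(F)\})$ cut strictly exceeds the old $\lambda$.

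A second, related gap is the measure. Your suggested potential $\bigl(k,\;\lambda_D(S_1,W\setminus S_1),\;|V(D)\setminus Q|\bigr)$ cannot drive the recursion: growing $Q$ by one tight step does not increase $\lambda$, and the branch ``add $r(F)$ to $T$'' needs the auxiliary bookkeeping set $N_Q\subseteq N^+(Q)$ (vertices of the current frontier guessed unreachable from $S$) so that, when $Q$ must be reset, the gain in $\lambda$ is not lost. The paper's measure is $\mu=k^2+k-\lambda^2-|N_Q|$, and the decisive calculation is that resetting $N_Q$ to $\emptyset$ costs at most $\lambda$ while $\lambda$ increases by at least $1$, netting a drop of at least $\lambda+1\ge 1$. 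Without the $N_Q$ component (and without the auxiliary vertex $t^{\rm aux}$ that makes ``$N_Q$ unreachable'' a separator condition), the ``reset $Q$ to $\emptyset$'' branch can undo progress indefinitely. So, while your sketch correctly assembles Lemmas~\ref{lem:pushingLemmaOne},~\ref{lem:pushingRoutineTwo},~\ref{lem:tightSuccessorComputation} for the $q=1$ base and captures the ``grow $Q$'' intuition, the crux of Lemma~\ref{lem:findBranchSet} is the $q>1$ recursion in Lemma~\ref{lem:auxbranchset}, and both the treatment of case (b) and the measure there are missing from (and incompatible with) what you have written.
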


The algorithm of Lemma~\ref{lem:findBranchSet} has two parts: in this section we design a simple algorithm when $q=1$ by exploiting the structure of a nice solution to identify a vertex that belongs to the shadow of the solution, thereby allowing the applicability of Lemma~\ref{lem:pushingRoutineTwo} to find a branch set. 
%We defer the details of this part to the full version.
 The second part, when $q>1$, is tricky. We design a recursive algorithm for the proof of Lemma~\ref{lem:findBranchSet} when $q>1$ in Section~\ref{sec:rootedgeneral}.

 %, whose details we describe next. To maintain the recursive invariants, we enhance the instance of {\sc r-$\HH$-SCC PC}.

\begin{lemma}\label{lem:baseCasedisjPartitionedCompressionSolution}
There is an algorithm that, given an instance $\mathcal{I}=(D,\S=(S_1),W,k)$ of {\disjPartitionedCompression}, runs in time $2^{O(k^2)} \cdot n^{O(h)}$ and either correctly concludes that there is no nice solution for $\mathcal I$ or outputs a solution for $\mathcal I$.
% \red{The algorithm just has to output a solution for $(D,k)$. We don't care if the output is nice or not.}
\end{lemma}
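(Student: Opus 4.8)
The plan is to turn Lemma~\ref{lem:baseCasedisjPartitionedCompressionSolution} into a bounded--depth branching algorithm whose branch sets are produced by {\sc Pushing-Routine-2} (Lemma~\ref{lem:pushingRoutineTwo}). First I would dispose of the trivial cases: if $k<0$ report failure, and if $D$ is already $\cH$-free (no strong component of $D$ contains a copy of a member of $\cH$) output $\emptyset$, which is a solution because for $q=1$ the ``intersects all $S_i$-$S_j$ paths'' requirement is vacuous. Deciding whether $D$ is $\cH$-free, and if not producing a witness --- a strong component $C$ of $D$ together with a subgraph $F\subseteq D[C]$ isomorphic to some $H\in\cH$ --- is a brute-force $n^{O(h)}$ computation.

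The one structural observation that drives the branching is that the strong component $C$ containing $F$ must meet $W$: if $C\cap W=\emptyset$ then $C\subseteq D-W$, and being strongly connected it lies inside a single strong component of $D-W$, which would then contain $F$, contradicting that $W$ is a solution of $(D,k+1)$. Hence for every $v\in V(F)$ there is both a $v$-$W$ path and a $W$-$v$ path in $D$, so {\sc Pushing-Routine-2} is applicable to each $v\in V(F)$. I would run it for every $v\in V(F)$ to obtain sets $Z_v$ with $|Z_v|\le 4^k\cdot 2k$, and set
\[
\widehat Z \;:=\; (V(F)\setminus W)\ \cup\ \bigcup_{v\in V(F)} Z_v ,
\]
a set of size $O(hk)\cdot 4^k$ disjoint from $W$ (important separators avoid $W$). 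The algorithm then branches: for each $z\in\widehat Z$ it recursively solves $(D-z,\S,W,k-1)$, returns $X\cup\{z\}$ if some recursive call returns a set $X$, and reports failure otherwise.

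The easy direction of correctness --- any returned set is a valid solution --- follows by induction on $k$, using that $q=1$ trivialises the $S_i$-$S_j$ condition and that $\widehat Z\cap W=\emptyset$. For the other direction I would show, again by induction on $k$, that if $\mathcal I$ has a nice solution then the algorithm returns a solution; by Observation~\ref{obs:nice_hereditary1} it suffices to prove that some $z\in\widehat Z$ lies in a nice solution of $\mathcal I$. Take a nice solution $X_0$ and apply Definition~\ref{def:niceInstances} to $F$ with $i=1$ (so $S_1=W$): either (1) $X_0\cap V(F)\neq\emptyset$, and then, since $X_0\cap W=\emptyset$, $X_0$ meets $V(F)\setminus W\subseteq\widehat Z$ and we are done; or (2) $r(F)\notin R_D(W,X_0)$, i.e.\ $r(F)\in\reverseshadow(X_0)$; or (3) some $v\in V(F)$ has no $v$-$W$ path in $D-X_0$, i.e.\ $v\in\forwardshadow(X_0)$. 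In cases (2) and (3) there is a vertex $u\in V(F)$ with $u\in\forwardshadow(X_0)\cup\reverseshadow(X_0)$, and I would invoke a niceness-preserving version of Lemma~\ref{lem:pushingLemmaTwo}: replacing inside $X_0$ the minimal $\{u\}$-$W$ separator $N^+(R_D(\{u\},X_0))$ --- taken in $D$ or in $D^{rev}$ according to which shadow contains $u$ --- by an important $\{u\}$-$W$ separator $C$ covering it yields not merely a solution (as in Lemma~\ref{lem:pushingLemmaTwo}) but a \emph{nice} solution $X'$; since $C\subseteq X'$ and $C$ is one of the important separators gathered into $Z_u\subseteq\widehat Z$, any $z\in C$ works. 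The running time is dominated by the recursion tree, which has depth at most $k$ and branching factor $|\widehat Z|=O(hk)\cdot 4^k$, hence at most $\bigl(O(hk)\cdot4^k\bigr)^{k}=2^{O(k^2)}$ leaves (as $h$ is a fixed constant, $(hk)^k=2^{O(k\log k)}$), and each node spends $n^{O(h)}$ time to locate $F$ and $2^{O(k)}n^{O(1)}$ time inside {\sc Pushing-Routine-2}; in total $2^{O(k^2)}\cdot n^{O(h)}$.

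I expect the main obstacle to be exactly the niceness-preserving strengthening of Lemma~\ref{lem:pushingLemmaTwo} needed in cases (2)/(3): as stated that lemma only guarantees that the separator swap produces a solution, and one must additionally argue the swapped set $X'$ is nice. The delicate part is the shadow bookkeeping: one checks that any subgraph $F'$ witnessing a violation of niceness for $X'$ must have \emph{all} of its vertices outside $R_D(\{u\},X')$ (because $C\subseteq X'$ separates $u$ from $W$), after which Lemma~\ref{lem:pushing_property} transports the witnessing paths back into $D-X_0$, so that $F'$ would already witness a violation of niceness for $X_0$ --- or would force $F'$ into a single strong component of $D-X_0$, contradicting that $X_0$ is a solution. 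Getting the interplay between ``covers'' and the forward/reverse shadows exactly right in this argument is the crux.
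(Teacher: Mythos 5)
Your overall algorithm is the same as the paper's: locate a forbidden subgraph $F$ in a strong component, return $\emptyset$ if none exists, and otherwise branch on $V(F)$ together with the important-separator sets $Z_v$ produced by {\sc Pushing-Routine-2} (Lemma~\ref{lem:pushingRoutineTwo}), one call per vertex of $F$. The case analysis driving the branching (a nice solution must either hit $V(F)$, cut $W$ off from $r(F)$, or cut some $v\in V(F)$ off from $W$, hence some $v\in V(F)$ lies in $\forwardshadow(X)\cup\reverseshadow(X)$) is exactly the paper's.

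What you have done differently is to notice that the induction underlying the branching needs more than what Lemma~\ref{lem:pushingLemmaTwo}/\ref{lem:pushingRoutineTwo} literally promise. Those lemmas produce \emph{some} solution $X'$ with $X'\cap Z\neq\emptyset$; but the recursive call on $(D-\{z\},\S,W,k-1)$ is only guaranteed to succeed when the subinstance has a \emph{nice} solution, and $X'\setminus\{z\}$ need not be nice if $X'$ itself is not. The paper's proof moves straight from ``there exists a solution that contains some vertex of $Z$'' to ``the correctness of the branching algorithm follows'' without addressing this, so you have put your finger on a genuine glossed-over point rather than invented a problem. Your proposed remedy --- strengthen Lemma~\ref{lem:pushingLemmaTwo} to show that the set $X' = (X_0\setminus N^+(R))\cup C$ obtained by swapping in the important separator $C$ is again \emph{nice} --- is the right one, and your sketch of it (any $F'$ violating niceness of $X'$ has $V(F')\subseteq\overline R_D(\{u\},X')$ because $C\subseteq X'$ severs $u$ from $W$; then Lemma~\ref{lem:pushing_property} transports the witnessing $W$-$r(F')$ and $v$-$W$ paths back into $D-X_0$) is essentially correct. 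One step you leave implicit and should make explicit is that $X_0\cap V(F')=\emptyset$: if $a\in X_0\cap V(F')$ then $a\in N^+(R)\setminus C$, but since $C$ covers $N^+(R)$ one checks $R\subseteq R_D(\{u\},X')$, whence $a\in R_D(\{u\},X')$, contradicting $V(F')\subseteq\overline R_D(\{u\},X')$; without this, the transported paths do not yet say that $F'$ witnesses a niceness violation for $X_0$. Your small change of branching on $V(F)\setminus W$ rather than $V(F)$ (so the branch vertices are always legal, i.e., disjoint from $W$) is a sensible tidying of the paper's description. In short: same method as the paper, with one real and correctly diagnosed subtlety, and a fix that is right in spirit and nearly complete --- finish the $X_0\cap V(F')=\emptyset$ step and you have a self-contained proof.
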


% \todo[inline]{See if you really want to use the first line of the proof-- The algorithm, which we call \textsc{BaseCase}... -- instead one can give the name to the lemma as is done in previous lemmas.§}

% \todo[inline]{Why do you need the arguments in red below.}
\begin{proof}
	The algorithm proceeds by checking if there is any graph, say $F \subseteq D$ such that $F$ is isomorphic to some graph in $\HH$ and is contained in a single strong component of $D$. If there is no such graph, then $\emptyset$ is a solution to $\mathcal{I}$. Otherwise, let $F$ be such a graph. Suppose there exists a nice solution, say $X$, for $\mathcal{I}$. Then from the definition of a nice solution, $X$ either contains some vertex of $F$, or contains an $W$-$\{r(F)\}$ separator, in which case $r(F) \in \reverseshadow(X)$, or contains an $\{v\}$-$W$ separator for some $v \in V(F)$, in which case $v \in \forwardshadow(X)$. By Lemma~\ref{lem:soln_property}, there is a $v$-$W$ and $W$-$v$ path for every $v \in V(F)$, thus we can call Lemma~\ref{lem:pushingRoutineTwo} on vertex $v$. For each $v \in V(F)$, let $Z_{v}$ be the set outputted by the algorithm of Lemma~\ref{lem:pushingRoutineTwo} when given input $(\mathcal{I},v)$. %{\color{red} Note that the set $Z$ will never be empty since by \cref{lem:soln_property}, there exists a vertex $a \in W$ such that there is a $a$-$x$ and a $x$-$a$ path for every $x \in V(F)$ (here the set of paths $\cal P$ required to invoke the statement of \cref{lem:soln_property} are the paths witnessing the fact that $F$ is contained in a strong component of $D$).} 
	Let $Z= V(F) \cup \bigcup_{v \in V(F)} Z_v$. Note that $\vert Z \vert \leq h + h (4^k \cdot 2k)$. From the arguments above and Lemma~\ref{lem:pushingRoutineTwo}, there exists a solution that contains some vertex of $Z$. The algorithm branches on the set $Z$, that is, for each $v \in Z$, the algorithm creates an instance $(D-\{v\}, \S, W,k-1)$.
	The branching algorithm stops when either $k < 0$ (in which case we conclude that it is a no-instance) or when the there is no subgraph isomorphic to a graph in $\HH$ (in which case it is a yes-instance).
	Since, if there exists a nice solution, then there exists a solution that contains some vertex of $Z$, the correctness of the branching algorithm follows. For the running time analysis, since the branching factor is at most $h + h (4^k \cdot 2k)$ and the branching depth is at most $k$, we get the desired running time.
\end{proof}

\subsection{Solving general instances of {\disjPartitionedCompression
}}\label{sec:rootedgeneral}

We now design a recursive algorithm for Lemma~\ref{lem:findBranchSet} for the case when $q>1$.
To maintain the recursive invariants, we enhance the instance of {\sc r-$\HH$-SCC PC}. 
%Henceforth, $q>1$.

\begin{definition}[Extended Instance of {\sc r-$\HH$-SCC PC}]\label{defn:extdefn}
An instance $\mathcal{I}^{\rm ext} = (D,\S=(S_1, \ldots, S_q),W,\allowbreak k, \allowbreak S,T,Q,N_Q)$ is called an {\em extended instance of {\sc r-$\HH$-SCC PC}} if the following holds: 
\begin{enumerate}
\item $(D,\S=(S_1, \ldots, S_q),W,k)$ is an instance of {\sc r-$\HH$-SCC PC},

\item $S_1 \subseteq S \subseteq V(D) \setminus (W \setminus S_1)$ and $W \setminus S_1 \subseteq T $,

\item  either $Q = \emptyset$ or, $S \subseteq Q \subseteq V(D) \setminus T$ such that $Q$ is root-$\HH$-free and $N^+(Q)$ is a minimum $S$-$T$ separator in $D$, and

\item $N_Q \subseteq N^+(Q)$.
\end{enumerate}

 \end{definition}
 
 \begin{definition}[Solution of an extended instance of {\sc r-$\HH$-SCC PC}]\label{defn:extsol}
For an extended instance $\mathcal{I}^{\rm ext} = (D,\S=(S_1, \ldots, S_q),W,k,S,T,Q,N_Q)$ of {\sc r-$\HH$-SCC PC}, $X \subseteq V(D) \setminus W$ is said to be a solution for $\mathcal{I}^{\rm ext}$ if the following holds.
\begin{enumerate}
\item $X$ is a nice solution for $(D,\S,W,k)$, 

\item $X$ is an $S$-$T$ separator, 

\item $S \subseteq R_D(S_1,X)$,

\item $N_Q \cap R_D(S,X) =\emptyset$ and,

\item $X \cap (S \cup T) = \emptyset$.
\end{enumerate}
%{\rm (1)} $X$ is a nice solution for $(D,\S,W,k)$, {\rm (2)} $X$ is an $S$-$T$ separator, {\rm (3)} $S \subseteq R_D(S_1,X)$, {\rm (4)} $N_Q \cap R_D(S,X) =\emptyset$ and, {\rm (5)} $X \cap (S \cup T) = \emptyset$.
 \end{definition}

The idea behind extending an instance of {\sc r-$\HH$-SCC PC} in the way defined earlier is to get the situation closer to the applicability of Lemma~\ref{lem:pushingLemmaOne}. In fact, as we will see, this will form the base case for our arguments.
To be more specific, the sets $S,T $ defined in the definition correspond to the set of vertices that one has guessed to be reachable and unreachable, respectively from $S_1$ in $D-X$, where $X$ is a solution for the original instance. Thus, any solution for the original instance is an $S$-$T$ separator. Note that, since $X$ is a solution for $(D,\S,W,k)$, the set $S_1$ itself is reachable from $S_1$ and $W\setminus S_1$ is unreachable from $S_1$ in $D-X$. Therefore we could assume that $S_1 \subseteq S$ and $(W\setminus S_1) \subseteq T$. The set $Q$ in the extended instance is such that $N^+(Q)$ is a minimum $S$-$T$ separator and $Q$ itself is root-$\HH$-free (and hence $\HH$-free). The set $N_Q$ is meant to be the subset of $N^+(Q)$ that one has guessed to be unreachable from $S$ in $D-X$.
The algorithm aims to slowly ``grow'' $Q$ using Lemma~\ref{lem:tightSuccessorComputation} until we find a subgraph $F$ in $D$ that is isomorphic to some graph in $\HH$ and whose root lies in $Q$ (i.e. $Q$ is no longer root-$\cal H$-free). Then using the fact that a nice solution exists, one can construct the desired branch set by branching of instances with a smaller, appropriately defined, measure.

\begin{observation}\label{obs:extendedsoln}
If $(D, \S,W,k)$ has a nice solution to the problem {\sc r-$\HH$-SCC PC}, then $(D,\S,W, \allowbreak k, \allowbreak S_1,W\setminus S_1,\emptyset,\emptyset)$ has a solution.
\end{observation}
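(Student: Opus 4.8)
The plan is to verify that the constructed tuple satisfies all four conditions of Definition~\ref{defn:extdefn} (so that it is a valid extended instance) and then exhibit a concrete $X$ satisfying the five conditions of Definition~\ref{defn:extsol}. Concretely, I would start from a nice solution $X$ for $(D,\S,W,k)$, whose existence is the hypothesis of the observation, and claim that this very same $X$ is a solution for the extended instance $(D,\S,W,k,S_1,W\setminus S_1,\emptyset,\emptyset)$.

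First I would check that $(D,\S,W,k,S_1,W\setminus S_1,\emptyset,\emptyset)$ is an extended instance: condition~(1) is immediate since $(D,\S,W,k)$ is an instance of {\sc r-$\HH$-SCC PC}; condition~(2) asks that $S_1 \subseteq S \subseteq V(D)\setminus(W\setminus S_1)$ and $W\setminus S_1 \subseteq T$, which holds trivially with $S=S_1$ and $T=W\setminus S_1$ since $S_1$ and $W\setminus S_1$ are disjoint (as $\S$ is a partition of $W$); condition~(3) is satisfied because $Q=\emptyset$, which is explicitly one of the two allowed cases; and condition~(4) holds because $N_Q=\emptyset \subseteq N^+(\emptyset)=\emptyset$. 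Then I would verify the five conditions of Definition~\ref{defn:extsol} for this $X$: (1) holds by assumption ($X$ is a nice solution for $(D,\S,W,k)$); (2) requires $X$ to be an $S_1$-$(W\setminus S_1)$ separator, which follows because $X$ is a solution for the {\disjPartitionedCompression} instance, hence intersects all $S_i$-$S_j$ paths for $i<j$, and in particular all $S_1$-$S_j$ paths for $j>1$, which together constitute all $S_1$-$(W\setminus S_1)$ paths; (3) requires $S_1 \subseteq R_D(S_1,X)$, which is immediate since every vertex of $S_1$ trivially reaches itself and $S_1\cap X=\emptyset$ (as $X\cap W=\emptyset$); (4) requires $N_Q\cap R_D(S_1,X)=\emptyset$, trivially true as $N_Q=\emptyset$; and (5) requires $X\cap(S_1\cup(W\setminus S_1))=X\cap W=\emptyset$, which is part of being a solution for {\disjPartitionedCompression}.

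Since every condition checks out directly from the definitions and the defining properties of a nice solution for {\disjPartitionedCompression}, the observation follows. I do not expect any genuine obstacle here — this is a bookkeeping lemma whose sole purpose is to establish the base case / entry point for the recursive algorithm of Section~\ref{sec:rootedgeneral}. The only mild subtlety worth stating explicitly is why $X$ being a {\disjPartitionedCompression}-solution gives the $S_1$-$(W\setminus S_1)$ separator property in condition~(2): one must note that an $S_1$-$(W\setminus S_1)$ path is exactly an $S_1$-$S_j$ path for some $j>1$, so the requirement that $X$ hit all $S_i$-$S_j$ paths with $i<j$ covers precisely these paths. Everything else is immediate.
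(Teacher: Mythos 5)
Your proof is correct and is essentially the only natural argument: the paper states this as an observation with no proof precisely because it is a routine unwinding of Definitions~\ref{defn:extdefn} and~\ref{defn:extsol}, and the bookkeeping you carry out (valid extended instance, then the same $X$ verifies all five solution conditions, with condition~(2) reducing to the observation that $S_1$-$(W\setminus S_1)$ paths are exactly $S_1$-$S_j$ paths for $j>1$) matches what the authors leave implicit.
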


Observe that, from Observation~\ref{obs:extendedsoln}, Lemma~\ref{lem:extbranchset} implies Lemma~\ref{lem:findBranchSet}.

\begin{lemma}\label{lem:extbranchset}
%[{\sc Find-Branch-Set}]
Given an extended instance $\mathcal{I}^{ext} = (D,\S=(S_1, \ldots, S_q), \allowbreak W, \allowbreak k,S,T,Q,N_Q)$ of {\sc r-$\HH$-SCC PC}, there is an algorithm that runs in time $2^{O(k^2 log k)} \cdot n^{O(h)}$, and returns a branch set of $(D,\S,W,k)$ of size $O(h) \cdot 2^{O(k^2 \log k)}$, if $\mathcal{I}^{ext}$ has a solution.
\end{lemma}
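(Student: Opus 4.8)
\textbf{Proof plan for Lemma~\ref{lem:extbranchset}.}
The plan is to design a recursive algorithm on the extended instance $\mathcal{I}^{\rm ext} = (D,\S,W,k,S,T,Q,N_Q)$, using the measure $\mu(\mathcal{I}^{\rm ext}) = 2k - \lambda_D(S,T) + \lambda_D(S,T\cup N_Q)$, or some similar quantity that accounts for both the remaining budget $k$ and how far the separator $N^+(Q)$ is from ``saturating'' the $S$-$T$ cut. Since $|X|\le k$ for any solution and $\lambda_D(S,T)\le|X|$, the measure is bounded by roughly $2k$, and each recursive call will strictly decrease it while branching with factor $O(h)\cdot 2^{O(k)}$, which yields the claimed $O(h)\cdot 2^{O(k^2\log k)}$ bound on the branch set and the claimed running time.

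First I would handle the base cases. If $D$ contains no subgraph isomorphic to a member of $\cH$ lying in a single strong component, the empty set is a solution and we return $\emptyset$ (or a trivial branch set). If $k<0$ or $\lambda_D(S,T)>k$, we return failure. The genuine base case is when $Q$ is \emph{not} root-$\cH$-free: then some $F\subseteq D$ isomorphic to a graph in $\cH$ has $r(F)\in Q$. Here I invoke the definition of a nice solution together with the fact that, for a solution $X$ of the extended instance, $S\subseteq R_D(S_1,X)$ and $N^+(Q)$ being a minimum $S$-$T$ separator force $r(F)$ to be reachable from $S_1$ in $D-X$ (using $Q\subseteq R_D(S_1,X)$-type reasoning). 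Hence niceness says $X$ intersects $V(F)$ or some $v\in V(F)$ lies in $\forwardshadow(X)$; by Lemma~\ref{lem:soln_property} every $v\in V(F)$ has both a $v$-$W$ and $W$-$v$ path in $D$, so Lemma~\ref{lem:pushingRoutineTwo} applies to each $v\in V(F)$, and the union $V(F)\cup\bigcup_{v\in V(F)}Z_v$ of size $O(h)\cdot 2^{O(k)}$ is a valid branch set, exactly as in the proof of Lemma~\ref{lem:baseCasedisjPartitionedCompressionSolution}.

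The recursive step is when $Q$ is root-$\cH$-free. If $Q=\emptyset$, first grow it to $S$ (using that $N^+(S)$ need not yet be a minimum $S$-$T$ separator, so branch on a vertex of a minimum $S$-$T$ separator if $\lambda$ has dropped, or reset $Q:=R_{\min}(S,T)$ when $N^+(Q)$ already is minimum). In general, apply Lemma~\ref{lem:tightSuccessorComputation} to the minimum $S$-$T$ separator $N^+(Q)$: either $N^+(Q)=R_{\max}(S,T)$, in which case \emph{every} vertex reachable from $S$ past the maximum cut has been absorbed and, combined with $N_Q$-guessing, Lemma~\ref{lem:pushingLemmaOne} lets us push the solution to contain $N^+(Q)$ — so we recurse on $D$ with $S$ enlarged past $N^+(Q)$ (and $k$, $T$ updated) after branching on which vertices of $N^+(Q)$ are actually reachable (i.e., guessing $N_Q$); or Lemma~\ref{lem:tightSuccessorComputation} returns a minimum $S$-$T$ separator $C'=N^+(Q')$ tightly covering $N^+(Q)$, and we branch: for each vertex $u\in N^+(Q)$ we either delete $u$ (budget drops), or declare $u\notin R_D(S,X)$ and add it to $N_Q$ (then by Lemma~\ref{lem:tight_sep_lambda_inc} any solution avoiding $\{u\}$ while keeping $u$ unreachable must have size $>\lambda$, i.e. $\lambda_D(S,T\cup N_Q)$ strictly increases), or declare $u\in R_D(S,X)$, set $Q:=Q'$, and recurse — here Lemma~\ref{lem:tightness} guarantees no solution vertex sits in $R(S,C')\setminus N[Q]$, so the ``new territory'' added to $Q$ is safely root-checked in the next level. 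Each branch strictly decreases $\mu$.

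\textbf{Main obstacle.} The delicate part is setting up the measure and the $N_Q$-guessing so that every one of the $O(h)\cdot 2^{O(k)}$ branches genuinely decreases $\mu$, and simultaneously verifying that when we reset $Q$ (either to $R_{\min}(S,T)$ or to the tight successor $Q'$, or enlarge $S$ past $N^+(Q)$) the resulting tuple is still a valid extended instance in the sense of Definition~\ref{defn:extdefn} — in particular that the new $Q$ is root-$\cH$-free on exactly the part of the graph we have already cleared, that $N^+$(new $Q$) is still a minimum $S$-$T$ separator for the \emph{new} $S,T$, and that a solution of the old extended instance yields a solution of the new one (so correctness is preserved). Handling the interaction of the two ``progress'' quantities ($k$ versus the $\lambda$-gap) — ensuring that growing $Q$ toward $T$ via tight covers does not loop forever because $R(S,C')\setminus N[Q]$ can be empty when $C'$ merely tightly covers — will require an auxiliary argument (possibly a tertiary measure like $|V(D)\setminus N[Q]|$, or invoking Lemma~\ref{lem:tightness}/Lemma~\ref{lem:closest_sep_lambda_inc} to bound the number of tight-cover steps), and this bookkeeping is where I expect the bulk of the technical work to lie.
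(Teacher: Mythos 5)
Your high-level plan is essentially the paper's: a recursive {\sc Find-Branch-Set} routine that maintains the extended instance $(D,\S,W,k,S,T,Q,N_Q)$, grows $Q$ by tight successors of the minimum $S$-$T$ separator using Lemma~\ref{lem:tightSuccessorComputation}, uses Lemma~\ref{lem:pushingLemmaOne} when $N^+(Q)$ is furthest, branches on membership of $N^+(Q)$ vertices in $N_Q$ versus $S$, and uses the nice-solution definition to produce, from a forbidden subgraph $F$ rooted in the new territory, a fixed $O(h)\cdot 4^k$-sized chunk of the branch set via Lemma~\ref{lem:soln_property} and Lemma~\ref{lem:pushingRoutineTwo}. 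The correctness machinery you invoke is the right one, and the observation that Lemma~\ref{lem:tight_sep_lambda_inc} forces $\lambda$ to increase when you fix $u\in N^+(Q)$ on one side and a fresh vertex on the other is exactly what the paper uses.

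However, the part you yourself flag as ``where I expect the bulk of the technical work to lie'' is a genuine gap, and the measure you write down does not close it. First, $\mu=2k-\lambda_D(S,T)+\lambda_D(S,T\cup N_Q)$ has the wrong sign on the $N_Q$ term: since $T\cup N_Q\supseteq T$, the quantity $\lambda_D(S,T\cup N_Q)$ can only grow as $N_Q$ grows, so this $\mu$ \emph{increases} when you make progress by adding a vertex to $N_Q$. More importantly, even after fixing the sign, any measure that is linear in $\lambda$ and $|N_Q|$ cannot work: in the branches where a new minimum separator is adopted ($\lambda$ increases by one) and $N_Q$ is reset to $\emptyset$, a linear measure $2k-\lambda-|N_Q|$ can go from $2k-\lambda-(\lambda-1)$ to $2k-(\lambda+1)$, which is an \emph{increase} for $\lambda\ge 2$. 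The paper's measure is $\mu(\mathcal{I})=k^2+k-\lambda^2-|N_Q|$; the quadratic term is essential because when $\lambda$ grows by one its square grows by $2\lambda+1>|N_Q|$, so the loss of $|N_Q|$ is always absorbed and $\mu$ strictly drops in every branch. This also yields the correct accounting: $\mu\le k^2+k$ gives recursion depth $O(k^2)$, branching factor $O(k)$ gives $2^{O(k^2\log k)}$ nodes, and the $O(h)\cdot 4^k\cdot k$ vertices contributed per node give the claimed $O(h)\cdot 2^{O(k^2\log k)}$ branch set. Your plan, with its linear measure, would at best justify depth $O(k)$ and an $O(h)\cdot 2^{O(k\log k)}$ branch set, which is not what you are trying to prove and in any case the monotonicity argument fails. (A minor secondary point: the paper first converts the extended instance to an auxiliary instance by adding a sink $t^{\rm aux}$ absorbing $N_Q$, which lets $\lambda$ be measured simply as $\lambda_{D^{\rm aux}}(S,T^{\rm aux})$ rather than juggling two different $\lambda$'s as your candidate $\mu$ does; this is a convenience, not an essential idea, but it does streamline exactly the bookkeeping you were worried about.)
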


For the convenience of arguments, we further enhance an extended instance of {\sc r-$\HH$-SCC PC}. The idea behind this  is to avoid asking that $N_Q$ has to be unreachable from $S$ in $D-X$ where $X$ is a solution of the extended instance. As we see below, a slight modification to the digraph $D$ and $T$ help us achieve this, thereby easing the arguments used in the final proof.

\begin{definition}[Auxiliary Instance of {\sc r-$\HH$-SCC PC}]
Given an extended instance $\mathcal{I}^{\rm ext} = (D,\S=(S_1, \ldots, S_q), \allowbreak W, \allowbreak k,S,T,Q,N_Q)$ of {\sc r-$\HH$-SCC PC}, we define an auxiliary instance $\mathcal{I}^{\rm aux} = (D^{\rm aux},\S=(S_1, \ldots, S_q), \allowbreak W, \allowbreak k,S,T^{\rm aux},Q,N_Q)$ of {\sc r-$\HH$-SCC PC} as follows: $D^{\rm aux}$ is a supergraph of $D$ that is obtained from $D$ by adding a new vertex $t^{\rm aux}$ in $D$ and adding arcs $(u,t^{\rm aux})$, for each $u \in N_Q$; $T^{\rm aux}= T \cup \{t^{\rm aux}\}$.
\end{definition}

\begin{definition}[Solution of an auxiliary instance of {\sc r-$\HH$-SCC PC}]
Let $\mathcal{I}^{\rm aux} = (D^{\rm aux},\S=(S_1, \ldots, S_q),W,k,S,T^{\rm aux},Q,N_Q)$ be an auxiliary instance of {\sc r-$\HH$-SCC PC} obtained from $\mathcal{I}^{\rm ext} = (D,\S=(S_1, \ldots, S_q), \allowbreak W, \allowbreak k,S,T,Q,N_Q)$, then $X \subseteq V(D) \setminus W$ is said to be a solution for $\mathcal{I}^{\rm aux}$ if the following holds: 
\begin{enumerate}
\item  $X$ is a nice solution for $(D,\S,W,k)$,

\item $X$ is an $S$-$T^{\rm aux}$ separator in $D ^{\rm aux}$, 

\item $S \subseteq R_{D^{\rm aux}}(S_1,X)$ and,

\item $X \cap (S \cup T^{\rm aux})=\emptyset$.
\end{enumerate}
%{\rm (1)} $X$ is a nice solution for $(D,\S,W,k)$,  {\rm (2)} $X$ is an $S$-$T^{aux}$ separator in $D ^{aux}$, {\rm (3)} $S \subseteq R_{D^{aux}}(S_1,X)$ and, {\rm (4)} $X \cap (S \cup T^{aux})=\emptyset$.
 \end{definition}
 
 We will now show that a solution to an extended instance of {\sc r-$\cal H$-SCC PC} is also a solution for the corresponding auxiliary version. 
 
 \begin{lemma}\label{lem:extToAux}
Let $\mathcal{I}^{aux} = (D^{aux},\S=(S_1, \ldots, S_q),W,k,S,T^{aux},Q,N_Q)$ be an auxiliary instance of {\sc r-$\HH$-SCC PC} obtained from $\mathcal{I}^{ext} = (D,\S=(S_1, \ldots, S_q), \allowbreak W, \allowbreak k,S,T,Q,N_Q)$. If $X$ is a solution for $\mathcal{I}^{ext}$ then $X$ is also a solution for $\mathcal{I}^{aux}$.% If $X$ is a solution for $\mathcal{I}^{ext}$ then $X$ is a solution for $\mathcal{I}^{aux}$. 
\end{lemma}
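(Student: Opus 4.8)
The plan is to verify the four conditions in the definition of a solution for $\mathcal{I}^{\rm aux}$ one by one, using that $X$ is a solution for $\mathcal{I}^{\rm ext}$ and that $D^{\rm aux}$ differs from $D$ only by the addition of the sink vertex $t^{\rm aux}$ and the arcs from $N_Q$ into it.

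\textbf{Condition 1.} This is immediate: condition~1 of being a solution for $\mathcal{I}^{\rm ext}$ already states that $X$ is a nice solution for $(D,\S,W,k)$, and this condition refers to $D$, not $D^{\rm aux}$, so nothing needs to change.

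\textbf{Condition 3 and condition 4.} For condition~4, observe that $S\cup T^{\rm aux} = S\cup T\cup\{t^{\rm aux}\}$. Since $X\subseteq V(D)\setminus W$, $X$ does not contain the newly added vertex $t^{\rm aux}$, and condition~5 for $\mathcal{I}^{\rm ext}$ gives $X\cap(S\cup T)=\emptyset$; hence $X\cap(S\cup T^{\rm aux})=\emptyset$. For condition~3, I want $S\subseteq R_{D^{\rm aux}}(S_1,X)$. We already have $S\subseteq R_D(S_1,X)$ from condition~3 of $\mathcal{I}^{\rm ext}$, and since $D$ is a subdigraph of $D^{\rm aux}$ on the same vertex set apart from $t^{\rm aux}$, every $S_1$-$\{v\}$ path in $D-X$ is still an $S_1$-$\{v\}$ path in $D^{\rm aux}-X$; thus $R_D(S_1,X)\subseteq R_{D^{\rm aux}}(S_1,X)$, which suffices.

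\textbf{Condition 2 — the main point.} I need to show $X$ is an $S$-$T^{\rm aux}$ separator in $D^{\rm aux}$, i.e.\ there is no $S$-$(T\cup\{t^{\rm aux}\})$ path in $D^{\rm aux}-X$. Since $X$ is an $S$-$T$ separator in $D$ (condition~2 of $\mathcal{I}^{\rm ext}$) and the only arcs in $D^{\rm aux}-X$ that are not in $D-X$ are those entering $t^{\rm aux}$ from vertices of $N_Q\setminus X$, the only new paths in $D^{\rm aux}-X$ are those ending at $t^{\rm aux}$ whose penultimate vertex lies in $N_Q\setminus X$. So it suffices to show no vertex of $N_Q$ is reachable from $S$ in $D^{\rm aux}-X$; equivalently (since such a path, if it existed, would avoid $t^{\rm aux}$ as $t^{\rm aux}$ has out-degree $0$), no vertex of $N_Q$ is reachable from $S$ in $D-X$, i.e.\ $N_Q\cap R_D(S,X)=\emptyset$. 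But this is precisely condition~4 of $X$ being a solution for $\mathcal{I}^{\rm ext}$. Combining: any $S$-$(T\cup\{t^{\rm aux}\})$ path in $D^{\rm aux}-X$ is either entirely in $D-X$, contradicting that $X$ is an $S$-$T$ separator in $D$, or it reaches $t^{\rm aux}$ through some $u\in N_Q\setminus X$ with $u\in R_D(S,X)$, contradicting $N_Q\cap R_D(S,X)=\emptyset$. Hence condition~2 holds.

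I expect no serious obstacle here; the lemma is essentially a bookkeeping check that the auxiliary construction was designed exactly so that the reachability constraint ``$N_Q\cap R_D(S,X)=\emptyset$'' from the extended instance gets encoded as the separation constraint ``$X$ is an $S$-$T^{\rm aux}$ separator'' in the auxiliary instance. The only mild care needed is to be explicit that $t^{\rm aux}$ is a sink (out-degree $0$) so that routing through it cannot help reach any vertex of $T$, and that $t^{\rm aux}\notin X$ so the added arcs are genuinely present in $D^{\rm aux}-X$.
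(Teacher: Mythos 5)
Your proof is correct and follows essentially the same approach as the paper's, which likewise uses that $t^{\rm aux}$ is a sink (so it cannot lie internally on any path), that $N_Q\cap R_D(S,X)=\emptyset$ implies $t^{\rm aux}$ is unreachable from $S$ in $D^{\rm aux}-X$, and that $D\subseteq D^{\rm aux}$ gives $R_D(S_1,X)\subseteq R_{D^{\rm aux}}(S_1,X)$. Your write-up is somewhat more systematic than the paper's (it checks each of the four conditions explicitly), but there is no substantive difference in the argument.
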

\begin{proof}
	Let $X$ be a solution for $\mathcal{I}^{ext}$. Then $X$ is already a nice solution for $(D,\S,W,k)$ and $X \cap (S \cup T^{aux})=\emptyset$ follows trivially. Since $t^{aux}$ is a vertex with no out-neighbours, there cannot be any path that pass through $t^{aux}$ and have endpoints not equal to $t^{aux}$, thus there cannot be any $S-T$ paths passing through $t^{aux}$ and no $S_1-\{v\}$ paths through $t^{aux}$ for any vetex $v \not\in R_D(S_1,X)$, it follows that $X$ is an $S-T$ cut in $D^{aux}$ and $S \subseteq R_D(S_1,X) \subseteq R_{D^{aux}}(S_1,X)$. Moreover, since $N_Q$ is unreachable from $S$ in $D-X$, it follows that $t^{aux}$ is unreachable from $S$ in $D-X$ and thus $X$ is a $S-T^{aux}$ separator in $D^{aux}$. 
\end{proof}

%\begin{lemma}\label{lem:extToAux}
%Let $\mathcal{I}^{aux} = (D^{aux},\S=(S_1, \ldots, S_q),W,k,S,T^{aux},Q,N_Q)$ be an auxiliary instance of {\sc r-$\HH$-SCC PC} obtained from $\mathcal{I}^{ext} = (D,\S=(S_1, \ldots, S_q), \allowbreak W, \allowbreak k,S,T,Q,N_Q)$. If $X$ is a solution for $\mathcal{I}^{ext}$ then $X$ is also a solution for $\mathcal{I}^{aux}$.% If $X$ is a solution for $\mathcal{I}^{ext}$ then $X$ is a solution for $\mathcal{I}^{aux}$. 
%\end{lemma}
%\vspace{-5 pt}

%Henceforth, we will denote an auxiliary instance of {\sc r-$\HH$-SCC PC} as $\mathcal{I}^{aux} = (D,\S=(S_1, \ldots, S_q), \allowbreak W, \allowbreak k,S,T,Q,N_Q)$.
%Lemmas~\ref{lem:extToAux} and~\ref{lem:auxbranchset}, together with the deferred proof of Lemma~\ref{lem:findBranchSet} for the case when $q=1$, implies Lemma~\ref{lem:findBranchSet}.

From Lemma~\ref{lem:extToAux}, Lemma~\ref{lem:auxbranchset} implies Lemma~\ref{lem:extbranchset}.

\begin{lemma}[{\sc Find-Branch-Set}]\label{lem:auxbranchset}
%[{\sc Find-Branch-Set}]
Given an auxiliary instance $\mathcal{I}^{\rm aux} = (D^{\rm aux},\S=(S_1, \ldots, S_q), \allowbreak W, \allowbreak k,S,T^{\rm aux},Q,N_Q)$ of {\sc r-$\HH$-SCC PC} where $q>1$, there is an algorithm that runs in time $2^{O(k^2 log k)} \cdot n^{O(h)}$ and returns a branch set of $(D,\S,W,k)$ of size $O(h) \cdot 2^{O(k^2 \log k)}$ if $\mathcal{I}^{\rm aux}$ has a solution.
\end{lemma}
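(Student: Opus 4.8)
The plan is to prove Lemma~\ref{lem:auxbranchset} by a recursive ``growing $Q$'' strategy, using Lemma~\ref{lem:tightSuccessorComputation} to advance the minimum $S$-$T^{\rm aux}$ separator $N^+(Q)$ one tight step at a time, and applying the two pushing routines (Lemma~\ref{lem:pushingLemmaOne} and Lemma~\ref{lem:pushingRoutineTwo}) to bound the branching. I would set up a measure such as $\mu(\mathcal{I}^{\rm aux}) = (k, \lambda_{D^{\rm aux}}(S,T^{\rm aux}), \text{something counting how far } Q \text{ can still grow})$ ordered lexicographically (or a single integer combining these), and argue that each recursive call strictly decreases $\mu$ while the total branching factor stays $O(h)\cdot 4^k \cdot 2k$ and the depth stays $O(k^2)$, giving the claimed $O(h)\cdot 2^{O(k^2\log k)}$ bound on the branch set size. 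The output of the algorithm is the union of the at most $O(h)\cdot 2^{O(k^2\log k)}$ leaves' worth of ``candidate'' vertices, so the time bound $2^{O(k^2\log k)}\cdot n^{O(h)}$ follows from the recursion tree size times the per-node cost (enumerating subgraphs isomorphic to members of $\HH$ costs $n^{O(h)}$; running Lemma~\ref{lem:tightSuccessorComputation} and Lemma~\ref{lem:pushingRoutineTwo} costs $4^k\cdot n^{O(1)}$).

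Concretely, the algorithm first checks whether $\lambda_{D^{\rm aux}}(S,T^{\rm aux}) > k$; if so, no solution can be an $S$-$T^{\rm aux}$ separator of size $\le k$, so we may safely return, e.g., $\emptyset$ as a (vacuously correct) branch set or declare failure. Otherwise, if $Q=\emptyset$ we initialize $Q := R_{\rm min}(S,T^{\rm aux})$ — noting $N^+(Q)$ is then the closest minimum $S$-$T^{\rm aux}$ separator — and recurse; if $Q$ is root-$\HH$-free and $N^+(Q)$ is already the furthest minimum $S$-$T^{\rm aux}$ separator $C_{\rm max}(S,T^{\rm aux})$ (detected via Lemma~\ref{lem:tightSuccessorComputation} returning ``$C=R_{\rm max}$''), then by Lemma~\ref{lem:pushingLemmaOne} applied with this $Q$ we know some solution contains $N^+(Q)$ entirely, so we recurse on the instance where we have committed to deleting $N^+(Q)$ — i.e. on $(D^{\rm aux}-N^+(Q), \dots, k-|N^+(Q)|, \dots)$, which has strictly smaller $k$ — and add $N^+(Q)$ to the branch set. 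The main work is the step where $Q$ is a root-$\HH$-free set with $N^+(Q)$ a non-furthest minimum separator: here I would use Lemma~\ref{lem:tightSuccessorComputation} to obtain $C'=N^+(Q')$ tightly covering $N^+(Q)$ with $Q \subseteq Q'$, and then check whether $Q'$ is still root-$\HH$-free. If yes, replace $Q$ by $Q'$ and recurse (the ``room to grow'', measured by $|R_{\rm max}(S,T^{\rm aux})\setminus R_D(S,N^+(Q))|$, strictly decreases). If no, there is a subgraph $F$, isomorphic to some $H\in\HH$, with $r(F)\in Q'\setminus Q \subseteq R_{Q'}\setminus N[R_Q]$; this $F$ is our branchable forbidden subgraph.

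For that branchable $F$ with $r(F)$ freshly inside $Q'$, I would invoke niceness of a hypothetical solution $X$ (together with the characterization right before Lemma~\ref{lem:pushingLemmaTwo}): for each $i\in[q]$, either $X\cap V(F)\neq\emptyset$, or $r(F)\notin R(S_i,X)$, or some $v\in V(F)$ lies in $\forwardshadow(X)$. The case $X\cap V(F)\neq\emptyset$ is handled by adding $V(F)$ to the branch set. The case $v\in\forwardshadow(X)$ is handled by calling Lemma~\ref{lem:pushingRoutineTwo} on each $v\in V(F)$ and adding the resulting sets $Z_v$. The subtle case is $r(F)\notin R(S_i,X)$: since $S\subseteq R_D(S_1,X)$ and $r(F)\in Q'$ — which, one argues, is forced to be (mostly) reachable from $S$, hence from $S_1$, because of how tight covering interacts with Lemma~\ref{lem:tightness}/Lemma~\ref{lem:tight_sep_lambda_inc} — we should be able to deduce that $X$ must contain an $S$-$\{r(F)\}$ separator living in the ``new territory'' $R_{Q'}\setminus N[R_Q]$; then Lemma~\ref{lem:closest_sep_lambda_inc} or Lemma~\ref{lem:tight_sep_lambda_inc} tells us that adding $r(F)$ to the $T^{\rm aux}$-side (or, more precisely, recursing with an updated instance in which $r(F)$ is moved into $T^{\rm aux}$, equivalently adding an arc $(r(F),t^{\rm aux})$) strictly raises $\lambda$, which both justifies putting $r(F)$ on the unreachable side in a recursive call and pays for the recursion via the measure. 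Thus, in this case we recurse on the modified instance with $T^{\rm aux}$ enlarged by $r(F)$ (strictly larger $\lambda$, bounded by $k$, so depth $O(k)$ per ``$\lambda$-level'', and $Q$-growth $O(k)$ steps per level, giving total depth $O(k^2)$), and the branch set we return is the union of $V(F)$, all the $Z_v$, and the branch sets returned by the (at most constantly many, or at most $|V(F)|\le h$ many) recursive calls corresponding to the cases above.

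The hard part, as the paper itself flags, is the correctness of the ``$r(F)\notin R(S_i,X)$'' branch: one must show that whenever a nice solution $X$ exists with $r(F)$ not reachable from some $S_i$, there is \emph{also} a solution (possibly different from $X$) that is captured by recursing with $r(F)$ pushed to the $T^{\rm aux}$-side, and that this recursion genuinely decreases the measure (i.e. that $\lambda$ strictly increases — which is exactly where Lemma~\ref{lem:closest_sep_lambda_inc}/Lemma~\ref{lem:tight_sep_lambda_inc} and the fact that $r(F)$ sits in $R_{Q'}\setminus N[R_Q]$ are needed). A second delicate point is ensuring the invariants of the auxiliary/extended instance are maintained across recursive calls — in particular that after enlarging $T^{\rm aux}$ or committing vertices to the solution, the new $Q$ is still root-$\HH$-free and $N^+(Q)$ is still a \emph{minimum} $S$-$T^{\rm aux}$ separator (which may require shrinking $Q$ back to $R_{\rm min}$ of the new instance). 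I would handle the bookkeeping by always ``re-normalizing'' at the top of the recursion (recompute $\lambda$, reset $Q$ if the separator is no longer minimum) and folding the (polynomially bounded) re-normalization cost into the per-node cost. Finally, the running-time and size accounting: the branching factor at each node is $|V(F)| + \sum_{v\in V(F)}|Z_v| + (\text{recursive branch counts}) = O(h) + h\cdot(4^k\cdot 2k) = O(h)\cdot 2^{O(k)}$, the recursion depth is $O(k^2)$, so the number of leaves — and hence $|\widehat Z|$ — is $(O(h)\cdot 2^{O(k)})^{O(k^2)} \le O(h)\cdot 2^{O(k^2\log k)}$ (after absorbing the polynomial-in-$h$ and $\log k$ factors), and multiplying by the $n^{O(h)}$ per-node cost of enumerating forbidden subgraphs yields the stated $2^{O(k^2\log k)}\cdot n^{O(h)}$ bound.
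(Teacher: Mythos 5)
Your high-level picture is right — grow $Q$ via tight successors, branch on a forbidden $F$ rooted in the fresh territory $Q'\setminus Q$, use {\sc Pushing-Routine-2} for vertices of $F$ in the shadow, and re-route $r(F)$ to the $T^{\rm aux}$ side when it is guessed unreachable. But there are two genuine and related gaps in the measure/correctness accounting that the paper's proof works hard to close, and your proposal does not.

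First, you never introduce the set $N_Q \subseteq N^+(Q)$ of separator vertices guessed to be unreachable from $S$, which is the paper's key bookkeeping device (encoded into the instance and into the measure $\mu = k^2+k-\lambda^2-|N_Q|$). Without it, your claim that ``adding $r(F)$ to the $T^{\rm aux}$-side strictly raises $\lambda$'' is simply false in general: Lemma~\ref{lem:tight_sep_lambda_inc} guarantees the strict increase only for $(S\cup\{u\})$-$(T\cup\{v\})$ separators with $u\in C_1$ and $v\in R_2\setminus N[R_1]$, i.e.\ you must \emph{simultaneously} move a vertex $u\in N^+(Q)$ onto the $S$-side and $v=r(F)$ onto the $T$-side. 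The paper arranges exactly this: in its Case~2.2.2 it branches, for each $v\in N^+(Q)\setminus N_Q$, on ``$v$ joins $N_Q$'' (measure drops by $1$ via $|N_Q|$, no change in $\lambda$), and in the residual branch where all of $N^+(Q)\setminus N_Q$ is reachable it recurses on $(S\cup(N^+(Q)\setminus N_Q),\,T^{\rm aux}\cup\{r(F)\},\emptyset,\emptyset)$, for which Lemma~\ref{lem:tight_sep_lambda_inc} legitimately yields $\lambda(\mathcal{I}'^{\rm aux})>\lambda(\mathcal{I}^{\rm aux})$ and the drop of $-\lambda^2$ more than compensates for resetting $|N_Q|$ to $0$. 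Your plan, which enlarges only $T^{\rm aux}$ and leaves $S$ untouched, has no lemma backing the $\lambda$ increase, so the recursion depth is not controlled.

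Second, your base case for the furthest separator is incorrect as stated: when $Q$ is root-$\HH$-free and $N^+(Q)=C_{\rm max}(S,T^{\rm aux})$, you \emph{cannot} yet apply Lemma~\ref{lem:pushingLemmaOne}, because that lemma requires $R_D(S_1,X)\cap N^+(Q)=\emptyset$ for the hypothetical solution $X$, and nothing forces a solution $X$ to have all of $N^+(Q)$ unreachable from $S_1$ merely because the separator is furthest. The paper's Case~2.1 handles this by adding $N^+(Q)$ to the branch set \emph{together with} recursive calls on $(S\cup\{v\},T^{\rm aux},\emptyset,\emptyset)$ for each $v\in N^+(Q)\setminus N_Q$ (here $\lambda$ strictly increases by Lemma~\ref{lem:closest_sep_lambda_inc} since the separator is furthest); only in the remaining branch where all of $N^+(Q)\setminus N_Q$ is unreachable does Pushing-Lemma~1 apply. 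Likewise your ``commit to deleting the whole separator and recurse with $k-|N^+(Q)|$'' shortcut is not what the paper does and is not justified — the paper never blindly commits $N^+(Q)$ to the solution; it only adds it to the branch set to be branched on one vertex at a time by the outer algorithm. In summary, the skeleton of your argument matches the paper, but you are missing the $N_Q$ machinery and the paired-$S$/$T$ perturbation needed to make Lemma~\ref{lem:tight_sep_lambda_inc} fire, and you misapply Lemma~\ref{lem:pushingLemmaOne} at the furthest-separator base case; as written, the measure-decrease argument does not go through.
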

\vspace{-5 pt}

\begin{proof}
Let $\mathcal{I}=(D,\S,W,k)$. We will design a recursive algorithm {\sc Find-Branch-Set}. To analyse the depth of recursion, we associate a measure $\mu$ with an instance $\mathcal{I}^{\rm aux} = (D^{\rm aux},\S=(S_1, \ldots, S_q), \allowbreak W, \allowbreak k,S,T^{\rm aux},Q,N_Q)$ as $\mu(\mathcal{I}^{\rm ext}) =k^2 + k -\lambda_{D^{\rm aux}}(S,T^{\rm aux})^2 - \vert N_Q \vert$. For the sake of convenience, we will denote $\lambda_{D^{\rm aux}}(S,T^{\rm aux})$ by $\lambda(\mathcal{I}^{\rm aux})$. In what follows, we give an exhaustive list of cases, and say what the algorithm outputs in each such case, give a proof of correctness for the same, point out the branching width of the recursive calls and argue that the measure $\mu$ drops for each of the instances called in each of the recursive calls.
% and finally, argue about the size of the set returned.

\bigskip
\noindent{\bf Base Case:} Observe that for any auxiliary instance $\mathcal{I}^{\rm aux}$, if $\lambda(I^{\rm aux}) > k$, then $\mathcal{I}_{\rm aux}$ has no solution. If $k\leq 0$, then check if any strong   component of $D$ has a graph isomorphic to some graph in $\HH$. If it does, then $\mathcal{I}^{\rm aux}$ has no solution, otherwise, return $\emptyset$. Another case that is handled as a base case is when either $\mu(\mathcal{I}^{\rm aux}) \leq 0$ or $\vert N_Q \vert =\lambda(\mathcal{I}^{\rm aux})$.
%	{\noindent \bf Case 1: [$\mu(\mathcal{I}^{aux}) \leq 0$ or $\vert N_Q \vert =\lambda(\mathcal{I}^{aux})$]: } 
If $\mu(\mathcal{I}^{\rm aux}) \leq 0$, we first claim that $\vert N_Q \vert = \vert N^+(Q) \vert = \lambda(\mathcal{I}^{\rm aux})$. Since $\vert N^+(Q) \vert = \lambda(\mathcal{I}^{\rm aux})$, it is enough to prove that $\vert N_Q \vert =\lambda(\mathcal{I}^{\rm aux})$. Since $N_Q \subseteq N^+(Q)$, we have that $\vert N_Q \vert \leq \lambda(\mathcal{I}^{\rm aux})$. For the sake of contradiction, suppose that $\vert N_Q \vert < \lambda(\mathcal{I}^{\rm aux})$. Since $\mu(\lambda(\mathcal{I}^{\rm aux})) \leq 0$, we have that $k^2 + k \leq \lambda(\mathcal{I}^{\rm aux})^2 + \lambda(\mathcal{I}^{\rm aux})$. This implies that $\vert N_Q \vert \geq \lambda(\mathcal{I}^{\rm aux})$, which is a contradiction. Thus, we have that $\vert N_Q \vert =\lambda(\mathcal{I}^{\rm aux})$. 
In this case, {\sc Find-Branch-Set$(\mathcal{I}^{\rm aux})$} returns $N^+(Q)$. 
%{\noindent \em Correctness: }
We now prove that $N^+(Q)$ is indeed a branch set for $\mathcal{I}^{\rm aux}$. First observe that, in this case $N_Q = N^+(Q)$. From the construction of $D^{\rm aux}$, there is an arc $(u,t^{\rm aux})$ for each $u \in N_Q$. 
Thus, in any solution $X$ of $\mathcal{I}^{\rm aux}$, $N_Q \cap R_{D^{\rm aux}}(S,X) = \emptyset$, that is, 
$N^+(Q) \cap R_{D^{\rm aux}}(S,X) =\emptyset$. Since any solution $X$ of $\mathcal{I}^{\rm aux}$ is also a solution for $\mathcal{I}$, we have that there exists a solution $X$ to $\mathcal{I}$, such that $N^+(Q) \cap R_{D^{\rm aux}}(S,X) = \emptyset$. Then, observe that $\mathcal{I}, Q,N^+(Q),X$ satisfy the properties of Lemma~\ref{lem:pushingLemmaOne}, and hence there exists a solution to $\mathcal{I}$ that contains a vertex of $N^+(Q)$. That is, $N^+(Q)$ is a branch set for $\mathcal{I}$.
%{\noindent \em Measure drop and Branching width: }This is irrelevant in this case, as we do not recurse (this is our base case).
Note that the size of the set outputted in this case is $\lambda(\mathcal{I}^{\rm aux}) \leq k$.
%{\noindent \em Size of output: } $\lambda(\mathcal{I}^{aux}) \leq k$.
%\todo[inline]{check if we assume that at the application of Case i we assume none of the previous cases are applicable?}
We now proceed to the recursive cases.

\bigskip
{\noindent \bf Case 1: [$Q=\emptyset$]: } The algorithm first computes the unique minimum closest $S$-$T^{\rm aux}$ separator $C$. It then checks if there exists a subgraph $F \subseteq D$ such that $F$ is isomorphic to some graph in $\HH$ and $r(F) \in R_D(S,C)$. 

%\todo[inline]{Break into cases.}

%\todo[inline]{This case needs explanation}

\medskip
{\noindent \bf Case 1.1: [$\nexists$ a subgraph $F \subseteq D$ such that $F$ is isomorphic to a graph in $\HH$ and $r(F) \in R_D(S,C)$]: } In this case, the algorithm returns {\sc Find-Branch-Set}$(D^{\rm aux},\S,W, \allowbreak k, \allowbreak S, \allowbreak T^{\rm aux}, R_D(S,C), \emptyset)$. 

{\noindent \em Correctness:} Note that in this case $Q$ is root-$\HH$-free and $N^+(Q)$ is a minimum $S$-$T^{\rm aux}$ separator in $D^{\rm aux}$. Thus, $X$ is a solution for the auxiliary instance $(D^{\rm aux},\S,W,k,S,T^{\rm aux}, \allowbreak R_D(S,C), \allowbreak \emptyset)$.

{\noindent \em Branching width: }It is $1$. 

{\noindent \em Measure drop: }Let $\mathcal{I}'^{\rm aux} = (D^{\rm aux},\S,W,k,S,T, R_D(S,C), \emptyset)$. Since the branching width is~$1$, in this case it is enough to prove that this case cannot occur more than $n$ times and the measure does not increase whenever this case arises. 
%The former claim follows because if $T^{\rm aux} \setminus t^{\rm aux}$ becomes equal to $V(D) \setminus S_1$, then there is a no solution to the problem and we stop. In fact, 
Since in the new instance $\mathcal{I}'^{\rm aux}$, the size of $Q$ has strictly increased, as it was an empty set before, the resulting instance~$\mathcal{I}'^{\rm aux}$ does not fall into this case again (as we will see later that in all the cases the set $Q$ only grows).
Since $k, N_Q$ remains the same in both the instances, and $\lambda(\mathcal{I}'^{\rm aux}) \geq \lambda(\mathcal{I}^{\rm aux})$ because $T^{\rm aux} \cup \{r(F)\} \supseteq T^{\rm aux}$, we conclude that $\mu(\mathcal{I}'^{\rm aux}) \leq \mu(\mathcal{I}^{\rm aux})$.

\medskip
{\noindent \bf Case 1.2: [$\exists F \subseteq D$ that is isomorphic to a graph in $\HH$, and $r(F) \in R_D(S,C)$]:}
%\todo[inline]{Name Lemma 21 as {\sc Pushing-Routine-2}}
In this case, the algorithm returns $V(F)$ $ \cup$ {\sc Find-Branch-Set}$(D^{\rm aux},\S,W,k,S,T^{\rm aux} \cup \{r(F)\} ,\emptyset,\emptyset)$ $ \cup$   $\bigcup_{u \in V(F)}$ {\sc Pushing-Routine-2}$(D,\S,W,k, \allowbreak u)$.

{\noindent \em Correctness: } Let $X$ be a solution for $\mathcal{I}^{\rm aux}$. Since by definition, $X$ is a nice solution of $\mathcal{I}$, either $V(F) \cap X \neq \emptyset$, or $X$ contains an $S_1$-$\{r(F)\}$ separator or if it does not satisfy either of the above two conditions, then it must contain a $\{u\}$-$S_1$ separator, for some $u \in V(F)$. In the first case, since the returned set contains $V(F)$, we are done. In the second case, if~$X$ contains an $S_1$-$\{r(F)\}$ separator and $S \subseteq R_D(S_1,X)$ (from the definition of an auxiliary solution), then $X$ must also contain an $S$-$\{r(F)\}$ separator. Thus, in this case $X$ must also be a solution to {\sc Find-Branch-Set}$(D^{\rm aux},\S,W,k,S,T^{\rm aux} \cup \{r(F)\} ,\emptyset,\emptyset)$. From induction hypothesis, $(D^{\rm aux},\S,W,k,S,T^{\rm aux} \cup \{r(F)\} ,\emptyset,\emptyset)$ returns a branch set for $(D,\S,W,k)$, and we are done. In the last case, if neither of the above two cases hold, then $X$ contains a $\{u\}$-$S_1$ separator for some $u \in V(F)$. We will prove that $u$ in in the forward shadow of $X$ with respect to $W$. Suppose, for the sake of contradiction, that there is a path from $u$ to a vertex $v \in W$ in $D-X$. Note that $v \not\in S_1$ as $X$ contains a $\{u\}-S_1$ separator. Since $r(F)$ is reachable from $S_1$ in $D-X$, $X$ is disjoint from $V(F)$, and there is a path from $r(F)$ to $u$ contained in $V(F)$ as $F$ is rooted, it follows that $u$ is reachable from $S_1$ in $D-X$. Furthermore, there is a path from $u$ to $v$ in $D-X$ and thus $v \in W \setminus S_1$ is reachable from $S_1$ in $D-X$. This is a contradiction to the fact that $X$ is a solution for $\mathcal{I}$. Therefore, it follows that $u$ is in the forward shadow of $X$ with respect to $W$. Moreover, since there is an $S_1$-$\{r(F)\} $ path in $D$, by rooted-ness there is an $S_1$-$u$ path and since $S_1 \subseteq W$, there is a $W$-$u$ path in $D$. By Lemma~\ref{lem:pushingRoutineTwo}, {\sc Pushing-Routine-2}$(D,\S,W,k,u)$ returns a branch set for $\mathcal{I}$. 

{\noindent \em Branching width: } It is $1$.

{\noindent \em Measure drop: } Let $\mathcal{I}'^{\rm aux} = (D^{\rm aux},\S,W,k,S,T^{\rm aux} \cup \{r(F)\} ,\emptyset,\emptyset)$. Since the branching width is $1$ and in the new instance $\mathcal{I}'^{\rm aux}$ the size of $T^{\rm aux}$ grows, this case does not happen more than $n$ number of times
%, where $n$ is the number of vertices of the digraph. This is 
because, if $T^{\rm aux} \setminus t^{\rm aux}$ becomes equal to $V(D) \setminus S_1$, then there is a no solution to the problem and we stop. Again, since the branching width is $1$, in this case, we only need to show that $\mu(\mathcal{I}'^{\rm aux}) \leq \mu(\mathcal{I}^{\rm aux})$. Since $k, N_Q$ remains the same in both the instances, and $\lambda(\mathcal{I}'^{\rm aux}) \geq \lambda(\mathcal{I}^{\rm aux})$ because $T^{\rm aux} \cup \{r(F)\} \supseteq T^{\rm aux}$, we conclude that $\mu(\mathcal{I}'^{\rm aux}) \leq \mu(\mathcal{I}^{\rm aux})$.

\bigskip
{\noindent \bf Case 2: [$Q\neq \emptyset$]: } The algorithm proceeds by invoking Lemma~\ref{lem:tightSuccessorComputation} and either finds a separator~$C'$ that tightly covers $N^+(Q)$ in $D^{\rm aux}$ or concludes that $N^+(Q)$ is the furthest minimum $S$-$T^{\rm aux}$ separator in $D^{\rm aux}$.

\medskip
{\noindent \bf Case 2.1: [$N^+(Q)$ is the furthest minimum $S$-$T^{\rm aux}$ separator in $D^{\rm aux}$]: } In this case, the algorithm returns $N^+(Q)$ $\cup$ $\bigcup_{v \in N^+(Q) \setminus N_Q}$ {\sc Find-Branch-Set}$(D^{\rm aux},\S,W,k, \allowbreak S \cup \{v\},T^{\rm aux},\emptyset,\emptyset)$.
	
{\noindent \em Correctness: }Let $\mathcal{I}^{\rm aux}_v=(D^{\rm aux},\S,W,k,S \cup \{v\},T^{\rm aux},\emptyset,\emptyset)$. Let $X$ be a solution for $\mathcal{I}^{\rm aux}$. Suppose there exists $v \in N^+(Q)\setminus N_Q$ that is reachable from $S$ in $D^{\rm aux}-X$, then observe that $X$ is also a solution for $\mathcal{I}^{\rm aux}_v$. Thus, {\sc Find-Branch-Set}$(\mathcal{I}^{\rm aux}_v)$ returns a branch set for~$\mathcal{I}$.
Now we look at the case when no vertex in $N^+(Q)\setminus N_Q$ is reachable from $S$ in $D^{\rm aux}-X$. Recall that $N_Q$ cannot be reachable from $S$ in $D^{\rm aux}-X$. Thus the entirety of $N^+(Q)$ is unreachable from $S$ in $D^{\rm aux}-X$. In this case, observe that $\mathcal{I}, Q,N^+(Q),X$ satisfy the conditions for Lemma~\ref{lem:pushingLemmaOne}, and thus, $N^+(Q)$ is a branch set for $\mathcal{I}$.

{\noindent \em Branching width: } It is at most $\vert N^+(Q) \vert =\lambda(\mathcal{I}^{\rm aux}) \leq k$.
%\todo[inline]{Prove measure drop here}

{\noindent \em Measure drop: } For each $v \in N^+(Q) \setminus N_Q$, we show that $\mu(\mathcal{I}^{\rm aux}_v) < \mu(\mathcal{I}^{\rm aux})$.
From Lemma~\ref{lem:closest_sep_lambda_inc}, the $\lambda(\mathcal{I}^{\rm aux}_v) \geq \lambda(\mathcal{I}^{\rm aux}) +1 $. However the size of $N_Q$ in the new instance decreases to $0$. Thus, the drop in $\mu$ is at least $(k^2+k-\lambda(\mathcal{I}^{\rm aux}) ^2 - \vert N_Q \vert) - (k^2+k-(\lambda(\mathcal{I}^{\rm aux}) -1)^2 - 0) = \lambda(\mathcal{I}^{\rm aux}) ^2 - \vert N_Q \vert - (\lambda(\mathcal{I}^{\rm aux}) +1)^2 = 2 \lambda(\mathcal{I}^{\rm aux})  - \vert N_Q \vert$. Since $\vert N_Q \vert \le \lambda $, the drop is $\geq \lambda +1$.

\medskip
{\noindent \bf Case 2.2: [$N^+(Q)$ is not the furthest minimum $S$-$T^{\rm aux}$ separator in $D^{\rm aux}$]: }From Lemma~\ref{lem:tightSuccessorComputation}, the algorithm first finds a separator $C'$ that tightly covers $N^+(Q)$. Let $Q' = R_D(S,C')$ that is $C'=N(Q')$. It then checks if there exists $F \subseteq D$ such that $F$ is isomorphic to some graph in $\HH$ and $r(F) \in Q'$. 

\smallskip
{\noindent \bf Case 2.2.1: [$\nexists$ a subgraph isomorphic to a graph in $\HH$ whose root is in $Q'$]: }In this case, the algorithm returns {\sc Find-Branch-Set}$(D^{\rm aux},\S,W,k,S,T^{\rm aux},Q', N_Q)$.

{\noindent \em Correctness: } Let $\mathcal{I}'^{\rm aux}= (D^{\rm aux},\S,W,k,S,T^{\rm aux},Q', N_Q)$. From construction, $N(Q')$ is a minimum $S-T^{\rm aux}$ separator and $Q'$ is root-$\cal H$-free. Also since $N^+(Q')$ covers $N^+(Q)$ and is an $S$-$T^{\rm aux}$ separator and for each $v \in N_Q$, $(v,t^*)$ is an arc in $D^{\rm aux}$, it follows that $N_Q \subseteq N^+(Q')$. Thus, if $X$ is a solution to $\mathcal{I}^{\rm aux}$, then it is also a solution to $\mathcal{I}'^{\rm aux}$.

{\noindent \em Branching width: } It is $1$.

{\noindent \em Measure drop: } Since the branching width is $1$ and $Q' \supset Q$, it is enough to prove that the measure does not increase. This is indeed the case, as $k, N_Q$ remain the same in both the instances. Also, since $S$ and $T^{\rm aux}$ remain the same, $\lambda(\mathcal{I}^{\rm aux}) = \lambda(\mathcal{I}'^{\rm aux})$.

\smallskip
{\noindent \bf Case 2.2.2: [$\exists F \subseteq D$ that is isomorphic to a graph in $\HH$ and $r(F) \in Q'$]: }
In this case, the algorithm returns $V(F)$ $\cup$ {\sc Find-Branch-Set}$(D^{\rm aux},\S,W,k,S \cup (N^+(Q)\setminus N_Q),T^{\rm aux} \cup \{r(F)\},\emptyset,\emptyset)$ $\cup$ $\bigcup_{v \in N^+(Q)\setminus N_Q}${\sc Find-Branch-Set}$(D^{\rm aux},\S,W,k,S,T^{\rm aux},Q,N_Q \cup \{v\})$ $\cup$ $\bigcup_{v \in V(F)}$ {\sc Pushing-Routine-2}$(D,\S,W,k,u)$. 

{\noindent \em Correctness: } Let $X$ be a solution of~$\mathcal{I}^{\rm aux}$. Since $X$ is a nice solution of $\mathcal{I}$, either $X \cap V(F) \neq \emptyset$ or, $X$ contains an $S_1$-$\{r(F)\}$ separator or, $X$ contains a $\{v\}$-$S_1$ separator, for some $v \in V(F)$.
In the first case, since $V(F)$ is present in the set returned, we are done. 
In the second case, since $S$ is reachable from $S_1$ in $D^{\rm aux}-X$, it follows that $X$ is an $S$-$\{T \cup r(F)\}$ separator. If there exists a vertex $v \in N^+(Q)\setminus N_Q$ that is unreachable from $S$ in $D^{\rm aux}-X$, then observe that $X$ is also a solution for {\sc Find-Branch-Set}$(D^{\rm aux},\S,W,k,S,T^{\rm aux},Q,N_Q \cup \{v\})$. Thus, we are done. Otherwise, $N^+(Q)\setminus N_Q$ is reachable from $S$ in $D^{\rm aux}-X$. In this case, $X$ is also a solution for {\sc Find-Branch-Set}$(D^{\rm aux},\S,W,k,S \cup (N^+(Q)\setminus N_Q),T^{\rm aux} \cup \{r(F)\},\emptyset,\emptyset)$, and hence, we are done.
In the third case, $X$ contains a $\{v\}$-$S_1$ separator, for some $v \in V(F)$. Using a similar argument as in Case 1, it follows that $u$ is in the forward-shadow of $X$ with respect to $W$ and that there is a $W$-$u$ path in $D$. Thus, from Lemma~\ref{lem:pushingRoutineTwo}, {\sc Pushing-Routine-2}$(D,\S,W,k,u)$ returns a branch set for $\mathcal{I}$.

{\noindent \em Branching width: } It is at most $ \vert N^+(Q) \vert +1 \leq \lambda(\mathcal{I}^{\rm aux}) +1 \leq k+1$.

{\noindent \em Measure drop: } Consider the instance $\mathcal{I}'^{\rm aux} = (D^{\rm aux},\S,W,k,S \cup (N^+(Q)\setminus N_Q),T^{\rm aux} \cup \{r(F)\},\emptyset, \allowbreak \emptyset)$. We show that $\mu(\mathcal{I}'^{\rm aux}) < \mu(\mathcal{I}^{\rm aux})$. From Lemma~\ref{lem:tight_sep_lambda_inc}, the minimum $(S \cup (N^+(Q) \setminus N_Q))$-$(T \cup \{r(F),t^{\rm aux}\})$ separator is of size greater than $\lambda(\mathcal{I}^{\rm aux})$. Thus, $\lambda(\mathcal{I}'^{\rm aux})> \lambda(\mathcal{I}^{\rm aux})$. However the  $N_Q$ (the last variable in the instance) for $\mathcal{I}'^{\rm aux}$ is an empty set. Therefore $\mu $ drops by at least $(k^2+k-\lambda(\mathcal{I}^{\rm aux}) ^2-\vert N_Q \vert)-(k^2+k - (\lambda(\mathcal{I}^{\rm aux}) +1)^2-0) = \lambda(\mathcal{I}^{\rm aux})^2 - \vert N_Q \vert - (\lambda(\mathcal{I}^{\rm aux})+1)^2 = 2 \lambda(\mathcal{I}^{\rm aux}) +1 - \vert N_Q \vert$. Since $\vert N_Q \vert \le \lambda(\mathcal{I}^{\rm aux}) $, the drop is at least $\lambda(\mathcal{I}^{\rm aux}) +1$. 
For each $v \in V(F)$, consider the instance $\mathcal{I}^{\rm aux}_v=(D^{\rm aux},\S,W,k,S,T^{\rm aux},Q,N_Q \cup \{v\})$. We now show that $\mu(\mathcal{I}^{\rm aux}_v) < \mu(\mathcal{I}^{\rm aux})$. Since $\vert N_Q \cup \{v\} \vert = \vert N_Q \vert +1$ and $\lambda(\mathcal{I}^{\rm aux}_v)=\lambda(\mathcal{I}^{\rm aux})$, we conclude that the measure drops by one in this case.

\bigskip
This concludes the description of the recursive algorithm together with its correctness.
%This finishes the description and correctness of the algorithm. Since each step of the algorithm can be done in polynomial time, to prove the time bound it remains to bound the number of nodes in the recursion tree of the algorithm. This bound will also bound the size of the branch set that the algorithm outputs. 
To bound the number of nodes in the recursion tree, since the maximum branching width of the recursion tree is at most $k+1$ and the depth of the recursion tree is at most $\mu(\mathcal{I}^{\rm aux})=k^2+k - \lambda(\mathcal{I}^{\rm aux})^2 - \vert N_Q \vert \leq k^2 +k$, we conclude that the number of nodes in the recursion tree is $(k+1)^{k^2 +k} = 2^{O(k^2 \log k)}$. Since the size of the set returned at the leaf nodes of the recursion tree is at most $k$ and at each level of the recursion tree a set of size at most $h+kh+1$ is added to the set obtained from the recursive calls, we conclude that the size of the set that the algorithm outputs is at most $(h+kh+1) \cdot 2^{O(k^2 \log k)} = O(h) \cdot 2^{O(k^2 \log k)}$.
\end{proof}
Given an instance $(D,\S,W,k)$ of {\sc r-$\HH$-SCCPC}, the algorithm of Lemma~\ref{lem:disjPartitionedCompressionSolution} creates an extended instance $\mathcal{I}^{\rm ext}=(D,\S,W,k, S_1, W \setminus S_1,\emptyset, \emptyset)$ of {\sc r-$\HH$-SCCPC} and calls {\sc Find-Branch-Set} on $\mathcal{I}^{\rm aux}$, which is an auxiliary instance of $\mathcal{I}^{\rm ext}$. The proof of Lemma~\ref{lem:disjPartitionedCompressionSolution} then follows from Lemma~\ref{lem:auxbranchset}.
As already argued, Lemma~\ref{lem:disjPartitionedCompressionSolution} implies Theorem~\ref{thm:rootedSCCDeletion}. This completes the description and proof of our {\sc FPT} algorithm for {\sc r-$\cH$-SCC Deletion}. Observe that the only place where we incur a $n^{O(h)}$ factor in the running time of this algorithm is for checking whether there exists a subgraph of $D$ isomorphic to a graph in $\cal H$. If this can be done in time $g(h)\cdot n^{O(1)}$, then our algorithm will run in time $g(k,h)\cdot n^{O(1)}$. In particular, if $\cal H$ comprises of only the $(d+1)$-out-star, the algorithm will run in time that is {\sc FPT} in $k$ and $d$ (Theorem~\ref{thm:outDegreeSCCDeletion}).

\subsection{Proof of Lemma~\ref{lem:disjPartitionedCompressionSolution}, Theorems~\ref{thm:rootedSCCDeletion} and~\ref{thm:outDegreeSCCDeletion}}

Given an instance $(D,\S,W,k)$ of {\sc r-$\HH$-SCCPC}, the algorithm of Lemma~\ref{lem:disjPartitionedCompressionSolution} creates an extended instance $\mathcal{I}^{\rm ext}=(D,\S,W,k, S_1, W \setminus S_1,\emptyset, \emptyset)$ of {\sc r-$\HH$-SCCPC} and calls {\sc Find-Branch-Set} on $\mathcal{I}^{\rm aux}$, which is an auxiliary instance of $\mathcal{I}^{\rm ext}$. The proof of Lemma~\ref{lem:disjPartitionedCompressionSolution} then follows from Lemma~\ref{lem:auxbranchset}.
As already argued, Lemma~\ref{lem:disjPartitionedCompressionSolution} implies Theorem~\ref{thm:rootedSCCDeletion}. This completes the description and the proof of our {\sc FPT} algorithm for {\sc r-$\cH$-SCC Deletion}. Observe that the only place where we incur a $n^{O(h)}$ factor in the running time of this algorithm is for checking whether there exists a subgraph of $D$ isomorphic to a graph in $\cal H$. If this can be done in time $g(h)\cdot n^{O(1)}$, then our algorithm will run in time $g(k,h)\cdot n^{O(1)}$. In particular, if $\cal H$ comprises of only the $(d+1)$-out-star, the algorithm will run in time that is {\sc FPT} in $k$ and $d$ (Theorem~\ref{thm:outDegreeSCCDeletion}).

\section{The FPT algorithm for {\sc Path $\cH$-SCC Deletion}}

\newcommand{\strc}{s.c.}
\newcommand{\yes}{{\sc Yes}}
\newcommand{\no}{{\sc No}}
\newcommand{\OO}{O}
In this section, we consider the $\HH$-SCC deletion problem when $\HH$ contains a directed path (the {\sc Path $\HH$-SCC} problem) and prove Theorem~\ref{thm:pathSCCDeletion}.

\pathSCCDeletion*

The overall picture of the arguments used is summarized here. Observe that what makes the $\HH$-SCC problem tricky is the fact that in the problem the goal is to delete a set of vertices thereby excluding the graphs in $\HH$ as subgraphs in the {\em strongly connected components} of the resulting graph. On the other hand, hitting/excluding the graphs of $\HH$ as subgraphs in the {\em whole graph} is relatively easy, as one can find a subgraph of $D$ isomorphic to a digraph in $\HH$, and then branch on its vertices to will go to the solution. We call the later problem as the {\sc $\HH$-Hitting} problem.

In what follows, we begin by showing that for every family of digraphs $\HH$, there exists another family $\HH^*$, such that $\HH$-SCC problem is equivalent to the {\sc $\HH^*$-Hitting} problem. The observation seems to be a good news but comes with own set of challenges, viz. the family $\HH^*$ might not be a finite family, therefore the simple branching algorithm described above might simply fail. Despite this nature of $\HH^*$ in the general case, we show that when $\HH$ contains a directed path, then the family $\HH^*$, as described above, exhibits certain nice properties (which we elaborate later in the section), that can be exploited to yield the FPT algorithm for the {\sc Path $\HH$-SCC} problem.

\subsection{The Design of $\HH^*$}

Given $\HH$, we will now define the family $\HH^*$, such that $\HH$-SCC problem reduces to the {\sc $\HH^*$-Hitting} problem. Towards this, given a digraph $H$, we define a class of strongly connected supergraphs of $H$, which we call the class of {\em path completions of $H$}, denoted by $PC(H)$. Intuitively, this is the class of digraphs obtained by adding paths between the vertices of $H$ to make the resulting graph strongly connected.

%First we will see how $\HH^*$ depends on $\HH$. 

%\todo[inline]{Define the union operation on graphs in preliminaries. Define a path $P$ as a graph.}

\begin{definition}\label{def:pathcompletion}
Let $H$ be a digraph. Then the {\em path completions of $H$}, denoted by $PC(H)$, is a class of strongly connected supergraphs of $H$, defined as follows. A supergraph $H^* \in PC(H)$ if $H^* = H \cup P_1 \cup \ldots \cup P_{\ell}$, where each $P_i$ is a directed path with end-points in $V(H)$. Also, the vertex set of the paths $P_i$ are not necessarily disjoint. The collection $\{P_i \colon i \in [\ell]\}$ is called the {\em witnessing collection of paths} for $H^*$. Note that there could be more than one witnessing collection of paths for a digraph.
\end{definition}

Note that, for any digraph $H$, the family $PC(H)$ could be potentially infinite. Thus, finding the collection $PC(H)$ is hard, but, as we will see later, checking if a digraph $H^* \in PC(H)$ is fairly easy, in the sense that one can check this in time proportional to the size of $H^*$.
We now refine this class $PC(H)$ to get a class that serves our purpose and avoids ``redundancy''.

\begin{definition}\label{def:goodpathcompletion}
Let $H$ be a digraph. The {\em good path completions of $H$}, denoted by $GPC(H)$, is a subset of $PC(H)$, such that, for each $H^* \in GPC(H)$, $H^* = H \cup P_1 \cup \ldots \cup P_{\ell}$ such that the pair of end-points of the paths $P_i$ are distinct. Thus, $\ell \leq \vert V(H) \vert^2$.
%Also, if $P_i$ is a $u$ to $v$ path, then there is no path from $u$ to $v$ in $H$. 
For a family of digraphs $\HH$, let $GPC(\HH) = \cup_{H \in \HH} GPC(H)$.
\end{definition}

%\todo[inline]{Give an english explanation of the definition of GPC}

The key insight of having the above definition is that for any strongly connected supergraph of $H$, say $\widehat{H}$, there exists $H^* \in GPC(H)$, such that $H^*$ is a subgraph of $\widehat{H}$. Also, the number of paths required to be added to $H$ to make it $H^*$ is bounded as a function of the size of $H$. The former claim is formalized below.

\begin{lemma}\label{lem:superpc}
Let $H$ be a digraph and let $\widehat{H} \supseteq H$ such that $\widehat{H}$ is strongly connected. Then there exists $H^* \in GPC(H)$ such that $H^* \subseteq \widehat{H}$.
\end{lemma}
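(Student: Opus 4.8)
The plan is to start with the subgraph $H \subseteq \widehat{H}$ and build up a good path completion inside $\widehat{H}$ one path at a time, always making sure we add paths only between pairs of vertices of $V(H)$ whose "reachability status within the current partial completion" needs fixing. First I would observe that since $\widehat{H}$ is strongly connected and contains all of $V(H)$, for every ordered pair $(x,y) \in V(H) \times V(H)$ there is a directed $x$-$y$ path $P_{xy}$ in $\widehat{H}$. The naive attempt would be to throw in all these $P_{xy}$'s: the resulting graph $H \cup \bigcup_{(x,y)} P_{xy}$ is strongly connected, is a subgraph of $\widehat{H}$, and is a path completion of $H$ — but it need not lie in $GPC(H)$, because $GPC(H)$ demands that the witnessing collection use \emph{distinct} pairs of endpoints, i.e.\ at most one path per ordered pair. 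So the real content is to show we can get away with at most one path per pair of endpoints while still achieving strong connectivity.

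The key step is therefore: take $H^* := H \cup \bigcup_{(x,y)\in V(H)\times V(H)} P_{xy}$ where $P_{xy}$ is \emph{some fixed} $x$-$y$ path in $\widehat H$ (one per ordered pair, so $\ell \le |V(H)|^2$ paths, with distinct endpoint-pairs by construction). Clearly $H^* \subseteq \widehat H$ since each $P_{xy} \subseteq \widehat H$, and clearly $\{P_{xy}\}$ is a witnessing collection of paths whose endpoint-pairs are pairwise distinct, so $H^* \in GPC(H)$ provided $H^*$ is strongly connected. Strong connectivity of $H^*$ is immediate: for any two vertices $u,v \in V(H^*)$, pick any $x \in V(H)$ and any $y \in V(H)$ — actually more carefully, $u$ lies on some path $P_{ab}$ (or $u \in V(H)$), so there is a walk in $H^*$ from $u$ to $b \in V(H)$; similarly there is a walk from $a' \in V(H)$ to $v$ for some path $P_{a'b'}$ containing $v$; and $P_{ba'} \subseteq H^*$ connects $b$ to $a'$. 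Concatenating gives a $u$-$v$ walk, hence a $u$-$v$ path, in $H^*$. (If $u$ or $v$ already lies in $V(H)$ the argument only gets easier.) Thus $H^*$ is strongly connected, so $H^* \in GPC(H)$ and $H^* \subseteq \widehat H$, as required.

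I expect the only subtlety — and it is minor — is the bookkeeping in the strong-connectivity argument: one must be careful that every vertex of $H^*$ that is \emph{not} in $V(H)$ lies on at least one of the chosen paths $P_{xy}$ (true by construction of $H^*$ as a union over these paths together with $H$), so that every vertex can reach, and be reached from, some vertex of $V(H)$ using only arcs of $H^*$. Once that is in hand, the "hub" structure of $V(H)$ (every ordered pair of hub vertices joined by a path in $H^*$) finishes the job. No heavier machinery is needed; the lemma is essentially a pigeonhole-free simplification observing that keeping one path per ordered endpoint-pair already suffices for strong connectivity, which is exactly what $GPC$ was designed to record.
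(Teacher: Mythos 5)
Your proof is correct and takes essentially the same approach as the paper: both construct $H^*$ by unioning $H$ with one path from $\widehat H$ per (needed) ordered pair of $V(H)$-vertices, and both then argue strong connectivity via the ``hub'' role of $V(H)$. The only difference is bookkeeping — the paper builds $H^*$ iteratively, adding a path $P_{u,v}$ only when $u$ cannot yet reach $v$ in the current partial completion, while you add one path for every ordered pair up front; both yield a witnessing collection with pairwise distinct endpoint-pairs and a strongly connected subgraph of $\widehat H$, so both are valid.
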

\begin{proof}
We will construct the digraph $H^*$ iteratively. Begin by initialising $H^*:=H$. If $H^*$ is not strongly connected, then identify a pair $(u,v) \in V(H) \times V(H)$ such that there is no path from $u$ to $v$ in $H^*$. Since $\widehat{H}$ is strongly connected, there exists a $u$ to $v$ path, say $P_{u,v}$, in $\widehat{H}$. Update $H^* :=H^* \cup P_{u,v}$. 

From the construction above, clearly $H^* = H \cup P_1 \cup \ldots \cup P_{\ell}$, where the end-points of the paths $P_i$ are in $V(H)$. %and for each $P_i$, if $P_i$ is a path from $u$ to $v$ then there is no $u$ to $v$ path in $H$. 
What remains to prove now is that $H^*$ is strongly connected. Let us split the vertex set of $V(H^*)$ into two parts: $V(H)$ containing the vertices of $H$, and $V(P)$ containing the vertices in the paths $P_i$ that are not in $V(H)$, that is $V(P) = V(H^*) \setminus V(H)$. From the construction of $H^*$, it is clear that, for any two vertices $u,v \in V(H)$, there is both a $u$ to $v$, and $v$ to $u$ path in $H^*$. Since the endpoints of the paths $P_i$ are in $V(H)$, for every vertex in $V(P)$, there is path to a vertex in $V(H)$ and from a vertex in $V(H)$. Thus, every vertex of $V(P)$ can reach all other vertices of $V(H^*)$. This proves that $H^*$ is strongly connected.
\end{proof}

The above lemma together with the definitions stated above helps us to build the relation between the $\HH$-SCC problem and a corresponding {\sc $\HH^*$-Hitting} problem. We show that the $\HH^*$ corresponding to the family $\HH$ is $GPC(\HH)$. This is formalized below.

% \todo[inline]{In prelims---define the notation $H \subseteq D$ to say $H$ is a subgraph of $D$.}

%\todo[inline]{define $\uplus$}

\begin{lemma}\label{lem:scc_forbidden}
%$(D,k)$ is a \yes{} instance of $\HH$-SCC if and only if $(D,k)$ is a \yes{} instance of {\sc $GPC(\HH)$-Hitting} problem. In fact, 
Given a digraph $D$ and an integer $k$, $X \subseteq V(D)$ is a solution to the instance $(D,k)$ of the problem $\HH$-SCC if and only if it is a solution to the instance $(D,k)$ of the {\sc $GCP(\HH)$-Hitting} problem.
%Let $D$ be a graph. $D$ is in $\CC$ iff there does not exist a subgraph $F$ of $D$ that is isomorphic to some graph in $\HH^+$.
\end{lemma}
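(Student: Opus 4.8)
The plan is to establish the two implications separately; in both directions I would argue by contradiction against the existence of the relevant forbidden structure, and I would note once and for all that the budget constraint $|X|\le k$ is literally the same in the two problems, so only the structural equivalence requires work. Throughout I would treat all graph families up to isomorphism, as the paper does, and use the elementary fact that the path-completion construction is invariant under relabelling vertices, so that members of $GPC(H')$ are isomorphic to members of $GPC(H)$ whenever $H'$ is isomorphic to $H$.

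For the forward direction, I would assume $X$ is an $\HH$-SCC solution and suppose for contradiction that $D-X$ contains a subgraph $H^*$ isomorphic to some member of $GPC(\HH)=\bigcup_{H\in\HH}GPC(H)$. Then $H^*\in GPC(H)$ for some $H\in\HH$, and by the definition of $PC(H)\supseteq GPC(H)$ the graph $H^*$ is a strongly connected supergraph of $H$. Since $H^*\subseteq D-X$ is strongly connected, all of $V(H^*)$ lies in one strong component $C$ of $D-X$; as $C\cap X=\emptyset$ we have $(D-X)[C]=D[C]$, and since $V(H)\subseteq V(H^*)\subseteq C$ while $A(H)\subseteq A(H^*)\subseteq A(D-X)$, every arc of $H$ is an arc of $D[C]$. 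Thus $D[C]$ contains a copy of $H\in\HH$ inside a single strong component of $D-X$, contradicting the choice of $X$.

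For the backward direction, I would assume $X$ hits every subgraph of $D$ isomorphic to a graph in $GPC(\HH)$ and suppose for contradiction that some strong component $C$ of $D-X$ contains a subgraph $H'$ isomorphic to some $H\in\HH$. Then $D[C]=(D-X)[C]$ is strongly connected (being a strong component) and is a supergraph of $H'$, so Lemma~\ref{lem:superpc}, applied to $H'$ and $\widehat{H}:=D[C]$, produces a graph $H^*\in GPC(H')$ with $H^*\subseteq D[C]\subseteq D-X$. Since $H'$ is isomorphic to $H$, $H^*$ is isomorphic to a member of $GPC(H)\subseteq GPC(\HH)$, so $D-X$ contains a subgraph isomorphic to a graph in $GPC(\HH)$ — a contradiction.

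The only non-routine ingredient is Lemma~\ref{lem:superpc}, which is already available; everything else is bookkeeping. The point I would be most careful about is the interplay between ``being a subgraph'' and ``being isomorphic to a member of the family'', so that the isomorphism-closure of $GPC$ is invoked correctly, together with the basic observation that a strongly connected subgraph of $D-X$ must lie entirely within one strong component of $D-X$.
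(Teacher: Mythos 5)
Your proof is correct and follows essentially the same two-pronged contradiction argument as the paper: in the forward direction you use that a strongly connected subgraph of $D-X$ lives in a single strong component (so $H\subseteq H^*$ lands in one SCC), and in the backward direction you invoke Lemma~\ref{lem:superpc} to extract a good path completion inside the offending strong component. The only difference is that you make the isomorphism bookkeeping (subgraph vs.\ isomorphic copy, and the isomorphism-invariance of $GPC$) more explicit than the paper does, which is a reasonable clarification but not a different approach.
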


\begin{proof}
For the forward direction, let $X$ be a solution to the instance $(D,k)$ of $\HH$-SCC. Suppose, for the sake of contradiction, that there exists digraphs $H,H^*$, such that $H \in \HH$ and $H^* \in GPC(H)$, and $H^* \subseteq D-X$. Since $H^*$ is strongly connected (from the definition of $GPC$), there exists a strongly connected component, say $C$, of $D-X$ such that $H^* \subseteq C$. Since $H \subseteq H^*$ (from the definition of $GPC$), we conclude that $H \subseteq C$, thereby contradicting that $X$ is a solution for the problem $\HH$-SCC.
	
%Suppose there exists a subgraph $F$ of $D$ that is isomorphic to some graph in $\HH^+$. This means that $F$ is a path-completion of some graph $H \in \HH$ and thus $F$ is a strongly connected subgraph. Moreover, $F$ contains a subgraph isomorphic to $H$. Thus all the vertices of $H$ are in the same strong component of $D$ and thus $D$ is not in $\CC$.
For the backward direction, suppose that $X$ is a solution to the instance $(D,k)$ of the {\sc $GCP(\HH)$-Hitting} problem. For the sake of contradiction, suppose that there exists some strongly connected component, say $C$, of $D-X$ and some $H \in \HH$ and $H \subseteq C$. From Lemma~\ref{lem:superpc}, there exists a subgraph $H^* \in GPC(H)$ such that $H \subseteq H^* \subseteq C$. 
Since $C \subseteq D-X$ and $H^* \subseteq C$, we conclude that $H^* \subseteq D-X$, thereby contradicting that
%Thus, $H^* \in GPC(H)$ is present in $D-S$, which contradicts the fact that 
$X$ is a solution for the {\sc $GPC(\HH)$-Hitting} problem.
%	Suppose that $D \not\in \CC$, there must exist a witness subgraph $H \in \HH$ such that all vertices of $H$ are in the same strong component. Thus for any $u,v \in V(H)$ there exists a $u$-$v$ path $P_{uv}$ in $D$ and a $v$-$u$ path $P_{vu}$ in $D$. Let the collection of paths be $X = \{P_{uv},P_{vu}  \mid u,v \in V(H)\}$. The graph $F = (\bigcup_{P \in X} P) \cup H$ is a subgraph of $D$ that is strongly connected and contains $H$, hence it is a path-completion of $H$. Also note that $X$ induces a good set of paths by definition. Thus $F$ is isomorphic to some graph in $\HH^+$, proving the claim.
\end{proof}

\subsection{Discovering the structure of $GPC(\HH)$}
%\todo[inline]{VERY OPTIONAL-- May be change the notation of $\HH^+_p$ and $\HH^-_p$ to $GCP^+_p(\HH)$ and $GCP^-_p(\HH)$ respectively.}

As discussed earlier, even though for hitting problems it is relatively easier to design FPT algorithms, we cannot exploit Lemma~\ref{lem:scc_forbidden} forthright as the family $GPC(\HH)$ could contain digraphs of very large size and hence could be potentially infinite. Next, we split the family $GPC(\HH)$ into two parts such that one is a finite collection of digraphs of bounded size, and hence, exploitable by means of a branching algorithm, while the other part, which could be a potentially infinite collection, has a very special exploitable structure. Towards formalizing the above intuition, let $P \in \HH$ be some directed path of length $p$. Such a path exists because we work with the {\sc Path $\HH$-SCC} problem. Let $GPC(\HH) = \HH^+_p \uplus \HH^-_p$, such that for all the digraphs in $\HH^-_p$ there exists some witnessing collection of paths where all the witnessing paths have length at most $p-1$. Then, $\HH^+_p = GPC(\HH) \setminus \HH^-_p$. 
Recall that $h = \max_{H \in \HH} \vert V(H) \vert$.

\subsubsection{The finite sub-collection: $\HH^-_p$}
Lemma~\ref{lem:pc_bounded} concludes that the family $\HH^-_p$ is finite.

\begin{lemma}\label{lem:pc_bounded}
If $H^* \in \HH^-_p$, then $\vert V(H^*) \vert \leq h+(p-1)h^2$.
\end{lemma}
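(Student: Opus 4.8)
The statement is really a vertex-counting argument once the right witnessing collection is in hand. The key point is that an element $H^* \in \HH^-_p$ comes equipped with \emph{two} pieces of structure simultaneously: since $\HH^-_p \subseteq GPC(\HH)$, there is a base graph $H \in \HH$ and a witnessing collection $P_1,\dots,P_\ell$ with $H^* = H \cup P_1 \cup \dots \cup P_\ell$, the end-point pairs of the $P_i$ are pairwise distinct (Definition~\ref{def:goodpathcompletion}), hence $\ell \le |V(H)|^2 \le h^2$; and, by the defining property of $\HH^-_p$, we may take this collection so that in addition every $P_i$ has length at most $p-1$. So first I would fix such a base graph $H$ and such a witnessing collection, recording the two bounds $\ell \le h^2$ and $|V(P_i)| \le p$ (a directed path of length at most $p-1$ has at most $p$ vertices).

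Next I would bound $|V(H^*)|$ by a union bound. We have $V(H^*) = V(H) \cup \bigcup_{i\in[\ell]} V(P_i)$, and $|V(H)| \le h$ by definition of $h$. For each $i$, at least one end-point of $P_i$ lies in $V(H)$, so $|V(P_i) \setminus V(H)| \le |V(P_i)| - 1 \le p-1$. Therefore
\[
|V(H^*)| \;\le\; |V(H)| + \sum_{i=1}^{\ell} |V(P_i)\setminus V(H)| \;\le\; h + \ell\,(p-1) \;\le\; h + (p-1)h^2,
\]
which is exactly the claimed bound.

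\textbf{Main obstacle.} The only genuinely delicate point is justifying that the witnessing collection can be taken to be \emph{both} short (all paths of length $\le p-1$) \emph{and} good (distinct end-point pairs, so only $\le h^2$ paths) at the same time; without the bound on the number of paths the union bound is useless. I would resolve this by appealing directly to the membership $H^* \in GPC(\HH)$: the witnessing collections referred to in the definition of $\HH^-_p$ are precisely those certifying membership in some $GPC(H)$, and by Definition~\ref{def:goodpathcompletion} any such collection already has pairwise distinct end-point pairs and hence at most $|V(H)|^2$ paths. (If one only had an arbitrary short witnessing collection in the sense of Definition~\ref{def:pathcompletion}, one would additionally have to prune repeated end-point pairs; but being in $GPC$ is exactly what supplies this pruning, so no extra work is needed.) Everything else is the elementary counting above, and the degenerate cases — trivial paths of length $0$, or $p=1$ forcing $H^*=H$ — are consistent with the bound.
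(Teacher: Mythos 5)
Your proof is correct and follows essentially the same route as the paper's: fix $H\in\HH$ and a good witnessing collection with all paths of length at most $p-1$, use $\ell\le |V(H)|^2\le h^2$ from the definition of $GPC$, and bound the extra vertices contributed by each path to conclude $|V(H^*)|\le h+(p-1)h^2$. Your explicit observation that a path endpoint lies in $V(H)$ (so each path adds at most $p-1$ new vertices) is a minor sharpening of the same counting the paper does implicitly.
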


\begin{proof}
Consider any $H^* \in \HH^-_p$. Since, $\HH^-_p \subseteq GPC(\HH)$, $H^* \in GPC(H)$, for some $H \in \HH$. Also, since $H^* \in \HH^-_p$, there exists a witnessing collection of paths, say $P_1, \ldots, P_{\ell}$ such that $H^*= H \cup P_1 \cup \ldots \cup P_{\ell}$ and the length of each $P_i$ is at most $p-1$. Also, from the definition of $GPC$, $\ell \leq \vert V(H) \vert^2 \leq h^2$. Thus, $\vert V(H^*) \vert \leq h+ (p-1)h^2$.
%Note that since each subgraph in $\FF_D$ is strongly connected, it must be the case that every subgraph in $\FF_D$ is contained in a single strong component of $D$. Thus by assumption, no subgraph in $\FF_D$ contains a path $P_d$ as a subgraph. We claim that every such subgraph has size atmost $|H|+d|H|^2$. Since every subgraph in $\FF_D$ is a path-completion $H'$ of some graph $H$ in $\HH$. Thus we can write $H'$ as $H \cup P_1 \cup \ldots \cup P_l$ where $P_1,...,P_l$ are paths with endpoints in $H$. Moreover, the size of each path $P_i$ is bounded by $d$. We have seen earlier that $l \le |H|^2$. Since each path adds atmost $d$ vertices to the graph, the total number of vertices possible in the path-completion are bounded by $|H| + d|H|^2 \le p+dp^2$. 
	% First we count all good path-completions of some graph $H \in \HH$ such that all paths in the path-completion are of size less than $d$. Let $P_1,\ldots,P_l$ be the paths in the path-completion.
\end{proof}
%the all the witnessing paths of the digraphs in $\HH^-_p$ have length at most $p$ and $\HH^+_p$ contains all the other digraphs of $GPC(\HH)$, that is the digraphs which have atleast one witnessing path of length strictly greater than $p$.

From Lemma~\ref{lem:pc_bounded} one can derive an easy algorithm for computing the family $\HH^-_p$.

\begin{lemma}\label{lem:computingHneg}
Given the family of digraphs $\HH$, the family $\HH^-_p$ can be computed in $2^{O(h^6)}$ time.
%$f(h)$ time for some function $f$ that only depends on $h$.
\end{lemma}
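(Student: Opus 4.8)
\textbf{Proof proposal for Lemma~\ref{lem:computingHneg}.}
The plan is to give a brute-force enumeration algorithm that directly exploits the size bound of Lemma~\ref{lem:pc_bounded}. By that lemma, every digraph $H^* \in \HH^-_p$ has $\vert V(H^*)\vert \le h + (p-1)h^2$. Since $p \le h$ (as the directed path of length $p$ lies in $\HH$ and has $p+1$ vertices, so $p+1 \le h$, giving $p \le h-1$), this bound is at most $h + (h-2)h^2 = O(h^3)$; in particular, writing $N := h + (p-1)h^2 = O(h^3)$, every member of $\HH^-_p$ is (isomorphic to) a digraph on a vertex set of size at most $N$. So the first step is to enumerate all digraphs on the labelled vertex set $[N]$: there are at most $2^{N^2} = 2^{O(h^6)}$ of them, and we can list them in $2^{O(h^6)}$ time.

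The second step is, for each candidate digraph $K$ on at most $N$ vertices in this list, to decide whether $K \in \HH^-_p$. By definition, $K \in \HH^-_p$ iff $K \in GPC(H)$ for some $H \in \HH$ \emph{and} $K$ admits a witnessing collection of paths each of length at most $p-1$. To test this for a fixed $H \in \HH$: we iterate over all injective maps $\varphi$ from $V(H)$ into $V(K)$ (there are at most $N^h = 2^{O(h\log h)}$ such maps, negligible against $2^{O(h^6)}$), check that $\varphi$ embeds $H$ as a subgraph of $K$, and then check whether the ``extra'' arcs of $K$ not covered by $\varphi(H)$ can be decomposed into directed paths with endpoints in $\varphi(V(H))$, each of length at most $p-1$, whose union with $\varphi(H)$ is exactly $K$ and makes $K$ strongly connected. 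Concretely, since there are at most $\vert V(H)\vert^2 \le h^2$ distinct endpoint pairs and each good path completion uses at most one path per pair, it suffices to enumerate, for each of the at most $h^2$ ordered pairs $(\varphi(u),\varphi(v))$, all directed $\varphi(u)$-$\varphi(v)$ paths in $K$ of length at most $p-1$ (there are at most $N^{p-1} = 2^{O(h^3\log h)}$ of these per pair, again negligible), take the union of one choice per subset of pairs, and test equality with $K$ and strong connectivity of $K$; strong connectivity of a digraph on $\le N$ vertices is checkable in $\mathrm{poly}(N)=\mathrm{poly}(h)$ time. Since $\HH$ is finite, ranging over all $H \in \HH$ adds only a constant (depending on $\HH$) factor. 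We then also discard isomorphic duplicates among the accepted digraphs (isomorphism on $\le N$ vertices can be brute-forced in $N! = 2^{O(h^3\log h)}$ time), so that $\HH^-_p$ is returned as a set of pairwise non-isomorphic digraphs.

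The running time is dominated by the first step, the enumeration of all $\le N$-vertex digraphs, which is $2^{N^2} = 2^{O(h^6)}$; every subsequent test per candidate runs in $2^{o(h^6)}$ time, and multiplying by the $2^{O(h^6)}$ candidates and the constant $\vert\HH\vert$ still gives $2^{O(h^6)}$ overall. Correctness is immediate: the algorithm accepts exactly those digraphs (up to isomorphism) of size at most $N$ that satisfy the definition of $\HH^-_p$, and by Lemma~\ref{lem:pc_bounded} no member of $\HH^-_p$ has size exceeding $N$, so nothing is missed.

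The main obstacle is not any deep combinatorics but bookkeeping: one must be careful that the definition of $GPC(H)$ requires the \emph{endpoint pairs} of the witnessing paths to be distinct (so at most $h^2$ paths), which is what makes the per-candidate check a finite search rather than an unbounded one, and one must correctly handle the interplay between ``$K$ equals $H$ plus these paths'' and ``the internal vertices of the paths may be shared.'' A clean way to finesse the shared-internal-vertices issue is to observe that we do not actually need to reconstruct the witnessing paths explicitly: it suffices to check (i) $H$ embeds in $K$ via some $\varphi$, (ii) $K$ is strongly connected, and (iii) every arc of $K$ outside $\varphi(H)$ lies on some directed path in $K$ with both endpoints in $\varphi(V(H))$ and length at most $p-1$ — together with the size bound this is equivalent to membership in $\HH^-_p$, and each of (i)--(iii) is checkable in time polynomial in $N$ once $\varphi$ is fixed.
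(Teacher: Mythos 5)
Your main algorithm matches the paper's: enumerate all digraphs on at most $N := h + (p-1)h^2 = O(h^3)$ vertices ($2^{N^2}=2^{O(h^6)}$ of them), and for each candidate $K$ test membership in $\HH^-_p$ by iterating over $H\in\HH$, over embeddings $\varphi:V(H)\to V(K)$, and over choices of at most $h^2$ witnessing paths (one per distinct endpoint pair, each of length at most $p-1$), checking that the union with $\varphi(H)$ equals $K$ and that $K$ is strongly connected. That check is correct, and the cost of the inner enumeration is $2^{o(h^6)}$, so the total is $2^{O(h^6)}$ as claimed (though note the count of paths per pair is $N^{O(p)} = 2^{O(h\log h)}$, not $2^{O(h^3\log h)}$ as you wrote — harmless, as both are $2^{o(h^6)}$).

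However, the ``clean way to finesse'' at the end is wrong. You claim that, together with the size bound, conditions (i) $H$ embeds in $K$ via $\varphi$, (ii) $K$ is strongly connected, and (iii) every arc of $K$ outside $\varphi(H)$ lies on some directed path of length at most $p-1$ with endpoints in $\varphi(V(H))$, are equivalent to $K\in\HH^-_p$. They are not: (iii) only asks that each extra arc individually be coverable, while a good path completion requires a single coherent collection of paths with \emph{pairwise distinct} endpoint pairs whose union is all of $K$. Concretely, take $H$ to be the single arc $v\to u$, and $K$ on $\{u,v,a,b\}$ with arcs $v\to u$, $u\to a$, $a\to v$, $u\to b$, $b\to v$. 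Then $K$ is strongly connected and every arc outside $H$ lies on a length-$2$ path from $u$ to $v$, so (i)--(iii) hold with $p$ large enough; but the only simple path through $u\to a$ with endpoints in $\{u,v\}$ is $u\to a\to v$ and likewise for $u\to b$, both with endpoint pair $(u,v)$, so $K$ cannot be written as $H$ plus paths with distinct endpoint pairs — $K\notin GPC(H)$. If $\HH$ contains no other graph that embeds in $K$ (e.g.\ $\HH=\{H,P\}$ with $P$ a long path on more than four vertices), then $K\notin GPC(\HH)\supseteq\HH^-_p$, yet your test (i)--(iii) would accept $K$. Stick with your first, exhaustive check over per-pair path choices; the ``finesse'' is not a valid shortcut.
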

\begin{proof}
	From Lemma~\ref{lem:pc_bounded}, note that $\HH^-_p$ contains only digraphs of size at most $h + (p-1)h^2$ which are in $GPC(\HH)$. Thus, to enumerate the family $\HH^-_p$ it is enough to enumerate all digraphs of size at most $h + (p-1)h^2$, and then for each of them check whether it is in $GPC(\HH)$. 
	In order to check if a digraph $H^*$ is in $GPC(\HH)$, first check whether $H^*$ is strongly connected followed by guessing the partition of the digraph $H^*$ into at most $h^2 +1$ parts, say $H \cup P_1 \cup \ldots \cup P_{\ell}$, where $|V(H)| \le h$ and $\ell \leq |V(H)|^2$, and checking if $H \in \HH$ and the paths $P_i$ are of size at most $p-1$ and have distinct pair of end-points in $V(H)$. 

	Since the number of digraphs on at most $h+(p-1)h^2$ vertices is at most $2^{(h+(p-1)h^2)^2}$ and $p \leq h$, and each of the steps described above takes time at most $2^{O(h^6)}$, the running time follows.
	%proportional to the size of $H^*$, we conclude that the family $\HH^-_p$ can be enumerated in $f(h)$ time, for some function $f$ depending entirely on $h$.
\end{proof}

\subsubsection{Making the instance $\HH^-$-free}
Here, we design a branching algorithm that takes an instance $(D,k)$ of $\HH$-SCC and returns an equivalent instance $(D',k')$ of $\HH$-SCC such that $D'$ has no digraph in $\HH^-_p$ as a subgraph. We call such a digraph $\HH^-_p$-free. Also, $k' \leq k$.

\begin{lemma}\label{lem:hh-free}
	Let $(D,k)$ be an instance of $\HH$-SCC. In time $2^{O(h^6)} \cdot h^{O(k)} \cdot n^{O(h^3)}$, one can either conclude that $(D,k)$ is a no-instance of $\HH$-SCC or output at most $h^k$ instances $\{(D_1,k_1),\ldots,(D_q,k_q)\}$ of $\HH$-SCC such that for each $i \in [q]$, $D_i$ is $\HH^-_p$-free and $k_i \le k$, and $(D,k)$ is a yes-instance if and only if there exists $i \in [q]$ such that $(D_i,k_i)$ is a yes-instance.
\end{lemma}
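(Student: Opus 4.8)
The plan is to use a straightforward bounded-depth branching algorithm driven by subgraph detection. We are given an instance $(D,k)$ of $\HH$-SCC and want to produce a bounded collection of equivalent instances in which no graph of $\HH^-_p$ appears as a (not necessarily induced) subgraph. By Lemma~\ref{lem:computingHneg} we first compute the finite family $\HH^-_p$ in time $2^{O(h^6)}$, and by Lemma~\ref{lem:pc_bounded} every member of $\HH^-_p$ has at most $h+(p-1)h^2 \le h + h\cdot h^2 = O(h^3)$ vertices (using $p \le h$). So, given $D$, we can test in time $n^{O(h^3)}$ (for each $H^* \in \HH^-_p$, try all injections of $V(H^*)$ into $V(D)$) whether $D$ contains some $H^* \in \HH^-_p$ as a subgraph, and if so, find one occurrence together with its vertex set $V(F)$ with $|V(F)| \le h + (p-1)h^2$.

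The recursion is the usual hitting-set branching: if $D$ is already $\HH^-_p$-free, output $(D,k)$ as a leaf. Otherwise find an occurrence $F \subseteq D$ of some $H^* \in \HH^-_p$, and for each vertex $v \in V(F)$ recurse on $(D-v, k-1)$; if $k=0$ and $D$ still contains a member of $\HH^-_p$, declare this branch a no-instance and discard it. Correctness of the branching is immediate from Lemma~\ref{lem:scc_forbidden}: any $\HH$-SCC solution $X$ for $(D,k)$ is a $GPC(\HH)$-hitting set, hence it hits $F$ (since $F$ is isomorphic to a member of $GPC(\HH)$), so $X$ contains some $v \in V(F)$ and $X \setminus \{v\}$ is an $\HH$-SCC solution for $(D-v,k-1)$; conversely any solution produced at a leaf, together with the $\le k$ vertices deleted along the branch, is a solution for $(D,k)$. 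Every leaf instance $(D_i,k_i)$ has $D_i$ an induced subgraph of $D$ that is $\HH^-_p$-free and $k_i \le k$ by construction.

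For the bounds: the branching width is $|V(F)| \le h+(p-1)h^2 = O(h^3)$, but the parameter drops by exactly one at each branching step, so the recursion depth is at most $k$ and the number of leaves is at most $(h+(p-1)h^2)^k = h^{O(k)}$. (If one wants the literal bound ``$\le h^k$'' claimed in the statement, either absorb the polynomial-in-$h$ base into the $h^{O(k)}$ notation or branch more carefully; the essential point is $h^{O(k)}$ leaves.) At each of the $h^{O(k)}$ nodes we spend $2^{O(h^6)}$ to have $\HH^-_p$ available (computed once at the root, really) plus $n^{O(h^3)}$ time to search for an occurrence $F$, giving total running time $2^{O(h^6)} \cdot h^{O(k)} \cdot n^{O(h^3)}$, as required.

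\textbf{Main obstacle.} There is no serious mathematical obstacle here; the lemma is essentially bookkeeping once Lemmas~\ref{lem:pc_bounded}, \ref{lem:computingHneg}, and \ref{lem:scc_forbidden} are in hand. The only point needing a little care is making sure the ``$h^k$ instances'' and the exact exponents in the running time are stated consistently with the (slightly sloppy) constants: the branching factor is $h+(p-1)h^2$, not $h$, so the precise count of leaves is $(h+(p-1)h^2)^k$, which is $h^{O(k)}$ but not literally $\le h^k$; one should either adjust the statement or note that this polynomial blow-up in the base is harmless. I would also be mildly careful that detecting a fixed subgraph $H^*$ on $O(h^3)$ vertices inside $D$ is done by brute-force over injections, costing $n^{O(h^3)}$, and that this is the dominant $n$-dependence.
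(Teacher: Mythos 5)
Your proposal is correct and follows essentially the same route as the paper: compute $\HH^-_p$ via Lemma~\ref{lem:computingHneg}, locate an occurrence of some $H^*\in\HH^-_p$ by brute-force over $O(h^3)$-vertex subsets, and branch on its $\le h+(p-1)h^2$ vertices using Lemma~\ref{lem:scc_forbidden} to justify that any solution must hit the occurrence. Your observation about the leaf count is also right: the paper's recurrence should have branching factor $h+(p-1)h^2$ rather than $h$, so the literal ``$\le h^k$'' in the statement is a minor overstatement, though $h^{O(k)}$ holds and suffices for the claimed running time.
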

\begin{proof}
Let $(D,k)$ be an instance of $\HH$-SCC. 
Compute the family $\HH^-_p$ using Lemma~\ref{lem:computingHneg}.
Suppose $D$ contains a subgraph, say $F$ isomorphic to some graph in $\HH^-_p$.
%, such that $F$ is isomorphic to some digraph $H^* \in \HH^-_p$.
Since $\HH^-_p \subseteq GPC(\HH)$, from Lemma~\ref{lem:scc_forbidden}, there exists a solution $X$ of the instance $(D,k)$ of $\HH$-SCC such that $X$ contains some vertex of $F$.
%a vertex of the subgraph of $D$ isomorphic to $H^*$. 
%Let $F$ be the subgraph isomorphic to $H^*$ inn $D$. 
	For each $v \in V(F)$, the algorithm branches in the following instances: $(D-\{v\},k-1)$. Since there exists a solution containing some vertex of $v$, $(D,k)$ is a yes-instance if and only if at least one of $(D - \{v\},k-1)$, for $v \in V(F)$ is a yes-instance. The branching algorithm stops when $k \leq 0$ or when the resulting digraph has no subgraph isomorphic to a digraph in $\HH^-_p$. When $k\leq 0$, if the resulting digraph has a subgraph isomorphic to a digraph in $\HH^-_p$, then report that $(D,k)$ is a no-instance of $\HH$-SCC. This completes the description of the algorithm. For the running time analysis, since the size of the digraphs in $\HH^-_p$ is at most $h+ph^2$, $p \leq h$ and there are at most $2^{(h+(p-1)h^2)^2}$ graphs in $\HH^-_p$s, we can check whether there exists a subgraph $F$ isomorphic to a graph in $\HH^-_p$ in time $2^{(h+(p-1)h^2)^2}\cdot n^{O(h^3)}$.	
	%a subgraph $F$ isomorphic to $H^*$ can be found in $n^{O(h^3)}$ time. 
%	Furthermore since there are at most $2^{(h+(p-1)h^2)^2}$ graphs in $\HH^-_p$, we can check whether there exists a subgraph $F$ isomorphic to a graph in $\HH^-_p$ in time $2^{(h+(p-1)h^2)^2}\cdot n^{O(h^3)}$.
 Since the branching algorithm stops when $k\leq 0$, we get the following recurrence: $T(k) \leq \sum_{i \in [h]} T(k-1)$, $T(0)=1$, where $T(k)$ denotes the number of leaves in the branching tree rooted at the instance with budget parameter $k$. By induction, one can prove that $T(k) \leq h^k$. This yields the desired running time.
\end{proof}
Henceforth, we assume that the instance $(D,k)$ of $\HH$-SCC is such that $D$ is $\HH^-_p$-free.

\subsubsection{The structure of the potentially infinite sub-collection: $\HH^+_p$}
Recall that $\HH^+_p$ is a collection of digraphs in $GPC(\HH)$ which have a witnessing collection of paths where at least one path has length strictly more than $p$.
\begin{lemma}\label{lem:pc_subgraph}
For each $H^* \in \HH_p^+$, there exists a subgraph $H' \subseteq H^*$ such that $H' \in GPC(P)$.
%, where $\widehat{P}$ is a directed path on at least $p$ vertices.
(Recall $P$ is a directed path in $\HH$ that we fixed.)
\end{lemma}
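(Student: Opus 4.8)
The plan is to start from the definition: since $H^* \in \HH_p^+ \subseteq GPC(\HH)$, there is some $H \in \HH$ with $H^* \in GPC(H)$, so $H^* = H \cup P_1 \cup \dots \cup P_\ell$ for a witnessing collection of paths with distinct pairs of endpoints in $V(H)$, and moreover — because $H^* \in \HH_p^+$ — \emph{every} witnessing collection for $H^*$ contains at least one path of length $> p-1$, i.e.\ of length $\geq p$. Fix such a path, say $P_j$, of length $\geq p$ in the given witnessing collection. The key observation is that a directed path of length $\geq p$ contains, as a subgraph, a directed subpath of length exactly $p$; call it $P$. (Recall $P$ is the directed path of length $p$ in $\HH$ that we fixed, and a subpath of length $p$ of $P_j$ is isomorphic to $P$.)

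Next I would build $H'$ explicitly as a path completion of $P$. Take $H' := P \cup (\text{enough paths to make it strongly connected, drawn from } H^*)$. Concretely: let $u,v$ be the two endpoints of the subpath $P$ inside $P_j$; both $u$ and $v$ are vertices of $H^*$. Since $H^*$ is strongly connected, for every ordered pair of vertices of $P$ there is a directed path between them in $H^*$; adding all these paths (one per ordered pair with distinct endpoints, so at most $|V(P)|^2 = (p+1)^2$ of them) to $P$ yields a strongly connected supergraph $H'$ of $P$ that is a subgraph of $H^*$. By construction the witnessing paths have endpoints in $V(P)$ and distinct endpoint pairs, so $H' \in GPC(P)$ by Definition~\ref{def:goodpathcompletion}. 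The argument that $H'$ is strongly connected is exactly the one used in the proof of Lemma~\ref{lem:superpc}: vertices of $P$ reach each other via the added paths, and any vertex lying on an added path reaches and is reached from a vertex of $P$; alternatively one may simply cite Lemma~\ref{lem:superpc} applied to $P \subseteq H^*$ with $\widehat{H} = H^*$, which directly produces an $H' \in GPC(P)$ with $H' \subseteq H^*$.

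The cleanest route, in fact, is to invoke Lemma~\ref{lem:superpc} directly: once we have exhibited $P$ (a copy of the directed path of length $p$) as a subgraph of $H^*$, and since $H^*$ is strongly connected, Lemma~\ref{lem:superpc} with $H := P$ and $\widehat{H} := H^*$ yields $H' \in GPC(P)$ with $H' \subseteq H^*$, which is precisely the claim. So the only genuine content to verify is that $H^*$ really does contain a directed path of length exactly $p$ as a subgraph.

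The main obstacle is justifying that step — that $H^*$ contains $P$ as a subgraph. This is where the definition of $\HH^+_p$ is essential and somewhat delicate: membership in $\HH^+_p$ says that $H^*$ \emph{cannot} be written with all witnessing paths of length $\leq p-1$, but an arbitrary witnessing collection for $H^*$ might still only use short paths within a \emph{particular} representation while the \enquote{long path} is hidden. I would argue as follows: we are handed $H^*$ \emph{together with} a specific witnessing collection $P_1,\dots,P_\ell$ (this is part of how $\HH^+_p$ is constructed, since $GPC(\HH)$ is assembled from completions with specified path collections); since $H^* \notin \HH^-_p$, this particular collection is not one of the \enquote{all short} type, but one must be careful that it is the collection under consideration that has a long path. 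If instead $\HH^+_p$ is defined purely as a set of digraphs (forgetting the collection), then I would instead reason: any witnessing collection for $H^*$ — in particular a minimal one — must contain a path of length $\geq p$, because otherwise $H^*$ would satisfy the defining condition of $\HH^-_p$; then that long path contains a subpath isomorphic to $P$. Either way, once $P \subseteq H^*$ is established, the rest follows immediately from Lemma~\ref{lem:superpc}.
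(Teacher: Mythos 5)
Your proof is correct and takes essentially the same route as the paper: both identify a long witnessing path in $H^*$ (which exists because $H^* \notin \HH^-_p$), extract a length-$p$ subpath isomorphic to $P$, and then complete it to a strongly connected subgraph of $H^*$. The only cosmetic difference is that you invoke Lemma~\ref{lem:superpc} to produce the good path completion, whereas the paper constructs one directly as $P \cup P'$ for a single return path $P'$ from the endpoint of $P$ back to its start — which is in fact just a one-step instance of the argument in Lemma~\ref{lem:superpc}.
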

\begin{proof}
Since $H^* \in \HH^+_p$, let $H^*$ be a path-completion of $H \in \HH$ where one of the witnessing paths has length at least $p$. That is, $H^* =H \cup P_1 \cup \ldots \cup P_{\ell}$, where there exists $i \in [\ell]$ such that $\vert V(P_i) \vert \geq p$. Without loss of generality, let $i =1$. Since the length of $P$ is $p$, $P$ is a subpath of $P_1$. Let $P$ be a path from $u$ to $v$ in $H^*$.
%Let $u$ and $v$ be the endpoints of $P_1$ in $V(H)$. 
 Since $H^*$ is strongly connected, there exists another path say $P'$ from $v$ to $u$. Then $P^*= P \cup P'$ is a closed walk in $H^*$. Observe that $P^*$ is strongly connected. Also by construction, $P^* \in GPC(P)$ (with $P'$ being the witnessing path). Since $P^* \subseteq H^*$, we are done.
%We claim that $H'$ contains a subgraph that is a path-completion of $P_d$. The path $P_i$ connects some vertex $u \in H$ to some vertex $v \in H$. Due to strong connectivity of $H'$, there exists another path $P' \in H'$ from $v$ to $u$. Thus $X=P_1 \cup P_2$ forms a closed walk that contains a subpath of length $d$. This closed walk is the desired path-completion of $P_d$.
\end{proof}

Combining Lemmas~\ref{lem:pc_subgraph} and~\ref{lem:scc_forbidden} we get the following lemma.

\begin{lemma}\label{lem:enough_hit_pc}
Let $(D,k)$ be an instance of {\sc Path $\HH$-SCC} such that $P \in \HH$ is a directed path of length $p$ and $D$ is $\HH^-_p$-free. Then, $(D,k)$ is a yes-instance of {\sc Path $\HH$-SCC} if and only if it is a yes-instance of {\sc $GPC(P)$-Hitting}.
%, where $P$ is some path in $\HH$ of length $p$.
\end{lemma}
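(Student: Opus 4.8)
The plan is to chain together Lemma~\ref{lem:scc_forbidden}, the partition $GPC(\HH) = \HH^+_p \uplus \HH^-_p$, the hypothesis that $D$ is $\HH^-_p$-free, and Lemma~\ref{lem:pc_subgraph}. The guiding observation is that once $D$ (and hence every subgraph of $D$, in particular every $D-X$) contains no subgraph isomorphic to a member of $\HH^-_p$, the only graphs of $GPC(\HH)$ that can still arise as subgraphs are those in $\HH^+_p$; and by Lemma~\ref{lem:pc_subgraph} every member of $\HH^+_p$ contains a subgraph isomorphic to some member of $GPC(P)$. Thus hitting all $GPC(P)$-subgraphs already hits all $GPC(\HH)$-subgraphs, and Lemma~\ref{lem:scc_forbidden} translates this back to the strong-component formulation.

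For the forward direction I would first note that since $P \in \HH$, by definition $GPC(P) \subseteq GPC(\HH)$. Hence if $X$ with $|X|\le k$ is a solution for $(D,k)$ of {\sc Path $\HH$-SCC}, then by Lemma~\ref{lem:scc_forbidden} it is a solution for $(D,k)$ of {\sc $GPC(\HH)$-Hitting}, i.e.\ $D-X$ contains no subgraph isomorphic to any member of $GPC(\HH)$; in particular $D-X$ contains no subgraph isomorphic to any member of $GPC(P)$, so $X$ is a solution for {\sc $GPC(P)$-Hitting}. (The $\HH^-_p$-freeness hypothesis is not even needed for this direction.)

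For the backward direction, suppose $X$ with $|X|\le k$ is a solution for {\sc $GPC(P)$-Hitting}, so $D-X$ has no subgraph isomorphic to a member of $GPC(P)$. By Lemma~\ref{lem:scc_forbidden}, it suffices to prove that $X$ is a solution for {\sc $GPC(\HH)$-Hitting}, i.e.\ that $D-X$ has no subgraph isomorphic to any member of $GPC(\HH) = \HH^+_p \uplus \HH^-_p$. Since $D$ is $\HH^-_p$-free and $D-X$ is a subgraph of $D$, the graph $D-X$ has no subgraph isomorphic to a member of $\HH^-_p$. For $\HH^+_p$, suppose toward a contradiction that $D-X$ contains a subgraph $F$ isomorphic to some $H^* \in \HH^+_p$. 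By Lemma~\ref{lem:pc_subgraph} there is a subgraph $H' \subseteq H^*$ with $H' \in GPC(P)$; transporting $H'$ along the isomorphism $F \cong H^*$ yields a subgraph of $F \subseteq D-X$ isomorphic to $H' \in GPC(P)$, contradicting that $X$ is a solution for {\sc $GPC(P)$-Hitting}. Hence $D-X$ has no subgraph isomorphic to any member of $GPC(\HH)$, so $X$ is a solution for {\sc $GPC(\HH)$-Hitting}, and by Lemma~\ref{lem:scc_forbidden} a solution for $(D,k)$ of {\sc Path $\HH$-SCC}.

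I do not expect a real obstacle here: the two nontrivial ingredients (Lemma~\ref{lem:scc_forbidden}, which relates $\HH$-SCC deletion to $GPC(\HH)$-hitting, and Lemma~\ref{lem:pc_subgraph}, which extracts a $GPC(P)$-subgraph from every member of $\HH^+_p$) are already established, and what remains is bookkeeping. The only point that warrants a moment's care is distinguishing ``isomorphic to a subgraph'' from ``subgraph'' when pulling $H'$ back through the isomorphism witnessing $F\cong H^*$; this is routine, since subgraph containment is preserved under isomorphism.
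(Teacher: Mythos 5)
Your proof is correct and follows essentially the same route as the paper's: use Lemma~\ref{lem:scc_forbidden} to translate between the SCC and hitting formulations, observe $GPC(P)\subseteq GPC(\HH)$ for the forward direction, and for the backward direction split $GPC(\HH)$ into $\HH^-_p$ (handled by $\HH^-_p$-freeness) and $\HH^+_p$ (handled via Lemma~\ref{lem:pc_subgraph}). Your write-up is somewhat more explicit than the paper's terse version, but there is no substantive difference.
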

\begin{proof}
Recall that $GCP(\HH) = \HH^-_p \uplus \HH^+_p$. For the forward direction, let $X$ be a solution  to the instance $(D,k)$ of {\sc Path $\HH$-SCC}. Since $P \in \HH$, from Lemma~\ref{lem:scc_forbidden}, $X$ is also a solution to the instance $(D,k)$ of {\sc $GPC(P)$-Hitting}. For the backward direction, let $X$ be a solution to the instance $(D,k)$ of {\sc $GPC(P)$-Hitting}. We first prove that $X$ is also a solution of {\sc $\HH^+_p$-Hitting}. This follows from Lemma~\ref{lem:pc_subgraph}. Since $X$ is a solution of {\sc $\HH^+_p$-Hitting}, and $D$ is $\HH^-_p$-free, $X$ is also a solution of $\HH$-SCC from Lemma~\ref{lem:scc_forbidden}. This proves the lemma.
%Since $D$ is $\HH^-_p$-free, from Lemma~\ref{lem:scc_forbidden}, $(D,k)$ is a \yes{} instance of $\HH$-SCC if and only if it is a \yes{} instance of {\sc $\HH^+_p$-Hitting}. 
%From Lemma~\ref{lem:pc_subgraph}, every set $S\subseteq V(D)$ that hits the digraphs in $\HH^+_p$ also hits the digraphs in $GPC(P)$. Also, since $P \in \HH$, from Lemma~\ref{lem:scc_forbidden}, any solution of the $\HH$-SCC problem, hits all digraphs in $GPC(P)$. This proves the lemma.
\end{proof}

Combining Lemmas~\ref{lem:enough_hit_pc} and~\ref{lem:scc_forbidden}, we get the following lemma.

\begin{lemma}\label{lem:finalP}
Let $D$ be a $\HH^-_p$-free digraph. Then, $(D,k)$ is a yes-instance of $\HH$-SCC if and only if it is a yes-instance of $\{P\}$-SCC.
\end{lemma}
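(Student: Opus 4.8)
The statement I want is: for a $\HH^-_p$-free digraph $D$, the instance $(D,k)$ is a yes-instance of $\HH$-SCC iff it is a yes-instance of $\{P\}$-SCC, where $P$ is the fixed directed path of length $p$ in $\HH$. The natural route is to chain the equivalences already established. First I would invoke Lemma~\ref{lem:enough_hit_pc}: since $D$ is $\HH^-_p$-free, $(D,k)$ is a yes-instance of {\sc Path $\HH$-SCC} (i.e.\ $\HH$-SCC with this $\HH$) if and only if it is a yes-instance of {\sc $GPC(P)$-Hitting}. Next I would apply Lemma~\ref{lem:scc_forbidden} with the single-path family $\{P\}$ in place of $\HH$: that lemma says $X$ is a solution to $(D,k)$ of $\{P\}$-SCC iff $X$ is a solution to $(D,k)$ of {\sc $GPC(\{P\})$-Hitting}, and $GPC(\{P\}) = GPC(P)$ by definition. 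Composing the two biconditionals gives exactly the claim.

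So the proof is essentially: $(D,k)$ yes for $\HH$-SCC $\iff$ (Lemma~\ref{lem:enough_hit_pc}, using $\HH^-_p$-freeness) $(D,k)$ yes for {\sc $GPC(P)$-Hitting} $\iff$ (Lemma~\ref{lem:scc_forbidden} applied to the family $\{P\}$) $(D,k)$ yes for $\{P\}$-SCC. The one wrinkle I would want to double-check is that Lemma~\ref{lem:scc_forbidden} is stated for an arbitrary finite family $\HH$ of digraphs, so instantiating it at $\{P\}$ is legitimate; here $GPC(\{P\}) = \bigcup_{H \in \{P\}} GPC(H) = GPC(P)$, matching the target of Lemma~\ref{lem:enough_hit_pc}. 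No new combinatorial argument is needed.

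The main (minor) obstacle is purely a matter of bookkeeping: making sure the two lemmas are being applied to compatible notions of ``solution'' and ``instance'' — in particular that $P$ having length $p$ and being an element of $\HH$ is exactly the hypothesis Lemma~\ref{lem:enough_hit_pc} requires, and that $\HH^-_p$-freeness of $D$ is carried as a hypothesis. There is no hard technical step; the content was already done in Lemmas~\ref{lem:superpc}, \ref{lem:scc_forbidden}, \ref{lem:pc_subgraph} and \ref{lem:enough_hit_pc}. I would therefore write the proof as a two-line chain of iff's citing Lemma~\ref{lem:enough_hit_pc} and Lemma~\ref{lem:scc_forbidden}.
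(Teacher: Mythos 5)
Your proposal matches the paper's proof exactly: both chain Lemma~\ref{lem:enough_hit_pc} (yielding the equivalence between $\HH$-SCC on $\HH^-_p$-free digraphs and {\sc $GPC(P)$-Hitting}) with Lemma~\ref{lem:scc_forbidden} instantiated at the family $\{P\}$. In fact your write-up is slightly more careful than the paper's, whose proof contains a typo citing Lemma~\ref{lem:finalP} itself where it plainly means Lemma~\ref{lem:enough_hit_pc}.
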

\begin{proof}
From Lemma~\ref{lem:finalP}, $(D,k)$ is a yes-instance of $\HH$-SCC if and only if it is a yes-instance of {\sc $GPC(P)$-Hitting}. Also, from Lemma~\ref{lem:scc_forbidden}, $(D,k)$ is a yes-instance of {\sc $GPC(P)$-Hitting} if and only if it is a yes-instance of $\{P\}$-SCC. Combining both the statements above, we prove the lemma.
\end{proof}

% \todo[inline]{Give the correct label below for the theorem of path- h-scc, rooted-h-scc from intro. also the precise running time of rooted from intro}

\subsection{Proof of Theorem~\ref{thm:pathSCCDeletion}}

Let $(D,k)$ be an instance of {\sc Path $\HH$-SCC} and let $P \in \HH$ be a directed path of length $p$. Let $h = \max_{H \in \HH} \vert V(H) \vert$.
From Lemma~\ref{lem:hh-free}, in time $2^{O(h^6)} \cdot h^{O(k)} \cdot n^{O(h^3)}$, we get a set of instances $\{(D_1,k_1),\ldots,(D_q,k_q)\} $, such that for each $i \in [q]$, $D_i$ is $\HH^-_p$-free and $k_i \leq k$, and $(D,k)$ is a yes-instance if and only if for some $i \in [q]$, $(D_i,k_i)$ is a yes-instance. From Lemma~\ref{lem:finalP}, we conclude that it is enough to solve the $\{P\}$-SCC problem on these instances to obtain the solution. Since $P$ is a rooted digraph, from Theorem~\ref{thm:rootedSCCDeletion}, the problem can further be solved in $2^{O(k^{3}\log k)} \cdot n^{O(p)}$. Thus, we get an algorithm with running time $2^{O(k^{3}\log k)} \cdot h^{O(k)} \cdot 2^{O(h^6)} \cdot n^{O(h^3)}$.

\section{Faster FPT algorithms}
\subsection{Faster FPT algorithm for {\sc 1-Out-regular Deletion}}\label{sec:fasterAlgorithmOutRegularDeletion}

In this section, we give an algorithm for {\sc Rooted $\cH$-SCC Deletion}. 
%Theorem~\ref{thm:rootedSCCDeletion} 
that runs in time $O^*(2^{O(k \log k)})$ when $\cH$ contains only the out-directed {2-star}, that is we prove Theorem~\ref{thm:oneOutRegularDeletion}.

\oneOutRegularDeletion*

In the following (Definition~\ref{def:outRegularniceInstances}, Observation~\ref{obs:uniqueReachableVertex}, Lemma~\ref{lem:disjointnessLemma}-~\ref{lem:mainOneOutRegularDeletionAlgorithm}), fix $\tau=(D,\S=(S_1,\dots$, $S_q),W,k)$ to be an instance of {\sc r-$\cH$-SCC PCC}. 
 Recall that a solution for $\tau$ is a set $X\subseteq V(D)\setminus W$ of size at most $k$ that  intersects all $S_i$-$S_j$ paths in $D$ for every $j>i$ such that every non-trivial strongly connected component of $D-X$ is $\cH$-free and hence, is a cycle.

We have the following specialization of Definition~\ref{def:niceInstances} to our current choice of $\cH$. 

\begin{definition}
\label{def:outRegularniceInstances}

	A solution $X$ for the instance $\tau$ is said to be {\em nice} if for every  triple $u,v,w\in V(D)$  such that $v,w\in N^+(u)$ and for every $i\in [q]$, one of the following holds.
	
	\begin{enumerate}
	\item $X$ intersects $\{u,v,w\}$.
	\item $u\notin R(S_i,X)$.
	\item  There is no $v$-$S_i$ path in $D-X$ or no $w$-$S_i$ path in $D-X$.
	\end{enumerate}

\end{definition}

Using Lemma~\ref{lem:DisjCompressionToDisjPartionedCompression}, Theorem~\ref{thm:oneOutRegularDeletion} can be obtained as a consequence of the following lemma.

\begin{lemma}\label{lem:fullMainOneOutRegularDeletionAlgorithm}
There is an algorithm that, given $\tau$, runs in time $2^{O(k\log k)} \cdot n^{O(1)}$ and either returns a solution for $(D,k)$ or correctly concludes that there is no nice solution for $\tau$.
\end{lemma}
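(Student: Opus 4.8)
By Lemma~\ref{lem:DisjCompressionToDisjPartionedCompression} it suffices to solve the partitioned‑compression instance $\tau=(D,\S=(S_1,\dots,S_q),W,k)$, and the plan is to do so with a recursive branching algorithm whose recursion tree has depth at most $k$ and branching $O(k)$. The base cases are: if $k<0$, report that there is no nice solution; if every non‑trivial strong component of $D$ is a cycle and $D$ has no $S_i$--$S_j$ path with $i<j$, return $\emptyset$. Otherwise any solution for $\tau$ --- in particular any nice solution --- must intersect every $S_1$--$(W\setminus S_1)$ path, so if $\lambda_D(S_1,W\setminus S_1)>k$ we report no. All the remaining work reduces to computing, in time $2^{O(k)}\cdot n^{O(1)}$, a \emph{branch set} $B\subseteq V(D)$ with $|B|=O(k)$ such that, if $\tau$ admits a nice solution, then some nice solution for $\tau$ contains a vertex of $B$; then we recurse on $(D-y,\S,W,k-1)$ for every $y\in B$, return $\{y\}\cup X'$ for the first sub‑instance yielding a solution $X'$, and report that no nice solution exists if all fail. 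Correctness uses the hereditary property of nice solutions (a nice solution of $\tau$ containing $y$ restricts to a nice solution of $(D-y,\S,W,k-1)$), and since each recursive call drops $k$ by one the recursion tree has $k^{O(k)}=2^{O(k\log k)}$ nodes, each doing $2^{O(k)}\cdot n^{O(1)}$ work, which gives the stated running time.

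\smallskip
\noindent The construction of $B$ rests on the structure exposed by Observation~\ref{obs:uniqueReachableVertex}: if $C$ is an $S_1$--$(W\setminus S_1)$ separator such that every vertex of $R:=R_D(S_1,C)$ has out‑degree at most $1$ in $D$, then $D[R]$ is a functional graph --- each vertex of $R$ has a \emph{unique} forward walk, which either terminates at a sink of $R$, enters a unique cycle inside $R$, or leaves $R$ at a unique vertex of $N^+(R)$ --- and in particular $R$ is root‑$\cH$‑free (it contains no out‑star). Building on this, the key new ingredient, Lemma~\ref{lem:disjointnessLemma}, asserts that whenever $\tau$ has a nice solution it has a nice solution $X$ whose trace $X\cap R$ is contained in an explicitly computable set $Z$ of $O(|N^+(R)|)=O(k)$ vertices: informally, the only vertices of $R$ that a nice solution ever needs to delete are a bounded number of ``entry'' vertices (those whose unique forward walk leaves $R$) together with ``cycle'' vertices reachable from $S_1$. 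The proof will start from an arbitrary nice solution $X_0$ and exchange $X_0\cap R$ for this canonical $O(k)$‑size set, using pushing arguments in the spirit of Lemma~\ref{lem:pushing_property} and of the proof of Lemma~\ref{lem:pushingLemmaOne} to verify that the exchange creates no new out‑star inside a strong component of $D-X$, no new $S_i$--$S_j$ path, and preserves niceness.

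\smallskip
\noindent Equipped with Lemma~\ref{lem:disjointnessLemma}, the algorithm computes a ``furthest'' separator $C$ of size at most $k$ with the out‑degree‑$\le 1$ property above: starting from the closest minimum $S_1$--$(W\setminus S_1)$ separator it advances the cut using Lemma~\ref{lem:tightSuccessorComputation} among minimum separators, and switches to the $O(4^k)$ important separators of Lemma~\ref{lem:imp_sep_bound} when the flow value must be increased, always keeping the reachable region out‑degree‑$\le 1$; this costs $2^{O(k)}\cdot n^{O(1)}$. We then set $B:=C\cup Z$ with $Z$ from Lemma~\ref{lem:disjointnessLemma} applied to $C$. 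For correctness, suppose $\tau$ has a nice solution; Lemma~\ref{lem:disjointnessLemma} yields a nice solution $X$ with $X\cap R\subseteq Z$. If $X$ meets $Z$ we are done; otherwise $X\cap R=\emptyset$, so all of $R$ is reachable from $S_1$ in $D-X$, and (taking $C=N^+(R)$ minimal, without loss of generality) we argue that either $X$ meets $N^+(R)\subseteq C$, or $R_D(S_1,X)\supsetneq R$ is again a functional reachable region separated by a set of size $\le k$, contradicting the choice of $C$ as a maximal such separator --- here one uses that a vertex of out‑degree $\ge 2$ reachable from $S_1$ in $D-X$ would, by niceness, have an out‑neighbour with no path to $S_1$, hence (being itself reachable from $S_1$) with no path to $W$, i.e.\ lying in $\forwardshadow(X)$, a configuration that is absorbed when $C$ is advanced maximally. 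Edge cases ($W\setminus S_1=\emptyset$, or $C=\emptyset$) are handled by the same machinery, recursing on the part of $D$ not reachable from $S_1$. In all cases some nice solution meets $B$.

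\smallskip
\noindent The main obstacle is Lemma~\ref{lem:disjointnessLemma}: pinning down exactly which $O(k)$ vertices of the functional graph $D[R]$ form the candidate set $Z$, and proving that a nice solution can always be re‑routed so as to touch $R$ only inside $Z$ without introducing new obstructions. This is precisely where the out‑degree‑$\le 1$ structure --- the uniqueness of the forward destination of each vertex of $R$ --- is indispensable, and where the niceness condition of Definition~\ref{def:outRegularniceInstances} is invoked to control the behaviour of the solution on the $S_1$‑side of $R$; the remaining steps (the furthest‑separator computation, the recursion bookkeeping, and the $2^{O(k)}\cdot n^{O(1)}$ per‑node running time) are comparatively routine.
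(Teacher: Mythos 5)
Your high‑level idea — exploit the out‑degree‑$\le 1$ structure of the reachable side to show a nice solution can be rerouted so that its trace is confined to a small, computable set — is indeed the heart of the paper's argument, and your reading of Observation~\ref{obs:uniqueReachableVertex} and the role of $close(S_1,R)$ matches Lemma~\ref{lem:disjointnessLemma}~(1). However, there is a genuine gap in the recursion you propose that the paper's algorithm is specifically designed to sidestep.

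You assume that in every non‑trivial subcase one can compute a set $B$ of size $O(k)$ such that \emph{some nice solution contains a vertex of $B$}, and that one may then recurse on $(D-y,\S,W,k-1)$ for $y\in B$. This forces the recursion depth to be at most $k$. But such an $O(k)$ branch set does not always exist: a nice solution $X$ may be disjoint from $R$, disjoint from $C=N^+(R)$, and more generally may not intersect $C\cup close(S_1,R)$ at all, simply because its reachable region from $S_1$ extends past $C$ without touching it. Your attempt to derive a contradiction from this — arguing that $R_D(S_1,X)\supsetneq R$ would contradict the ``maximality'' of $C$, and that any out‑degree‑$\ge 2$ vertex encountered must have an out‑neighbour in $\forwardshadow(X)$ which is ``absorbed when $C$ is advanced maximally'' — does not hold up. A vertex of $C$ being reachable from $S_1$ in $D-X$ yields no solution vertex to branch on, and having a vertex in $\forwardshadow(X)$ does not by itself put any vertex of $C\cup close(S_1,R)$ into $X$ (in the general {\sc r-$\cH$-SCC} algorithm this situation is handled via {\sc Pushing-Routine-2} with a $4^k$‑sized important‑separator family, which is precisely the overhead the faster algorithm is trying to avoid, and even there the result is a larger set, not a smaller one).

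The paper's algorithm for this case (Lemma~\ref{lem:mainOneOutRegularDeletionAlgorithm}) therefore does \emph{not} rely on a delete‑a‑vertex‑each‑step branch set. It maintains two auxiliary sets $T$ (vertices guessed unreachable from $S_1$ in $D-X$) and $\L$ (vertices guessed to lie in $\forwardshadow(X)$), and some of its branches modify $T$, $\L$, or even $S_1$ (Lemma~\ref{lem:disjointnessLemma}~(3)) without decreasing $k$ at all. Progress is instead measured by $3k-\lambda(S_1,T)-\lambda(\L,T\cup S_1)$, which strictly drops in every branch (via Lemmas~\ref{lem:closest_sep_lambda_inc} and \ref{lem:disjointnessLemma}) and is at most $3k$, so the recursion tree has depth $O(k)$ and branching factor $O(k)$, giving $2^{O(k\log k)}\cdot n^{O(1)}$. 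Your plan has no analogue of the branches that increase the $\lambda$ values, and without them the correctness argument for the exhaustiveness of your branching breaks down. This is the missing ingredient: the algorithm must be allowed to ``advance the guess'' about reachability without spending budget, and must track this guess explicitly through $T$ and $\L$.
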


The rest of Section~\ref{sec:fasterAlgorithmOutRegularDeletion} is therefore devoted to proving Lemma~\ref{lem:fullMainOneOutRegularDeletionAlgorithm}.
We assume without loss of generality that for every $v \in V(D)$,  either $v$ lies in a strongly connected component of $D$ that intersects $S_1$ and contains at least one vertex of out-degree at least 2 or $v$ can reach $W\setminus S_1$ in $D$.
This can be ensured by a straightforward preprocessing routine that computes the strongly connected components of $D$ and deletes vertices that do not satisfy these properties. The correctness of this step follows from the fact that the deleted vertices do not participate in minimal solutions for the given instance and moreover, the non-existence of a nice solution in the reduced instance is not affected by adding back the deleted vertices. 
We begin with the following simple observation regarding graphs with maximum out-degree 1. 

\begin{observation}\label{obs:uniqueReachableVertex}
Let $R\subseteq V(D)$ be such that every vertex in $R$ has out-degree at most 1 in $D$ and let $Z\subseteq N^+[R]$. Then, each vertex of $R$ can reach at most one vertex of $Z$ via paths whose internal vertices (if there are any) are contained in $R\setminus Z$.
\end{observation}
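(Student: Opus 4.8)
The statement is a simple structural fact about digraphs of out-degree at most one, and I would prove it directly by contradiction. Suppose, for the sake of contradiction, that some vertex $r \in R$ can reach two distinct vertices $z_1, z_2 \in Z$ via paths $P_1$ (from $r$ to $z_1$) and $P_2$ (from $r$ to $z_2$), where the internal vertices of each $P_i$ lie in $R \setminus Z$. The key observation is that every internal vertex of either path lies in $R$ and hence has out-degree exactly one in $D$ (it cannot have out-degree zero, since it has a successor on the path), so its out-neighbour is uniquely determined.

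First I would argue that $P_1$ and $P_2$ must begin by diverging immediately or share a common prefix. Consider the longest common prefix of $P_1$ and $P_2$ starting at $r$: let $w$ be the last vertex on this common prefix. Since the endpoints $z_1 \neq z_2$, at least one of the paths continues past $w$, and they continue differently — that is, either $w$ has two distinct successors used by the two paths, or $w = z_i$ for one $i$ while the other path continues. In the first case, $w$ is an internal vertex of at least one of the paths (it has an outgoing path-edge on that side), hence $w \in R$, hence $w$ has out-degree exactly one, contradicting that it has two distinct successors on $P_1$ and $P_2$. In the second case, say $w = z_1 \in Z$; but then $z_1$ is an internal vertex of $P_2$ (it appears before $z_2 \neq z_1$ on $P_2$), contradicting that all internal vertices of $P_2$ lie in $R \setminus Z$, which is disjoint from $Z$.

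The only remaining subtlety is the degenerate situation where the common prefix is trivial, i.e. $w = r$ itself. If $r \in Z$, then $r$ is the sole endpoint reachable from $r$ via a path with no internal vertices outside $R \setminus Z$ — actually a path from $r$ of length $0$ reaches $r$, and any longer path would have $r$ as an internal vertex, forbidden since $r \in Z$; so $r$ reaches at most the one vertex $r$ of $Z$, and we are done. If $r \notin Z$, then $r \in R$ (as $R \subseteq N^+[R]$ is not assumed, but $r$ was chosen from $R$), so $r$ has out-degree at most one; since both $P_1$ and $P_2$ leave $r$, they must leave via the same edge, so the common prefix is nontrivial, contradicting $w = r$ unless one of the paths has length $0$ — but a length-$0$ path ends at $r \notin Z$, so it is not one of $P_1, P_2$. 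In every case we reach a contradiction, so each vertex of $R$ reaches at most one vertex of $Z$ via such paths.

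I do not anticipate a genuine obstacle here; the argument is entirely elementary once one notes that out-degree-one forces a unique continuation and that internal vertices are barred from $Z$. The only thing requiring care is the bookkeeping of the degenerate cases (zero-length paths, $r \in Z$), which the case analysis above handles.
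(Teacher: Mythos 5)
Your proof takes the same approach as the paper's --- contradiction via uniqueness of continuation in an out-degree-one digraph --- but you spell out the case analysis in full, whereas the paper records only the punchline: \emph{if this were not true, then there would be a vertex in $R$ with at least two out-neighbours in $D$}. Your main dichotomy (the last vertex $w$ of the longest common prefix either has two distinct path-successors, or equals one of $z_1,z_2$) is a correct decomposition and the argument in each branch is sound for the case $r\notin Z$.

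There is, however, a flaw in your handling of the $r\in Z$ sub-case. You assert that ``any longer path would have $r$ as an internal vertex, forbidden since $r\in Z$.'' This is not so: a path that \emph{starts} at $r$ has $r$ as an end-vertex, not an internal vertex, so a positive-length path from $r\in Z$ to some $z\in Z\setminus\{r\}$ with all internal vertices in $R\setminus Z$ is perfectly admissible. The correct way to dispose of this case, for positive-length paths, is precisely the argument you give a few lines later for $r\notin Z$: since $r\in R$ has out-degree at most one, both $P_1$ and $P_2$ leave $r$ along the same arc, so the common prefix has length at least one and $w\neq r$, after which the usual dichotomy applies. (If you instead also admit length-$0$ paths, the observation as literally stated actually \emph{fails} when $r\in Z$ and $N^+(r)\cap Z\neq\emptyset$: $r$ then reaches both itself and its out-neighbour with no internal vertices at all; your claim that ``$r$ reaches at most the one vertex $r$ of $Z$'' does not rescue it.) Fortunately, both invocations of the observation in the paper --- with $Z=N^+(R)$, which is disjoint from $R$, and with $Z=\mathrm{close}(S_1,R)\subseteq R\setminus S_1$ applied to a source $s\in S_1$ --- have the source vertex outside $Z$, so this corner never arises in practice; but your proof should treat it correctly or explicitly restrict to $r\notin Z$.
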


\begin{proof}
If this were not true, then there would be a vertex in $R$ with at least two out-neighbours in $D$, a contradiction to the premise.
\end{proof}

In particular, the above observation implies that each vertex of $R$ can reach at most one vertex of $N^+(R)$ via paths whose internal vertices are contained in $R$. 
Motivated by  Observation~\ref{obs:uniqueReachableVertex}, we define the following notation.

\begin{definition}\label{def:closeVertices}
	 Let $S_1\subseteq R\subseteq V(D)$ be such that every vertex in $R$ (i) is reachable from $S_1$ in $D[R]$ and (ii) has out-degree at most 1 in $D$. For every $s\in S_1$, we define  ${\emph{close}(s,R)}=(R\setminus S_1)\cap N^-(s)$. The notation is extended to subsets of $S_1$ in a natural way. 
\end{definition}

As a consequence of Observation~\ref{obs:uniqueReachableVertex}, we have that each $s\in S_1$ can reach at most one vertex of $close(S_1,R)$ via a path internally vertex-disjoint from $close(S_1,R)$ (and hence also disjoint from $S_1$). Therefore, $|close(S_1,R)|\leq |S_1|$.	

%\begin{proof}
%Suppose to the contrary that there is a vertex $s\in R$ that cannot reach $N^+(R)$. Recall that either $s$ lies in a strongly connected component of $D$ that intersects $S_1$ and contains at least one vertex of out-degree at least 2 or $s$ can reach $W\setminus S_1$ in $D$.
%% via paths whose internal vertices are contained in $R$. 
%%Since every vertex in $D$ lies in a strongly connected component of $D$ that contains at least one vertex of out-degree at least 2, it must be the case that $s$ lies on a closed walk in $D$ that contains a  vertex of out-degree at least 2. Since every vertex in $R$ has  out-degree at most 1 in $D$, 
%
%In the former case,   $s$ lies on a closed walk in $D$ that contains a  vertex of out-degree at least 2. Therefore, in either case, we 
%infer that $s$ can reach 
%at least one vertex in $V(D)\setminus R$, implying that $s$ can indeed reach $N^+(R)$. Observation~\ref{obs:uniqueReachableVertex} already guarantees that at most one vertex of $N^+(R)$ can be reachable from $s$ by such paths. This completes the proof of the observation.\end{proof}

 Fix a set  $R$ satisfying the premise of this observation. For every $v\in R\setminus S_1$ that can reach a vertex $s\in S_1$ (which must then be unique) via a path contained in $R$ and internally vertex-disjoint from $S_1$, we denote by $\partial(v)$ the singleton set containing the unique vertex of $close(s,R)$ that lies on this $v$-$s$ path in $D[R]$. For every $v\in R\setminus S_1$ that can reach a vertex of $N^+(R)$ via a path internally vertex-disjoint from  $N^+(R)\cup S_1$, we denote by $\partial(v)$ the singleton set containing this unique vertex of $N^+(R)$. For every other $v\in R\setminus S_1$, it must be the case that $v$ cannot reach reach a vertex of $N^+(R)$ via a path internally vertex-disjoint from  $N^+(R)$ and so, we set 
$\partial(v)=\emptyset$.

In other words, for every $v\in R\setminus S_1$, we define $\partial(v)$ as follows. 
We consider the unique maximal path contained in $D[R]$ that starts at $v$ and is disjoint from $S_1$. Suppose that this path terminates at the vertex $w$. If $N^+(w)$ is empty or only comprises $v$, then $\partial(v)$ is defined as $\emptyset$. Otherwise, $\partial(v)=N^+(w)$. 

%$w\in N^-(S_1)$, then 
%
%$\in N^-(S_1\cup N^+(R))$. Then, $\partial(v)$ is defined as $N^+(w)$. Otherwise, $\partial(v)=\emptyset$. 
%%\todo[inline]{Shoudnt it be $N^-(close(S_1,R) \cup N^+(R))$?}
%%\red{Say something about the case that $v$ lies in a sink scc that's a cycle.}
%%
%
%
%
%
%, we denote by $\partial(v)$ the unique vertex of $N^+(R)$ reachable from $v$ via paths with internal vertices disjoint from $N^+(R)$.  Similarly, for every $v\in R$, that can reach a vertex of $S_1$ via a path disjoint from $N^+(R)$, we denote by $\chi(v)$ this unique vertex of $S_1$. The uniqueness follows from Observation~\ref{obs:uniqueReachableVertexConsequence}. For the remaining vertices in $R$, we define $\chi(v)=\partial(v)$.
 For a set $R'\subseteq R$, we denote by $\partial(R')$ 
% and 
% $\chi(R')$ 
 the set $\bigcup_{v\in R'}\partial(v)$.
%  and $\bigcup_{v\in R'}\chi(v)$ respectively. 
{Notice} that for every $R'\subseteq R$,  $\partial(R')\cap R\subseteq close(S_1,R)$  and $|\partial(R')|\leq |R'|$.

The following lemma forms the crux of the correctness of our main algorithm. The lemma identifies a pair of vertex subsets in the graph such that if there is a certain kind of nice solution for $\tau$, then, we may assume that for either of these sets, the intersection of the nice solution with the set is one of only a bounded (in $k$) number of possibilities.

\begin{lemma}\label{lem:disjointnessLemma}
%Let $\tau=(D,\S=(S_1,\dots, S_q),W,k)$ be an instance of {\sc 1-Out-Regular SCC Deletion}. 
Let $T\supseteq W\setminus S_1$, $\L\subseteq V(D)\setminus T\cup \{S_1\}$  and let $C$ be a minimal $S_1$-$T$ separator in $D$ that is disjoint from $\L$. Let $R=R(S_1,C)$. 
	Suppose that every vertex in $R$ has out-degree at most 1 in $D$ and suppose that there is a nice solution $X$ for $\tau$ that is an $S_1$-$T$ separator and an $\L$-$T\cup \{S_1\}$ separator. Then the following statements hold:
	\begin{enumerate}\item There is a nice solution $X'$ for $\tau$ that is an $S_1$-$T$ separator, an $\L$-$T\cup \{S_1\}$ separator and moreover, $X'\cap R\subseteq$ $close(S_1,R)$. 
	\item 	There is a nice solution $X'$ for $\tau$ that is an $S_1$-$T$ separator, an $\L$-$T\cup \{S_1\}$ separator and moreover, $X'\cap R_{max}(\L,T\cup \{S_1\})=\emptyset$. 
	\item If $c\in V(D)$ is  reachable from $S_1$ and can reach $S_1$ in $D-X$, then $X$ is also a nice solution for the instance $\tau'=(D,\S=(S_1\cup \{c\},\dots$, $S_q),W,k)$ that is an $S_1\cup \{c\}$-$T$ separator and an $\L$-$T\cup \{S_1\}\cup \{c\}$ separator. 
	\end{enumerate}
%\red{Address the possibility that $\L\cap C\neq \emptyset$.}
\end{lemma}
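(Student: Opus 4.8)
\medskip
\noindent\textbf{Proof plan for item~(3).}
Item~(3) is the most routine of the three: unlike items~(1) and~(2), which require pushing the given solution away from~$R$ (respectively away from $R_{max}(\L,T\cup\{S_1\})$), here the same set~$X$ will serve as the solution for~$\tau'$, and everything reduces to re-checking the definitions after the source set~$S_1$ has been enlarged to $S_1\cup\{c\}$. By hypothesis there is an $S_1$-$\{c\}$ path~$P_1$ and a $\{c\}$-$S_1$ path~$P_2$ in $D-X$; we may assume $c\notin S_1$, since otherwise $S_1\cup\{c\}=S_1$ and there is nothing to prove. First I would record the elementary facts. From $P_1,P_2\subseteq D-X$ we get $c\notin X$. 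Also $c\notin T$, for otherwise $P_1$ is an $S_1$-$T$ path in $D-X$, contradicting that $X$ is an $S_1$-$T$ separator; symmetrically $c\notin\L$, for otherwise $P_2$ (which then starts in~$\L$) is an $\L$-$(T\cup S_1)$ path in $D-X$, contradicting that $X$ is an $\L$-$(T\cup S_1)$ separator. Combined with the disjointness already present for the pairs $(S_1,T)$ and $(\L,T\cup S_1)$, these facts show that the relevant source/target sets remain pairwise disjoint and disjoint from~$X$ after adding~$c$. The two separator claims then follow by path concatenation: an $(S_1\cup\{c\})$-$T$ path in $D-X$ either starts in~$S_1$ (impossible) or starts at~$c$, and in the latter case prefixing it with~$P_1$ yields an $S_1$-$T$ path in $D-X$ (impossible); symmetrically, an $\L$-$(T\cup S_1\cup\{c\})$ path in $D-X$ ending at~$c$ can be completed by~$P_2$ into an $\L$-$S_1$ path in $D-X$ (impossible), and any other such path already violates $\L$-$(T\cup S_1)$ separation.

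Next I would check that $X$ is a nice solution for $\tau'=(D,\S=(S_1\cup\{c\},S_2,\dots,S_q),W,k)$. The requirement that $|X|\le k$ and that every strong component of $D-X$ be $\cH$-free is literally the same as for~$\tau$; $X\cap(W\cup\{c\})=\emptyset$ follows from $X\cap W=\emptyset$ and $c\notin X$; and the requirement that $X$ intersect every $S_i$-$S_j$ path (for the new partition, $j>i$) follows from the corresponding property of~$X$ for~$\tau$ when $i\ge 2$, and for $i=1$ from the $(S_1\cup\{c\})$-$T$ separation just established, since $S_j\subseteq W\setminus S_1\subseteq T$ for $j>1$. It remains to verify the three alternatives of Definition~\ref{def:outRegularniceInstances}. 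For each index $i\ge 2$ nothing changes, since the $i$-th block of the partition is unchanged. For $i=1$, the key observation is that enlarging the source set by~$c$ changes nothing here precisely because $c$ lies in the same strong component as~$S_1$ in $D-X$: using $P_1,P_2\subseteq D-X$ one gets $R_D(S_1\cup\{c\},X)=R_D(S_1,X)$, and for every vertex~$x$ there is an $x$-$(S_1\cup\{c\})$ path in $D-X$ if and only if there is an $x$-$S_1$ path in $D-X$ (a path from~$x$ ending at~$c$ is completed by~$P_2$). Hence alternatives~(2) and~(3) of Definition~\ref{def:outRegularniceInstances} for the enlarged source set are equivalent to those for~$S_1$, and alternative~(1) is unchanged, so niceness of~$X$ for~$\tau$ at index~$1$ immediately gives niceness of~$X$ for~$\tau'$ at index~$1$.

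The only step needing genuine care is the reachability equivalence in the last paragraph: adding~$c$ as a source can only enlarge $R_D(\cdot,X)$, and adding~$c$ as a target can only enlarge the set of vertices that reach it, so one must use the specific paths~$P_1$ and~$P_2$ furnished by the hypothesis---rather than a generic monotonicity argument---to collapse both back to equality. Beyond this I foresee no obstacle; the proof is essentially a careful unwinding of the definitions of ``nice solution'' and ``separator.''
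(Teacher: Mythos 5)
Your proposal addresses only item~(3); items~(1) and~(2), which are the ones requiring the actual ``pushing'' arguments, are not treated. For item~(3) your argument is correct and follows essentially the same route as the paper's: you exploit the hypothesized $S_1$-$\{c\}$ and $\{c\}$-$S_1$ paths in $D-X$ to collapse reachability from and to $S_1\cup\{c\}$ back to reachability from and to $S_1$, and then re-check the separator and niceness conditions by path concatenation. You spell out the separator claims, the disjointness facts ($c\notin X$, $c\notin T$, $c\notin\L$), and the trivial $i\ge 2$ case more explicitly than the paper does, but the core idea is identical to the paper's proof of that part.
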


\begin{proof}
We begin by observing that $R$ satisfies the properties in the premise of Observation~\ref{obs:uniqueReachableVertex} and Definition~\ref{def:closeVertices}.

Consider the first statement. Suppose that $X_C=X\cap R$ is non-empty. We claim that $X'=(X\setminus X_C)\cup \partial(X_C)$ is also a nice solution for this instance that is an $S_1$-$T$ separator and an $\L$-$T\cup \{S_1\}$ separator. Since $\partial(X_C)$ is no larger than $X_C$, it follows that     $|X'|\leq |X|$. 

	We now argue that $X'$ is a nice solution, an $S_1$-$T$ separator and an $\L$-$T\cup \{S_1\}$ separator. If not, then it must be the case that either (i) there is a closed walk $\cW$ in $D-X'$ that 
	induces a subgraph containing a vertex of out-degree at least 2  
%	\todo{out-degree atleast 2 and both neighbours are in the same SCC?}  
	or (ii) there is an $S_i$-$S_j$ path $\cW$ in $D-X'$, where $i<j$ (implying that $X'$ is not a solution for $\tau$) or (iii) there is an $S_1$-$T$ path $\cW$ in $D-X'$ or (iv) there is a $\L$-$T\cup \{S_1\}$ path $\cW$ in $D-X'$ or (v) there is an $i\in [q]$ and a triple $u,v,w\in V(D)$ with $v,w\in N^+(u)$ such that $X'$ is disjoint from $\{u,v,w\}$ and $D-X'$ contains  an $S_i$-$u$ path $\cW_u$, a $v$-$S_i$ path $\cW_v$ and a $w$-$S_i$ path $\cW_w$, implying that $X'$ is not a {\em nice} solution for $\tau$
 (see Definition~\ref{def:outRegularniceInstances}). 
% Notice that by the premise of the lemma, in  Case (v), $\{u,v,w\}\setminus R\neq \emptyset$. 

%Here, (i) and (ii) imply that $X'$ is not a nice solution, (iii) implies that $X'$ is not an  $S_1$-$T$ separator and  (iv) implies that $X'$ is not an $\L$-$W$ separator.
	Observe that in each of the cases (i) to (iii), $\cW$ must contain some $x\in X_C$ and some  $y\in V(D)\setminus R$ such that $\cW$ contains an $x$-$y$ subwalk.
%	\todo{$C$ can be of degree 2, so shouldnt it be $V(D)\setminus R$?}.
	 Therefore, we conclude that  in each of these cases, $\cW$ must intersect $\partial(x)$ for some $x\in X_C$, a contradiction since $\partial(x)\in X'$. 
 
 We now consider Case (iv). In this case as well, $\cW$ must intersect $X_C$ at a vertex $x$. Moreover, if $\cW$ contains 
	an $x$-$y$ walk for some $y\in V(D)\setminus R$ 
%	\todo{$V(D)\setminus R$ should be enough}
	, then the same argument as above implies a contradiction.  Hence, we may assume that $\cW$ is an $\L$-$S_1$ path and moreover, the subpath of $\cW$ from $x$ to $S_1$ is contained in $R$. However, the unique vertex of $N^{-}(S_1)$ that comprises the set $\partial(x)$, must also be contained in this subpath, a contradiction since $\partial(x)\subseteq  X'$.

	We now consider Case (v). By the premise of the lemma and the fact that $u$ has out-degree at least 2, we have that $u\notin R$. If $u$ is not reachable from $S_i$ in $D-X$, then we have a contradiction along the same lines as that used before. That is, we have a path $\cW$ in $D-X'$ from some $x\in X_C$ to some $y\in V(D)\setminus R$.
%	 \todo{$u$ can be in $C$, why not $V(D)\setminus R$?}
 Moreover, the same argument also implies that $i=1$ (since $R\subseteq V(D)\setminus T$ and $T\supseteq W\setminus S_1$). That is, $D-X'$ contains  an $S_1$-$u$ path $\cW_u$, a $v$-$S_1$ path $\cW_v$ and a $w$-$S_1$ path $\cW_w$. We have already concluded that $D-X$ also contains an $S_1$-$u$ path. Without loss of generality, suppose that $D-X$ does not contain the path $\cW_v$. Let $x$ be the last vertex of $X_C$ that lies on this path when traversing it from $v$ to $S_1$. Then, the subpath of $\cW_v$ from $x$ to $S_1$  also intersects $\partial(x)$, a contradiction since $\partial(x)\subseteq X'$. This completes the proof of the first statement.

We now prove the second statement. Let $\hat R=R_{max}(\L,T\cup \{S_1\})$ and $\hat C=C_{max}(\L,T\cup \{S_1\})$.
Suppose that $X_{\hat C}=X\cap \hat R$ is non-empty. Furthermore, let $J=\hat C\setminus R(\L,X)$. That is, $J$ comprises those vertices of $\hat C$ that are {\em not} reachable from $\L$ after deleting $X$. Since $\hat C$ is a minimum $\L$-$T\cup \{S_1\}$ separator, it must be the case that $|X_{\hat C}|\geq |J|$.  This bound on $|J|$ can be immediately inferred from the existence of a set of $|\hat C|$ pairwise internally vertex disjoint $\L$-$T\cup \{S_1\}$ paths.

 We claim that $X'=(X\setminus X_{\hat C})\cup J$ is also a nice solution for this instance that is an $S_1$-$T$ separator and an $\L$-$T\cup \{S_1\}$ separator. Since $|X_{\hat C}|\geq |J|$, it follows that     $|X'|\leq |X|$. 

	It remains to be  argued that $X'$ is a nice solution, an $S_1$-$T$ separator and an $\L$-$T\cup \{S_1\}$ separator. If this were not the case, then one of the five cases (i)--(v) enumerated earlier (see proof of the first statement of this lemma) must hold in $D-X'$. In each of these cases, we infer the presence of an $x$-$y$ walk in $D-X'$ where $x\in X_{\hat C}$ and $y\in T\cup \{S_1\}$. But this implies the presence of an $x'$-$y$ walk in $D-X$ where $x'\in \hat C\setminus J$. However, by the definition of $J$, we have that $\hat C\setminus J\subseteq R(\L,X)$. This is a contradiction to $X$ being an $\L$-$T\cup \{S_1\}$ separator. This completes the argument for the second statement.

We now consider the third statement. If $X$ is not a solution for the tuple $\tau'$, then it must be the case that there is a $c$-$d$ path in $D-X$ for some $d\in \bigcup_{i\in [q]\setminus \{1\}}S_i$. Since $c$ is reachable from $S_1$ in $D-X$, this implies an $S_1$-$d$ path, a contradiction to $X$ being a solution for $\tau$. On the other hand, suppose that $X$ is not a {\em nice} solution for $\tau'$. Then, there is an $i\in [q]$ and a triple $u,v,w\in V(D)$ with $v,w\in N^+(u)$ such that $X$ is disjoint from $\{u,v,w\}$ and $D-X$ contains  an $S_1\cup \{c\}$-$u$ path $\cW_u$, a $v$-$S_1\cup \{c\}$ path $\cW_v$ and a $w$-$S_1\cup \{c\}$ path $\cW_w$. Since $c$ is both reachable from $S_1$ and can reach $S_1$ in $D-X$, we conclude that $D-X$ contains an $S_1$-$u$ path, a $v$-$S_1$ path and a $w$-$S_1$ path. This contradicts the premise that $X$ is a nice solution for $\tau$.  This completes the proof of the lemma.
\end{proof}

We now provide a subroutine that computes a set of vertices upon which our main algorithm will be able to branch exhaustively while strictly making progress in each branch.

\begin{lemma}\label{lem:branchableObjectOutDegreeDeletion}Let $T\supseteq W\setminus S_1$,
There is an algorithm that, given $\tau$ and $T$,  runs in polynomial time and performs one of the following operations: 
\begin{enumerate}\item Outputs a $u_1\in R_{min}(S_1,T)$ and  $u_2,u_3\in N^+_D(u)$. 
\item 	Correctly concludes that $R_{max}(S_1,T)$ has no vertex with out-degree at least 2 in $D$. 
\item Outputs a minimum $S_1$-$T$ separator $C$ such that $R(S_1,C)$ has no vertex with out-degree at least 2 in $D$ and moreover,  for some $u_1\in C$, there exist  $u_2,u_3\in N^+_D(u_1)\setminus T$.
\end{enumerate}
\end{lemma}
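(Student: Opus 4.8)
The plan is to run a greedy ``separator‑growing'' procedure. Call a vertex \emph{bad} if it has out-degree at least $2$ in $D$, and \emph{good} otherwise. We maintain a minimum $S_1$-$T$ separator $C$ in the closed form $C=N^+(R)$ with $R=R_D(S_1,C)$, together with the invariant that $R$ contains no bad vertex; we initialise $C:=C_{\mathrm{min}}(S_1,T)$ and $R:=R_{\mathrm{min}}(S_1,T)$. If this initial $R$ already contains a bad vertex $u_1$, we output $u_1$ together with two distinct out-neighbours $u_2,u_3\in N^+_D(u_1)$; since $u_1\in R_{\mathrm{min}}(S_1,T)$ this is precisely the first alternative. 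Otherwise the invariant holds, and we repeatedly \emph{push}: while there is a good $c\in C$ whose unique out-neighbour $x$ satisfies $x\notin T$, we replace $R$ by $R\cup\{c\}$ and $C$ by $(C\setminus\{c\})\cup\{x\}$. Each push strictly increases $|R|$, so the loop terminates after at most $n$ rounds.

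The heart of the argument consists of two short structural facts about the push step, to be established using only the definition of a minimum separator and the fact recalled in the preliminaries that $N^+(R')$ is an $S_1$-$T$ separator with $R_D(S_1,N^+(R'))=R'$ whenever $S_1\subseteq R'$, $R'$ is reachable from $S_1$ in $D[R']$, and $R'\cup N^+(R')$ is disjoint from $T$. First, every good $c\in C$ has exactly one out-neighbour $x$ and moreover $x\notin R\cup C$: indeed, if $N^+(c)\subseteq R\cup C$ (in particular if $c$ were a sink), then $N^+(R\cup\{c\})\subseteq C\setminus\{c\}$ would be an $S_1$-$T$ separator of size at most $|C|-1=\lambda_D(S_1,T)-1$, a contradiction. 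Consequently, when the push condition fires, $R\cup\{c\}$ is reachable from $S_1$, disjoint from $T$, satisfies $N^+(R\cup\{c\})=(C\setminus\{c\})\cup\{x\}$ of size $|C|$, and contains no bad vertex (as $c$ is good), so the new $C$ is again a closed minimum $S_1$-$T$ separator obeying the invariant. Second, after the loop ends, every good vertex of $C$ has its out-neighbour in $T$ (by the first fact, its out-neighbour lies outside $R\cup C$, hence, since the push condition no longer fires, it lies in $T$).

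When the loop halts we resolve the stuck configuration. If some bad $u_1\in C$ has two out-neighbours $u_2,u_3\notin T$, then $(C,u_1,u_2,u_3)$ is exactly the third alternative, because $R=R_D(S_1,C)$ contains no bad vertex by the invariant. Otherwise every bad vertex of $C$ has an out-neighbour in $T$, so, combining with the second fact above, \emph{every} vertex of $C$ has an out-neighbour in $T$; I claim that then $R=R_{\mathrm{max}}(S_1,T)$. Given any minimum $S_1$-$T$ separator $C''$ and a vertex $v\in R_D(S_1,C'')\setminus R$, a path from $S_1$ to $v$ in $D-C''$ must leave $R$, hence it passes through some $c\in N^+(R)=C$ that is reachable from $S_1$ in $D-C''$; but then the out-neighbour of $c$ lying in $T$ is also reachable from $S_1$ in $D-C''$, contradicting that $C''$ is an $S_1$-$T$ separator. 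Therefore $R_D(S_1,C'')\subseteq R$ for every minimum $S_1$-$T$ separator $C''$, so $R=R_{\mathrm{max}}(S_1,T)$, which by the invariant has no bad vertex --- the second alternative. Computing $C_{\mathrm{min}},R_{\mathrm{min}}$ via max-flow, each push, and each test are all polynomial, and there are at most $n$ pushes, so the procedure is polynomial overall. The one delicate point I expect to spend effort on is the first structural fact about good frontier vertices: it is what pins down the shape of the frontier $C$ at the stuck configuration and makes both the ``push'' bookkeeping and the final dichotomy go through cleanly.
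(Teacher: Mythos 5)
Your proof is correct and takes essentially the same approach as the paper's: start from $R_{\min}(S_1,T)$, greedily grow the reachable set one frontier vertex at a time while it stays free of out-degree-$\ge 2$ vertices, and read off one of the three outputs from the stuck frontier. The only cosmetic difference is that the paper first checks $R_{\max}(S_1,T)$ for a high-out-degree vertex before invoking a maximality argument, whereas you infer $R=R_{\max}(S_1,T)$ directly at the stuck configuration (from every frontier vertex having an out-neighbour in $T$); both establish the same trichotomy, and your explicit per-vertex push with the two structural facts about the frontier is a faithful instantiation of what the paper calls ``greedily compute''.
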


\begin{proof}
We first compute $R_{min}(S_1,T)$ and $R_{max}(S_1,T)$ and  check whether there exist  (i) $u_1\in R_{min}(S_1,T)$,   $u_2,u_3\in N^+_D(u)$ or (ii) $u\in R_{max}(S_1,T)$ such that $u$ has out-degree at least 2 in $D$.  If the answer to (i) is affirmative, then we return $u_1,u_2, u_3$. Similarly, if the answer to (ii) is negative, then we  return the same.

%Suppose that neither holds. Then, one can 
We now greedily compute a minimum $S_1$-$T$ separator $C$ such that $R(S_1,C)$ has no vertex with out-degree at least 2 in $D$ and $R(S_1,C)$ is maximal subject to the out-degree constraint on the vertices contained within. 
%We now argue that for every $u_1\in C$, there exist  $u_2,u_3\in N^+_D(u_1)$.
 Recall that we are in the case where 
 $C\neq R_{max}(S_1,T)$. Therefore, $C_{max}(S_1,T)$ covers $C$. But this implies that  there is some $u_1\in C\cap {R_{max}(S_1,T)}$. Notice that $u_1$ has no out-neighbours in $T$. 

Therefore, it suffices to prove that $u_1$ has out-degree at least 2 in $D$. Suppose to the contrary that  $u_1$  has at most one out-neighbour in $D$. Since $C$ is a minimal $S_1$-$T$ separator, it follows that $u_1$ has at least one out-neighbour in $D$.
 Therefore, we have that $u_1$ has exactly one out-neighbour in $D$, call it $u_2$. We now observe that $C'=(C\setminus \{u_1\})\cup \{u_2\}$ is also a minimum $S_1$-$T$ separator, $R(S_1,C')\supset R(S_1,C)$ and every vertex in $R(S_1,C')$ has out-degree at most 1 in $D$. This contradicts our choice of $C$,  completing the proof of the lemma.
\end{proof}

We are now ready to present the main algorithm of this section. Specifically, we obtain Lemma~\ref{lem:fullMainOneOutRegularDeletionAlgorithm} as a consequence of the following lemma.

\begin{lemma}\label{lem:mainOneOutRegularDeletionAlgorithm}
    Let $T\supseteq W\setminus S_1,\L\subseteq V(D)\setminus T\cup \{S_1\}$. 
	There is an algorithm that, given the tuple $(\tau,T,\L)$, runs in time $2^{O(k\log k)} \cdot n^{O(1)}$ and either returns a solution for $(D,k)$ or correctly concludes that there is no nice solution for $\tau$ that is also an $\L$-$T\cup \{S_1\}$ separator  and an $S_1$-$T$ separator.
\end{lemma}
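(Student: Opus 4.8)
The plan is to build a recursive branching algorithm operating on the tuple $(\tau,T,\L)$. Write $\lambda_1=\lambda_D(S_1,T)$ and $\lambda_2=\lambda_D(\L,T\cup\{S_1\})$, and attach to the tuple the lexicographic measure $(\mu,\nu)$ with $\mu=3k-\lambda_1-\lambda_2$ and $\nu=|V(D)\setminus S_1|$. The base cases are: if $k<0$, $\lambda_1>k$, or $\lambda_2>k$, report that there is no nice solution with the required separator properties (any such solution has size more than $k$); and if no strong component of $D$ contains the out-directed $2$-star as a subgraph, return $\emptyset$. Otherwise we recurse.

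We invoke Lemma~\ref{lem:branchableObjectOutDegreeDeletion} on $(\tau,T)$ and branch according to its three possible outputs; in all of them we obtain a triple $(u_1,u_2,u_3)$ with $u_2,u_3\in N^+_D(u_1)$, and we apply the niceness condition of Definition~\ref{def:outRegularniceInstances} for this triple with index $i=1$ to a hypothetical nice solution $X$ that is an $S_1$-$T$ separator and an $\L$-$(T\cup\{S_1\})$ separator. Either $X$ meets $\{u_1,u_2,u_3\}$ --- and we branch on deleting each of these at most three vertices; or $u_1\notin R(S_1,X)$, so that $X$ is an $S_1$-$(T\cup\{u_1\})$ separator, and since $u_1$ can be taken to lie in $R_{\mathrm{min}}(S_1,T)$ (directly for the first output; for the third output this subcase will instead be absorbed into a branch on $\mathit{close}(S_1,R)$ below), Lemma~\ref{lem:closest_sep_lambda_inc} gives $\lambda_D(S_1,T\cup\{u_1\})>\lambda_1$ and we recurse with $T:=T\cup\{u_1\}$ (width $1$); or some $u_j\in\{u_2,u_3\}$ has no $u_j$-$S_1$ path in $D-X$, in which case --- using that $u_j$ is then reachable from $S_1$ in $D-X$ while $X$ separates $S_1$ from $T\supseteq W\setminus S_1$ --- $u_j$ lies in $\forwardshadow(X)$ and cannot reach $T\cup\{S_1\}$ in $D-X$, so $X$ is an $(\L\cup\{u_j\})$-$(T\cup\{S_1\})$ separator and we recurse with $\L:=\L\cup\{u_j\}$ (width $1$, and invoking Lemma~\ref{lem:disjointnessLemma}(2) to conclude the recursive step strictly raises $\lambda_2$). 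In addition, whenever the output of Lemma~\ref{lem:branchableObjectOutDegreeDeletion} exhibits a minimal $S_1$-$T$ separator $C$ (which we may take disjoint from $\L$) with $R:=R(S_1,C)$ of out-degree at most $1$, we use Lemma~\ref{lem:disjointnessLemma}(1) and~(2) to additionally assume $X\cap R\subseteq \mathit{close}(S_1,R)$ and $X\cap R_{\mathrm{max}}(\L,T\cup\{S_1\})=\emptyset$; since $|\mathit{close}(S_1,R)|\le|S_1|$ by Observation~\ref{obs:uniqueReachableVertex} and Definition~\ref{def:closeVertices}, we may also branch on the at most $|S_1|$ vertices of $\mathit{close}(S_1,R)$, and in the sole remaining branch $X\cap R=\emptyset$, the vertices of $R$ lying in the strong component of $S_1$ in $D-X$ are read off from the unique forward paths inside $D[R]$ (Observation~\ref{obs:uniqueReachableVertex}) and, by Lemma~\ref{lem:disjointnessLemma}(3), absorbed into $S_1$ --- a width-$1$ step that strictly decreases $\nu$ and does not increase $\mu$.

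When a branch deletes a vertex $v$ it recurses on $(D-v,\S,W,k-1)$ with $\S,T,\L$ restricted to $V(D)\setminus\{v\}$; a nice solution containing $v$ restricts to one for the smaller instance with all separator properties preserved, and conversely a returned solution together with $\{v\}$ solves $(D,k)$. Correctness follows by induction on $(\mu,\nu)$: the subcase analysis shows that if a nice solution with the required separator properties exists then at least one branch still has one, and every recursive call strictly decreases $(\mu,\nu)$ lexicographically --- a deletion lowers $3k$ by $3$ while each $\lambda_i$ drops by at most $1$; a push into $T$ or into $\L$ raises one $\lambda_i$ while leaving the other non-decreasing; and a growth of $S_1$ never lowers $\lambda_1,\lambda_2$ while strictly lowering $\nu$. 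For the running time, every branching node has width at most $\max\{3,|S_1|\}=O(k)$ since $|S_1|\le|W|\le k+1$; a root-to-leaf path contains at most $\mu\le3k$ width-$>1$ nodes and at most $n^{O(1)}$ width-$1$ nodes, so the recursion tree has $k^{O(k)}=2^{O(k\log k)}$ leaves and $2^{O(k\log k)}\cdot n^{O(1)}$ nodes, each handled in $2^{O(k)}\cdot n^{O(1)}$ time (dominated by the greedy ``furthest out-degree-constrained separator'' and maximum-flow computations underlying Lemma~\ref{lem:branchableObjectOutDegreeDeletion}). Finally, running this algorithm with $T=W\setminus S_1$ and $\L=\emptyset$ yields Lemma~\ref{lem:fullMainOneOutRegularDeletionAlgorithm}, since every nice solution for $\tau$ is in particular an $S_1$-$(W\setminus S_1)$ separator.

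The step I expect to be the main obstacle is verifying the strict measure drop in the ``pushing'' branches, and above all in the $\forwardshadow$ subcase: one must show that $u_j$, though reachable from $S_1$ in $D-X$, can be added to $\L$ with a genuine \emph{strict} increase of $\lambda_D(\L,T\cup\{S_1\})$, and this is exactly where the extra freedom $X\cap R_{\mathrm{max}}(\L,T\cup\{S_1\})=\emptyset$ from Lemma~\ref{lem:disjointnessLemma}(2) has to be exploited --- the naive alternative of applying Lemma~\ref{lem:pushingRoutineTwo} to $u_j$ would blow the branching width up to $4^k\cdot2k$ and the running time to $2^{O(k^2)}$, wrecking the target bound. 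A secondary delicate point is checking that in the $X\cap R=\emptyset$ branch, absorbing the strong component of $S_1$ that lies inside $R$ into $S_1$ via Lemma~\ref{lem:disjointnessLemma}(3) is well-defined, preserves the invariants $T\supseteq W\setminus S_1$ and $\L\subseteq V(D)\setminus(T\cup\{S_1\})$, and genuinely makes progress (either by a non-trivial growth of $S_1$, or, when that growth is trivial, by falling back on the $\L$-push just discussed).
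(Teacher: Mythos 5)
You have correctly identified the right ingredients — Lemma~\ref{lem:branchableObjectOutDegreeDeletion}, Definition~\ref{def:outRegularniceInstances}, the pushing lemma Lemma~\ref{lem:disjointnessLemma}, the measure $3k-\lambda_1-\lambda_2$, and the need to bound $|\mathit{close}(S_1,R)|$ — but the proposal has two genuine gaps, exactly at the two places you yourself flag as delicate, and your suggested fixes for them do not work.

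\textbf{The $\L$-push does not give a strict measure drop when $\L\neq\emptyset$, and Lemma~\ref{lem:disjointnessLemma}(2) does not repair this.} In your proposal every node of the recursion goes straight to Lemma~\ref{lem:branchableObjectOutDegreeDeletion} and the triple branching, and when some $u_j$ cannot reach $S_1$ in $D-X$ you recurse on $\L:=\L\cup\{u_j\}$, claiming (citing Lemma~\ref{lem:disjointnessLemma}(2)) that $\lambda_2=\lambda_D(\L,T\cup\{S_1\})$ strictly increases. But Lemma~\ref{lem:disjointnessLemma}(2) only says some nice solution avoids $R_{\max}(\L,T\cup\{S_1\})$; it gives no information about where $u_j$ sits relative to that furthest separator. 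Lemma~\ref{lem:closest_sep_lambda_inc} requires $u_j\in\overline{R}_D(\L,C_{\max})$, and the triple $(u_1,u_2,u_3)$ is produced by an analysis of $S_1$-$T$ separators, so $u_j$ has no reason to lie outside $R_{\max}(\L,T\cup\{S_1\})$ once $\L\neq\emptyset$. Starting from $\L=\emptyset$ it works once (any vertex that can reach $T\cup\{S_1\}$ raises $\lambda_2$ from $0$ to at least $1$), but a second $\L$-push need not raise $\lambda_2$ at all, so $\mu$ can stall while $\nu$ is also unchanged. The paper avoids this by never doing a second $\L$-push: once $\L\neq\emptyset$ it switches to a separate branching on a vertex of $C_{\max}(\L,T\cup\{S_1\})$ (the paper's Case~I), where the vertex is either deleted or added to $\L$; in the latter case $\lambda_2$ strictly increases \emph{because the vertex lies in $\overline{R}_{\max}$}, and the exhaustiveness of the dichotomy is what Lemma~\ref{lem:disjointnessLemma}(2) is actually used for. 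Your proposal is missing this case entirely, so the measure argument has a hole.

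\textbf{The bulk absorption into $S_1$ breaks the branching-width bound.} You propose, in the $X\cap R=\emptyset$ branch, to absorb into $S_1$ all vertices of $R$ that lie on the strong component of $S_1$, citing Lemma~\ref{lem:disjointnessLemma}(3), and to claim the branching width on $\mathit{close}(S_1,R)$ is $O(k)$ because $|S_1|\le|W|\le k+1$. That bound holds only at the top of the recursion; once you begin absorbing, $|S_1|$ can grow, and since $|\mathit{close}(S_1,R)|$ is only bounded by $|S_1|$, an uncontrolled $S_1$ can make the $\mathit{close}$-branching width $\Omega(n)$ and destroy the $2^{O(k\log k)}$ target. (There is also a correctness caveat: vertices of $R$ that enter the strong component of $S_1$ in $D-X$ via paths through $V(D)\setminus R$ are not readable from the forward paths in $D[R]$ alone, so the "read off" claim needs more care.) The paper takes a narrower route: it adds only one vertex $c\in C_{\max}(S_1,T)$ to $S_1$ per use of Lemma~\ref{lem:disjointnessLemma}(3), in exactly one case (Case~II.(b).(i)), and then observes that because the measure drop bounds the depth by $O(k)$ and each such step adds a single vertex, $|S_1|$ stays $O(k)$ throughout. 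Your width-1 step decreasing $\nu$ but not $\mu$ would additionally invalidate the claim that there are "at most $\mu\le3k$ width-$>1$ nodes" on a root-to-leaf path, because the $\mathit{close}$-branches do not change $\mu$ either.

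A smaller omission: the proposal never addresses the degenerate situations $T=\emptyset$ (which forces $q=1$) and $\lambda(S_1,T)=0$; the paper decomposes the instance across strongly connected components in the latter case before invoking Lemma~\ref{lem:branchableObjectOutDegreeDeletion}, whose outputs are not meaningful when there are no $S_1$-$T$ separators.
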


\begin{proof}
We begin by checking whether $k<0$ or $\lambda(\L,T\cup \{S_1\})>k$ or $\lambda(S_1,T)>k$ or there is a strongly connected component of $D[W]$ that contains a vertex of out-degree at least 2. If any of these checks return an affirmative answer, then we return NO and terminate. Moreover, if $k=0$ and $(D,0)$ is a no-instance of {\sc 1-Out-Regular SCC Deletion} then we return NO and terminate. If $k\geq 0$ and $(D,0)$ is a yes-instance of {\sc 1-Out-Regular SCC Deletion} then we return $\emptyset$ and terminate.
% Similarly, we check whether $k\geq 0$, $\lambda(\L,T\cup \{S_1\})=0$, $\lambda(S_1,T)=0$ and $(D,0)$ is a yes-instance of {\sc 1-Out-Regular SCC Deletion}. If this check returns an affirmative answer, then we remove $S_1$ and recurse (after updating the indices of the remaining sets in $\S$) on the remaining instance.
Henceforth, we assume that $k\geq 1$, $\lambda(\L,T\cup \{S_1\})\leq k$, $\lambda(S_1,T)\leq k$ and every strongly connected component of $D[W]$ has maximum out-degree at most 1.  To provide an intuitive description of our algorithm, we fix a hypothetical nice solution $X^\star$ for $\tau$ that is also an $\L$-$T\cup \{S_1\}$ separator  and an $S_1$-$T$ separator (if one exists).

%For each of our branching rules, we give an informal description, which is followed by the formal description and analysis.

\begin{description}[leftmargin=*]
\item[Case I: $\L\neq \emptyset$.]~ 
If $\L$ is non-empty and  $\lambda(\L,T\cup \{S_1\})=0$, then we set $\L=\emptyset$ and recurse. 

Otherwise, we compute  $C_{max}=C_{max}(\L,T\cup \{S_1\})$. 
%If $|C|>k$, then we return NO. 
We now branch by  picking a vertex in $C_{max}$ and either guessing it to be in $X^\star$ or guessing that it is reachable from $\L$  in $D-X^\star$. The correctness of this branching follows from the fact that there is always a nice solution of the required type that is disjoint from $R_{max}(\L,T\cup \{S_1\})$ (Lemma~\ref{lem:disjointnessLemma}~(2)).

Formally, we do the following. We pick a vertex $c\in C_{max}$ and recurse on the tuple $(\tau_c,T,\L)$ where, $\tau_c=(D-\{c\},\S,W,k-1)$.
If this call returns a set $Z$, then we return the set $Z\cup \{c\}$.
% as a nice solution for $\tau$. 
 Otherwise, we recurse on the tuple $(\tau,T,\L\cup \{c\})$ and return its output. 

\noindent
		{\em Analysis.} Observe that in the first call, the budget $k$ drops by 1 (with $\lambda(\L,T\cup \{S_1\})$ and $\lambda(S_1,T)$ dropping by at most 1) while in the second call, $\lambda(\L,T\cup \{S_1\})$ increases by at least 1 (Lemma~\ref{lem:closest_sep_lambda_inc}).

 \smallskip
\item[Case II: $\L=\emptyset$.]~ 
We now describe the algorithm when $\L$ is empty.
\smallskip 
\begin{description}
\item[Case II.(a): $T=\emptyset$.]~  This also implies that $q=1$, i.e., $S_1=W$. Therefore, we may assume that every strongly connected component of $D$ intersects $S_1$ and moreover, contains at least one vertex with at least 2 out-neighbours in the same strongly connected component. 
%  and so, we may assume that $D$ \red{is strongly connected}. 
If the instance is not already solved and is a yes-instance, then there must exist
vertices $u_1,u_2,u_3$ such that  $u_2,u_3\in N^+(u_1)$ and $\{u_1,u_2,u_3\}\setminus S_1\neq \emptyset$. If such a triple of vertices do not exist, then we return NO as a nice solution for this tuple cannot exist (see Definition~\ref{def:outRegularniceInstances}).
 Otherwise, we branch by guessing one of $\{u_1,u_2,u_3\}$ to be contained in $X^\star$ ({subject to disjointness of the vertex from $S_1$}) or by guessing one of  $\{u_1,u_2,u_3\}$ to be unreachable from $S_1$ in $D-X^\star$ or by guessing that they are all reachable from $S_1$ in $D-X^\star$, in which case we guess that at least one of $\{u_1,u_2,u_3\}$ cannot reach $T\cup \{S_1\}$  in $D-X^\star$ (also {subject to disjointness from $S_1$}). The correctness of the branching follows from  Definition~\ref{def:outRegularniceInstances} and the fact that $X^\star$ is a nice solution for $\tau$.

 \smallskip
 
 Formally, we recurse on the tuples:
$\CC_1=(\tau_1,\emptyset,\emptyset)$,~$\CC_2=(\tau_2,\emptyset,\emptyset)$,~$\CC_3=(\tau_3,\emptyset,\emptyset)$,~$\CC_4=(\tau_4,\{u_1\},\emptyset)$,
~$\CC_5=(\tau_5,\{u_2\},\emptyset)$,
~$\CC_6=(\tau_6,\{u_3\},\emptyset)$,
~$\CC_7=(\tau,\emptyset,\{u_1\})$,~$\CC_8=(\tau,\emptyset, \{u_2\})$,~$\CC_9=(\tau,\emptyset, \{u_3\})$ defined as follows.
 For each $i\in [3]$, $\tau_i=(D-\{u_i\},\S,W,k-1)$ if $u_i\notin S_1$ and $\tau_i$ is a trivial no-instance otherwise. Notice that in the correct branch (say, Branch $i$), $X^\star\setminus \{u_i\}$ is a nice solution for $\tau_{i}$. 
  
 Similarly, $\tau_{3+i}=(D,(S_1,\{u_i\}),S\cup \{u_i\},k)$ if $u_i\notin S_1$ and $\tau_{3+i}$ is a trivial no-instance otherwise. Notice that in the correct branch (say, Branch $3+i$), $X^\star$ is a nice solution for $\tau_{3+i}$ that is an $S_1$-$u_i$ separator. 
  Finally, for each $i\in [3]$, $\tau_{6+i}=\tau$ if $u_i\notin S$ and $\tau_{6+i}$ is a trivial no-instance otherwise. 
 Let the respective outputs be denoted by $Z_1,\dots, Z_9$.
	 We return NO if all calls return NO. If any of the calls $\{\CC_i\mid i\in [3]\}$ returns a set $Z_i$, then we return $Z_i\cup \{u_i\}$ ($i$ is chosen to be the least such value) and terminate. If any of the calls $\{\CC_{3+i}\mid i\in [6]\}$ return a set $Z_i$, then we return the same set (with $i$ chosen to be the least such value) and terminate. 
	 
	 \noindent
{\em Analysis.}	Observe that in the first three calls, the budget $k$ drops by 1 (with no change in $\lambda(\L,T\cup \{S_1\})$ and $\lambda(S_1,T)$) while in the next three calls, $\lambda(S_1,T)$ increases by at least 1 and in the final three calls,	$\lambda(\L,T\cup \{S_1\})$ increases by at least 1.

\smallskip
\item[Case II.(b): $T\neq  \emptyset$.]~
If $\lambda(S_1,T)=0$ (recall that $\L=\emptyset$ and so, $\lambda(\L,T\cup \{S_1\})=0$), then we solve the problem independently on the subgraph induced by the strongly connected components intersecting $S_1$ and that induced by the remaining vertices.

Otherwise, we begin by running the polynomial-time algorithm of Lemma~\ref{lem:branchableObjectOutDegreeDeletion} to either  (i) correctly conclude that every vertex in $R_{max}=R_{max}(S_1,T)$ has out-degree at most 1 in $D$ or (ii) compute a $u_1\in R_{min}=R_{min}(S_1,T)$ and  $u_2,u_3\in N^+_D(u)$  
, or (iii) compute a minimum $S_1$-$T$ separator $C$ such that $R=R(S_1,C)$ has no vertex with out-degree at least 2 in $D$, and vertices $u_1\in C$, $u_2,u_3\in N^+_D(u_1)\setminus T$.

\smallskip
\begin{description}[leftmargin=*]
		 \item[Case II.(b).(i):]  We first guess a vertex of $close(S_1,R_{max})$ to be contained in $X^\star$. That is, for every $c\in close(S_1,R_{max})$, we recurse on the tuple $\widehat{\CC}_c=(\tau_c,T,\emptyset)$, where $\tau_c=(D-\{c\},\S,W,k-1)$. We will later prove that this branching factor is $O(k)$ because the size of  $S_1$, and hence, that of $close(S_1,R_{max})$  will always be bounded by $O(k)$. Let $\{Z_c\mid c\in close(S_1,R_{max})\}$ denote the outputs of these branches. If for any $c\in close(S_1,R_{max})$, $Z_c$ is a set, then we return $Z_c\cup \{c\}$ and terminate. Notice that in the correct branch (say, Branch-$c$), $X^\star\setminus \{c\}$ is a nice solution for $\tau_{c}$ that is also an $S_1$-$T$ separator in $D-\{c\}$.
		 
		 Finally, in the branch where $close(S_1,R_{max})$ is guessed to be disjoint from the solution, 
		 we pick a vertex $c\in C_{max}(S_1,T)$ and branch by either guessing it to be in $X^\star$ or by guessing $c$ to be reachable from $S_1$ in $D-X^\star$. The exhaustiveness follows from Lemma~\ref{lem:disjointnessLemma}~(1). 
		  In the latter case, we have two further branches where we either guess that $c$ (which, in this guess, cannot reach $T$ in $D-X^\star$) cannot reach $S_1$ or that $c$ can reach $S_1$ in $D-X^\star$. The latter two branches are executed by setting $\L=\{c\}$ or by adding $c$ to $S_1$ respectively. The correctness of adding $c$ to $S_1$ follows from Lemma~\ref{lem:disjointnessLemma}~(3).

		 \smallskip
Formally, we recurse on the tuples $\CC_1=(\tau_1,T,\emptyset)$,  $\CC_2=(\tau,T,\{c\})$  and $\CC_3=(\tau_2,T,\emptyset)$, 
	 	  where $\tau_1=(D-\{c\},\S,W,k-1)$ and $\tau_2=(D,(S_1\cup \{c\},S_2,\dots, S_q),W\cup \{c\},k)$. 
	This is now an {exhaustive branching} due to Lemma~\ref{lem:disjointnessLemma}~(1). Let $Z_1,Z_2,Z_3$ denote the respective outputs of the three final recursions. If $Z_2$ or $Z_3$ is a set, then we return this set (chosen arbitrarily) and terminate. On the other hand, if $Z_1$ is a set, then we return $Z_1\cup \{c\}$ and terminate. Finally, if all recursive outputs are NO, then we return NO and terminate. 
	
	\noindent
	{\em Analysis.} 
	We remark that this case contains the only branch where vertices are added to $S_1$. Recall that $S_1$ is initially bounded by $k$. We will prove later that the depth of the search tree is bounded by $O(k)$, implying that this branch can only be taken $O(k)$ times, and $|close(S_1,R_{max})|=O(k)$ as required (regardless of the value of $S_1$ at any given stage of the algorithm), thus bounding the branching factor. Moreover, in the first $|close(S_1,R_{max})|+1$ branches, the budget $k$ drops by 1 while $\lambda(S_1,T)$ decreases by at most 1.
		In the penultimate branch, $\lambda(\L,T\cup \{S_1\})$ strictly increases and in the final branch, $\lambda(S_1,T)$ strictly increases (Lemma~\ref{lem:closest_sep_lambda_inc}).
		 
		 	 \smallskip 
\item[Case II.(b).(ii):] 
In this case, we branch by either guessing one of $\{u_1,u_2,u_3\}$ to be in $X^\star$ ({subject to disjointness from $S_1$}) or by guessing that $u_1$ is unreachable from $S_1$ in $D-X^\star$ (in which case we add $u_1$ to $T$) or by guessing that $\{u_1,u_2,u_3\}$ are all reachable from $S_1$ in $D-X^\star$ but at least one of $\{u_1,u_2,u_3\}$ cannot reach $S_1\cup T$  in $D-X^\star$ (again, {subject to disjointness from $S_1$}). This branching is exhaustive due to Definition~\ref{def:outRegularniceInstances} and the fact that $X^\star$ is a nice solution for $\tau$. 

Formally, we recurse on the following tuples:  $\CC_1=(\tau_1,T,\emptyset)$, $\CC_2=(\tau_2,T,\emptyset)$, $\CC_3=(\tau_3,T,\emptyset)$, $\CC_4=(\tau_4,T\cup \{u_1\},\emptyset)$, $\CC_5=(\tau_5,T, \{u_1\}),\CC_6=(\tau,T,\{u_2\}),\CC_7=(\tau,T,\{u_3\})$ defined as follows.  For each $i\in \{1,2,3\}$, $\tau_i=(D-\{u_i\},\S,W,k-1)$ if $u_i\notin S_1$ and $\tau_i$ is a trivial no-instance otherwise. $\tau_4=\tau$ if $u_1\notin S_1$ and $\tau_4$ is a trivial no-instance otherwise. Finally, for each $i\in \{1,2,3\}$, $\tau_{4+i}=\tau$ if $u_i\notin S_1$ and $\tau_{4+i}$ is a trivial no-instance otherwise.

	 Let the outputs of these recursive calls be denoted by $Z_1,\dots, Z_7$ respectively.
	 We return NO if all calls return NO. If any of the calls $\{\CC_i\mid i\in [3]\}$ returns a set $Z_i$, then we return $Z_i\cup \{u_i\}$ ($i$ is chosen to be the least such value) and terminate.
	  If any of the calls $\{\CC_i\mid i\in \{4,5,6,7\}\}$ returns a set $Z_i$, then we return the same set (with $i$ chosen to be least possible) and terminate.

\noindent
{\em Analysis.}	 In the first three branches, the budget $k$ decreases by 1 while $\lambda_D(S_1,T)$ decreases by at most 1. In $\CC_4$, $\lambda(S_1,T)$ increases by at least 1 (Lemma~\ref{lem:closest_sep_lambda_inc}) and in the final three branches, $\lambda_D(\L,T\cup S_1)$ increases by at least 1. We may assume that every vertex in 
 $D$ can reach $T\cup S_1$. Otherwise, they can be deleted as they do not participate in $S_1$-$T$ paths or in strongly connected components containing a vertex of out-degree at least 2 (within the same strongly connected component).
%\red{What if there is no path from $u_i$ to $T\cup S_1$?}
%Recall that according to our assumption (\red{see discussion following Observation~\ref{obs:uniqueReachableVertexConsequence}}), every vertex in $D$ has a path to $W$. 

	 	 \smallskip 
\item[Case II.(b).(iii):] 
In this case, we execute a combination of the branchings used in Case II.(b).(i) and Case II.(b).(ii). That is, we first guess a vertex of $close(S_1,R)$ to be contained in $X^\star$. That is, for every $c\in close(S_1,R)$, we recurse on the tuple $\widehat{\CC}_c=(\tau_c,T,\emptyset)$, where $\tau_c=(D-\{c\},\S,W,k-1)$. In the remaining branch, we may assume that $close(S_1,R)$ is disjoint from $X^\star$ and hence, by Lemma~\ref{lem:disjointnessLemma}~(1), $X^\star$ is also disjoint from $R$.

We now branch by either guessing one of $\{u_1,u_2,u_3\}$ to be in $X^\star$ or by guessing that $\{u_1,u_2,u_3\}$ are all reachable from $S_1$ in $D-X^\star$ but at least one of $\{u_1,u_2,u_3\}$ cannot reach $S_1\cup T$  in $D-X^\star$. The correctness follows from Definition~\ref{def:outRegularniceInstances} and Lemma~\ref{lem:disjointnessLemma}~(1) and the analysis is identical to that used in the previous two cases.

%\red{We recurse on the tuples $\{\CC_i\mid i\in [7]\setminus \{4\}\}$ described in Case II.(b).(ii).}
%This branching is correct due to Definition~\ref{def:outRegularniceInstances} and Lemma~\ref{lem:disjointnessLemma}~(1). 

		 \end{description}

\end{description}

\end{description}

This completes the description of the algorithm. 

\smallskip
\noindent
\emph{Correctness and running time analysis.}~
The correctness follows from the terminating conditions of the algorithm outlined in the beginning, plus the fact that (a) every branching step has been argued to be exhaustive  and (b) the measure $3k-\lambda(S_1,T)-\lambda(\L,T\cup S_1)$ strictly decreases in every recursive call.   
Since the algorithm has $O(k)$ branches at any step, we conclude that the running time of this algorithm is  $2^{O(k\log k)} \cdot n^{O(1)}$. This completes the proof of the lemma.
\end{proof}

We obtain Lemma~\ref{lem:fullMainOneOutRegularDeletionAlgorithm} by invoking Lemma~\ref{lem:mainOneOutRegularDeletionAlgorithm} on the tuple ($(D,\S,W,k)$,$T=W\setminus S_1$,$\L=\emptyset$). This completes our proof of Theorem~\ref{thm:oneOutRegularDeletion}.

%\newpage

\subsection{Faster FPT algorithm for {\sc Bounded Size SCC Deletion}}

In this section, we give an algorithm for {\sc Rooted $\cH$-SCC Deletion} that runs in time $2^{O(k\log s} \cdot n^{O(1)}$ in the special case where $\cH$ is the set of all arborescences on exactly $s+1$ vertices, that is we prove Theorem~\ref{thm:boundedSizeDeletion}.
%  Notice that the strongly connected graphs that exclude the graphs in $\cH$ as subgraphs are precisely the strongly connected graphs of size at most $s$. Our result therefore improves upon that of G{\"{o}}ke et al.~\cite{GokeMM19} who gave an algorithm with running time $O^*(4^k(ks+k+s)!)$. Notice that when $s=1$, this is precisely the DFVS problem.

%. and moreover, for $s=1$, matches the running time of DFVS.  

\boundedSizeDeletion*

In the following, fix $\tau=(D,\S=(S_1,\dots$, $S_q),W,k)$ to be an instance of {\sc Rooted $\cH$-SCC Deletion}. Here, $\cH$ is the set of all arborescences on $s+1$ vertices.
%\begin{definition}[\blue{Nice Instances for Bounded Size SCC Deletion}]\label{def:BoundedSizeSCCDeletionniceInstances}
%%	Let $(D,\S=(S_1,\dots, S_q),W,k)$ be an instance of {\sc 1-out-regular deletion}. 
%	A solution $X$ for the instance $\tau$ is said to be {\em nice} if for every  triple $u,v,w\in V(D)$  such that $v,w\in N^+(u)$ and for every $i\in [q]$, one of the following holds.
%	
%	\begin{enumerate}
%	\item $X$ intersects $\{u,v,w\}$.
%	\item $u\notin R(S_i,X)$.
%	\item  There is no $v$-$S_i$ path in $D-X$ or no $w$-$S_i$ path in $D-X$.
%	\end{enumerate}
%
%\end{definition}
Using Lemma~\ref{lem:DisjCompressionToDisjPartionedCompression}, Theorem~\ref{thm:boundedSizeDeletion} can be obtained as a consequence of the following lemma.

\begin{lemma}\label{lem:fullMainBoundedSizeDeletionAlgorithm}
There is an algorithm that, given $\tau$, runs in time $2^{O(k\log s)} \cdot n^{O(1)}$ and either returns a solution for $(D,k)$ or correctly concludes that there is no nice solution for $\tau$.
\end{lemma}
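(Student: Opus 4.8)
The plan is to establish Lemma~\ref{lem:fullMainBoundedSizeDeletionAlgorithm} by a recursive branching algorithm built on the same scaffolding as Lemma~\ref{lem:mainOneOutRegularDeletionAlgorithm}, specialised to the family $\cH$ of all arborescences on exactly $s+1$ vertices. A preliminary remark streamlines everything: a strongly connected digraph is $\cH$-free if and only if it has at most $s$ vertices, so both testing whether an induced subdigraph is $\cH$-free and finding an $(s+1)$-arborescence of $D$ rooted at a prescribed vertex are polynomial-time operations; this is exactly what lets us replace the $n^{O(h)}$ factor of Theorem~\ref{thm:rootedSCCDeletion} by $n^{O(1)}$. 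Throughout we work directly with the partitioned-compression instance $\tau$ and, as licensed by Definition~\ref{def:niceInstances}, only ever search for a \emph{nice} solution.

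The structural heart of the argument is the following observation about a vertex $x\in S_1$ that is the root of an $(s+1)$-arborescence $F\subseteq D$. Since $x\in S_1\subseteq W$ and every solution is disjoint from $W$, a nice solution $X$ has $r(F)=x\in R(S_1,X)$, so the second alternative of Definition~\ref{def:niceInstances} cannot hold for the pair $(F,1)$. Hence either $X\cap V(F)\neq\emptyset$, or some $v\in V(F)$ has no $v$-$S_1$ path in $D-X$; and in the latter case, $F$ being rooted at $x$ and disjoint from $X$ forces $v$ to be reachable from $S_1$ in $D-X$, so that a $v$-$W$ path in $D-X$ would yield an $S_1$-$(W\setminus S_1)$ path, contradicting that $X$ is a solution for $\tau$ --- thus $v\in\forwardshadow(X)$. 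Moreover every $F$-descendant of such a $v$ also lies in $\forwardshadow(X)$, so we may take $v$ to be a leaf of $F$; and when $x$ admits a directed path of length $s$ we may take $F$ to be that path, pinning down the shadow-witness $v$ as its endpoint. In the forward-shadow branch we then invoke Lemma~\ref{lem:pushingLemmaTwo} to pass to a nice solution containing a $\{v\}$-$W$ separator, but instead of enumerating all $4^k$ important separators we charge the progress to an increase of $\lambda(\{v\},W)$, using the tight-separator growth of Lemma~\ref{lem:tightSuccessorComputation}.

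When no $x\in S_1$ is the root of an $(s+1)$-arborescence, the sub-digraph induced by the set $R^{\leftrightarrow}$ of vertices that are both reachable from and able to reach $S_1$ in $D-X$ is $\cH$-free and, being a disjoint union of at most $|S_1|$ strong components each of size at most $s$, satisfies $|R^{\leftrightarrow}|\le s|S_1|=O(sk)$. At this point the algorithm proceeds exactly as in the proof of Lemma~\ref{lem:mainOneOutRegularDeletionAlgorithm}: it maintains guessed reachable/unreachable sets $T\supseteq W\setminus S_1$ and $\L$, computes furthest minimum $S_1$-$T$ and $\L$-$(T\cup\{S_1\})$ separators (Lemma~\ref{lem:closest_sep_lambda_inc}, Lemma~\ref{lem:tightSuccessorComputation}), and branches on the vertices of these separators together with a bounded ``close'' set. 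Correctness of the branchings rests on the arborescence analogue of Lemma~\ref{lem:disjointnessLemma}: if the region reachable from $S_1$ is already $\cH$-free, then some nice solution of the required type has its trace on that region confined to a set of size $O(s|S_1|)$. With a measure of the form $c\cdot k-\lambda(S_1,T)-\lambda(\L,T\cup\{S_1\})$ --- bounded by $O(k)$ and strictly decreasing in each recursive call --- and branching width $\mathrm{poly}(s,k)$ at every node, we obtain the running time $2^{O(k\log s)}\cdot n^{O(1)}$ claimed by the lemma; the extra $2^{O(k\log k)}$ factor in Theorem~\ref{thm:boundedSizeDeletion} then arises from the enumeration of ordered partitions of $W$ in Lemma~\ref{lem:DisjCompressionToDisjPartionedCompression}.

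The hard part will be the disjointness lemma, i.e.\ showing that one can perturb a nice solution so that its intersection with the $S_1$-reachable region is pushed into a set of size $O(s|S_1|)$ while simultaneously preserving niceness, the separator properties, and the $S_i$-$S_j$-separation property. The difficulty, compared with the $1$-out-regular case, is that there is no ``unique successor'' statement like Observation~\ref{obs:uniqueReachableVertex} to exploit; instead one must argue, obstruction class by obstruction class (closed walks inducing an $(s+1)$-arborescence, $S_i$-$S_j$ paths, $S_1$-$T$ paths, $\L$-$(T\cup\{S_1\})$ paths, and the niceness witnesses of Definition~\ref{def:niceInstances}), that any offending walk created by the perturbation can be rerouted through the replacement vertices. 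A secondary but genuine technical point is to keep the forward-shadow handling from inflating the branching width past $\mathrm{poly}(s,k)$: pinning the shadow-witness to a leaf of $F$ (or the endpoint of an $s$-path) and committing to a tightly-covering separator rather than an arbitrary important separator is the device that achieves this.
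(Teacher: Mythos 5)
Your proposal diverges from the paper's proof in a way that introduces a genuine and self-acknowledged gap. You correctly identify the key structural observation---that any $(s+1)$-arborescence reachable from $S_1$ yields one rooted at a vertex $x\in S_1$, whose root is automatically in $R(S_1,X)$, so that niceness forces either $X\cap V(F)\neq\emptyset$ or some $v\in V(F)$ to lie in $\forwardshadow(X)$---and this is indeed the pivot of the paper's Case~II. However, you then graft on the full $(T,\L)$ bookkeeping and measure $c\cdot k-\lambda(S_1,T)-\lambda(\L,T\cup\{S_1\})$ from the $1$-out-regular algorithm, which forces you to import an analogue of Lemma~\ref{lem:disjointnessLemma}~(1): a ``close-set'' perturbation pushing the solution's trace on the $S_1$-reachable region into a set of size $O(s|S_1|)$. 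You explicitly flag this as the hard, unproven part and correctly note that Observation~\ref{obs:uniqueReachableVertex}'s unique-successor structure, which makes the $\partial(\cdot)$ rerouting argument work in the $1$-out-regular case, has no substitute here. That is a real hole: without it, the Case~II.(b).(i)/(iii)-style branches you sketch have no correctness guarantee.

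The paper avoids this hole entirely by never introducing $T$. Lemma~\ref{lem:mainBoundedSizeDeletionAlgorithm} tracks only $\L$, uses only the analogue of Lemma~\ref{lem:disjointnessLemma}~(2) (Lemma~\ref{lem:disjointnessLemmaBoundedSizeDeletion}), and uses the measure $2k-\lambda(\L,W)$. In Case~II ($\L=\emptyset$) it branches over the $\le 2(s+1)$ options ``delete a vertex of $F$'' (budget drops) or ``put a vertex of $F$ into $\L$'' ($\lambda(\L,W)$ strictly increases, after the harmless preprocessing that removes vertices unable to reach $W$). No close-set branching, no analogue of Lemma~\ref{lem:disjointnessLemma}~(1), no $T$. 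The $2^{O(k\log s)}$ bound then follows from branching width $O(s)$ and recursion depth $O(k)$. In short: your proposal takes a heavier route whose crucial disjointness lemma you have not established, while the paper's argument is deliberately lighter and does not need that lemma at all. To repair your proposal along your own lines you would either have to prove the arborescence close-set lemma (which may be false, or at least would need a genuinely new rerouting argument), or you should drop $T$ and adopt the two-branch Case~II of the paper.
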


%Therefore, it suffices to prove Lemma~\ref{lem:fullMainBoundedSizeDeletionAlgorithm}.

%
%
%\begin{lemma}\label{lem:branchableObjectBoundedSizeDeletion}Let $T\supseteq W\setminus S_1$,
%There is an algorithm that, given $\tau$ and $T$,  runs in polynomial time and performs one of the following operations: 
%\begin{enumerate}\item Outputs an arborescence of size $s+1$ contained in $D[R_{min}(S_1,T)]$. 
%\item 	Correctly concludes that there is some \red{furthest} $S_1$-$T$ separator $C$ of size at most $k$ such that $|R(S_1,T)|\leq s$.
%\item Outputs a pair of $S_1$-$T$ separators $C_1,C_2$ of size at most $k$ such that $C_2$ tightly covers $C_1$, $|R(S_1,C_1)|\leq s<|R(S_1,C_2)|$. That is, 
%$R(S_1,C_1)$ has  no arborescence of size $s+1$ (i.e., $|R(S_1,C_1)|\leq s$) and $R(S_1,C_2)$ contains an arborescence of size $s+1$.
%\end{enumerate}
%\end{lemma}
%
%The algorithm of the above lemma is a straightforward greedy algorithm that begins by computing $C_{min}(S_1,T)$ and then greedily proceeding to find successive (under the tight-covering relation) separators of size at most $k$, until one of the three exhaustive conditions in the statement of the lemma is identified to hold.
%

The proof of the following lemma is identical to that of Lemma~\ref{lem:disjointnessLemma}~(2).
%\todo{lemma 38(2)}.

\begin{lemma}\label{lem:disjointnessLemmaBoundedSizeDeletion}
%Let $\tau=(D,\S=(S_1,\dots, S_q),W,k)$ be an instance of {\sc 1-Out-Regular SCC Deletion}. 
%Let $T\supseteq W\setminus S_1$,
Let $\L\subseteq V(D)\setminus W$.
	Suppose that there is a nice solution $X$ for $\tau$ that is an $\L$-$W$ separator. 
%	Then the following statements hold:
%	\begin{enumerate}	\item 	
	Then, there is a nice solution $X'$ for $\tau$ that is an $\L$-$W$ separator and moreover, $X'\cap R_{max}(\L,W)=\emptyset$. 
%	\item If $c\in V(D)$ is  reachable from $S_1$ and can reach $S_1$ in $D-X$, then $X$ is also a nice solution for the instance $\tau'=(D,\S=(S_1\cup \{c\},\dots$, $S_q),W,k)$ that is an $S_1\cup \{c\}$-$T$ separator and an $\L$-$T\cup \{S_1\}\cup \{c\}$ separator. 
%	\end{enumerate}
%\red{Address the possibility that $\L\cap C\neq \emptyset$.}
\end{lemma}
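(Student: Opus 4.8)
The plan is to reproduce the proof of Lemma~\ref{lem:disjointnessLemma}~(2) essentially verbatim, with $W$ playing the role that $T\cup\{S_1\}$ plays there. First I would set $\hat R = R_{max}(\L,W)$ and $\hat C = C_{max}(\L,W)$, so that $\hat C = N^+(\hat R)$ is a minimum $\L$-$W$ separator. If $X\cap\hat R=\emptyset$ we are already done, so assume $X_{\hat C}:=X\cap\hat R\neq\emptyset$. Let $J:=\hat C\setminus R(\L,X)$ be the set of vertices of $\hat C$ that are unreachable from $\L$ in $D-X$. Using a family of $|\hat C|$ pairwise internally vertex-disjoint $\L$-$W$ paths (which exists since $\hat C$ is a minimum $\L$-$W$ separator) together with the fact that $X$ is an $\L$-$W$ separator, one gets $|X_{\hat C}|\geq|J|$ exactly as in the cited proof: each such path meeting a vertex of $J$ must be hit by $X$ inside $\hat R$ before reaching $\hat C$, and these hitting vertices are distinct. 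Define $X':=(X\setminus X_{\hat C})\cup J$; then $|X'|\leq|X|$ and $X'\cap W=\emptyset$, and it remains to check that $X'$ is a nice solution for $\tau$ that is an $\L$-$W$ separator.

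Suppose not. Then one of the following is realised in $D-X'$: (i) some strongly connected component contains a subgraph $F$ isomorphic to an arborescence on $s+1$ vertices (so $X'$ fails to be a solution), (ii) there is an $S_i$-$S_j$ path with $i<j$, (iii) there is an $\L$-$W$ path, or (iv) there is a subgraph $F$ isomorphic to a member of $\cH$ and an index $i\in[q]$ witnessing that $X'$ is not \emph{nice}. In each case I would extract a walk $\cW$ in $D-X'$ from some $x\in X_{\hat C}$ to some $y\in W$. Cases (ii) and (iii) are immediate: since $X$ is a solution that is an $\L$-$W$ separator, the offending path must meet $X\setminus X'=X_{\hat C}$, and its suffix from that vertex reaches $S_j\subseteq W$ (resp.\ the $W$-endpoint). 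For case (i), I would apply Lemma~\ref{lem:soln_property} once with the $\cH$-deletion set $X$ to get a vertex $a\in X$ lying on the witnessing paths of $F$ inside $D-X'$ (so $a\in X\setminus X'=X_{\hat C}$) which reaches every vertex of $V(F)$, and once with the solution $W$ to get a vertex $b\in W$ reachable from every vertex of $V(F)$ along those paths; concatenation gives an $a$-$b$ walk in $D-X'$. For case (iv), the defining property of a \emph{nice} solution applied to $(F,i)$ and $X$ gives that $X$ meets $V(F)$, or $r(F)\notin R(S_i,X)$, or some $v\in V(F)$ cannot reach $S_i$ in $D-X$; in each sub-case one routes from a vertex of $X_{\hat C}$ (which appears because the relevant structure is present in $D-X'$ but not in $D-X$, or because $V(F)\cap X\subseteq X_{\hat C}$) through $r(F)$, the arborescence $F$ (which is disjoint from $X'$), and a $v$-$S_i$ path, landing in $S_i\subseteq W$.

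Finally I would run the same ``pushing along $\hat C$'' argument. Since $y\in W$ and $\hat R\cap W=\emptyset$, the walk $\cW$ leaves $\hat R$. Choose $x$ to be the \emph{last} vertex of $X_{\hat C}$ on $\cW$, and let $x'$ be the \emph{first} vertex of $\hat C$ on the portion of $\cW$ from $x$ to $y$. Because $\hat C=N^+(\hat R)$, the subwalk up to $x'$ stays in $\hat R$; because $\cW$ avoids $X'\supseteq J$, we get $x'\notin J$, hence $x'\in\hat C\setminus J\subseteq R(\L,X)$, so there is an $\L$-$x'$ path in $D-X$. The subwalk $\cW[x'\to y]$ lies strictly after $x$, hence avoids $X_{\hat C}$, and it avoids $X'\supseteq X\setminus X_{\hat C}$, so it is a walk in $D-X$; concatenating with the $\L$-$x'$ path gives an $\L$-$W$ walk in $D-X$, contradicting that $X$ is an $\L$-$W$ separator. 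Hence $X'$ has all the required properties.

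I expect the only delicate point to be the bookkeeping in cases (i) and (iv): making sure the extracted walk genuinely starts at a vertex of $X_{\hat C}$ rather than merely of $X$, genuinely ends inside $W$, and that the ``last $X_{\hat C}$-vertex / first $\hat C$-vertex'' choice makes the terminal suffix $X$-free. All of this is exactly as in the proof of Lemma~\ref{lem:disjointnessLemma}~(2), with Lemma~\ref{lem:soln_property} substituting for the maximum-out-degree argument used there.
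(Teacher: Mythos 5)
Your proof is correct and follows the paper's approach exactly: the paper's own proof of this lemma is simply the one-line observation that it is identical to the proof of Lemma~\ref{lem:disjointnessLemma}~(2), and you reproduce that argument with the appropriate adaptations (dropping the $S_1$-$T$ separator condition, instantiating niceness via Definition~\ref{def:niceInstances}, and using Lemma~\ref{lem:soln_property} twice to extract the $X_{\hat C}$-to-$W$ walk in the forbidden-subgraph and niceness-violation cases). In fact you have filled in several steps the paper leaves implicit, and the adaptation is sound.
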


\begin{lemma}\label{lem:mainBoundedSizeDeletionAlgorithm}
%    Let $T\supseteq W\setminus S_1,
Let     $\L\subseteq V(D)\setminus W$. 
	There is an algorithm that, given the tuple $(\tau,\L)$, runs in time $2^{O(k\log s)} \cdot n^{O(1)}$ and either returns  a solution for $(D,k)$ or correctly concludes that there is no nice solution for $\tau$ that is also an $\L$-$W$ separator.
	\end{lemma}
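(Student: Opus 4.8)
\textbf{Proof plan for Lemma~\ref{lem:mainBoundedSizeDeletionAlgorithm}.}
The plan is to mimic the structure of the algorithm behind Lemma~\ref{lem:mainOneOutRegularDeletionAlgorithm}, but with the simpler separator bookkeeping that the bounded-size setting allows (there is no $S_1$-$T$ separator to maintain, only the $\L$-$W$ separator). I would first dispose of the trivial terminating conditions: if $k<0$, or $\lambda_D(\L,W)>k$, or some strongly connected component of $D[W]$ already contains an arborescence on $s+1$ vertices (i.e.\ has at least $s+1$ vertices), then there is no solution and we return NO; if $k=0$ and $(D,0)$ is a no-instance of {\sc BSSCVD} we return NO, and if $(D,0)$ is already a yes-instance we return $\emptyset$. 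After this cleanup we may assume $k\ge 1$, $\lambda_D(\L,W)\le k$, and every strong component of $D[W]$ has at most $s$ vertices.

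The recursion then splits on whether $\L$ is empty. When $\L\neq\emptyset$ and $\lambda_D(\L,W)=0$ we simply reset $\L=\emptyset$ and recurse; otherwise we compute $C_{\mathrm{max}}=C_{\mathrm{max}}(\L,W)$, pick a vertex $c\in C_{\mathrm{max}}$, and branch into $(\tau_c=(D-\{c\},\S,W,k-1),\L)$ and $(\tau,\L\cup\{c\})$, exactly as in Case~I of Lemma~\ref{lem:mainOneOutRegularDeletionAlgorithm}; exhaustiveness here is Lemma~\ref{lem:disjointnessLemmaBoundedSizeDeletion}, which says a nice solution of the required type disjoint from $R_{\mathrm{max}}(\L,W)$ may be assumed. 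In the first branch the budget drops, in the second $\lambda_D(\L,W)$ strictly increases by Lemma~\ref{lem:closest_sep_lambda_inc}. When $\L=\emptyset$, the key observation — this is the analogue of the ``crux'' paragraph in the overview — is: if there is some $x\in S_1$ such that $D$ contains a subgraph on $s+1$ vertices that is an arborescence rooted at $x$ (equivalently, $x$ can reach at least $s$ other vertices, so that together with $x$ we have an $(s+1)$-vertex rooted subgraph $F$), then since $X^\star$ is a nice solution, for this $F$ and $i=1$ one of the three nice-conditions must hold: $X^\star$ hits $V(F)$, or $r(F)=x\notin R(S_1,X^\star)$ (impossible, since $x\in S_1$), or some $v\in V(F)$ has no $\{v\}$-$S_1$ path in $D-X^\star$, i.e.\ $v\in\forwardshadow(X^\star)$ with respect to $W$ (when $q=1$). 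In the first case we branch on the $\le s+1$ vertices of $V(F)$; in the shadow case we invoke {\sc Pushing-Routine-2} (Lemma~\ref{lem:pushingRoutineTwo}) to obtain a $4^k\cdot 2k$-size set meeting some solution, and branch on it. One must first find such an $x\in S_1$ and witness $F$, which is a polynomial-time search (BFS from each vertex of $S_1$); if no vertex of $S_1$ admits such an $F$, then — after also checking the components not touching $S_1$, handled recursively on the induced subgraph when $\lambda_D(S_1,W\setminus S_1)=0$ — every strong component containing a vertex of $S_1$ has at most $s$ vertices, and likewise we branch on any offending component disjoint from $W$, or conclude we are done.

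The bottleneck is establishing that the branching factor is $s^{O(1)}$ rather than $4^k$: naively the shadow branch via {\sc Pushing-Routine-2} gives a branch set of size $4^k\cdot2k$, which is fine for an $2^{O(k\log s)}$ bound only if $s\ge 2$ and we are willing to absorb it — actually $4^k\cdot2k$ is $2^{O(k)}$, so a depth-$O(k)$ search tree with that branching factor gives $2^{O(k^2)}$, which is worse than claimed. The fix, and the main technical content I would need, is the statement sketched in the introduction: instead of using a generic forward shadow vertex, one argues that the offending vertex $v\in V(F)$ must have all its $\{v\}$-$W$ paths deleted, and one branches on an $\{v\}$-$W$ \emph{important} separator of size at most $k$ (Lemma~\ref{lem:pushingLemmaTwo} already gives that a solution containing such an important separator exists); but to keep the branching factor small one branches on the \emph{vertices of the closest/furthest} such separator structure combined with the $s+1$ vertices of $F$, measuring progress with $\lambda_D(S_1,W\setminus S_1)$ as in the $1$-out-regular case. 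Concretely I expect the correct measure to be $\mu=(k+1)k-\lambda_D(S_1,W\setminus S_1)\cdot(\text{something})-\dots$ giving depth $O(k)$ with each node having $O(s)$ children after the important-separator branch is folded into an $R_{\mathrm{max}}$-pushing step, yielding $s^{O(k)}=2^{O(k\log s)}$. Verifying that this pushing step is valid — i.e.\ that moving $X^\star\cap R_{\mathrm{max}}(\{v\},W)$ out to $N^+$ preserves niceness and all separator constraints, the exact analogue of Lemma~\ref{lem:disjointnessLemma}~(2) applied in the reverse graph — is the step I expect to require the most care, and it is where I would spend most of the write-up.
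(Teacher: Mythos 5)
Your overall framework is right: base cases, Case~I on $C_{\max}(\L,W)$ with Lemma~\ref{lem:disjointnessLemmaBoundedSizeDeletion} for exhaustiveness, and Case~II built around an $(s+1)$-vertex arborescence $F$ rooted at some $c\in S_1$ are all exactly what the paper does. You also correctly diagnose the bottleneck: invoking {\sc Pushing-Routine-2} on a shadow vertex yields a branch set of size $2^{O(k)}$, and plugging that into a depth-$O(k)$ tree would give $2^{O(k^2)}$, not $2^{O(k\log s)}$.

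Where you go astray is in the fix. You do not need important separators, you do not need to reintroduce $\lambda_D(S_1,W\setminus S_1)$, and you do not need any quadratic measure. The paper's Case~II simply branches in $2(s+1)$ ways over $V(F)$: either some $v\in V(F)$ is deleted (budget drop), or some $v\in V(F)$ is guessed to be unable to reach $W$ in $D-X^\star$, in which case $v$ is added to $\L$ and the algorithm recurses. That's it. The second type of branch hands the vertex over to Case~I, which already handles $\L$-$W$ separators correctly via $C_{\max}(\L,W)$; no important-separator enumeration occurs anywhere in this algorithm. The measure throughout is the single quantity $2k-\lambda_D(\L,W)$: in a delete branch $k$ drops by one while $\lambda(\L,W)$ drops by at most one; in an add-to-$\L$ branch, $\lambda(\L,W)$ strictly increases, where the jump from $0$ to at least $1$ in Case~II is guaranteed by the preprocessing step that discards any vertex that cannot reach $W$ (such vertices lie on no cycle and on no $S_i$-$S_j$ path, so removing them is safe). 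The measure starts at $2k$, giving depth $O(k)$ and branching factor $O(s)$, hence $2^{O(k\log s)}$. Your proposed measure $\mu=(k+1)k-\lambda_D(S_1,W\setminus S_1)\cdot(\cdot)$ is both the wrong parameter --- this algorithm never tracks an $S_1$-$(W\setminus S_1)$ separator --- and the wrong order of magnitude: a measure that starts at $\Theta(k^2)$ gives depth $O(k^2)$, contradicting your own claim of $O(k)$ depth in the same sentence.

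One smaller point: you hedge with ``when $q=1$'' when translating the nice-solution condition ``no $v$-$S_1$ path'' into ``$v\in\forwardshadow(X^\star)$''. That restriction is unnecessary. If $X^\star$ misses $V(F)$ and $v\in V(F)$ has no $v$-$S_1$ path in $D-X^\star$ but could reach $W\setminus S_1$, then composing the $S_1$-$v$ path in $F$ with the $v$-$(W\setminus S_1)$ path would yield an $S_1$-$S_j$ path in $D-X^\star$ for some $j>1$, contradicting the partitioned-compression constraint. So for any $q\ge 1$ the offending $v$ indeed cannot reach $W$, which is precisely what licenses adding it to $\L$.
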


\begin{proof}
This algorithm closely resembles parts of the algorithm of Lemma~\ref{lem:mainOneOutRegularDeletionAlgorithm}. Therefore, we only give a high level sketch outlining the differences. Fix a nice solution $X^\star$ that is also an $\L$-$W$ separator (if one exists). 

	The base cases  and Case I ($\L\neq \emptyset$) are identical to those in Lemma~\ref{lem:mainOneOutRegularDeletionAlgorithm}, with the correctness of the latter following from Lemma~\ref{lem:disjointnessLemmaBoundedSizeDeletion}. In Case II ($\L=\emptyset$), we pick an arborescence on $s+1$ vertices that is rooted at a vertex $c\in S_1$ and  and branch by guessing one of these vertices to be in $X^\star$, in which case we delete it and recurse, or by guessing that one of these vertices cannot reach $W$ in $D-X^\star$, in which case we add it to $\L$ and recurse. This is correct because any arborescence on $s+1$ vertices that is reachable from a vertex of $S_1$ implies the existence of an arborescence on $s+1$ vertices that is rooted at a vertex of $S_1$. 
	Notice that the branching factor in either case is bounded by $2(s+1)$ and in each branch, the measure $2k-\lambda(\L,W)$ strictly decreases. To achieve the increase in $\lambda(\L,W)$ in the second set of $s+1$ branches of Case II (where $\L$ changes from $\emptyset$ to a singleton), we note that any vertex of $D$ that cannot reach $W$ in $D$ cannot be part of either a cycle or an $S_i$-$S_j$ path for some $i,j$ and hence, can be removed without affecting the presence of a nice solution. 
	 This completes the proof of the lemma.
\end{proof}

We obtain Lemma~\ref{lem:fullMainBoundedSizeDeletionAlgorithm} by invoking Lemma~\ref{lem:mainBoundedSizeDeletionAlgorithm} on the tuple ($(D,\S,W,k)$,$\L=\emptyset$). This completes our proof of Theorem~\ref{thm:boundedSizeDeletion}.

\section{Conclusions}
We have initiated the study of the parameterized complexity of {\sc $\cH$-SCC Deletion} problem, where the objective is to compute a maximum subdigraph where no strongly connected component contains a  forbidden subgraph from the family $\cH$. This problem is a natural generalization of the classic {\sc Directed Feedback Vertex Set} problem.
We have obtained fixed-parameter algorithms for this problem when $\cH$ either contains at least one path or only contains rooted graphs.  
Furthermore, we have demonstrated that for a pair of previously studied special cases of this problem, one can obtain faster fixed-parameter algorithms by using our general strategy tailored to these special cases.

Our algorithms are fixed-parameter tractable parameterized by $k$ for {fixed} families $\cH$ (that contain a path or only rooted digraphs) and a fixed-parameter algorithm for this problem parameterized by both $k$ and $d_\cH$ (size of the largest graph in $\cH$) is unlikely to exist in general. 
%This is because the {\sc $\cH$-SCC Deletion} problem can be easily seen to be W[1]-hard parameterized by $d$ for $k=0$.

Our work identifies some natural directions for future research. In particular, can we completely characterize those finite families $\cH$ for which this problem is fixed-parameter tractable? Furthermore, one could ask: for which  infinite families $\cH$ does this problem admit a fixed-parameter algorithm? Recently G\"{o}ke, Marx and Mnich~\cite{gokeicalp2020} gave a fixed-parameter algorithm for the case where  $\cH$ is the set of all cycles of length {\em at least} a given $s$. 

\paragraph*{Acknowledgements.} The second author thanks Matthias Mnich for insightful discussions during  Dagstuhl Seminar 19041 and for the pointer to \cite{GokeMM19}.

\bibliographystyle{plain}
\bibliography{HSCC}

\end{document}